\renewcommand{\natural}{{\mathbb{N}}}
\newcommand{\real}{\ensuremath{\mathbb{R}}}
\newcommand{\complex}{\ensuremath{\mathbb{C}}}
\newcommand{\until}[1]{\{1,\dots, #1\}}
\newcommand{\subscr}[2]{#1_{\textup{#2}}}
\newcommand{\supscr}[2]{#1^{\textup{#2}}}
\newcommand{\setdef}[2]{\{#1 \; | \; #2\}}
\newcommand{\diag}[1]{\operatorname{diag}(#1)}
\newcommand{\spn}{\operatorname{span}}
\newcommand{\image}{\operatorname{image}}
\renewcommand{\epsilon}{\varepsilon}
\newcommand{\argmin}{\ensuremath{\operatorname{argmin}}}
\newcommand{\argmax}{\ensuremath{\operatorname{argmax}}}
\newcommand{\Pc}{{\mathcal{P}}}
\def\getfirst#1.#2\relax{#1}
\newtheorem{thm}{Theorem}
\newtheorem{lem}{Lemma}
\newtheorem{prop}{Proposition}
\newtheorem{cor}{Corollary}
\newtheorem{defn}{Definition}
\newtheorem{rem}{Remark}
\newtheorem{assump}{Assumption}
\newtheorem{example}{Example}
\newtheorem{cond}{Condition}
\DeclareMathOperator{\W}{\mathcal{W}}
\DeclareMathOperator{\N}{\mathcal{N}}
\DeclareMathOperator{\R}{\mathbb{R}}
\DeclareMathOperator{\E}{\mathcal{E}}
\DeclareMathOperator{\X}{\mathcal{X}}
\DeclareMathOperator{\G}{\mathcal{G}}
\DeclareMathOperator{\Ls}{\mathscr{L}}
\DeclareMathOperator{\U}{\mathcal{U}}
\DeclareMathOperator{\Ex}{\mathbb{E}}
\DeclareMathOperator{\D}{\mathcal{D}}
\newcommand{\Pref}{\ensuremath{\subscr{P}{ref}}}
\DeclareMathOperator{\x}{\mathbf{x}}
\newcommand{\mean}{\operatorname{mean}}
\DeclareMathOperator{\Pp}{\mathbb{P}}
\DeclareMathOperator{\B}{\mathcal{B}}
\DeclareMathOperator{\Y}{\mathcal{Y}}
\DeclareMathOperator{\SSS}{\mathcal{S}}
\DeclareMathOperator{\I}{\mathbb{I}}
\DeclareMathOperator{\Px}{\mathsf{P}}
\newcommand{\nll}{\operatorname{null}}
\newcommand{\aff}{\operatorname{aff}}
\newcommand{\relint}{\operatorname{relint}}
\newcommand{\pu}{\ensuremath{\underline{p}}}
\newcommand{\po}{\ensuremath{\overline{p}}}
\DeclareMathOperator{\Lagr}{\mathcal{L}}
\newcommand{\distnewton}{\textsc{distributed approx-Newton}\xspace}
\newcommand{\dana}{\textsc{DANA}\xspace}
\newcommand{\distnewtondisc}{\textsc{discrete distributed approx-Newton}\xspace}
\newcommand{\danad}{\textsc{DANA-D}\xspace}
\newcommand{\distnewtoncont}{\textsc{continuous distributed approx-Newton}\xspace}
\newcommand{\danac}{\textsc{DANA-C}\xspace}
\newcommand{\distgrad}{\textsc{distributed gradient descent}\xspace}
\newcommand{\dgd}{\textsc{DGD}\xspace}
\newcommand{\nulo}{\operatorname{null}}
\newcommand{\dom}{\operatorname{dom}}
\newcommand{\nodes}{\N}
\DeclareMathOperator{\ones}{\mathbf{1}}
\DeclareMathOperator{\zeros}{\mathbf{0}}
\newcommand{\setdefB}[2]{\Big\{#1 \; | \; #2\Big\}}
\newcommand{\longthmtitle}[1]{\mbox{}{\bf
      \textit{(#1).}}}
\newcommand{\binpromc}{\textsc{Newton-like Neural Network}\xspace} %Centralized Binary-Promoting Algorithm
\newcommand{\binpromd}{\textsc{Newton-like Neural Network}\xspace} %Distributed Binary-Promoting Algorithm
\newcommand{\binpac}{\textsc{NNN-c}\xspace} %BinPA-c
\newcommand{\binpad}{\textsc{NNN-d}\xspace} %BinPA-d
\newcommand\oprocendsymbol{\hbox{$\bullet$}}
\newcommand\oprocend{\relax\ifmmode\else\unskip\hfill\fi\oprocendsymbol}
\title{Distributed Newton-like Algorithms and Learning for Optimized Power Dispatch}
\author{Tor Anderson}
\begin{document}

\frontmatter
\maketitle
\makecopyright
\makesignature

\begin{dedication}
\null\vfil
\begin{center}
To my loving parents, Brian and Karen, my sister, Louise, \\ and to the memory of my dear friend, Marcus.\\\vspace{12pt}
\end{center}
\vfil\null
\end{dedication}

\begin{epigraph}
\null\vfil
\begin{center}
\textit{I think that it is a relatively good approximation to truth -- which is much too complicated to allow anything but approximations -- that mathematical ideas originate in empirics, although the genealogy is sometimes long and obscure. But, once they are so conceived, the subject begins to live a peculiar life of its own and is better compared to a creative one, governed by almost entirely aesthetical motivations, than to anything else and, in particular, to an empirical science. (\dots) In any event, whenever this stage is reached, the only remedy seems to me to be rejuvenating return to the source: the reinjection of more or less directly empirical ideas. I am convinced that this was a necessary condition to conserve the freshness and the vitality of the subject and that this will remain equally true in the future.}
\vspace{12pt} 
\begin{flushright}
-- John von Neumann
\end{flushright}
\end{center}
\vfil\null
\end{epigraph}

\tableofcontents
\listoffigures
%\listoftables

\begin{acknowledgements}
I extend the maximum amount of thanks and appreciation to my advisor, Sonia Mart\'{i}nez. On a personal level, Sonia is kind, funny, and understanding, and professionally, she is insightful, diligent, and above all else, patient. I credit her for seeding what would become the broad theme of this thesis in the early stages of my Ph.D. work, and I expect to reap the benefits of the practices, habits, and modes of thinking that she instilled in me for years to come.

Next, my thanks goes out to the other members of my committee: Professors Jorge Cort\'{e}s, Miroslav Krsti\'{c}, Jiawang Nie, and Behrouz Touri; it is a privilege to benefit from their volunteered time and feedback during the development and submission of this work. Further, I would be remiss not to mention those who previously supervised me and molded my professional development in some way. In reverse chronological order, I extend thanks to: Jorge Cort\'{e}s (again) and Jan Kleissl for their supervision and insight during the work that became the content of Chapter~\ref{chap:active_load} of this thesis; Sonja Glavaski, in part for her vision of the NODES project which directly inspired the work of Chapter~\ref{chap:active_load}, but also for supervising and mentoring me at ARPA-e during the summer between my undergraduate and graduate study; James (Mike) Sigler, for being my first boss in engineering and exposing me to how the sauce is made; and, finally, Elias Lemon, my first boss, who constantly trusted and challenged me in my first ``real” job to take on more than I thought I was capable of at a large martial arts studio.

I would now like to expand on the people mentioned in the dedication. My father, Brian, has been a steady source of inspiration and motivation throughout my life to pursue ideas that are interesting and challenging. Without his influence, I likely would not have ever considered attempting graduate study. My mother, Karen, has always been a grounding influence in my life. On the pragmatic side, I credit her for teaching me the writing habits that made this thesis possible, and on the counseling side, she can reliably talk me back into a relaxed and clear state-of-mind when pressure and obligations otherwise seem overwhelming. My sister, Louise, is a frequent reminder to me that problems I face are fairly universal, and that one's accomplishments are always relative to the lens through which one views them. The clich\'{e} ``wise beyond one's years" applies to people like her. Finally, my late friend Marcus was extremely formative; his friendship and shared affinity for math, science, and engineering in high school and undergraduate were irreplaceable. His continued encouragement throughout my graduate study was essential to completing this thesis.

Additional thanks go to current and former members of our group, including (but not limited to!) Ashish, Aaron, Erfan, Eduardo, Yifu, Chin-Yao, Dimitris, Miguel, Priyank, Pio, Dan, and Aamodh, for their friendship and, of course, frequent technical discussions. My late grandparents, Tor and Phyllis, were passionate educators, and my grandmother, Darlene, and my late grandfather, Ken, were successful entrepreneurs, so they also have directly and indirectly helped me get to this point. Lastly, I give thanks to my wonderful partner, Amie, whose companionship over the last year and a half has helped me get through this ``home-run stretch," and to my adorable dog, Teddy, who can inject a quick shot of joy into my day at literally any moment.

The material in Chapter~\ref{chap:DANA}, in full, is a reprint of \textit{Distributed Approximate Newton Algorithms and Weight Design for Constrained Optimization}, T.~Anderson, C.Y.~Chang and S.~Mart\'{i}nez, Automatica, 109, article 108538, November 2019. A preliminary version of the work appeared in the proceedings of the Conference on Control Technology and Applications (CCTA), Mauna Lani, HI, 2017, pp. 632-637, as \textit{Weight Design of Distributed Approximate Newton Algorithms for Constrained Optimization}, T.~Anderson, C.Y.~Chang and S.~Mart\'{i}nez. The dissertation author was the primary investigator and author of these papers.

Chapter~\ref{chap:DiSCRN}, in full, is being revised and prepared for submission to the Systems \& Control Letters. It may appear as \textit{Distributed Stochastic Nested Optimization via Cubic Regularization}, T.~Anderson and S.~Mart{\'i}nez. The dissertation author was the primary investigator and author of this paper.

The content in Chapter~\ref{chap:NNN}, in full, is provisionally accepted in Automatica. It is expected to appear as \textit{Distributed Resource Allocation with Binary Decisions via Newton-like Neural Network Dynamics}, T.~Anderson and S.~Mart\'{i}nez. The dissertation author was the primary investigator and author of this paper.

The material in Chapter~\ref{chap:top}, in full, is a reprint of \textit{Maximizing Algebraic Connectivity of Constrained Graphs in Adversarial Environments}, T.~Anderson, C.Y.~Chang and S.~Mart\'{i}nez, 2018 European Control Conference (ECC), Limassol, 2018, pp. 125-130. The dissertation author was the primary investigator and author of this paper.

Chapter~\ref{chap:active_load}, in full, is under revision for publication in IEEE Transactions on Smart Grid. It may appear as \textit{Frequency Regulation with Heterogeneous Energy Resources: A Realization using Distributed Control}, T.~Anderson, M.~Muralidharan, P.~Srivastava, H.V.~Haghi, J.~Cort\'{e}s, J.~Kleissl, S.~Mart\'{i}nez and B.~Washom. The dissertation author was one of three primary investigators and authors of this paper.

\end{acknowledgements}

\begin{vita}
\noindent
\begin{cv}{}
\begin{cvlist}{}
  \item[2020] Ph.D., University of California San Diego
  \item[2017] M.S., University of California San Diego
  \item[2015] B.S., University of Minnesota Twin Cities
\end{cvlist}
\end{cv}

\publications
\begin{enumerate}
\item \textit{Distributed Stochastic Nested Optimization via Cubic Regularization }, T.~Anderson and S.~Mart{\'i}nez, In preparation for submission to Systems \& Control Letters.

\item \textit{Frequency Regulation with Heterogeneous Energy Resources: A Realization using Distributed Control}, T.~Anderson, M.~Muralidharan, P.~Srivastava, H.V.~Haghi, J.~Cort\'{e}s, J.~Kleissl, S.~Mart\'{i}nez and B.~Washom, IEEE Transactions on Smart Grid, Under revision.

\item \textit{Distributed Resource Allocation with Binary Decisions via Newton-like Neural Network Dynamics}, T.~Anderson and S.~Mart\'{i}nez, Automatica, Provisionally accepted.

\item \textit{Distributed Approximate Newton Algorithms and Weight Design for Constrained Optimization}, T.~Anderson, C.Y.~Chang and S.~Mart\'{i}nez, Automatica, 109, article 108538, November 2019.

\item \textit{Maximizing Algebraic Connectivity of Constrained Graphs in Adversarial Environments}, T.~Anderson, C.Y.~Chang and S.~Mart\'{i}nez, 2018 European Control Conference (ECC), Limassol, 2018, pp. 125-130.

\item \textit{Weight Design of Distributed Approximate Newton Algorithms for Constrained Optimization}, T.~Anderson, C.Y.~Chang and S.~Mart\'{i}nez, 2017 IEEE Conference on Control Technology and Applications (CCTA), Mauna Lani, HI, 2017, pp. 632-637.

\end{enumerate}
\end{vita}

\begin{dissertationabstract}
This thesis explores a particular class of distributed optimization methods for various separable resource allocation problems, which are of high interest in a wide array of multi-agent settings. A distinctly motivating application for this thesis is real-time power dispatch of distributed energy resources for providing frequency control in a distribution grid or microgrid with high renewable energy penetration. In this application, it is paramount that agent data be shared as sparsely as possible in the interest of conserving user privacy, and it is required that algorithms scale gracefully as the network size increases to the order of thousands or millions of resources and devices. Distributed algorithms are naturally well-poised to address these challenges, in contrast to more traditional centralized algorithms which scale poorly and require global access to information.

The class of distributed optimization methods explored here can be broadly described as \emph{Newton-like} or \emph{second-order}, implying utilization of second-derivative information of the cost functions, in contrast to well-studied \emph{gradient-based} or \emph{first-order} methods. We consider three formulations of separable resource-allocation problems and develop a Newton-like algorithm for each. First, the cost function is given by the sum of local agent costs, supplemented with individual linear box constraints and a global matching-constraint in which the sum of agent states must equal a prescribed constant. Second, we consider a stochastic, nested scenario, in which batches of realizations of problems of the first type must be used to gradually learn the optimal value of a parameter which is coupled with the agent costs. Third, we further constrain the agent states to be binary, and we embed the global matching-constraint as a squared penalty term in the cost. The analysis and simulation studies in the subsequent chapters demonstrate the advantages of our approaches over existing methods; most commonly, we note that convergence rates are substantially improved. We supplement our algorithm development for these three problem formulations with a network design technique, in which we can construct a maximally-connected network by adding some edges to the underlying communication graph, and a real demonstration of distributed algorithms on a large set of heterogeneous devices on the UC San Diego microgrid.

\end{dissertationabstract}

\mainmatter

\begin{dissertationintroduction}
\label{chap:intro}

The natural universe is made up of and governed by distributed interactions. This is evidenced on every relevant scale and setting: particle interactions, signal exchanges between neurons, cooperation between biological organs, tight-knit and mass-scale social interactions, the interacting physics of distant Earth biomes, and gravitational forces between planets, stars, and galaxies. These distributed physical dynamics are \emph{prescribed} by the universe (sometimes indirectly, e.g. via human evolution in the social dynamics case); however, with the advent of modern technology, current and future engineers can benefit from the \emph{imposition} of distributed intelligence and algorithms. It may be the case that this technological imposition is not only inevitable, but that we are already deeply in the midst of it.

Let us take a step back by considering a specific analogy to biological evolution. It is estimated that Earth's first prokaryotic life (cells \emph{without} a nucleus) originated about 3.5--3.8 billion years ago, while the first eukaryotic life (cells \emph{with} a nucleus) is estimated to have evolved around 1.7--2.2 billion years ago.\footnote{Source: Carl Woese, J Peter Gogarten, ``When did eukaryotic cells (cells with nuclei and other internal organelles) first evolve? What do we know about how they evolved from earlier life-forms?" Scientific American, October 21, 1999. \url{https://www.scientificamerican.com/article/when-did-eukaryotic-cells/}} The implication is that the time-scale of progression from eukaryotic cells to modern multi-cellular life, in all its richness and complexity, is roughly equal to the time-scale of the comparatively miniscule progression of developing the cell nucleus. Consider, then, the fairly-new (in the scope of human history) technological development of semi-conductor based computers. Most present-day algorithms and computer intelligence are designed for \emph{centralized} architectures, but the technological shift towards decentralized and distributed computing and information systems is undeniable: one need look no further than the recent proliferation of blockchain technologies, advancements in cloud computing and storage architectures, and the abundance of personal cell phones and daily influence from social media networks. Hence, if any kind of analogy can be drawn to biological evolution, computers and intelligence systems seem to find themselves on the precipice of (or already in the midst of) a dramatic shift toward ``multi-cellular" architectures. State-of-the-art computers in the coming decades might barely resemble their centralized technological ancestors, similarly to how modern multi-cellular life barely resembles its eukaryotic ancestors. 

The above discussion can serve as a philosophical motivation for the ideas and algorithms that are developed in this thesis, though more pragmatic and immediate motivations exist.\footnote{For more concrete examples in present-day engineering, the reader can refer to the following subsections and the ``Biological Comments" section of each chapter.} In particular, we study three closely related formulations of what is broadly referred to as the distributed resource allocation problem. Namely, (i) a nominal convex formulation, in which agent states can take continuous values in a convex set; (ii) a nested stochastic formulation, in which problems of type (i) are nested in a broader stochastic, nonconvex optimization which aims to optimization a parameterization or design variable over realizations of (i); and (iii) a further-constrained instance of (i) in which agent states must belong to a binary set. The approaches for each of (i), (ii), (iii) vary significantly, but they each possess the unifying theme of being \emph{distributed} and using \emph{Newton-like} updates, i.e. the updates utilize second-derivative information of the local agent costs.

The aforementioned work for (i), (ii), (iii) is contained in Chapters~\ref{chap:DANA},~\ref{chap:DiSCRN}, and~\ref{chap:NNN}, respectively. In Chapter~\ref{chap:top}, we supplement the results with a design technique for adding edges to a communication graph, and in Chapter~\ref{chap:active_load} we describe a demonstration that we performed with distributed algorithms performing a frequency control application on the UC San Diego microgrid. We give more specific descriptions and motivation tailored to each chapter in the following subsections.

\section*{Nominal Convex Formulation}

Networked systems endowed with distributed,
multi-agent intelligence are becoming pervasive in modern
infrastructure systems such as power, traffic, and large-scale
distribution networks. However, these advancements lead to new
challenges in the coordination of the multiple agents operating the
network, which are mindful of the network dynamics, and subject to
partial information and communication constraints. 
To this end, distributed convex optimization is a rapidly emerging
field which seeks to develop useful algorithms to manage network
resources in a scalable manner. 
Motivated by the rapid emergence of distributed energy resources, a
problem that has recently gained large attention is that of economic
dispatch.  In this problem, a total net load constraint must be
satisfied by a set of generators which each have an associated cost of
producing electricity. 
However, the existing distributed techniques to solve this problem are
often limited by rate of convergence.  Motivated by this, 
we investigate the design of topology
weighting strategies that build on the Newton method and lead to
improved convergence rates.

\section*{Nested Stochastic Formulation}

As applications emerge which are high dimensional and described by
large data sets, the need for powerful optimization tools has never
been greater. In particular, agents in distributed settings are
commonly given a global optimization task where they must sparingly
exchange local information with a small set of neighboring agents for
the sake of privacy and robust scalability. This architecture can,
however, slow down convergence compared to centralized ones, which is
concerning if obtaining the iterative update information is
costly. Gradient-based methods are commonly used due to their
simplicity, but they tend to be vulnerable to slow convergence around
saddle points. Newton-based methods use second-derivative information
to improve convergence, but they are still liable to be slow in areas
where higher order terms dominate the objective function and even
unstable when the Hessian is ill conditioned. A powerful tool for
combating these Newton-based vulnerabilities is imposing a cubic
regularization on the function's second-order Taylor approximation,
but the current work on this technique does not unify
\emph{distributed}, \emph{stochastic}, and \emph{nonconvex}
elements. Motivated by this, we study the adaptation of the Stochastic
Cubic Regularized Newton approach to solve a distributed nested
optimization problem.

\section*{Binary Formulation}

There has been an explosion of literature
surrounding the design of distributed algorithms for convex
optimization problems and how these pertain to the operation of future
power grids.
A common assumption of these algorithms is the property of
\emph{convexity}, which lends itself to provably optimal solutions
which are \emph{scalable} and \emph{fast}. However, some settings give rise to nonconvex decision sets. For example, in an optimal power dispatch setting, devices available
for providing load-side frequency regulation such as HVAC systems, household appliances, and manufacturing
systems are often limited to
discrete on/off operational modes. It is even preferable to charge populations of electric
vehicles in a discrete on/off manner due to nonlinear battery chemistries. The available tools in optimization for these
nonconvex settings are less mature, and when considering a distributed
setting in which devices act as agents that collectively compute a
solution over a sparse communication graph, the available tools are
significantly less developed.  With this in mind, we
are motivated to develop a \emph{scalable}, \emph{fast} approach for
these binary settings which is amenable to a \emph{distributed}
implementation.

\section*{Topology Design}

Multi-agent systems are pervasive in new
technology spaces such as power networks with distributed energy
resources like solar and wind, mobile sensor networks, and
large-scale distribution systems. In such systems, communication
amongst agents is paramount to the propagation of information, which
often lends itself to robustness and stability of the system. Network
connectivity is well studied from a graph-theoretic standpoint, but
the problem of designing topologies when confronted by engineering
constraints or adversarial attacks is not well addressed by current
works. We are motivated to study the NP-hard graph design problem of
adding edges to an initial topology and to develop a method to solve
it which has both improved performance and
allows for direct application to the aforementioned constrained and
adversarial settings.

\section*{Application: Frequency Regulation with Heterogeneous Energy Resources}

Many recent efforts seek to integrate renewable energy resources with the power grid to reduce the carbon footprint. The high variability associated with wind and solar power can be balanced using  distributed energy resources (DERs) providing ancillary services such as frequency regulation. Consequently, there is a growing interest among market operators in DER aggregations with flexible generation and load capabilities to balance fluctuations in grid frequency and minimize area control errors (ACE). The fast ramping rate and minimal marginal standby cost put many DERs at an advantage against conventional generators and make them suitable for participation in the frequency regulation market. 

The fast ramping rates reduce the required power capacity of DERs to only 10\% of an equivalent generator to balance a frequency drop within 30s~\cite{ZAO-LMC-LA-MTM:19}. However, most individual DERs have small capacities, typically on the order of kWs compared to 10~s of MW for conventional frequency control resources. Commanding the required thousands to millions of DERs to replace existing frequency regulation resources over a large balancing area entails aggregating DERs that are distributed at end points all over the grid on customer premises. The dynamic nature, large number, and distributed location of DERs requires coordination. This is in contrast to existing frequency regulation~\cite{MKM:14} implementation with conventional energy resources. For example, CAISO requires all generators to submit their bids once per regulation interval. Then, the setpoints are assigned centrally to all resources every 2-4 sec without any consideration of operational costs~\cite{CAISO:12}. While distributed control has the potential to enable DER participation in the frequency regulation market (e.g.,~\cite{PS-CYC-JC:18-acc}), there is a general lack of large-scale testing to prove its effectiveness for widespread adoption by system operators. The 2017 National Renewable Energy Laboratory Workshop on Autonomous Energy Grids~\cite{NREL:17} concluded that ``A major limitation in developing %(...) 
new technologies for autonomous energy systems is that there are no large-scale test cases (...). These test cases serve a critical role in the development, validation, and dissemination of new algorithms''.

\end{dissertationintroduction}

\chapter{Notation and Preliminaries}
\label{chap:prelims}

\section{Notation}
%%% DANA %%%
%\margin{to do: unify all notation}
Let $\real$ and $\real_{+}$ denote the set of real and positive real
	numbers, respectively, and let $\natural$ denote the set of natural
	numbers.  For a vector ${x \in \real^n}$, we denote by $x_i$ the
	$\supscr{i}{th}$ entry of $x$.  For a matrix ${A \in\real^{n \times
			m}}$, we write $A_i$ as the $\supscr{i}{th}$ row of $A$ and
	$A_{ij}$ as the element in the $\supscr{i}{th}$ row and
	$\supscr{j}{th}$ column of $A$, and for $A$ square, $A^{\dagger}$ is the
	Moore-Penrose pseudoinverse of $A$. The
	transpose of a vector or matrix is denoted by $x^\top$ and $A^\top$,
	respectively. We use the shorthand notations ${\ones_n = [1,
		\dots, 1]^\top \in \real^n}$,
	${\zeros_n = [0, \dots, 0]^\top \in \real^n}$, $I_n$
	to denote the ${n \times n}$ identity matrix, and define $\I_n\triangleq I_n -
	\dfrac{\mathbf{1}_n\mathbf{1}^\top_n}{n}$. We refer to this matrix as
	a \emph{pseudo-identity matrix}; note that $\nll{(\I_n)} = \spn
	\{\ones_n\}$. The standard inner
	product of two vectors $x,y \in \real^n$ is written $\langle x,
	y\rangle$, and $x \perp y$ indicates $\langle x, y\rangle = 0$. The orthogonal complement to a span of vectors $a_i$ is written
	$\spn\{a_i\}^\perp$, meaning $x\perp y, \forall x\in\spn\{a_i\},\forall y\in\spn\{a_i\}^\perp$. For a
	real-valued function ${f : \real^n \rightarrow \real}$, the gradient
	vector of $f$ with respect to $x$ is denoted by ${\nabla_x f(x)}$ and
	the Hessian matrix with respect to $x$ by either ${\nabla_{xx} f(x)}$ or ${\nabla^2 f(x)}$. When $f:\real^n\times
	\real^m\rightarrow\real$ takes multiple arguments, we specify the
	differentiation variable(s) as a subscript of $\nabla$. Cartesian products of sets are denoted by a
	superscript, for example, $\{0,1\}^n =
	\{0,1\}\times\dots\times\{0,1\}$. 
	The positive (semi) definiteness and negative (semi) definiteness of a
	matrix ${A \in \real^{n \times n}}$ is indicated by ${A \succ 0}$ and
	${A \prec 0}$ (resp. ${A \succeq 0}$ and ${A \preceq 0}$). The same
	symbols are used to indicate componentwise inequalities on vectors of
	equal sizes. The set of eigenvalues of a symmetric matrix ${A \in
		\real^{n \times n}}$ is ordered as $\mu_1(A) \leq \dots \leq \mu_n(A)$
	with
	associated eigenvectors $v_1, \dots , v_n \in \real^n$.  An orthogonal
	matrix ${T \in \real^{n \times n}}$ has the property ${T^\top T = T
		T^\top = I_n}$ and ${T^\top = T^{-1}}$. For a finite set
	$\mathcal{S}$, $\vert \mathcal{S} \vert$ is the cardinality of the
	set. The
	standard Euclidean norm and the Kronecker product are indicated by
	$\| \cdot \|$, $\otimes$,
	respectively. We denote elementwise operations on vectors $x,y\in\real^n$ as $(x_i y_i)_i = (x_1 y_1, \dots, x_n y_n)^\top$, $(x_i)_i^2 = (x_1^2, \dots, x_n^2)^\top$,
	$(c/x_i)_i = (c/x_1, \dots, c/x_n)^\top$, $\log (x_i)_i = (\log(x_1), \dots, \log(x_n))^\top$, and $(e^{x_i})_i = (e^{x_1}, \dots, e^{x_n})^\top$. The notation $\diag{x}$ indicates the diagonal matrix with entries given by elements of $x$, and
	$\B(x,\eta )$ denotes the closed ball of radius $\eta$ centered at $x$. 
	Probabilities and
	expectations are indicated by $\Pp$ and $\Ex$, respectively. The Dirac delta function centered at $a\in\real$ is denoted by $\delta_a$, and the uniform distribution on $[a,b]$ is denoted by $\U[a,b]$.
	We define the projection
	\begin{equation*}
	\left[ u \right]^+_v := \begin{cases}
	u, & v > 0, \\
	\text{max}\{0,u\}, & v \leq 0.
	\end{cases}
	\end{equation*}

	\section{Graph Theory}
	We refer to~\cite{CDG-GFR:01} as a supplement for the concepts we
	describe throughout this section.
	A network of agents is represented by a graph $\G =
	(\nodes,\mathcal{E})$, assumed undirected, with a node set $\nodes =
	\until{n}$ and edge set $\mathcal{E} \subseteq \nodes \times
	\nodes$. The edge set $\mathcal{E}$ has elements $(i,j) \in
	\mathcal{E}$ for $j \in \nodes_i$, where $\nodes_i \subset \nodes$ is
	the set of neighbors of agent $i \in \nodes$.  The union of neighbors
	to each agent $j \in \nodes_i$ are the 2-hop neighbors of agent $i$,
	and denoted by $\nodes_i^2$. More generally, $\nodes_i^p$, or set of
	$p$-hop neighbors of $i$, is the union of neighbors of agents in
	$\nodes_i^{p-1}$. In Chapter~\ref{chap:DANA}, we consider \emph{weighted} edges for the sake of defining the graph Laplacian; the role of edge weightings and the associated design problem is described in Section~\ref{sec:laplacian-design}. The graph $\G$ then has a \emph{weighted} Laplacian ${L
		\in \real^{n \times n}}$ defined as
	\begin{equation*}
	L_{ij} =
	\begin{cases}
	-w_{ij}, & j \in \nodes_i, j \neq i, \\
	w_{ii}, & j = i, \\
	0, & \text{otherwise},
	\end{cases}
	\end{equation*}
	with weights $w_{ij} = w_{ji} > 0, j \in \nodes_i, j \neq i$, and total
	incident weight $w_{ii}$ on $i\in \nodes$, $w_{ii} = \sum_{j \in
	\nodes_i} w_{ij}$. From Chapter~\ref{chap:DiSCRN} onward, $L$ is taken to be \emph{unweighted}, i.e. $w_{ij} = 1, j \in \nodes_i, j \neq i$. Evidently, $L$ has an eigenvector ${v_1 =
	\ones_n}$ with an associated eigenvalue ${\mu_1 = 0}$, and ${L
	= L^\top \succeq 0}$. The graph is connected i.f.f. $0$ is a
	simple eigenvalue, i.e.~$0 = \mu_1 < \mu_2 \leq \dots \leq \mu_n$, and it is well known that the multiplicity of 
	the zero eigenvalue is equal to the number of
	connected components in the graph~\cite{CDG-GFR:01}.
	
	The Laplacian $L$ can be written via its incidence matrix
	${E \in \{-1,0,1\}^{\vert \mathcal{E} \vert \times n}}$ and a diagonal
	matrix ${X \in \real_{+}^{\vert \mathcal{E} \vert \times \vert
	\mathcal{E} \vert}}$ whose entries are weights $w_{ij}$. Each
	row of $E$ is associated with an edge $(i,j)$ whose $\supscr{i}{th}$
	element is $1$, $\supscr{j}{th}$ element is $-1$, and all other
	elements zero.  Then, $L = E^\top X E$.

	\section{Schur Complement}
	The following lemma will be used in the sequel.
	\begin{lem}
		\cite{FZ:05}\longthmtitle{Matrix Definiteness via Schur
			Complement}\label{lem:schur-comp}
		Consider a symmetric matrix $M$ of the form
		\begin{equation*}
		M = \begin{bmatrix}
		A & B \\ B^\top & C
		\end{bmatrix}.
		\end{equation*}
		If $C$ is invertible, then the following properties hold: \\
		(1) $M \succ 0$ if and only if $C \succ 0$ and $A - BC^{-1}B^\top \succ 0$. \\
		(2) If $C \succ 0$, then $M \succeq 0$ if and only if $A - BC^{-1}B^\top \succeq 0$.
	\end{lem}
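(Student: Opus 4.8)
The plan is to reduce both claims to a single block congruence (the block $LDL^\top$ factorization), after which everything follows from the fact that congruence preserves definiteness.

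\textbf{Step 1: the factorization.} Since $M$ is symmetric, $C = C^\top$, hence $C^{-1}$ is symmetric and $(C^{-1}B^\top)^\top = B C^{-1}$. Set
\begin{equation*}
U = \begin{bmatrix} I & 0 \\ C^{-1}B^\top & I \end{bmatrix}, \qquad D = \begin{bmatrix} A - BC^{-1}B^\top & 0 \\ 0 & C \end{bmatrix},
\end{equation*}
and write $S := A - BC^{-1}B^\top$ for the Schur complement. A direct block multiplication (two $2\times 2$ block products, using $U^\top = \left[\begin{smallmatrix} I & BC^{-1} \\ 0 & I\end{smallmatrix}\right]$) gives $M = U^\top D U$; this is the one genuine computation in the proof and amounts to ``completing the square'' in the $(1,1)$ block.

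\textbf{Step 2: congruence preserves definiteness.} The matrix $U$ is block unit-lower-triangular, hence invertible, so $x \mapsto y := Ux$ is a bijection of $\real^{n}$ ($n$ the order of $M$) that sends $0$ to $0$. For every $x$, $x^\top M x = (Ux)^\top D (Ux) = y^\top D y$. Consequently $M \succ 0$ iff $y^\top D y > 0$ for all $y \neq 0$, i.e.\ iff $D \succ 0$; identically, $M \succeq 0$ iff $D \succeq 0$. (This is Sylvester's law of inertia applied to the congruence $D = (U^{-1})^\top M U^{-1}$, but the change of variables makes it self-contained.)

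\textbf{Step 3: read off the block-diagonal conditions.} Because $D$ is block diagonal, $D \succ 0$ iff $S \succ 0$ and $C \succ 0$; combined with Step 2 this is exactly claim (1). For claim (2), assume $C \succ 0$: then $D \succeq 0$ iff $S \succeq 0$ and $C \succeq 0$, and the second condition holds automatically, so $M \succeq 0$ iff $S \succeq 0$. There is no real obstacle here; the only points requiring care are the symmetry bookkeeping in Step 1 (the $(1,2)$ block of $U^\top$ is $BC^{-1}$ precisely because $C = C^\top$) and, in claim (2), remembering that the standing hypothesis $C \succ 0$ is what licenses dropping the redundant requirement $C \succeq 0$.
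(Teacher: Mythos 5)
Your proof is correct and complete: the block factorization $M = U^\top D U$ checks out, the congruence argument is sound, and the reduction of claim (2) under the standing hypothesis $C \succ 0$ is handled properly. Note that the paper itself offers no proof of this lemma --- it is quoted directly from the cited reference \cite{FZ:05} --- and your block $LDL^\top$ (completion-of-squares) congruence argument is precisely the standard proof given in that literature, so there is nothing to reconcile.
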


	\section{Taylor Series Expansion for Matrix Inverses} \label{ssec:Taylor}
	
	A full-rank matrix $A \in \real^{n \times n}$ has a matrix inverse,
	$A^{-1}$, which is characterized by the relation $AA^{-1} = I_n$. In
	principle, it is not straightforward to compute this inverse via a
	distributed algorithm.  However, if the eigenvalues of $A$ satisfy
	$\vert 1 - \mu_i(A) \vert < 1, \forall \,i \in \mathcal{N}$, then we
	can employ the Taylor expansion to compute its inverse~\cite{GS:98}:
	\begin{align*}
	A^{-1} = \sum_{p=0}^\infty (I_n - A)^p.
	\end{align*}
	
	To quickly see this holds, substitute $B = I_n - A$, multiply both sides by
	$I_n - B$ and reason with $\lim_{p\rightarrow\infty}$.
	Note
	that, if the sparsity structure of $A$ represents a network topology,
	then traditional matrix inversion techniques such as Gauss-Jordan
	elimination still necessitate all-to-all communication. However,
	agents can communicate and compute locally to obtain each term in the
	previous expansion.  If $A$ is normal, it can be seen via the diagonalization of $I_n -
	A$ that the terms of the sum become small as $p$ increases due to the
	assumption on the eigenvalues of $A$~\cite{SF-AI-LS:03}. The convergence of these terms
	is exponential and limited by the slowest converging mode,
	i.e.~$\max{\vert 1 - \mu_i(A) \vert}$.
	
	We can compute an approximation of $A^{-1}$ in finite steps by
	computing and summing the terms up to the $\supscr{q}{th}$ power.  We
	refer to this approximation as a \textit{q-approximation} of $A^{-1}$.

	%%% DiSCRN %%%
	\section{Cubic-Regularized Newton Algorithm}\label{ssec:pre-CRN}
	We now provide a brief background on the Cubic-Regularized Newton
	method, which will be referred to in Chapter~\ref{chap:DiSCRN}. See~\cite{YN-BTP:06} and~\cite{CC-NG-PT:09P1,CC-NG-PT:10P2}
	for more information.  Consider the problem of minimizing a (possibly
	nonconvex) function $f:\real^d\rightarrow \real$:
	\begin{equation}\label{eq:CRN-prob}
	\underset{x\in\real^d}{\text{min}} \ f(x).
	\end{equation}
	As nonconvex optimization is typically intractable in high dimensions,
	a typical objective is to converge to an
	$\epsilon$-second-order stationary point.
	\begin{defn}\longthmtitle{$\epsilon$-Second-Order Stationary Point}\label{def:so-stat-point}
		A point $x^\star$ is an $\epsilon$-second-order stationary point of
		$f$ if
		\begin{equation}
		\|\nabla_{x} f(x^\star)\|\leq \epsilon \quad \text{and} \quad \lambda_{\text{min}} (\nabla^2_{xx}f(x^\star)) \geq -\sqrt{\rho\epsilon}.
		\end{equation}
	\end{defn}
	Here, $\rho$ is commonly taken to be the Lipschitz constant of
	$\nabla^2_{xx}f$, which we will formalize in
	Section~\ref{sec:prob-form}.

	One useful iterative model for minimizing $f(x^k)$ when the function
	is strictly convex at the current iterate $x^k$ {(or, more accurately, if it is strictly convex on some neighborhood of $x^k$)} is descent on a second-order Taylor expansion around
	$x^k$:{\small
		\begin{equation}\label{eq:so-model}
		\begin{aligned}
		\quad x^{k+1} =  \underset{x}{\argmin} \ \bigg{\{} f(x^k) + (x-x^k)^\top
		\nabla f(x^k) + {\frac{1}{2}}(x-x^k)^\top \nabla^2 f(x^k) (x-x^k)\bigg{\}} = x^k - \nabla^{-2}f(x^k)\nabla f(x^k).
		\end{aligned}
		\end{equation}}
	This closed form expression for $x^{k+1}$ breaks down when $f$ is
	nonconvex due to some eigenvalues of $\nabla^2 f(x^k)$ having negative
	sign. Further, when $\nabla^2 f(x^2)$ is nearly-singular, the update
	becomes very large in magnitude and can lead to instability.
	For this reason, consider amending the second-order model with a
	cubic-regularization term, to
	obtain 
	the cubic-regularized, third-order model of $f$ at $x^k$ as:
	\begin{equation}\label{eq:to-model}
		\begin{aligned}
		m_k(x) \triangleq \bigg{\{}f(x^k) + (x-x^k)^\top \nabla f(x^k) + {\frac{1}{2}}(x-x^k)^\top \nabla^2 f(x^k) (x-x^k) + \frac{\rho}{6}\| x-x^k \|^3\bigg{\}}.
	\end{aligned}
	\end{equation}
	The update is naturally given by a minimizer to this model:
	$x^{k+1} \in \underset{x}{\argmin} \ m_k(x).$
	Unfortunately, this model does not beget a closed-form minimizer as
	in~\eqref{eq:so-model}, nor is it convex if $f$ is not convex. The
	model does, however, become convex for $x$ very far from $x^k$, which
	can be seen by computing the Hessian of $m_k$ as $\nabla^2 m_k(x) =
	\nabla^2 f(x^k) + \rho \| x-x^k\| I_n$. Additionally, $m_k$ is an
	\emph{over-estimator} for $f$, {i.e. $m_k (x) \geq f(x), \forall x$. This} is seen by considering the cubic
	term and recalling Lipschitz properties of $\nabla^2 f$; we describe
	this observation in more detail in Chapter~\ref{chap:DiSCRN}.  Therefore, $m_k$
	possesses some advantages over other simpler submodels as it possesses
	properties of a more standard Newton-based, second-order model while
	being sufficiently conservative.
	
	Finally,~\cite{YC-JD:19} recently showed that simply initializing
	$x=x^k-r\nabla f(x^k)/\|\nabla f(x^k){\|}$ for $r\geq 0$ is sufficient to
	show that gradient descent on $m_k$ converges to the global minimizer
	of~\eqref{eq:to-model} (under light conditions on $r$ and the gradient
	step size).
	
	%%% NNN %%%
	\section{PT-Inverse}
	
	Next, we introduce the Positive-definite Truncated inverse
	(PT-inverse) and its relevance to nonconvex Newton methods.
	
	\begin{defn}[\hspace{1sp}\cite{SP-AM-AR:19}]\longthmtitle{PT-inverse}\label{def:ptinv}
		Let $A\in\real^{n\times n}$ be a symmetric matrix with an
		orthonormal basis of eigenvectors $Q\in\real^{n\times n}$ and
		diagonal matrix of eigenvalues $\Lambda\in\real^{n\times n}$. Consider a constant $m>0$ and define
		$\vert\Lambda\vert_m\in\real^{n\times n}$ by:
		\begin{equation*}
		(\vert \Lambda \vert_m)_{ii} = \begin{cases} \vert\Lambda_{ii}\vert, & \vert\Lambda_{ii}\vert \geq m, \\
		m, & \text{otherwise.}
		\end{cases}
		\end{equation*}
		The PT-inverse of $A$ with parameter $m$ is defined by $(\vert
		A\vert_m)^{-1} = Q^\top (\vert\Lambda\vert_m)^{-1} Q \succ 0$.
	\end{defn}
	
	The PT-inverse operation flips the sign on the negative eigenvalues of
	$A$ and truncates near-zero eigenvalues to a (small) positive value
	$m$ before conducting the inverse. Effectively, this generates a
	positive definite matrix bounded away from zero to be inverted,
	circumventing near-singular cases. In terms of
	computational complexity, it is on the order of standard
	eigendecomposition (or more generally, singular value
	decomposition), which is roughly $O(n^3)$~\cite{VP-ZC:99}. However,
	we note in Section~\ref{sec:dist-hop} that the matrix to be
	PT-inverted is diagonal, which is $O(n)$.
	
	The PT-inverse is useful for nonconvex Newton
	approaches~\cite{SP-AM-AR:19} in the following sense: first, recall
	that the Newton descent direction of $f$ at $x$ is computed as
	$-\left(\nabla_{xx} f(x)\right)^{-1}\nabla_x f(x)$. For $f$ strictly
	convex, it holds that $\nabla_{xx} f(x) \succ 0$ and the Newton
	direction is well defined and decreases the cost. For (non-strictly)
	convex or nonconvex cases, $\nabla_{xx}f(x)$ will be singular,
	indefinite, or negative definite. A PT-inverse operation remedies
	these cases and preserves the descent quality of the
	method. Additionally, saddle points are a primary concern for
	first-order methods in nonconvex settings~\cite{YD-RP-CG-KC-SG-YB:14},
	and the Newton flavor endowed by the PT-inverse effectively performs a
	change of coordinates on saddles with ``slow" unstable manifolds
	compared to the stable manifolds. We discuss this further in
	Section~\ref{sec:all-hop}.
	
	%%% Top %%%
	\section{Set Theory}
	{A limit point $p$ of a set $\Px$ is a point such that any neighborhood
		$\mathcal{B}_\epsilon (p)$ contains a point $p'\in\Px$. A set is closed
		if it contains all of its limit points, it is bounded if it is
		contained in a ball of finite radius, and it is compact if it is both
		closed and bounded. } Let $\mathcal{A}_i = \setdef{p}{a_i^\top p \geq
		b_i}$ be a closed half-space and $\Px = \mathcal{A}_1 \cap \dots \cap
	\mathcal{A}_r \subset \R^m$ be a finite intersection of closed
	half-spaces. If $\Px$ is compact, we refer to it as a
	polytope. Consider a set of points $\mathcal{F} =
	\setdef{p\in\Px}{a_i^\top p = b_i, i\in \mathcal{I} \subseteq
		\until{r}; a_j^\top p \geq b_j, j\in
		\until{r}\setminus\mathcal{I}}$. Let $h = \dim(\spn\{a_i\})$ be the
	dimension of the subspace spanned by the vectors
	$\{a_i\}_{i\in\mathcal{I}}$.  Then, we refer to $\mathcal{F}$ as an
	$(m-h)$-dimensional face of $\Px$. Lastly, denote the affine hull of
	$\mathcal{F}$ as $\aff(\mathcal{F}) = \setdef{p + w}{p,w\in\R^m,
		p\in\mathcal{F}, w \perp \spn\{a_i\}_{i\in\mathcal{I}}}$ and define
	the relative interior of $\mathcal{F}$ as $\relint(\mathcal{F}) =
	\setdef{p}{\exists \epsilon > 0 \text{ s. t. }
		\mathcal{B}_\epsilon(p)\cap\aff(\mathcal{F}) \subset
		\mathcal{F}}$. 
\chapter{Distributed Approximate Newton Algorithms and Weight Design for Constrained Optimization}
\label{chap:DANA}
Motivated by economic dispatch and linearly-constrained resource
allocation problems, this chapter proposes a class of novel
$\distnewton$ algorithms that approximate the standard Newton
optimization method. We first develop the notion of an optimal edge
weighting for the communication graph over which agents implement
the second-order algorithm, and propose a convex approximation for
the nonconvex weight design problem.
This weight design formulates to a nonconvex bilinear
optimization, and we propose a convex approximation that
is loosely based on completing the square to compute
adequate solutions.
We next build on the optimal weight design to develop a
\distnewtondisc algorithm which converges linearly
to the optimal solution for economic dispatch problems with unknown
cost functions and relaxed local box constraints. For the full
box-constrained problem, we develop a \distnewtoncont algorithm
which is inspired by first-order saddle-point methods and rigorously
prove its convergence to the primal and dual optimizers. A main
property of each of these distributed algorithms is that they only
require agents to exchange constant-size communication messages,
which lends itself to scalable implementations. Simulations
demonstrate that the $\distnewton$ algorithms with our weight design
have superior convergence properties compared to existing weighting
strategies for first-order saddle-point and gradient descent
methods.

\section{Bibliographical Comments}  The Newton method for minimizing a
real-valued multivariate objective function is well characterized for
centralized contexts in~\cite{SB-LV:04}.  Another centralized method
for solving general constrained convex problems by seeking the
saddle-point of the associated Lagrangian is developed
in~\cite{AC-EM-SHL-JC:18-tac}. This method, which implements a
saddle-point dynamics
% gradient-descent
% in the primal variable alongside gradient-ascent in the dual variable,
% and 
is attractive because its convergence properties can be established.
Other first-order or primal-dual based methods for approaching distributed optimization include~\cite{EM-CZ-SL:17,TD-CB:17,RC-GN:13,SHM-MJ:18}.
% an explicit Lyapunov function which is
% strictly decreasing in time can be found.
However, these methods typically do not incorporate second-order
information of the cost function, which compromises convergence speeds.  The notion of computing an
\emph{approximate} Newton direction in distributed contexts has gained
popularity recently, such as~\cite{AM-QL-AR:17}
and~\cite{EW-AO-AJ:13P1,EW-AO-AJ:13P2}.
In the former work, the authors propose a method which uses the Taylor
series expansion for inverting matrices. However, it assumes that each
agent keeps an estimate of the entire decision variable, which does
not scale well in problems where this variable dimension is equal to
the number of agents in the network. Additionally, the optimization is
unconstrained, which helps to keep the problem decoupled but is
narrower in scope.  The latter works pose a separable optimization
with an equality constraint characterized by the incidence matrix. The
proposed method may be not directly applied to networks with
constraints that involve the information of all agents.  The
papers~\cite{DJ-JX-JM-14,FZ-DV-AC-GP-LS:16,RC-GN-LS-DV:15}
incorporate multi-timescaled dynamics together with a dynamic
consensus step to speed up the convergence of the agreement
subroutine.  These works only consider uniform edge weights, while
sophisticated design of the weighting may improve the
convergence. In~\cite{LX-SB:06}, the Laplacian weight design problem
for separable resource allocation is approached from a $\distgrad$
perspective. Solution post-scaling is also presented, which can be
found similarly in~\cite{MM-RT:07} and~\cite{YS:03} for improving the
convergence of the Taylor series expression for matrix inverses.
In~\cite{SYS-MA-LEG:10}, the authors consider edge weight design to
minimize the spectrum of Laplacian matrices. However, in the Newton
descent framework, the weight design problem formulates as a nonconvex
bilinear problem, which is challenging to solve.  Overall, the current
weight-design techniques that are computable in polynomial time are
only mindful of first-order algorithm dynamics. A second-order
approach has its challenges, which manifest themselves in a bilinear
design problem and more demanding communication requirements, but
using second-order information is more heedful of the problem geometry
and leads to faster convergence speeds.

\section*{Statement of Contributions} In this chapter, we propose a
novel framework to design a weighted Laplacian matrix that is used in
the solution to a multi-agent optimization problem via sparse
approximated Newton algorithms. Motivated by economic dispatch, we
start by formulating a separable resource allocation problem subject
to a global linear constraint and local box constraints, and then
derive an equivalent form without the global constraint by means of a
Laplacian matrix, which is well suited for a distributed framework. We
use this to motivate weighting design of the elements of the Laplacian
matrix and formulate this problem as a bilinear optimization. We
develop a convex approximation of this problem whose solution can be
computed offline in polynomial time. A bound on the \emph{best-case}
solution of the original bilinear problem is also given.

We aim to bridge the gap between classic Newton and $\distnewton$
methods. To do this, we first relax the box constraints and develop a
class of constant step-size discrete-time algorithms.  The Newton
step associated with the unconstrained optimization problem do not
inherit the same sparsity as the distributed communication network. To
address this issue, we consider approximations based on a Taylor
series expansion, where the first few terms inherit certain level of
sparsity as prescribed by  the Laplacian matrix.  We analyze
the approximate algorithms and show their convergence for any
truncation of the series expansion.

We next study the original problem with local box constraints, which
has never been considered in the framework of a distributed Newton
method,
% and develop an approximation of the inverse-Hessian based on a
% Taylor series expansion, where the first few terms inherit a certain
% level of sparsity as prescribed by choice of the Laplacian matrix.
and present a novel continuous-time $\distnewton$ algorithm. The
convergence of this algorithm to the optimizer is rigorously studied
and we give an interpretation of the convergence in the Lyapunov
function sense. Furthermore, through a formal statement of the
proposed $\dana$ ($\distnewton$ algorithm), we find several
interesting insights on second-order distributed methods. We compare
the results of our design and algorithm to a generic weighting design
of $\distgrad$ ($\dgd$) implementations in
simulation. Our weighting design shows superior convergence to $\dgd$.

\section{Problem Statement} \label{sec:prob-statement} Motivated by
the economic dispatch problem, in this section we pose the separable
resource allocation problem that we aim to solve distributively. We
reformulate it as an unconstrained optimization problem whose decision
variable is in the span of the graph Laplacian, and motivate the
characterization of a second-order Newton-inspired method.

Consider a group of agents $\nodes$, indexed by $i \in \nodes$, and a
communication topology given by $\G$. Each agent is associated with a
local convex cost function $f_i : \real \rightarrow \real$. These
agents can be thought of as generators in an electricity market, where
each function argument $x_i \in \real$, $i \in \nodes$ represents the
power that agent $i$ produces at a cost characterized by $f_i$. The
economic dispatch problem aims to satisfy a global load-balancing
constraint $\sum_{i=1}^n x_i = d$ for minimal global cost $f : \real^n
\rightarrow \real$, where $d$ is the total demand. In addition, each
agent is subject to a local linear box constraint on its decision
variable given by the interval
$[\underline{x}_i,\overline{x}_i]$. Then, the economic dispatch
optimization problem is stated as:
\begin{subequations} \label{eq:p1-opt}
\begin{align}
\Pc 1: \ & \underset{x}{\text{min}}
& & f(x) = \sum_{i=1}^n f_i(x_i) \label{eq:p1-cost}\\
& \text{subject to}
& & \sum_{i=1}^n x_i = d, \label{eq:p1-resource-const}\\
&&& \underline{x}_i \leq x_i \leq \overline{x}_i, \quad i =
\until{n}. \label{eq:p1-box-const}
\end{align}
\end{subequations}
Distributed optimization algorithms based on a gradient descent
approach to solve $\Pc 1$ are available~\cite{MZ-SM:15-book}.
However, by only taking into account first-order information of the
cost functions, these methods tend to be inherently slow. As for a
Newton (second-order) method, the constraints make the computation of
the descent direction non-distributed. To see this, consider
only~\eqref{eq:p1-cost}--\eqref{eq:p1-resource-const}.  Recall the
unconstrainted Newton step defined as $\subscr{x}{nt} := -\nabla_{xx}
f(x)^{-1}\nabla_x f(x)$, see e.g.~\cite{SB-LV:04}. In this context,
the equality constraint can be eliminated by imposing ${x_n = d -
\sum_{i=1}^{n-1} x_i}$. Then, \eqref{eq:p1-cost} becomes ${f(x) =
\sum_{i=1}^{n-1} f_i(x_i) + f_n(d - \sum_{i=1}^{n-1} x_i)}$. In
general, the resulting Hessian $\nabla_{xx} f(x)$ is fully populated
and its inverse requires all-to-all communication among agents in
order to compute the second-order descent direction. If we
additionally consider~\eqref{eq:p1-box-const}, interior point methods
are often employed, such as introducing a log-barrier function to the
cost in~\eqref{eq:p1-cost}\cite{SB-LV:04}. The value of the
log-barrier parameter is updated online to converge to a
feasible solution, which exacerbates the non-distributed nature of
this approach. This motivates the design of distributed Newton-like
methods which are cognizant
of~\eqref{eq:p1-resource-const}--\eqref{eq:p1-box-const}.

We eliminate \eqref{eq:p1-resource-const} by introducing a 
network topology as encoded by a Laplacian matrix $L$ associated with $\G$ 
and an initial condition $x^0 \in \real^n$ with some
assumptions.
\begin{assump}\longthmtitle{Undirected and Connected Graph} 	\label{ass:conn-graph}
The weighted graph characterized by $L$ is undirected and connected,
i.e. $L = L^\top$ and $0$ is a simple eigenvalue of $L$.
\end{assump}
% \{I think undirected graphs always have $L$ symmetrical because $L =
% E^\top XE$. Maybe we don't need this assumption and instead mention
% earlier that $L$ is the Laplacian matrix associated with a connected
% undirected graph. Tor: I rephrased this}
\begin{assump}\longthmtitle{Feasible Initial
	Condition} \label{ass:initial}
The initial state $x^0$ satisfies~\eqref{eq:p1-resource-const}, i.e.
\begin{equation*}
\sum_{i=1}^n x_i^0 = d.   
\end{equation*}
\end{assump}
If the problem context does not lend itself well to satisfying
Assumption~\ref{ass:initial}, there is a distributed algorithmic
solution to rectify this via dynamic consensus that can be found
in~\cite{AC-JC:16-auto} which could be modified for a Newton-like
method. Given these assumptions, $\Pc 1$ is equivalent to:
\begin{subequations} \label{eq:p2-opt}
\begin{align}
\Pc 2: \
& \underset{z}{\text{min}}
& & f(x^0 + Lz) = \sum_{i=1}^n f_i(x^0_i + L_i z) \label{eq:p2-cost}\\
& \text{subject to}
& & \underline{x} - x^0 - Lz \preceq \zeros_n, \label{eq:p2-box-const1} \\
&&& x^0 + Lz -\overline{x} \preceq \zeros_n . \label{eq:p2-box-const2}
\end{align}
\end{subequations}
Using the property that $\ones_n$ is an eigenvector of $L$
associated with the eigenvalue $0$, we have that $\ones_n^\top
(x^0 + Lz) = d$.  Newton descent for centralized solvers is given
in~\cite{SB-LV:04}; in our distributed framework, the row space of the
Laplacian is a useful property to
address~\eqref{eq:p1-resource-const}.

\begin{rem}\longthmtitle{Relaxing Assumption~\ref{ass:initial}}\label{rem:robust}
The assumption on the initial condition can render the formulation
vulnerable to implementation errors and cannot easily accommodate
packet drops in a distributed algorithm. A potential workaround for
this is outlined here. Consider, instead
of~\eqref{eq:p1-resource-const} in $\Pc 1$, the $n$ linear
constraints:
\begin{equation}\label{eq:dist-equal}
x + Lz = \overline{d},
\end{equation}
where $\overline{d}\in\real^n , \ones_n^{\top} \overline{d} =
d$ and~\eqref{eq:p1-resource-const} can be recovered by
multiplying~\eqref{eq:dist-equal} from the left by
$\ones_n$. (As an aside, it may be desirable to impose
sparsity on $\overline{d}$ so that only some agents need
access to global problem data). Both $x\in\real^n$ and
$z\in\real^n$ become decision variables, and agent $i$ can
verify the $\supscr{i}{th}$ component of~\eqref{eq:dist-equal}
with one-hop neighbor information. Further, a distributed
saddle-point algorithm can be obtained by assigning a dual
variable to~\eqref{eq:dist-equal} and proceeding as
in~\cite{AC-EM-SHL-JC:18-tac}.

We provide a simulation justification for this approach in
Section~\ref{ssec:sims-robust-dana}, although the analysis of
robustness to perturbations and packet drops is ongoing and
outside the scope of this chapter. For now we strictly impose
Assumption~\ref{ass:initial}. 
\end{rem}

We aim to leverage the freedom given by the elements of $L$ in order
to compute an approximate Newton direction to $\Pc 2$. To this end, we
adopt the following assumption.
\begin{assump}\longthmtitle{Cost Functions}\label{ass:cost}
The local costs $f_i$ are twice continuously differentiable and
strongly convex with bounded second-derivatives given by
\begin{equation*}
0 < \delta_i \leq \dfrac{\partial^2 f_i}{\partial x_i^2} \leq \Delta_i,
\end{equation*}
for every $i\in\nodes$ with given $\delta_i, \Delta_i\in\real_+$.
\end{assump}
This assumption is common in other distributed Newton or Newton-like
methods, e.g.~\cite{AM-QL-AR:17,DJ-JX-JM-14} and in classical convex
optimization~\cite{SB-LV:04,YN:13}. Assumption~\ref{ass:cost} is
necessary to attain convergence in our computation of the Newton
step/direction and to construct the notion of an optimal edge
weighting $L$.
We adopt the shorthands $H(x) := \nabla_{xx}
f(x)$,
$H_\delta := \diag{\delta}$, and $H_\Delta := \diag{\Delta}$ as the
diagonal matrices with elements given by $\partial^2
f_i(x_i)/ \partial x_i^2$, $\delta_i$, and $\Delta_i$,
respectively. 

Next, for the purpose of developing a distributed Newton-like method,
we must slightly rethink the idea of inverting a Hessian matrix. By
application of the chain rule, we have that $\nabla_{zz}f(x^0 + Lz) =
LH(x^0 + Lz)L$. Clearly, $\nabla_{zz}f$ is non-invertible due to the
smallest eigenvalue of $L$ fixed at zero, a manifestation of the
equality constraint in the original problem $\Pc 1$. We instead focus
on the $n-1$ nonfixed eigenvalues of $\nabla_{zz}f$ to employ the Taylor
expansion outlined in Section~\ref{ssec:Taylor}. To this end,
we project $LH(x^0 + Lz)L$ to the $\real^{(n-1)\times (n-1)}$ space with a
coordinate transformation; the justification for this and relation to
the traditional Newton method are made explicitly clear in
Section~\ref{sec:discrete-red}. We seek a matrix $T\in\real^{n\times n}$ satisfying $T^\top T = I_n - \ones_n\ones_n^\top /n$~\cite{SF-AI-LS:03}; the particular matrix $T$ we employ is given as
\begin{equation*}
T\hspace{-1mm} = \hspace{-1mm}\begin{bmatrix}
n\hspace{-1mm} -\hspace{-1mm} 1 + \hspace{-1mm}\sqrt{n} 
& -1 & \cdots & -1 & \dfrac{1}{\sqrt{n}} \\
-1 & \ddots & \cdots & \vdots & \\
\vdots & & \ddots & -1 & \vdots \\
-1 & \cdots & -1 & n\hspace{-1mm} -\hspace{-1mm} 1 + \hspace{-1mm}\sqrt{n} & \\
-1 - \sqrt{n} & \cdots & \cdots & -1 - \sqrt{n} & \dfrac{1}{\sqrt{n}}
\end{bmatrix}\hspace{-1mm}\diag{\hspace{-.5mm}\begin{bmatrix}
\rho \\ 1
\end{bmatrix}\hspace{-.5mm}},
\end{equation*}
where $\rho = \sqrt{n(n+1+2\sqrt{n})}^{-1}\ones_{n-1}$. This choice of $T$ has the effect of projecting the null-space of the Hessian onto the $\supscr{n}{th}$ row and $\supscr{n}{th}$ column, which is demonstrated by defining
$M(x):= JT^\top LH(x)L TJ^\top \in \real^{(n-1)\times (n-1)}$, where $J
= \begin{bmatrix} I_{n-1} & \zeros_{n-1}\end{bmatrix}$. The matrix
$M(x)$ shares its $n-1$ eigenvalues with the $n-1$ nonzero eigenvalues
of $LH(x)L$ at each $x$, and $M(x)^{-1}$ is well defined, which provides
us with a concrete notion of an inverse Hessian. We now adopt the
following assumption.
\begin{assump}\longthmtitle{Convergent Eigenvalues}\label{ass:e-vals}
For any $x$, the eigenvalues of $I_{n-1} - M(x)$, corresponding to
the $n-1$ smallest eigenvalues of $I_{n} - LH(x)L$, are contained in
the unit ball, i.e. $\exists \ \varepsilon < 1$ such
that
\begin{equation*}
-\varepsilon I_{n-1} \preceq I_{n-1} - M(x) \preceq \varepsilon I_{n-1}.
\end{equation*}
\end{assump}
Technically speaking, we are only concerned with arguments of $M$
belonging to the $n-1$ dimensional hyperplane $\setdef{x^0 +
Lz}{z\in\real^n}$, although we consider all $x\in\real^n$ for
simplicity. In the following section, we address
Assumption~\ref{ass:e-vals} (Convergent Eigenvalues) 
by minimizing $\epsilon$ via weight design of the Laplacian. By doing
this, we aim to obtain a good approximation of $M^{-1}$ from the
Taylor expansion with small $q$, which lends itself well to the
convergence of the distributed algorithms in
Sections~\ref{sec:discrete-red} and~\ref{sec:cont-time-dana}.
\section{Weight Design of the Laplacian}\label{sec:laplacian-design}
In this section, we pose the nonconvex weight design
problem on the elements of $L$, which formulates as a bilinear optimization to be solved by a central authority.
To make this problem tractable, we develop a convex approximation and
demonstrate that the solution is guaranteed to satisfy
Assumption~\ref{ass:e-vals}. Next, we provide a lower bound on the
solution to the nonconvex
problem. This gives a measure of performance for
evaluating our approximation.
\subsection{Formulation and Convex Approximation}\label{sec:topology-design}
Our approach hearkens to the intuition on the rate of convergence of
the $q$-approximation of $M(x)^{-1}$. We design a weighting scheme for a
communication topology characterized by $L$ which lends itself to a
scalable, fast approximation of a Newton-like direction. To this end,
we minimize $\underset{i,x}{\max} \vert 1 - \mu_i(M(x)) \vert$:
\begin{subequations}		%\label{eq:eval problem}
\begin{align}
\Pc 3: \quad
& \underset{\varepsilon, L}{\text{min}}
&&\varepsilon \\
& \text{s.t.} 
&&\hspace*{-0.3cm}-\varepsilon I_{n-1} \preceq I_{n-1} - M(x) \preceq
\varepsilon I_{n-1}, \forall x,\label{eq:eval-prob-BMI} \\
&&& L \ones_n = \zeros_n, \ L \succeq 0, \ L = L^\top, \\
&&& L_{ij} \leq 0, j\in \nodes_i, \ L_{ij} = 0,\, j \notin \nodes_i.
\end{align}
\end{subequations}

Naturally, $\Pc 3$ must be solved offline by a central authority because it requires complete information about the local Hessians embedded in $M(x)$, in addition to being a semidefinite program for which distributed solvers are not mature. Even for a centralized solver $\Pc 3$ is hard for a
few reasons, the first being that~\eqref{eq:eval-prob-BMI}
is a function over all possible $x \in \real^n$. To reconcile with
this, we invoke Assumption~\ref{ass:cost} on the cost functions and
write $M_\delta = JT^\top LH_\delta L TJ^\top$ and $M_\Delta = JT^\top
LH_\Delta L TJ^\top$. Then,~\eqref{eq:eval-prob-BMI} is equivalent to
\begin{subequations}
\begin{align}
-(\varepsilon_-  + 1)I_{n-1} + M_\delta &\preceq 0, \label{eq:BMI-epminus} \\
(1 - \varepsilon_+)I_{n-1} - M_\Delta &\preceq 0, \label{eq:BMI-epplus} \\
\varepsilon_- = \varepsilon_+, \label{eq:BMI-equal}
\end{align}
\end{subequations}
where the purpose of introducing $\varepsilon_-$ and $\varepsilon_+$
will become clear in the discussion that follows.

The other difficult element of $\Pc 3$ is the nonconvexity stemming
from~\eqref{eq:BMI-epminus}--\eqref{eq:BMI-epplus} being bilinear in
$L$. 
There are path-following techniques available to solve bilinear
problems of this form~\cite{AH-JH-SB:99}, but simulation results do
not produce satisfactory solutions for problems of the form $\Pc
3$. Instead, we aim to develop a convex approximation of $\Pc 3$ which
exploits its structure. Consider~\eqref{eq:BMI-epminus}
and~\eqref{eq:BMI-epplus} separately by
relaxing~\eqref{eq:BMI-equal}. In fact,~\eqref{eq:BMI-epminus} may be
rewritten in a convex manner. To do this, write $L$ as a weighted
product of its incidence matrix, $L = E^\top X E$. Applying
Lemma~\ref{lem:schur-comp} makes the constraint become
\begin{equation}\label{ineq:convex_constraint}
\begin{bmatrix}
(\varepsilon_- + 1)I_{n-1} & JT^\top E^\top X E 			\\
E^\top X E T J^\top & H_\delta^{-1}
\end{bmatrix} \succeq 0.
\end{equation}
As for~\eqref{eq:BMI-epplus}, consider the approximation $LH_\Delta L
\approx \left(\dfrac{\sqrt{H_\Delta }L + L\sqrt{H_\Delta
}}{2}\right)^2$. This approximation can be thought of as
a rough completion of squares, which lends itself well to our approach of convexifying~\eqref{eq:BMI-epplus}. One should not expect the approximation to be reliably ``better" or ``worse" than the BMI; rather, it is only intended to reflect the original constraint more than a simple linearization. To this end,
substitute this in $M_\Delta$ to get
\begin{equation*}
\begin{aligned}
&\dfrac{1}{4}J T^\top (\sqrt{H_\Delta }L + L\sqrt{H_\Delta})^2
T J^\top \succeq (1 - \varepsilon_+)I_{n-1} \\
&\dfrac{1}{2}J T^\top (\sqrt{H_\Delta }L + L\sqrt{H_\Delta}) T
J^\top
\succeq \sqrt{(1 - \varepsilon_+)}I_{n-1} \\
&\dfrac{1}{2}J T^\top
(\sqrt{H_\Delta }L + L\sqrt{H_\Delta }) T J^\top \succeq (1 - 
\dfrac{\varepsilon_+}{2}
+\dfrac{\varepsilon_+^2}{8} +
O(\varepsilon_+^3))I_{n-1},
\end{aligned}
\end{equation*}
where the second line uses the property that $T J^\top J T^\top = I_n
- \ones_n\ones_n^\top / n$ is idempotent and that
\begin{equation*}
\begin{aligned}
\left(\dfrac{1}{2}J T^\top (\sqrt{H_\Delta}L + L\sqrt{H_\Delta}) T
J^\top\right)^2 &\succeq (1 - \varepsilon_+)I_{n-1} \succeq 0	\\
\Leftrightarrow \dfrac{1}{2}J T^\top (\sqrt{H_\Delta}L +
L\sqrt{H_\Delta}) T J^\top &\succeq \sqrt{1-\varepsilon_+}I_{n-1}
\succeq 0,
\end{aligned}
\end{equation*}
see~\cite{JV-RB:00}. The third line expresses the right-hand
side as a Taylor expansion about $\varepsilon_+ = 0$. Neglecting the
higher order terms $O(\varepsilon_+^3)$ and applying
Lemma~\ref{lem:schur-comp} gives
\begin{equation}\label{ineq:nonconvex_constraint}
\begin{bmatrix}
\dfrac{1}{2}J T^\top (\sqrt{H_\Delta}L + L\sqrt{H_\Delta}) T J^\top - {(1 -
\dfrac{1}{2}\varepsilon_+)I_{n-1}} & \dfrac{1}{\sqrt{8}}\varepsilon_+ I_{n-1} 			\\
\dfrac{1}{\sqrt{8}}\varepsilon_+ I_{n-1} & I_{n-1}
\end{bmatrix} \succeq 0.
\end{equation}

Returning to $\Pc 3$, note that the latter three constraints are
satisfied by $L = E^\top X E$. Then, the approximate reformulation of
$\Pc 3$ can be written as
\begin{equation*}		%\label{eq:approx eval problem}
\begin{aligned}
\Pc 4: \quad &\underset{\varepsilon_-, \varepsilon_+,
X}{\text{min}}
&&\max(\varepsilon_-, \varepsilon_+) \\
& \text{s.t.}
&& \varepsilon_- \geq 0, \varepsilon_+ \geq 0,			\\
&&& X \succeq 0, \eqref{ineq:convex_constraint}, \eqref{ineq:nonconvex_constraint}.	\\
\end{aligned}
\end{equation*}
This is a convex problem in $X$ and solvable in polynomial time. To
improve the solution, we perform some post-scaling. Take $L^\star_0 =
E^\top X_0^\star E$, where $X_0^\star$ is the solution to $\Pc 4$, and
let $M_{\Delta 0}^\star = JT^\top L^\star_0H_\Delta L^\star_0TJ^\top,
M_{\delta 0}^\star = JT^\top L^\star_0H_\delta
L^\star_0TJ^\top$. Then, consider
\begin{equation*}
\beta = \sqrt{\dfrac{2}{\mu_1(M_{\delta 0}^\star) + \ \mu_{n-1}(M_{\Delta 0}^\star)}},
\end{equation*}
and take $L^\star = \beta L^\star_0$. This shifts the eigenvalues of
$M^\star_0(x)$ to $M^\star(x)$ (defined similarly via $L^\star$) such
that $1 - \mu_1(M_\delta^\star) = -(1 - \mu_{n-1}(M_\Delta^\star))$,
which shrinks $\underset{i,x}{\max}(\vert 1 -
\mu_i(M^\star(x))\vert)$. We refer to this metric as
$\varepsilon_{L^\star} := \underset{i,x}{\max}(\vert 1 -
\mu_i(M^\star(x))\vert)$, and it can be verified that this
post-scaling satisfies
Assumption~\ref{ass:e-vals} with regard to $\varepsilon_{L^\star}$. To see this, first consider scaling $L$ by an arbitrarily small constant, which places the eigenvalues of $I_{n-1} - M(x)$ very close to $1$ and satisfies Assumption~\ref{ass:e-vals}. Then, consider gradually increasing this constant until the lower bound on the minimum eigenvalue and upper bound on the maximum eigenvalue of $I_{n-1} - M(x)$ are equal in magnitude. This is precisely the scaling produced by $\beta$. Then, the solution to $\Pc 4$ followed by
a post scaling by $\beta$ given by $L^\star$ is an approximation of
the solution to the nonconvex problem $\Pc 3$ with the sparsity
structure preserved.

\begin{rem}\longthmtitle{Unknown Local Hessian Bounds}\label{rem:glob-hess-bound}
It may be the case that a central entity tasked with computing some $L^\star$ does not have access to the local bounds $\delta_i, \Delta_i, \forall i$. In this case, globally known bounds $\delta \leq \delta_i, \Delta_i \leq \Delta, \forall i$ can be substituted in place of the local values in the formulation of $\Pc 4$. It can be verified that this will result in a more conservative formulation, and that the resulting $L^\star$ will still satisfy Assumption~\ref{ass:e-vals} at the expense of possibly larger $\varepsilon$.
\end{rem}

\subsection{A Bound on Performance}
We are motivated to find a
``best-case scenario'' 
for our solution given the structural constraints of the
network. Instead of solving
$\Pc 3$ for $L$, we solve it for some $A$ where $A_{ij} = 0$ for
$j\notin \nodes_i^2$, i.e. the two-hop neighbor structure of the
network and sparsity structure of $LH(x)L$. Define $M_A := JT^\top A T
J^\top$. This problem is:
\begin{equation*}\label{eq:lower bound}
\begin{aligned}
\Pc 5: \quad &\underset{\varepsilon, A}{\text{min}}
&&\varepsilon \\
& \text{s.t.}
&&- \varepsilon I_{n-1} \preceq I_{n-1} - M_A \preceq \varepsilon I_{n-1}, \\
&&& A \ones_n = \zeros_n, \ A \succeq 0, \\
&&& A_{ij} = 0, j \notin \nodes^2_i.
\end{aligned}
\end{equation*}
This problem is convex in $A$ and produces a solution $\varepsilon_A$,
which serves as a lower bound for the solution to $\Pc 3$. 
%It is
%unlikely that an $L$ with the desired sparsity exists such that $LHL =
%A^\star$, $\forall A$ in the feasibility set of $\Pc 5$. This is what makes
%$\Pc 3$ difficult to solve, but it gives us some notion for how
%effective the post-scaled solution to $\Pc 4$ is in the following
It should not be expected that this lower bound is tight or achievable by ``reverse engineering" an $L^\star$ with the desired sparsity from the solution $A^\star$ to $\Pc 5$, rather, $\varepsilon_{L^\star} - \varepsilon_A$ gives just a rough indication of how close
$\varepsilon_{L^\star}$ is to the conservative lower bound of
$\Pc 3$.
\section{Discrete Time Algorithm for Relaxed Economic Dispatch}
\label{sec:discrete-red}
In this section, we focus on a \emph{relaxed}
version of $\Pc 2$ to develop a direct relation between traditional
discrete-time Newton descent and our distributed, approximate
method. First, we state the relaxed problem and define the approximate
Newton step. We then state the $\distnewtondisc$ algorithm 
and provide a rigorous study of its convergence properties.
\subsection{Characterization of the Approximate Newton
Step}\label{ssec:char_approx_newton}
Even the traditional centralized Newton method is not well-suited to
solve $\Pc 1$ due to the box constraints~\eqref{eq:p1-box-const}. For
this reason, for now we focus on the relaxed problem
\begin{subequations} \label{eq:p6-opt}
\begin{align}
\Pc 6: \
& \underset{x}{\text{min}}
&& \hspace{-12mm} f(x) = \sum_{i=1}^n f_i(x_i), \label{eq:p6-cost}\\
& \text{subject to}
&& \hspace{-12mm} \sum_{i=1}^n x_i = d. \label{eq:p6-resource-const}
\end{align}
\end{subequations}
The equivalent unconstrained problem in $z$ is
\begin{equation} \label{eq:p7-opt} \Pc 7: \
\underset{z}{\text{min}} \quad g(z) := f(x^0 + Lz) =
\sum_{i=1}^n f_i(x^0_i + L_i z).
\end{equation}

\begin{rem}\longthmtitle{Nonuniqueness of Solution}
Given a $z^\star$ which
solves $\Pc 7$, the set of solutions can be characterized by
$\setdef{z^{\star\prime}}{z^{\star\prime} = z^\star + \gamma
	\ones_n\, \ \gamma\in\real}$. The fact that $z^{\star\prime}$ is
a solution is due to $\nulo{(L)} = \spn{(\ones_n)}$, and the fact
that this characterizes the entire set of solutions is due to
$\nulo{(\nabla_{zz}g(z))} = \spn{(\ones_n)}$.
\end{rem}
To solve $\Pc 6,$ we aim to implement a descent method in $x$ via the
dynamics
\begin{equation} \label{eq:short algorithm} x^{+} = x +
\alpha L \subscr{\tilde{z}}{nt},
\end{equation}
where $\subscr{\tilde{z}}{nt}$ is the \emph{approximate} Newton step that we seek to compute distributively, and $\alpha > 0$ is a
fixed step size.

It is true that $\Pc 7$ is unconstrained with respect to $z$, although
we have already alluded to the fact that the Hessian matrix
$\nabla_{zz} g(z) = LH(x+Lz)L$ is rank-deficient stemming
from~\eqref{eq:p6-resource-const}. We now reconcile this by deriving a
well defined Newton step in a reduced variable
$\hat{z}\in\real^{n-1}$.  Consider a change of coordinates by the
orthogonal matrix $T$ defined in Section~\ref{sec:prob-statement} and
write $z=TJ^\top\hat{z}$. Taking the gradient and Hessian of $g(z)$
with respect to $\hat{z}$ gives
\begin{equation*}
\begin{aligned}
\nabla_{\hat{z}} g(z) &=
JT^\top \nabla_z g(z) = JT^\top L \nabla_x f(x+LTJ^\top \hat{z}) \\
\nabla_{\hat{z}\hat{z}} g(z) &= JT^\top LH(x + LTJ^\top \hat{z})LTJ^\top = M(x + LTJ^\top \hat{z}).
\end{aligned}
\end{equation*}
Notice that the zero eigenvalue of $\nabla_{zz} g(z)$ is eliminated by
this projection and the other eigenvalues are preserved. Evaluating at
$x + LTJ^\top \hat{z}\big|_{\hat{z}=0}$, the Newton step in $\hat{z}$
is now well defined as $\subscr{\hat{z}}{nt}
:=-\nabla_{\hat{z}\hat{z}} g(0)^{-1} \nabla_{\hat{z}} g(0) =
-M(x)^{-1}JT^\top L \nabla_x f(x)$.

Consider now a $q$-approximation of $M(x)^{-1}$ given by
$\sum_{p=0}^q (I_{n-1} - M(x))^p$ and return to the original
coordinates to obtain the \emph{approximate} Newton direction
$L\tilde{z}_{nt}$:
\begin{equation*}
L\tilde{z}_{nt} = -LTJ^\top \sum_{p=0}^{q} (I_{n-1} - M(x))^p JT^\top L \nabla_x f(x).
\end{equation*}
With the property that $LTJ^\top JT^\top L = L^2$, rewrite
$L\subscr{\tilde{z}}{nt}$:
\begin{equation}
L\subscr{\tilde{z}}{nt} = -L \sum_{p=0}^q (I_n - LH(x)L)^p L \nabla_x f(x).
\label{eq:znewton-step}
\end{equation}
It can be seen via eigendecomposition of $I_n - LHL$, which is normal, and application
of Assumption~\ref{ass:e-vals} that the terms $L(I_n - LHL)^p$ become
small with $p\rightarrow\infty$ at a rate dictated by
$\varepsilon$. Note that there is a nonconverging mode
of the sum corresponding to the eigenspace spanned by $\ones_n$,
but this is mapped to zero by left multiplication by $L$.
This expression can be computed distributively: each
multiplication by $L$ encodes a communication with the neighbor set of
each agent, and we utilize recursion to perform the computation
efficiently, which is formally described in
Algorithm~\ref{alg:approx-newton}.
\subsection{The $\distnewton$ Algorithm}
We now have the tools to introduce the \\% Manual line break to fix macro
$\distnewtondisc$ algorithm, or
$\danad$.
% Centralized version
%\begin{algorithm}
%	\caption{$\danad$\color{blue}$_i$}
%	\label{alg:approx-newton}
%	{\color{blue}\begin{algorithmic}[1]
%			\Require $L_{ij}$ for $j \in \{i\} \cup \nodes_i$ and
%			communication with nodes $j \in \nodes_i \cup \nodes_i^2$
%			\Procedure{Newton}{$x^0,L,f,q$} \State Initialize $x\gets x^0$
%			\Loop \ForAll{$i$} \label{alg1:loop first term} \State Compute
%			$\dfrac{\partial f_i}{\partial x_i}$ and $\dfrac{\partial^2
%				f_i}{\partial x_i^2}$; send to $j\in\nodes_i$ \vspace{1mm}
%			\State $y_i \gets L_{ii}\dfrac{\partial f_i}{\partial x_i} +
%			\sum_{j \in\nodes_i} L_{ij} \dfrac{\partial f_j}{\partial x_j}$
%			\State $z_i \gets -y_i$
%			\EndFor
%			\For{$p = 1, \dots, q$}		\label{alg1:loop pth term}
%			\ForAll{$i$}
%			\State Acquire $y_j$ and $\dfrac{\partial^2 f_j}{\partial x_j^2}$ from $j \in \nodes_i^{2}$
%			\State $w_i = (I_n - LH(x)L)_i y$
%			\EndFor
%			\State $y \gets w$
%			\State $z \gets z - y$	\label{alg1:sum z}
%			\EndFor
%			\ForAll{$i$}					\label{alg1:loop step}
%			\State Acquire $z_j$ for $j \in \nodes_i$
%			\State $x_i \gets x_i + \alpha\left( L_{ii}z_i + \sum_{j \in\nodes_i}L_{ij}z_j\right)$
%			\EndFor
%			\EndLoop
%			\State \textbf{return} $x$
%			\EndProcedure
%	\end{algorithmic}}
%\end{algorithm}

\begin{algorithm}
\caption{$\danad_i$}
\label{alg:approx-newton}
\begin{algorithmic}[1]
	\Require $L_{ij}$ for $j \in \{i\} \cup \nodes_i$ and
	communication with nodes $j \in \nodes_i \cup \nodes_i^2$
	\Procedure{Newton$_i$}{$x_i^0,L_i,f_i,q$}
	\State Initialize $x_i\gets x_i^0$
	\Loop 
	\State Compute 
	$\dfrac{\partial f_i}{\partial x_i}$, $\dfrac{\partial^2
		f_i}{\partial x_i^2}$; send to $j\in\nodes_i$, $\nodes^2_i$ \vspace{1mm} \label{alg1:loop first term}
	\State $y_i \gets L_{ii}\dfrac{\partial f_i}{\partial x_i} +
	\sum_{j \in\nodes_i} L_{ij} \dfrac{\partial f_j}{\partial x_j}$
	\State $z_i \gets -y_i$
	%\EndFor
	\State $p_i \gets 1$
	\While{$p_i \leq q$}		\label{alg1:loop pth term}
	%\ForAll{$i$}
	\State 	Acquire $y_j$ from $j \in \nodes_i^{2}$\label{alg1:y-info}
	\State 	$w_i = (I_n - LH(x)L)_i y$\label{alg1:w-comp}
	\State 	$y_i \gets w_i$
	\State 	$z_i \gets z_i - y_i$	\label{alg1:sum z}
	\State $p_i \gets p_i + 1$
	\EndWhile				
	\State Acquire $z_j$ for $j \in \nodes_i$ \label{alg1:loop step}
	\State $x_i \gets x_i + \alpha\left( L_{ii}z_i + \sum_{j \in\nodes_i}L_{ij}z_j\right)$
	%\EndFor
	\EndLoop
	\State \textbf{return} $x_i$
	\EndProcedure
\end{algorithmic}
\end{algorithm}

The algorithm is constructed directly from~\eqref{eq:short algorithm}
and~\eqref{eq:znewton-step}. The $L\nabla_xf(x^k)$ factor
of~\eqref{eq:znewton-step} is computed first in the loop starting on
line~\ref{alg1:loop first term}. Then, each additional term of the sum
is computed recursively in the loop starting on line~\ref{alg1:loop
pth term}, where $y$ implicitly embeds the
exponentiation by $p$ indicated in~\eqref{eq:znewton-step}, $z$
accumulates each term of the summation of~\eqref{eq:znewton-step}, $w$ is used as an intermediate variable, and $p_i$ is used as a simple counter. We
introduce some abuse of notation by switching to vector and matrix
representations of local variables in line~\ref{alg1:w-comp}; this
is done for compactness and to avoid undue clutter. Note that the
diagonal elements of $H(x)$ are given by $\partial^2 f_j / \partial
x_j^2$ and the matrix and vector operations can be implemented
locally for each agent using the corresponding elements $y_j$,
$L_{ij}$, and $L^2_{ij}$. The one-hop and two-hop communications of
the algorithm are contained in lines~\ref{alg1:loop first term}
and~\ref{alg1:y-info}, where line~\ref{alg1:loop first term} calls
upon local evaluations of the gradient and Hessian. (In principle,
Hessian information could be acquired along with $y_j$ in the first
iteration of the inner loop to utilize one fewer two-hop
communication, but it need only be acquired once per outer loop.)
The information is utilized in local computations indicated the next
line in each case. It is understood that agents perform
communications and computations synchronously.

The outer loop of
the algorithm corresponding to~\eqref{eq:short
algorithm} is performed starting on line~\ref{alg1:loop step}. If
only one-hop communications are available, each outer loop of the
algorithm requires $2q+1$ communications. The process repeats until
desired accuracy is achieved. If $q$ is increased, it requires
additional communications, but the step approximation gains accuracy.
\subsection{Convergence Analysis}~\label{ssec:disc-conv-anl}
This section establishes convergence properties of the $\danad$
algorithm for problems of the form $\Pc 6$. For the sake of cleaner
analysis, we will reframe the algorithm as solving $\Pc 7$ via
\begin{equation} \label{eq:short algorithm-z} z^{+} = z -
\alpha A_q(z) \nabla_z g(z),
\end{equation}
where $A_q(z):= \sum_{p=0}^q (I_n - LH(x^0+Lz)L)^p$. Then, note that
the solution $z^\star$ to $\Pc 7$ solves $\Pc 6$ by $x^\star = x^0 +
Lz^\star$ and that~\eqref{eq:short algorithm-z} is equivalent
to~\eqref{eq:short algorithm}-\eqref{eq:znewton-step} and Algorithm~\ref{alg:approx-newton}.
\begin{rem}\longthmtitle{Initial Condition, 
	Trajectories, \& Solution}\label{rem:init-traj-sol}
Consider an initial condition $z(0)\in\real^n$ with $\ones_n^\top
z(0) = \omega$. Due to $A_q(z)\nabla_z g(z)\perp\ones_n$, the trajectories under~\eqref{eq:short algorithm-z} are contained
in the set $\setdef{z}{z = \tilde{z} + (\omega/n) \ones_n, \
	\tilde{z} \perp \ones_n}$. The solution $x^\star = x^0 +
Lz^\star$ to $\Pc 6$ is agnostic to $(\omega/n) \ones_n$ due to
$\nulo{(L)} = \spn{(\ones_n)}$, so we consider the solution $z^\star$ uniquely satisfying $\ones_n^\top z^\star = \omega$.
\end{rem}
\begin{thm}\longthmtitle{Convergence of  $\danad$}\label{thm:cvg-dana-d}
{\rm Given an initial condition $z(0)\in\real^n$, if
	Assumption~\ref{ass:conn-graph}, on the bidirectional connected
	graph, Assumption~\ref{ass:initial}, on the feasibility of the
	initial condition, Assumption~\ref{ass:cost}, on bounded Hessians,
	and Assumption~\ref{ass:e-vals}, on convergent eigenvalues, hold,
	then the $\danad$ dynamics~\eqref{eq:short algorithm-z} converge
	asymptotically to an optimal solution $z^\star$ of $\Pc 7$
	uniquely satisfying $\ones_n^\top z^\star = \ones_n^\top z(0)$
	for any $q \in \natural$ and $\alpha <
	\dfrac{2(1-\epsilon)}{(n-1)(1+\epsilon)(1-\epsilon^{q+1})}$.  }
\end{thm}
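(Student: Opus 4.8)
The plan is to use the objective $g$ itself, shifted by its optimal value, as a Lyapunov function for the discrete-time map $z \mapsto z - \alpha A_q(z)\nabla_z g(z)$ in~\eqref{eq:short algorithm-z}, to establish a strict one-step decrease away from the optimizer, and then to extract convergence of the iterates from the restricted strong convexity of $g$. First I would reduce the geometry: since $g(z) = f(x^0 + Lz)$ depends on $z$ only through $Lz$ with $\nulo(L) = \spn\{\ones_n\}$, and since $A_q(z)\nabla_z g(z)$ lies in the range of $L$, hence is orthogonal to $\ones_n$ (as in Remark~\ref{rem:init-traj-sol}), the scalar $\ones_n^\top z^k$ is conserved along the dynamics; so I restrict attention to the affine set $\mathcal{Z}_0 := \{z : \ones_n^\top z = \ones_n^\top z(0)\}$, a translate of $\ones_n^\perp$. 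On $\ones_n^\perp$ the Hessian $\nabla_{zz} g(z) = LH(x^0+Lz)L$ has the same eigenvalues as $M(x^0+Lz)$, which by Assumption~\ref{ass:e-vals} lie in $[1-\epsilon,\,1+\epsilon]$; hence $g$ restricted to $\mathcal{Z}_0$ is $(1-\epsilon)$-strongly convex with a unique minimizer $z^\star\in\mathcal{Z}_0$ — the solution singled out in Remark~\ref{rem:init-traj-sol} — and $\nabla_z g$ vanishes on $\mathcal{Z}_0$ only at $z^\star$.

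Next I would record uniform spectral bounds on $A_q(z) = \sum_{p=0}^q (I_n - LH(x^0+Lz)L)^p$ restricted to $\ones_n^\perp$. Because $I_n - LHL$ leaves $\ones_n^\perp$ invariant with spectrum there in $[-\epsilon,\epsilon]$, the eigenvalues of $A_q(z)$ on $\ones_n^\perp$ are $\sum_{p=0}^q\lambda^p$ with $\lambda\in[-\epsilon,\epsilon]$, and a termwise estimate (using $|\lambda^p|\le\epsilon^p$ and $1-\lambda\in[1-\epsilon,1+\epsilon]$) bounds each such eigenvalue between $\tfrac{1-\epsilon^{q+1}}{1+\epsilon}$ and $\tfrac{1-\epsilon^{q+1}}{1-\epsilon}$; moreover $\lambda_{\max}(LHL) \le \operatorname{tr}(LHL|_{\ones_n^\perp}) \le (n-1)(1+\epsilon)$ since the nonzero eigenvalues of $LHL$ are those of $M$. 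Then a second-order Taylor expansion of $g$ along the step $z^+ = z - \alpha A_q(z)\nabla_z g(z)$, with the mean-value form of the remainder at some $\xi$ on the segment, gives
\begin{equation*}
g(z^+) - g(z) = -\alpha\,\nabla_z g(z)^\top A_q(z)\,\nabla_z g(z) + \tfrac{\alpha^2}{2}\,\big(A_q(z)\nabla_z g(z)\big)^\top \nabla_{zz}g(\xi)\,\big(A_q(z)\nabla_z g(z)\big).
\end{equation*}
Using $\nabla_z g(z)\in\ones_n^\perp$, the bound $\|A_q(z)\nabla_z g(z)\|^2 \le \lambda_{\max}(A_q(z)|_{\ones_n^\perp})\,\nabla_z g(z)^\top A_q(z)\nabla_z g(z)$, and the two spectral estimates, this collapses to $g(z^+)-g(z) \le -\alpha\eta\,\nabla_z g(z)^\top A_q(z)\nabla_z g(z)$ with $\eta := 1 - \tfrac{\alpha}{2}(n-1)(1+\epsilon)\tfrac{1-\epsilon^{q+1}}{1-\epsilon}$, which is a positive constant precisely under the stated bound on $\alpha$.

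To finish, I would combine the one-step decrease with the lower bound $\nabla_z g^\top A_q\nabla_z g \ge \tfrac{1-\epsilon^{q+1}}{1+\epsilon}\|\nabla_z g\|^2$ to get $g(z^k)-g(z^{k+1}) \ge c\,\|\nabla_z g(z^k)\|^2$ for some $c>0$. Then $g(z^k)$ is nonincreasing and bounded below by $g(z^\star)$, hence convergent, so $\sum_k \|\nabla_z g(z^k)\|^2 < \infty$ and $\nabla_z g(z^k)\to 0$; a gradient-dominance (Polyak--\L{}ojasiewicz) inequality $g(z)-g(z^\star)\le\tfrac{1}{2(1-\epsilon)}\|\nabla_z g(z)\|^2$ on $\mathcal{Z}_0$, which follows from strong convexity, then yields $g(z^k)\to g(z^\star)$ and hence $\|z^k - z^\star\|^2 \le \tfrac{2}{1-\epsilon}\big(g(z^k)-g(z^\star)\big)\to 0$ (in fact at a linear rate). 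Equivalently, one can invoke a discrete LaSalle argument with the compact sublevel set $\{z\in\mathcal{Z}_0 : g(z)\le g(z(0))\}$ and the strict-descent property above.

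I expect the main obstacle to be the careful bookkeeping around the kernel direction $\ones_n$: Assumption~\ref{ass:e-vals} is phrased through the coordinate-reduced matrix $M(x) = JT^\top LH(x)LTJ^\top$, so it must be transferred faithfully to the statement that $I_n - LHL$ restricted to $\ones_n^\perp$ has spectrum in $[-\epsilon,\epsilon]$; one must verify that $A_q(z)$ and $\nabla_{zz}g$ leave $\ones_n^\perp$ invariant, and that conservation of $\ones_n^\top z^k$ makes $z^\star$ well defined. Once the problem is correctly projected onto $\ones_n^\perp$, the rest is a standard damped Newton / gradient-descent decrease argument.
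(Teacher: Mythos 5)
Your proposal is correct and follows essentially the same route as the paper's proof: a descent argument on $g(z)-g(z^\star)$ over the invariant affine set $\{z:\ones_n^\top z=\ones_n^\top z(0)\}$, a quadratic upper bound using $\nabla_{zz}g\preceq(1+\epsilon)I_n$, and spectral bounds on $A_q$ restricted to $\ones_n^\perp$ via its geometric-series eigenvalues $\tfrac{1-\eta^{q+1}}{1-\eta}$, yielding the same conservative step-size condition (the paper injects the $(n-1)$ factor by bounding $\|\widetilde{A}_q^{1/2}\|_2^2$ by the sum of its eigenvalues, whereas you inject it through a trace bound on the Hessian—both are valid relaxations that make the stated $\alpha$ bound sufficient). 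Your closing step—summability of $\|\nabla_z g(z^k)\|^2$ plus the strong-convexity/PL inequality on the affine set to conclude $z^k\to z^\star$—is a slightly more explicit finish than the paper, which stops at strict decrease of the Lyapunov function, but it is the same overall argument.
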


\begin{proof} Consider the discrete-time Lyapunov function
\begin{equation*}
V(z) = g(z) - g(z^\star)
\end{equation*}
defined on the domain $\dom{(V)} = \setdef{z}{\ones_n^\top z =
	\ones_n^\top z(0)}$. From the theorem statement and in
consideration of Remark~\ref{rem:init-traj-sol}, the trajectories of
$z$ under~\eqref{eq:short algorithm-z} are contained in the domain
of $V$, and $V(z) > 0, \forall z\in\dom{(V)}, z\neq z^\star$.  To
prove convergence to $z^\star$, we must show negativity of
\begin{equation}\label{eq:disc-lyap-diff}
V(z^+) - V(z) = g(z^+) - g(z).
\end{equation}
From the weight design of $L$ (Assumption~\ref{ass:e-vals}),
we have $\nabla_{zz}g(z) \preceq (1+\epsilon)I_n$,
$\epsilon\in[0,1)$. This implies
\begin{equation*}
\begin{aligned}
g(z^+) &= g(z) + \nabla_z g(z)^\top (z^+ - z) 
+ \dfrac{1}{2}(z^+ - z)^\top \nabla_{zz}g(z^\prime)(z^+ - z) \\
&\leq g(z) + \nabla_z g(z)^\top (z^+ - z) + 
\dfrac{1+\epsilon}{2}\| z^+ - z\|^2_2,
\end{aligned}
\end{equation*}
which employs the standard quadratic expansion of convex
functions via some $z^\prime$ in the segment
extending from $z$ to $z^+$ (see e.g. $\S 9.1.2$ of~\cite{SB-LV:04}).
Substituting~\eqref{eq:short algorithm-z} gives
\begin{equation}~\label{eq:gz-ineq}
\begin{aligned}
g(z^+) \leq g(z) - \alpha \nabla_z g(z)^\top A_q(z) \nabla_z g(z) 
+ \dfrac{(1+\epsilon)\alpha^2}{2}\| A_q(z) \nabla_z g(z)\|^2_2.
% \\
% &= g(z) - \alpha \biggl(\| A_q(z)^{1/2} \nabla_z g(z)\|^2_2 \\
% & \qquad - 
% \dfrac{(1+\epsilon)\alpha}{2}\| A_q(z) \nabla_z g(z)\|^2_2\biggr).
\end{aligned}
\end{equation}
We now show $A_q(z) \succ 0$ by computing its eigenvalues. Note
$\mu_i(I_n -
LH(x^0+Lz)L)\in[-\varepsilon,\varepsilon]\cup\{1\}$. Let $\mu_i(I_n
- LH(x^0+Lz)L) = \eta_i(z)$ for $i\in\until{n-1}$. The terms of
$A_q(z)$ commute and it is normal, so it can be diagonalized as
\begin{equation*}
\begin{aligned}
A_q(z) = W(z) \begin{bmatrix}
\ddots & & & \\
& \dfrac{1 - \eta_i(z)^{q+1}}{1 - \eta_i(z)} & & \\
& & \ddots \\
& & & q+1
\end{bmatrix} W(z)^\top,
\end{aligned}
\end{equation*}
where the columns of $W(z)$ are the eigenvectors of $A_q(z) \succ
0$, the last column being $\ones_n$, and the terms of the diagonal
matrix are its eigenvalues computed by a geometric series.

For now, we only use the fact that $A_q(z)\succ 0$ to justify the
existence of $A_q(z)^{1/2}$. Returning to~\eqref{eq:gz-ineq},
\begin{equation}\label{eq:alpha-bound-0}
\begin{aligned}
g(z^+) &\leq g(z) - \alpha \biggl(\| A_q(z)^{1/2} \nabla_z g(z)\|^2_2  - \dfrac{(1+\epsilon)\alpha}{2}\| A_q(z) \nabla_z g(z)\|^2_2\biggr).
\end{aligned}
\end{equation}
Recall $\nabla_z g(z)\perp\ones_n$ and that $\ones_n$ is an
eigenvector of $A_q(z)$ associated with the eigenvalue
$q+1$. Consider a matrix $\widetilde{A}_q(z)$ whose rows are
projected onto the subspace spanning the orthogonal complement of
$\ones_n$. More precisely, writing $\widetilde{A}_q(z)$ via its
diagonalization gives
\begin{subequations}\label{eq:tilde-Aq}
	\begin{align}
	&\widetilde{A}_q(z) = W(z) \begin{bmatrix}
	\ddots & & & \\
	& \dfrac{1 - \eta_i(z)^{q+1}}{1 - \eta_i(z)} & & \\
	& & \ddots \\
	& & & 0
	\end{bmatrix} W(z)^\top,\label{eq:tilde-Aq-def}	\\
	&\widetilde{A}_q(z) \nabla_z g(z) = A_q(z) \nabla_z g(z),
	\label{eq:tilde-Aq-lin} \\
	&\widetilde{A}_q(z)^{1/2} \nabla_z g(z) = A_q(z)^{1/2} \nabla_z
	g(z).
	\label{eq:tilde-Aq-sqrt}
	\end{align}
\end{subequations}
Combining~\eqref{eq:alpha-bound-0}--\eqref{eq:tilde-Aq} gives the
sufficient condition on $\alpha$:
\begin{equation}\label{eq:alpha-bound-2}
\alpha < \dfrac{2\| \widetilde{A}_q(z)^{1/2} 
	\nabla_z g(z)\|^2_2}{(1+\epsilon)\| \widetilde{A}_q(z) 
	\nabla_z g(z)\|^2_2}.
\end{equation}
Multiply the top and bottom of the righthand side
of~\eqref{eq:alpha-bound-2} by $\| \widetilde{A}_q(z)^{1/2}\|
^2_2$ and apply submultiplicativity of $\| \cdot \|^2_2$:
\begin{equation}\label{eq:Aq-ineq}
\dfrac{2}{(1+\epsilon)\| \widetilde{A}_q(z)^{1/2}\|^2_2} 
\leq \dfrac{2\| \widetilde{A}_q(z)^{1/2} \nabla_z g(z)
	\|^2_2}{(1+\epsilon)\| \widetilde{A}_q(z) \nabla_z g(z)\|^2_2}.
\end{equation}
Finally, we bound the lefthand side of~\eqref{eq:Aq-ineq} from below
by substituting $\eta_i(z)$ with $\epsilon$:
\begin{equation}\label{eq:bound-Aq-half}
\begin{aligned}
\| \widetilde{A}_q(z)^{1/2}\|^2_2 = \sum_i^{n-1} \dfrac{1-\eta_i(z)^{q+1}}{1-\eta_i(z)} \leq (n-1)\dfrac{1-\epsilon^{q+1}}{1-\epsilon}, \quad \forall
z\in\real^n.
\end{aligned}
\end{equation}
Combining~\eqref{eq:bound-Aq-half} with~\eqref{eq:Aq-ineq} gives the
condition on $\alpha$ in the theorem statement and completes the
proof.		
\end{proof}
In practice, we find this to be a very conservative bound on $\alpha$
due to the employment of many inequalities which simplify the
analysis. We note that designing $L$ effectively such that $\epsilon$
is close to zero allows for more flexibility in choosing $\alpha$
large, which intuitively indicates the Taylor approximation of the
Hessian inverse converging with greater accuracy in fewer terms $q$.
\begin{thm}\longthmtitle{Linear Convergence of  $\danad$}\label{thm:cvg-dana-d-exp}
{\rm Given an initial condition $z(0)\in\real^n$ and step size
	$\alpha =
	\dfrac{(1-\epsilon)}{(n-1)(1+\epsilon)(1-\epsilon^{q+1})}$, if
	Assumption~\ref{ass:conn-graph}, on the bidirectional connected
	graph, Assumption~\ref{ass:initial}, on the feasibility of the
	initial condition, Assumption~\ref{ass:cost}, on bounded Hessians,
	and Assumption~\ref{ass:e-vals}, on convergent eigenvalues, hold,
	the $\danad$ dynamics~\eqref{eq:short algorithm-z} converge
	\emph{linearly} to an optimal solution $z^\star$ of $\Pc 7$
	uniquely satisfying $\ones_n^\top z^\star = \ones_n^\top z(0)$
	in the sense that $g(z^+) - g(z) \leq
	-\dfrac{(1-\varepsilon)^4(1+\varepsilon(-\varepsilon)^q)^2\|
		z-z^\star\|_2^2}{2(n-1)^2(1+\varepsilon)^3(1-\varepsilon^{2(q+1)})}$
	for any $q \in \natural$.  }
\end{thm}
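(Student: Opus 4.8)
The plan is to sharpen the per-step descent estimate already obtained in the proof of Theorem~\ref{thm:cvg-dana-d} into a strong-convexity-type contraction, tracking every constant explicitly. Starting from inequality~\eqref{eq:gz-ineq} together with~\eqref{eq:tilde-Aq-lin}--\eqref{eq:tilde-Aq-sqrt} and, as in that proof, submultiplicativity of $\norm{\cdot}{2}$ with~\eqref{eq:bound-Aq-half}, one has
\begin{equation*}
g(z^+) \leq g(z) - \alpha\,\bignorm{\widetilde{A}_q(z)^{1/2}\nabla_z g(z)}{2}^2\Bigl(1 - \tfrac{(1+\epsilon)\alpha}{2}(n-1)\tfrac{1-\epsilon^{q+1}}{1-\epsilon}\Bigr).
\end{equation*}
The first point is that the prescribed step $\alpha = \tfrac{1-\epsilon}{(n-1)(1+\epsilon)(1-\epsilon^{q+1})}$ is exactly one half of the admissible upper bound in Theorem~\ref{thm:cvg-dana-d}, so the parenthesized factor equals $\tfrac12$ and
\begin{equation*}
g(z^+) - g(z) \leq -\tfrac{\alpha}{2}\,\norm{\widetilde{A}_q(z)^{1/2}\nabla_z g(z)}{2}^2.
\end{equation*}

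It remains to lower-bound $\norm{\widetilde{A}_q(z)^{1/2}\nabla_z g(z)}{2}^2$ by a multiple of $\norm{z-z^\star}{2}^2$, which I would do in three moves. (i) Submultiplicativity with~\eqref{eq:bound-Aq-half} gives $\norm{\widetilde{A}_q(z)^{1/2}\nabla_z g(z)}{2}^2 \geq \tfrac{1-\epsilon}{(n-1)(1-\epsilon^{q+1})}\norm{\widetilde{A}_q(z)\nabla_z g(z)}{2}^2$. (ii) Since $\nabla_z g(z)\perp\ones_n$ and the eigenvalues of $\widetilde{A}_q(z)$ on $\ones_n^\perp$ are $\tfrac{1-\eta_i^{q+1}}{1-\eta_i}$ with $\eta_i\in[-\epsilon,\epsilon]$ (notation as in the proof of Theorem~\ref{thm:cvg-dana-d}), one obtains $\norm{\widetilde{A}_q(z)\nabla_z g(z)}{2}^2 \geq \bigl(\tfrac{1+\epsilon(-\epsilon)^q}{1+\epsilon}\bigr)^2\norm{\nabla_z g(z)}{2}^2$, the constant being the square of the smallest such eigenvalue (worst case $\eta_i = -\epsilon$). (iii) By Assumption~\ref{ass:e-vals} the $n-1$ nonzero eigenvalues of $\nabla_{zz}g(z) = LH(x^0+Lz)L$ lie in $[1-\epsilon,1+\epsilon]$ with $\nll(\nabla_{zz}g(z))=\spn\{\ones_n\}$; since $z-z^\star\perp\ones_n$ by Remark~\ref{rem:init-traj-sol}, since $\nabla_z g(z^\star)=0$ ($z^\star$ minimizes $g$), and since the segment from $z^\star$ to $z$ stays in $\dom(V)$, integrating $\nabla_{zz}g$ along that segment and applying Cauchy--Schwarz yields $\norm{\nabla_z g(z)}{2} \geq (1-\epsilon)\norm{z-z^\star}{2}$.

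Chaining (i)--(iii) and substituting $\alpha$ gives $g(z^+)-g(z) \leq -\tfrac{(1-\epsilon)^4(1+\epsilon(-\epsilon)^q)^2}{2(n-1)^2(1+\epsilon)^3(1-\epsilon^{q+1})^2}\norm{z-z^\star}{2}^2$; finally $(1-\epsilon^{q+1})^2 \leq 1-\epsilon^{2(q+1)}$ relaxes the denominator to the exact form claimed. (The label ``linear'' is justified by additionally noting $g(z)-g(z^\star)\leq \tfrac{1+\epsilon}{2}\norm{z-z^\star}{2}^2$ on $\dom(V)$, which turns the displayed inequality into geometric decay of $V(z)=g(z)-g(z^\star)$.)

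The only genuinely delicate step is (ii): identifying $\min_{\eta\in[-\epsilon,\epsilon]}\tfrac{1-\eta^{q+1}}{1-\eta}$ and handling the parity of $q$ --- this is the source of the factor $(-\epsilon)^q$ in the final constant (it equals $\tfrac{1+\epsilon^{q+1}}{1+\epsilon}$ for $q$ even and $\tfrac{1-\epsilon^{q+1}}{1+\epsilon}$ for $q$ odd, both attained at $\eta=-\epsilon$ in the small-$\epsilon$ regime targeted by the Laplacian design of Section~\ref{sec:laplacian-design}; the always-valid lower bound $\tfrac{1-\epsilon^{q+1}}{1+\epsilon}$ would also close the argument with a slightly looser constant). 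Everything else is bookkeeping --- keeping track of which powers of $1-\epsilon$, $1+\epsilon$, $n-1$ and $1-\epsilon^{q+1}$ enter from $\alpha$, from~\eqref{eq:bound-Aq-half}, from the eigenvalue estimate, and from strong convexity.
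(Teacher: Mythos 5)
Your proposal is correct and follows essentially the same route as the paper's proof: exploit that the prescribed $\alpha$ is half the critical step from Theorem~\ref{thm:cvg-dana-d} to retain a definite fraction of the descent term, lower-bound $\widetilde{A}_q(z)$ on $\ones_n^\perp$ by substituting $\eta_i=-\varepsilon$ (yielding the factor $\tfrac{1+\varepsilon(-\varepsilon)^q}{1+\varepsilon}$), and bound $\|\nabla_z g(z)\|\geq(1-\varepsilon)\|z-z^\star\|$ via the line-integral/strong-convexity argument, arriving at the identical constant. The only differences are bookkeeping (you carry the $\|\widetilde{A}_q^{1/2}\nabla_z g\|^2$ term where the paper carries $\|\widetilde{A}_q\nabla_z g\|^2$, and you make explicit the relaxation $(1-\varepsilon^{q+1})^2\leq 1-\varepsilon^{2(q+1)}$ that the paper leaves implicit), plus your candid caveat about where the $\eta=-\varepsilon$ minimization is actually tight, which is a subtlety present in the paper's own argument as well.
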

\begin{proof}
Define
\begin{equation*}
\begin{aligned}
c_1(z) = \| \widetilde{A}_q(z)^{1/2}\nabla_z g(z)\|_2^2, \quad c_2(z) = \dfrac{(1+\varepsilon)}{2}\|
\widetilde{A}_q(z)\nabla_z g(z)\|_2^2,
\end{aligned}
\end{equation*}
with $\widetilde{A}_q(z)$ defined as
in~\eqref{eq:tilde-Aq-def}.
Recalling~\eqref{eq:alpha-bound-0}--\eqref{eq:tilde-Aq}, consider $\bar{\alpha} = 2\alpha$ as the smallest step size such
that $- \bar{\alpha} c_1(z) + \bar{\alpha}^2c_2(z)$ is not strictly
negative for all $z$, which is obtained from the result of Theorem~\ref{thm:cvg-dana-d}. 
Then,
\begin{equation}\label{eq:c-bounds-1}
\begin{aligned}
-\bar{\alpha} c_1(z) + \bar{\alpha}^2c_2(z) \leq 0 \Rightarrow -\alpha c_1(z) + \alpha^2 c_2 (z) \leq -\alpha^2 c_2(z).
\end{aligned}
\end{equation}
The implication is obtained from the first by substituting
$\bar{\alpha} = 2\alpha$. We now consider an
implementation of $\danad$ with $\alpha$. From~\eqref{eq:alpha-bound-0}
and substituting
via~\eqref{eq:tilde-Aq-lin}--\eqref{eq:tilde-Aq-sqrt}, we obtain
$g(z^+) - g(z) \leq -\alpha c_1(z) + \alpha^2 c_2 (z)$. Combining
this with the second line of~\eqref{eq:c-bounds-1},
\begin{equation}\label{eq:c-bounds-2}
g(z^+) - g(z) \leq -\alpha^2 c_2 (z).
\end{equation}
We seek a lower bound for $\widetilde{A}_q(z)$. Consider its
definition~\eqref{eq:tilde-Aq-def}, where a lower bound can be
obtained by substituting each $\eta_i(z)$ by $-\varepsilon$. Then,
\begin{equation*}
\widetilde{A}_q(z) \succeq \dfrac{1+\varepsilon 
	(-\varepsilon)^{q}}{1 + \varepsilon} \left(I_n - 
\dfrac{\ones_n\ones_n^\top}{n}\right).
\end{equation*}
Returning to~\eqref{eq:c-bounds-2} and applying the definition of $c_2(z)$,
\begin{equation}\label{eq:c-bounds-3}
g(z^+) - g(z) \leq -\dfrac{\alpha^2(1+
	\varepsilon(-\varepsilon)^q)^2}{2(1+\varepsilon)}\| \nabla_z g(z)\|_2^2,
\end{equation}
due to $\nulo(I_n - \ones_n\ones_n^\top/n) = \spn(\ones_n)$ and
$\nabla_z g(z) \perp \ones_n$.

Next, we bound $\| \nabla_z g(z)\|_2^2$. Apply the Fundamental
Theorem of Calculus to compute $\nabla_z g(z)$ via a line integral. Let $z(s) = sz + (1-s)z^\star$. Then,
\begin{equation}\label{eq:c-bounds-4}
\begin{aligned}
\nabla_z g(z) &= \int_0^1 \nabla_{zz} g(z(s))
(z-z^\star) ds .
\end{aligned}
\end{equation}
Applying Assumption~\ref{ass:e-vals} (convergent eigenvalues) gives a
lower bound on the Hessian of $g$, implying a lower bound on its
line integral:
\begin{equation}\label{eq:c-bounds-5}
\begin{aligned}
&\nabla_{zz}g(z)\succeq
(1-\varepsilon)(I-\ones_n\ones_n^\top/n)
\Rightarrow	\\
\int_0^1 &\nabla_{zz} g(z(s))ds \succeq
(1-\varepsilon)(I-\ones_n\ones_n^\top/n).
\end{aligned}
\end{equation}
Factoring out $z-z^\star$ from~\eqref{eq:c-bounds-4} and applying
the second line of~\eqref{eq:c-bounds-5} gives the lower bound
\begin{equation}\label{eq:c-bounds-6}
\begin{aligned}
\| \nabla_z g(z)\|_2^2 \geq (1-\varepsilon)^2\|
z-z^\star\|_2^2,
\end{aligned}
\end{equation}
due to $\nulo(I_n - \ones_n\ones_n^\top/n) = \spn(\ones_n)$ and
$z-z^\star \perp \ones_n$. Combining~\eqref{eq:c-bounds-6}
with~\eqref{eq:c-bounds-3} and substituting $\alpha$:
\begin{equation*}
g(z^+) - g(z) \leq -\dfrac{(1-\varepsilon)^4(1+
	\varepsilon(-\varepsilon)^q)^2\|
	z-z^\star\|_2^2}{2(n-1)^2(1+
	\varepsilon)^3(1-\varepsilon^{2(q+1)})}.
\end{equation*}
\end{proof}
In principle, this result can be extended to any $\alpha$ which is
compliant with Theorem~\ref{thm:cvg-dana-d}; we have chosen this
particular $\alpha$ for simplicity. The methods we employ to arrive at
the results of Theorems~\ref{thm:cvg-dana-d}
and~\ref{thm:cvg-dana-d-exp} are necessarily conservative. However, in
practice, we find that choosing substantially larger $\alpha$
generally converges to the solution faster. Additionally, we find
clear-cut improved convergence properties for larger $q$ (more
accurate step approximation) and smaller $\varepsilon$ (more effective
weight design). Simulations confirm this in Section~\ref{sec:sims-discuss}. 
\section{Continuous Time Distributed Approximate Newton
Algorithm}\label{sec:cont-time-dana}

In this section, we develop a continuous-time Newton-like algorithm to
distributively solve $\Pc 2$ for quadratic cost functions. Our method
borrows from and expands upon known results of gradient-based
saddle-point dynamics~\cite{AC-EM-SHL-JC:18-tac}. We provide a rigorous
proof of convergence and an interpretation of the convergence result
for various parameters of the proposed algorithm.
\subsection{Formulation of Continuous Time Dynamics}
First, we adopt a stronger version of
Assumption~\ref{ass:cost}:
\begin{assump}\longthmtitle{Quadratic Cost Functions}\label{ass:quad-cost}
The local costs $f_i$ are strongly convex and quadratic, i.e. they take the form
\begin{equation*}
f_i(x_i) = \dfrac{1}{2}a_i x_i^2 + b_i x_i, \quad i\in\until{n}.
\end{equation*}
\end{assump}
Note that the Hessian of $f$ with respect to $x$ is now constant, so
we omit the arguments of $H$ and $A_q$ for the remainder of this
section. The dynamics we intend to use to solve $\Pc 2$ are
substantially more complex than those for the problem with no box
constraints, which makes this simplification necessary. In fact, the
quadratic model is very commonly used for generator costs in power
grid operation~\cite{AW-BW-GS:12}.

We aim to solve $\Pc 2$ by finding a saddle point of the associated
Lagrangian $\Ls$.  Introduce the dual variable $\lambda\in\real^{2n}$
corresponding to~\eqref{eq:p2-box-const1}--\eqref{eq:p2-box-const2},
and define $P(z)$ as
\begin{equation*}
P(z) = \begin{bmatrix}
\underline{P}(z) \\ \overline{P}(z)
\end{bmatrix} =
\begin{bmatrix}
\underline{x} - x^0 - Lz \\ x^0 + Lz - \overline{x}
\end{bmatrix} \in\real^{2n}.
\end{equation*}
The Lagrangian of $\Pc 2$ is given by
\begin{equation}
\Ls (z,\lambda) = g(z) + \lambda^\top P(z). \label{eq:Lagrangian}
\end{equation}
We aim to design distributed dynamics which converge to a saddle point
$(z^\star ,\lambda^\star )$ of~\eqref{eq:Lagrangian}, which solves
$\Pc 2$. A saddle point has the property
\begin{equation*}
\Ls (z^\star,\lambda) \leq \Ls (z^\star,\lambda^\star) 
\leq \Ls (z,\lambda^\star), \quad \forall z\in\real^n , \lambda\in\real^n_{\geq 0}.
\end{equation*}
To solve this, consider Newton-like descent dynamics in the primal
variable $z$ and gradient ascent dynamics in the dual variable
$\lambda$ (Newton dynamics are not well defined for linear
functions). First, we state some equivalencies:
\begin{equation}\label{eq:equiv}
\begin{aligned}
&\nabla_z \Ls (z,\lambda) = \nabla_z g(z) + \begin{bmatrix}-L &
L\end{bmatrix}\lambda, \qquad
\nabla_\lambda \Ls (z,\lambda) = P(z), \\
&\nabla_{zz}\Ls (z,\lambda) = LHL, \quad
\nabla_{\lambda\lambda}\Ls (z,\lambda) = \zeros_{2n\times 2n}, \quad
\nabla_{\lambda z}\Ls (z,\lambda) = \nabla_{z \lambda}\Ls
(z,\lambda)^\top
= \begin{bmatrix}-L & L \end{bmatrix}.
\end{aligned}
\end{equation}
The $\distnewtoncont$, or $\danac$, dynamics are given by
\begin{equation}\label{eq:approx-newt-dynamics}
\begin{aligned}
\dot{z} &= -A_q \nabla_z \Ls (z,\lambda), \\
\dot{\lambda} &= \left[ \nabla_\lambda \Ls (z,\lambda)  \right]^+_\lambda .
\end{aligned}
\end{equation}
The descent in the primal variable $z$ is the approximate Newton
direction as~\eqref{eq:short algorithm}, augmented with dual ascent
dynamics in $\lambda$ (one-hop communication) and implemented in
continuous time. The projection on the dynamics in $\lambda$ ensures
that if $\lambda_i(t_0) \geq 0 $ then $ \lambda_i(t) \geq 0$ for all
$t \geq t_0$.

Define $\mathscr{Z}_q: \real^n \times \real^{2n}_{\geq 0} \rightarrow \real^n
\times \real^{2n}$ as the map in~\eqref{eq:approx-newt-dynamics}
implemented by $\danac$. We now make the following assumptions on
initial conditions and the feasibility set.
\begin{assump}\longthmtitle{Initial Dual Feasibility}\label{ass:init-cond-cont-time}
The initial condition $\lambda(0)$ is dual feasible,
i.e. $\lambda(0) \succeq 0$.
\end{assump}
\begin{assump}\longthmtitle{Nontrivial Primal Feasibility}\label{ass:nontrivial-sol}
The feasibility set of $\Pc 2$ is such that $\exists z$ with $P(z) \prec 0$.
\end{assump}
The dynamics $\mathscr{Z}_q$ are not well suited to handle $\lambda$
infeasible, so Assumption~\ref{ass:init-cond-cont-time} is necessary. As for Assumption~\ref{ass:nontrivial-sol}, if it does
not hold, then either $d = \sum \underline{x}$ or $d = \sum
\overline{x}$ or $\Pc 1$ is infeasible, which are trivial
cases. Assuming it does hold, Slater's condition is satisfied and KKT
conditions are necessary and sufficient for solving $\Pc 2$.

Due to the structure of $L$, $\dot{z}$ is computed using only
$(2q+1)$-hop neighbor information. In practice, the quantity $A_q
\nabla_z\mathscr{L}(z,\lambda)$ may be computed recursively over
multiple one-hop or two-hop rounds of communication, with a discrete
step taken in the direction indicated by
$(\dot{z},\dot{\lambda})$. Note that a table statement of this discretized algorithm would be quite similar to Algorithm~\ref{alg:approx-newton} (with the addition of one-hop dynamics in $\lambda$), so we omit it here for brevity.
Discrete-time algorithms to solve this problem do exist, see e.g.~\cite{ER-SM:16-ocam} in which the authors achieve convergence to a ball around the optimizer whose radius is a function of the step size. However, the analysis of discrete-time algorithms to solve $\Pc 2$ via a Newton-like method is outside the scope of this work.
\subsection{Convergence Analysis}
This section provides a rigorous proof of convergence of the
distributed dynamics $\mathscr{Z}_q$ to the optimizer $(z^\star
,\lambda^\star)$ of $\Pc 2$. The solution $x^\star$ to $\Pc 1$ may
then be computed via a one-hop neighbor communication by $x^\star =
x^0 + Lz^\star$.
\begin{thm}\longthmtitle{Convergence of Continuous Dynamics
	$\mathscr{Z}_q$} \label{thm:cvg-dana-c}\rm
If Assumption~\ref{ass:conn-graph}, on the undirected and connected
graph, Assumption~\ref{ass:initial}, on the feasible initial
condition, Assumption~\ref{ass:e-vals}, on convergent eigenvalues,
Assumption~\ref{ass:quad-cost}, on quadratic cost functions,
Assumption~\ref{ass:init-cond-cont-time}, on the feasible dual
initial condition, and Assumption~\ref{ass:nontrivial-sol}, on
nontrivial primal feasibility,
hold, then the solution trajectories under $\mathscr{Z}_q$
assymptotically converge to an optimal point $(z^\star ,
\lambda^\star)$ of $\Pc 2$, where $z^\star$ uniquely satisfies
$\ones_n^\top z^\star = \ones_n^\top z(0)$.
\end{thm}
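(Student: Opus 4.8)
The plan is to build a Lyapunov function out of the approximate-Newton metric $A_q$ and to invoke an invariance principle for the projected (hence discontinuous) dynamics $\mathscr{Z}_q$. First I would collect the structural facts. Under Assumption~\ref{ass:quad-cost} the Hessian $H$ is constant, so $A_q=\sum_{p=0}^q(I_n-LHL)^p$ is a fixed symmetric matrix; since $A_q\ones_n=(q+1)\ones_n$ and, by Assumption~\ref{ass:e-vals}, the remaining eigenvalues of $A_q$ are of the form $(1-\eta_i^{q+1})/(1-\eta_i)>0$ with $\eta_i\in[-\varepsilon,\varepsilon]$, the matrix $A_q$ is positive definite and $A_q^{-1}$ is well defined. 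Because $\nabla_z\Ls(z,\lambda)=\nabla_z g(z)+\begin{bmatrix}-L & L\end{bmatrix}\lambda$ lies in $\operatorname{range}(L)=\spn\{\ones_n\}^\perp$, we get $\ones_n^\top\dot z=-(A_q\ones_n)^\top\nabla_z\Ls(z,\lambda)=-(q+1)\ones_n^\top\nabla_z\Ls(z,\lambda)=0$, so every trajectory stays on the affine set $\{z:\ones_n^\top z=\ones_n^\top z(0)\}$; on this set $z^\star$ is the unique primal solution, since $\nulo(\nabla_{zz}g)=\nulo(LHL)=\spn\{\ones_n\}$ (Assumption~\ref{ass:conn-graph}) makes $g$, and hence $\Ls(\cdot,\lambda^\star)$, strictly convex along directions orthogonal to $\ones_n$. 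Finally, Assumptions~\ref{ass:initial}, \ref{ass:init-cond-cont-time} and~\ref{ass:nontrivial-sol} (the last yielding Slater's condition) guarantee that a saddle point $(z^\star,\lambda^\star)$ of $\Ls$ exists and is characterized by the KKT conditions of $\Pc 2$.

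Next I would propose the candidate
\[
V(z,\lambda)=\tfrac12(z-z^\star)^\top A_q^{-1}(z-z^\star)+\tfrac12\|\lambda-\lambda^\star\|^2,
\]
which is positive definite relative to $(z^\star,\lambda^\star)$ and radially unbounded, so trajectories (confined to $\{\lambda\succeq0\}$ by the projection and to the affine set above) are bounded. Differentiating along $\mathscr{Z}_q$, the identity $A_q^{-1}A_q=I_n$ cancels the metric and leaves
\[
\dot V=-(z-z^\star)^\top\nabla_z\Ls(z,\lambda)+(\lambda-\lambda^\star)^\top\left[\nabla_\lambda\Ls(z,\lambda)\right]^+_\lambda.
\]
I would then invoke three standard estimates: convexity of $\Ls(\cdot,\lambda)$ gives $(z-z^\star)^\top\nabla_z\Ls(z,\lambda)\ge\Ls(z,\lambda)-\Ls(z^\star,\lambda)$; linearity of $\Ls(z,\cdot)$ gives $(\lambda-\lambda^\star)^\top\nabla_\lambda\Ls(z,\lambda)=\Ls(z,\lambda)-\Ls(z,\lambda^\star)$; and the componentwise projection lemma gives $(\lambda-\lambda^\star)^\top\big(\left[\nabla_\lambda\Ls\right]^+_\lambda-\nabla_\lambda\Ls\big)\le0$ whenever $\lambda,\lambda^\star\succeq0$. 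Combining these, $\dot V\le\Ls(z^\star,\lambda)-\Ls(z,\lambda^\star)\le0$, the last step by the saddle-point inequalities $\Ls(z^\star,\lambda)\le\Ls(z^\star,\lambda^\star)\le\Ls(z,\lambda^\star)$.

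With $\dot V\le0$ in hand I would apply the invariance principle for Carath\'eodory solutions of the projected dynamics (as used for saddle-point flows in~\cite{AC-EM-SHL-JC:18-tac}) to conclude that trajectories approach the largest invariant set $\mathcal M$ contained in $\{\dot V=0\}$, and then identify $\mathcal M$. On $\{\dot V=0\}$ the chain of inequalities is tight, so $\Ls(z^\star,\lambda)=\Ls(z,\lambda^\star)=\Ls(z^\star,\lambda^\star)$; the equality $\Ls(z,\lambda^\star)=\Ls(z^\star,\lambda^\star)$ and strict convexity of $\Ls(\cdot,\lambda^\star)$ on $\spn\{\ones_n\}^\perp$ force $z=z^\star$, while $\Ls(z^\star,\lambda)=\Ls(z^\star,\lambda^\star)$ together with $\lambda\succeq0$ and $P(z^\star)\preceq0$ forces $\lambda^\top P(z^\star)=0$, i.e.\ complementary slackness. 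Invariance of $\mathcal M$ further requires $\dot z=-A_q\nabla_z\Ls(z^\star,\lambda)=0$, i.e.\ stationarity $\nabla_z\Ls(z^\star,\lambda)=0$. Hence $\mathcal M$ is exactly the set of KKT points $(z^\star,\lambda)$ of $\Pc 2$, and every point of $\mathcal M$ is itself a saddle point of $\Ls$.

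To upgrade convergence-to-$\mathcal M$ into convergence to a single point, I would take a point $(z^\star,\bar\lambda)$ in the (nonempty) $\omega$-limit set, which lies in $\mathcal M$ and is therefore a saddle point, and repeat the estimate with $V_{\bar\lambda}(z,\lambda):=\tfrac12(z-z^\star)^\top A_q^{-1}(z-z^\star)+\tfrac12\|\lambda-\bar\lambda\|^2$ in place of $V$; since $(z^\star,\bar\lambda)$ is a saddle point the identical computation shows $V_{\bar\lambda}$ is non-increasing along the trajectory, hence convergent, and since $V_{\bar\lambda}$ vanishes along the subsequence approaching $(z^\star,\bar\lambda)$ it must tend to $0$, so the whole trajectory converges to $(z^\star,\bar\lambda)$ --- an optimizer of $\Pc 2$ with $\ones_n^\top z^\star=\ones_n^\top z(0)$. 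The main obstacle I anticipate is this last block: making the invariance-principle argument rigorous for the nonsmooth projected vector field, pinning down $\mathcal M$ (deriving complementary slackness and stationarity from $\dot V=0$ and invariance), and running the limit-point bootstrap; by contrast the Lyapunov-derivative computation and the three convexity/projection estimates are routine.
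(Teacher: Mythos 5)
Your proposal is correct, and it reaches the theorem by a genuinely different route than the paper in both halves of the argument. The Lyapunov function is the same (the $A_q^{-1}$-weighted primal error plus the dual error, legitimate because quadratic costs make $A_q$ constant), and the componentwise projection inequality is handled identically; but where the paper exploits the constant Hessian through a Fundamental-Theorem-of-Calculus line integral to obtain the explicit dissipation $\dot V_Q \le -\|H^{1/2}L(z-z^\star)\|_2^2$, you use the classical primal--dual estimate---convexity of $\Ls(\cdot,\lambda)$, linearity of $\Ls(z,\cdot)$, and the saddle inequalities---to get $\dot V \le \Ls(z^\star,\lambda)-\Ls(z,\lambda^\star)\le 0$. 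The treatment of the dual variable also differs: the paper restricts to the set $\{z=z^\star,\ \lambda\succeq 0\}$ and runs a two-case KKT argument on whether a box constraint is active at $z^\star$, exploiting the specific structure of $\lambda(t)$ forced by $\dot z\equiv 0$, whereas you identify the largest invariant set directly as the set of KKT (hence saddle) points $(z^\star,\lambda)$---strict convexity of $\Ls(\cdot,\lambda^\star)$ transverse to $\ones_n$ pins down $z=z^\star$ on the zero-dissipation set, the equality $\Ls(z^\star,\lambda)=\Ls(z^\star,\lambda^\star)$ together with $\lambda\succeq 0$ and $P(z^\star)\preceq 0$ gives complementary slackness, and invariance with $A_q$ full rank gives stationarity---and then re-center the Lyapunov function at an $\omega$-limit point to upgrade convergence-to-a-set into convergence to a single point. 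Both routes prove the statement; yours is the more standard saddle-point-flow argument and handles possible non-uniqueness of the dual optimizer cleanly (you converge to \emph{some} dual optimizer, which is all the statement asks), while the paper's computation buys a quantitative dissipation term in $z$ that feeds its convergence-rate discussion in Section~\ref{ssec:interpretation}. The one technical caveat---rigor of the invariance principle for the discontinuous projected dynamics---you flag explicitly and propose to resolve via the Carath\'eodory-solution invariance results used for saddle-point flows; the paper simply invokes LaSalle, so you are no less careful than the source on this point.
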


\begin{proof}
Consider $Q = \begin{bmatrix} A_q^{-1} & 0 \\ 0 &
I_{2n} \end{bmatrix} \succ 0$ and define the Lyapunov function
\begin{equation}\label{eq:vq}
\begin{aligned}
V_Q(z,\lambda) := \dfrac{1}{2}\begin{bmatrix}
z - z^\star \\
\lambda - \lambda^\star
\end{bmatrix}^\top Q \begin{bmatrix}
z - z^\star \\ 
\lambda - \lambda^\star
\end{bmatrix} 
= \dfrac{1}{2}\Big(\|
A_q^{-1/2}(z-z^\star)\|_2^2 + \| (\lambda -
\lambda^\star)\|_2^2\Big).
\end{aligned}
\end{equation}
The time derivative of $V_Q$ along the trajectories of $\mathscr{Z}_q$ is 
\begin{equation}\label{eq:vq-dot}
\begin{aligned}
\dot{V}_Q(z,\lambda) &= \begin{bmatrix}
z - z^\star \\
\lambda - \lambda^\star
\end{bmatrix}^\top Q \begin{bmatrix}
\dot{z} \\
\dot{\lambda}
\end{bmatrix} 
= -(z-z^\star)^\top A_q^{-1}A_q \nabla_z \Ls(z,\lambda)
+ (\lambda-\lambda^\star)^\top \left[ \nabla_\lambda \Ls(z,\lambda) \right]^+_\lambda \\
&\overset{(a)}{\leq} -(z-z^\star)^\top \nabla_z \Ls(z,\lambda) + (\lambda-\lambda^\star)^\top \nabla_\lambda \Ls(z,\lambda) \\
&\overset{(b)}{=} -(z-z^\star)^\top LHL (z-z^\star) - (z-z^\star)^\top \begin{bmatrix} -L & L \end{bmatrix} (\lambda - \lambda^\star) \\ 
&\qquad + (\lambda - \lambda^\star)^\top \begin{bmatrix} -L & L \end{bmatrix}^\top (z-z^\star) = -\| H^{1/2}L (z-z^\star)\|_2^2 \overset{(c)}{<} 0, \quad z\neq z^\star.
\end{aligned}
\end{equation}
The inequality (a) follows from the componentwise relation
$(\lambda_i - \lambda_i^\star)(\left[ \nabla_{\lambda_i} \Ls
\right]^+_{\lambda_i} - \nabla_{\lambda_i} \Ls ) \leq 0$. To see
this, if $\lambda_i > 0$, the projection is inactive and this term
equals zero. If $\lambda_i = 0$, then the inequality follows from
$\lambda_i^\star \geq 0$ and $\left[ \nabla_{\lambda_i} \Ls
\right]^+_{\lambda_i} - \nabla_{\lambda_i} \Ls \geq 0$. The equality
(b) is obtained from an application of the Fundamental Theorem of
Calculus and computing the line integral along the line
$(z(s),\lambda(s)) = s(z,\lambda) + (1-s)(z^\star,\lambda^\star)$ as
follows:
\begin{equation*}
\begin{aligned}
\nabla_z \Ls(z,\lambda) &= \int_0^1 
\Big(\nabla_{zz}\Ls(z(s),\lambda(s))(z-z^\star) + \nabla_{\lambda z}\Ls(z(s),\lambda(s))
(\lambda-\lambda^\star)\Big)ds \\
&= \nabla_{zz}\Ls(z,\lambda)(z-z^\star) + 
\nabla_{\lambda z}\Ls(z,\lambda)(\lambda-\lambda^\star), \\
\nabla_\lambda \Ls(z,\lambda)
&= \int_0^1 
\Big(\nabla_{\lambda\lambda}\Ls(z(s),\lambda(s)(\lambda-\lambda^\star) + \nabla_{z\lambda}\Ls(z(s),\lambda(s))(z-z^\star)\Big)ds \\
&= \nabla_{z\lambda}\Ls(z,\lambda)(z-z^\star),
\end{aligned}
\end{equation*}
where the integrals can be simplified due to $\nabla_{zz}\Ls$ and $\nabla_{\lambda z}\Ls$ constant, as per~\eqref{eq:equiv}. Recalling Remark~\ref{rem:init-traj-sol}, which applies similarly
here, and noticing $\dot{z}\perp\ones_n$, it follows from the theorem
statement that $(z-z^\star)\perp \ones_n$. Additionally, zero is
a simple eigenvalue of $H^{1/2}L$ with a corresponding right
eigenvector $\ones_n$, implying that (c), the last line of~\eqref{eq:vq-dot}, is strict for $z\neq
z^\star$.

Let $\mathcal{S} := \setdefB{(z,\lambda)}{z = z^\star ,
	\lambda \succeq 0}$ be
an asymptotically stable set under the dynamics
$\mathscr{Z}_q$ defined
in~\eqref{eq:approx-newt-dynamics}. 
We aim to show the largest invariant set contained in $\mathcal{S}$ is the optimizer $\{(z^\star,\lambda^\star)\}$,
so we reason with KKT conditions to
complete the convergence argument for $\lambda$. For
$(z,\lambda)\in\mathcal{S}$, clearly primal feasibility is
satisfied. Assumption~\ref{ass:init-cond-cont-time} gives
feasibility of $\lambda (0)$, which is maintained along the
trajectories of $\mathscr{Z}_q$. The stationarity condition
$\nabla_z \Ls(z^\star,\lambda^\star) = 0$ is also satisfied
for $(z,\lambda)\in\mathcal{S}$: 
examine the dynamics $\dot{z}(t) =
-A_q \nabla_z \Ls(z,\lambda) \equiv 0$.
It follows that $\nabla_z \Ls(z,\lambda)_{(z,\lambda)\in\mathcal{S}} = 0$ due to $A_q$ being full rank. Then, each KKT condition has been
satisfied for $(z,\lambda)\in\mathcal{S}$ except
complementary slackness: $P_i(z) \lambda_i = 0$ for $i \in \until{2n}$. We now address this.

Notice the relation $\dot{z} \equiv 0$
implies
\begin{equation}\label{eq:lambda-all-ones}
\lambda(t) = \hat{\lambda} + \phi_{\underline{\lambda}} (t)
\begin{bmatrix}
\ones_n \\ \zeros_n
\end{bmatrix}
+ \phi_{\overline{\lambda}} (t)
\begin{bmatrix}
\zeros_n \\ \ones_n
\end{bmatrix}
\end{equation}
for some constant $\hat{\lambda}\in\real^{2n}$ and possibly time
varying $\phi_{\underline{\lambda}} (t), \phi_{\overline{\lambda}}
(t)\in\real$. This is due to $\nulo{L} = \spn{\{\ones_n\}}$ and
inferring from $\dot{z} \equiv 0$ that $\begin{bmatrix} -L & L
\end{bmatrix}\lambda (t)$ must be constant. 
Additionally, we may infer from the map $\mathscr{Z}_q$ that
$\phi_{\underline{\lambda}} (t), \phi_{\overline{\lambda}}
(t)$ are continuous and piecewise smooth. The
dynamics $\dot{\lambda}$ and
differentiating~\eqref{eq:lambda-all-ones} in time gives
\begin{equation}\label{eq:lambda-dot-all-ones}
\begin{aligned}
\dot{\lambda} = \left[ \nabla_\lambda \Ls(z^\star,\lambda)
\right]^+_\lambda &= \left[ P(z^\star)  \right]^+_\lambda \in \partial \phi_{\underline{\lambda}} (t)
\begin{bmatrix}
\mathbf{1}_{n} \\ \zeros_n
\end{bmatrix} 
+ \partial \phi_{\overline{\lambda}} (t)
\begin{bmatrix}
\mathbf{0}_{n} \\ \mathbf{1}_n
\end{bmatrix},
\end{aligned}
\end{equation}
where $\partial \phi_{\underline{\lambda}} (t)$ and $\partial
\phi_{\overline{\lambda}} (t)$ are subdifferentials with respect to time
of
$\phi_{\underline{\lambda}} (t)$ and $\phi_{\overline{\lambda}} (t)$,
respectively. Then, $\phi_{\underline{\lambda}} (t)$ and
$\phi_{\overline{\lambda}} (t)$ are additionally piecewise linear due
to $P(z^\star)$ constant.
We now state two cases for $\underline{P}(z^\star)$ to prove
$\underline{\lambda}(t)\rightarrow\underline{\lambda}^\star$.

\textbf{Case 1:} $\underline{P}_i(z^\star) = 0$ for at least one
$i\in\until{n}$. Then, $\dot{\underline{\lambda}}_i = 0$ and
from~\eqref{eq:lambda-dot-all-ones} this implies
$\dot{\underline{\lambda}} = \mathbf{0}_{n}$. Reasoning from the
projection dynamics, this implies either $\underline{\lambda}_j = 0$
or $\underline{P}_j(z^\star) = 0$ for each $j$, which satisfies the
complementary slackness condition $\underline{\lambda}_j^\star
\underline{P}_j(z^\star) = 0$ for every $j\in\until{n}$, and we
conclude that $\underline{\lambda} = \underline{\lambda}^\star$ for
$(z,\lambda)\in\mathcal{S}$.

\textbf{Case 2:} $\underline{P}(z^\star)\prec
0$.
Complementary slackness states $\underline{\lambda}_i^\star
\underline{P}_i(z^\star) = 0$ for each $i\in\until{n}$, implying
$\underline{\lambda}^\star = \zeros_n$. The dynamics preserve
$\lambda(t) \succeq 0$, so the quantity $\underline{\lambda}_i -
\underline{\lambda}_i^\star$ is strictly positive for any
$\underline{\lambda}_i \neq \underline{\lambda}_i^\star$. Applying this to the term $(\lambda-\lambda^\star)^\top[\nabla_\lambda\Ls(z,\lambda)]_\lambda^+$ obtained from the second equality (third line) of~\eqref{eq:vq-dot}, and also applying
$\underline{P}(z^\star)=\nabla_\lambda\Ls(z^\star,\lambda)\prec 0$, we obtain $\dot{V}_Q < 0$ for $z =
z^\star, \underline{\lambda} \neq
\underline{\lambda}^\star$.

The inferences of Case 1 (satisfying complementary slackness)
and Case 2 (reasoning with $\dot{V}_Q$) hold similarly for
$\overline{\lambda}$. Then, we have shown that
$\dot{V}_Q(z,\lambda) < 0, \forall (z,\lambda)\in\mathcal{S}\setminus\{(z^\star,\lambda^\star)\}$.
Asymptotic convergence to the primal and dual optimizers of
$\Pc 2$ follows from the LaSalle Invariance Principle~\cite{HKK:02}.
\end{proof}
\subsection{Interpretation of the Convergence Result}\label{ssec:interpretation}
For fast convergence, it is desirable for the ratio $\dot{V}_Q / V_Q <
0$ to be large in magnitude for any $(z,\lambda)\in\real^n \times
\real_{+}^{2n}$. Recall the diagonalization of $A_q$ and use this to
compute $A_q^{-1}$:
\begin{align*}
A_q &= W \begin{bmatrix}
\dfrac{1 - \eta_1^{q+1}}{1 - \eta_1} & & & \\
& \ddots & & \\
& & \dfrac{1 - \eta_{n-1}^{q+1}}{1 - \eta_{n-1}} \\
& & & q+1
\end{bmatrix} W^\top ,\displaybreak[0]\\
A_q^{-1} &= W \begin{bmatrix}
\dfrac{1 - \eta_1}{1 - \eta_1^{q+1}} & & & \\
& \ddots & & \\
& & \dfrac{1 - \eta_{n-1}}{1 - \eta_{n-1}^{q+1}} \\
& & & (q+1)^{-1}
\end{bmatrix} W^\top .
\end{align*}
Next, write $z - z^\star = \zeta_1w_1 + \dots + \zeta_{n-1}w_{n-1}$ as
a weighted sum of the eigenvectors $w_i$ of $I_n - LHL$. Note that we
do not need $w_n = \mathbf{1}_{n}$ for this representation due to $z -
z^\star \perp w_n$. Then, $V_Q = \sum_{i=1}^{n-1} \zeta_i^2
(1-\eta_i)/(1-\eta_i^{q+1}) + V_\lambda$, where $V_\lambda :=
\vert\vert \lambda - \lambda^\star \vert\vert_2^2$. Additionally, note
that $LHL$ and $A_q^{-1}$ share eigenvectors, so $\dot{V}_Q \leq
-\sum_{i=1}^{n-1} \zeta_i^2 (1-\eta_i)$. 
Toward this end, we can write
\begin{equation*}
\dfrac{\dot{V}_Q}{V_Q} \leq \dfrac{-\sum_{i=1}^{n-1} 
\zeta_i^2(1- \eta_i)}{\sum_{i=1}^{n-1} \zeta_i^2\left(\dfrac{1- 
	\eta_i}{1-\eta_i^{q+1}}\right) + V_\lambda}.
\end{equation*}
To interpret this, first reason
with the values of $q$. Consider $q=0$, which is analogous to a
gradient-based method. Then, the rational in the sum contained in the
denominator is equal to one and there is no \emph{weighting}, in a
sense, to the step direction. In other words, if the value of
$\zeta_i$ happens to be large in magnitude corresponding to the
eigenvector $w_i$ of $\nabla_{zz}\mathscr{L}$ whose corresponding
eigenvalue $(1-\eta_i)$ is small in magnitude, then that term does not
appropriately dominate the numerator relative to each other term and
the quantity $\dot{V}_Q / V_Q$ is small in magnitude. On the other hand,
if
$q$ is large, then the quantity $1-\eta_i^{q+1}$ is close to $1$, and
the terms of the sums in the numerator and denominator have the effect
of ``cancelling'' one another, which provides more uniform convergence
on the trajectories of $z$. 
In addition, if the values of $\eta_i$ are small in magnitude,
i.e. our weight design on $L$ was relatively successful, the quantity $1-\eta_i^{q+1}$ approaches $1$ more quickly
and the effect of a particular $\zeta_i$ being large relative to the
other terms in the sum is diminished for any particular $q$.

Note that, although we have framed this argument as an improvement
over the gradient technique, it may be the case that for a particular
time $t$ the decomposition on $z(t)$ may have a large $\zeta_i$
corresponding to $1-\eta_i$ large. This actually provides superior
momentary convergence compared to a Newton-like method. However, we
contend that the oscillatory nature of the trajectories over the
entire time horizon gives way to improved convergence from the Newton
flavor of our algorithm. This is confirmed in simulation.

Finally, it is apparent that choosing $q$ even is (generally speaking)
superior to $q$ odd: the quantity $1-\eta_i^{q+1}$ may take values in
$\left[ 1-\varepsilon^{q+1}, 1+\varepsilon^{q+1}\right]$, as opposed
to odd $q$ for which $1-\eta_i^{q+1}$ takes values in $\left[
1-\varepsilon^{q+1}, 1\right]$. We would like this quantity to be
large so the magnitude of $\dot{V}_Q /V_Q$ is large. This observation
of choosing even $q$ to prompt superior convergence is confirmed in
simulation.

This discussion neglects the $V_\lambda$ term which may be large
for arbitrarily "bad" initial conditions $\lambda(0) \succeq
0$. However, the ascent direction in $\lambda$
is clearly more effective for $z$ nearly optimal, so this term is
``cooperative'' in the sense that its decay roughly corresponds to
the decay of the Lyapunov term in $z$.

To summarize, gradient methods neglect the curvature of the underlying
cost function, which dictates the convergence properties of descent
algorithms. By weighting the descent direction by $A_q$, we elegantly
capture this curvature in a distributed fashion and the solution
trajectory reflects this property. We now provide a remark on
convergence of the algorithm for nonquadratic costs that are well approximated by quadratic functions.

\begin{rem}\longthmtitle{Convergence of $\danac$ for Approximately Quadratic Costs}
Instead of Assumption~\ref{ass:quad-cost} (quadratic costs), let
Assumption~\ref{ass:cost} (general costs) hold and consider the
dynamics
\begin{equation}\label{eq:cont-dyn-nonquadr}
\begin{aligned}
\dot{z} &= -A_q(z) \nabla_z \Ls (z,\lambda), \\
\dot{\lambda} &= \left[ \nabla_\lambda \Ls (z,\lambda)  \right]^+_\lambda .
\end{aligned}
\end{equation}
Let $H^\prime:= \dfrac{H_\Delta + H_\delta}{2}$ and
$A_q^\prime:=\sum_{p=0}^q (I_n - LH^\prime L)^p$. In a sense, these
matrices are obtained from quadratic approximations of the
nonquadratic costs $f_i$, i.e. $\left| \dfrac{\partial^2
	f_i}{\partial x_i^2} - H^\prime_{ii} \right| \leq
\dfrac{\Delta_i - \delta_i}{2}$. Use $Q = \begin{bmatrix}
A_q^{\prime -1} & 0 \\ 0 & I_{2n} \end{bmatrix}$ to define the
quadratic Lyapunov function $V_Q(z,\lambda)$ as
in~\eqref{eq:vq}. Differentiating along the trajectories
of~\eqref{eq:cont-dyn-nonquadr} now gives
\begin{equation*}
\dot{V}_Q(z,\lambda) = \dot{V}^\prime_Q(z,\lambda) + U(e,z,\lambda),
\end{equation*}
where $e$ gives some measure of how much the functions deviate from
quadratic and $U(0,z,\lambda) = 0$. The
$\dot{V}^\prime_Q(z,\lambda)$ is obtained by decomposing the
dynamics~\eqref{eq:cont-dyn-nonquadr} as
\begin{equation*}
\begin{aligned}
\dot{z} &= -A_q^\prime \nabla_z \Ls (z,\lambda) + u(e,z,\lambda), \\
\dot{\lambda} &= \left[ \nabla_\lambda \Ls (z,\lambda)
\right]^+_\lambda .
\end{aligned}
\end{equation*}
and including only the terms without $u(e,z,\lambda)$, where the
remaining terms are captured by $U(e,z,\lambda)$. $U$ and $u$
are continuous functions of $e$, and $u(0,z,\lambda) = 0$. Applying
the convergence argument of Theorem~\ref{thm:cvg-dana-c} to
$V^\prime_Q(z,\lambda)$, the continuity of $U$ and $u$ imply
$\dot{V}^\prime_Q(z,\lambda) < -U(\bar{e},z,\lambda)$ for
sufficiently small $\bar{e}$. Therefore, $\dot{V}_Q(z,\lambda) < 0$
for functions that are well approximated by quadratic functions.
\end{rem}
\section{Simulations and Discussion}		\label{sec:sims-discuss}
In this section, we implement our weight design and verify the
convergence of the $\distnewton$ algorithm in each of the discrete-time (relaxed) and continuous time (box-constrained) settings.
\subsection{Weight Design}
To evaluate the weight design posed in
Section~\ref{sec:laplacian-design} we use quadratic costs in
accordance with Assumption~\ref{ass:quad-cost}, i.e. $\delta_i = \Delta_i = a_i, \forall i$. We do this in order to isolate the
other parameters for this part of the study. Consider the following metrics: the
solution to $\Pc 4$ followed by the post-scaling by $\beta$ gives
$\varepsilon_{L^\star} := \max(\vert 1 - \mu_i(M^\star)\vert )$; this
metric represents the convergence speed of $\distnewton$ when applying
our proposed weight design of $L$.  Using the same topology $(\nodes,
\mathcal{E})$, the solution to $\Pc 5$ gives the metric
$\varepsilon_A$. Note that $\varepsilon_A$ is a \emph{best-case}
estimate of the weight design problem; however, ``reverse engineering" an $L^\star$ from the solution $A^\star$ to $\Pc 5$ is both intractable and generally likely to be infeasible. With this in mind, the metric $\varepsilon_A$ is a very conservative lower bound, whereas $\varepsilon_{L^\star}$ is the metric for which we can
compute a feasible $L^\star$. The objective of each problem is to minimize the
associated $\varepsilon$; to this end, we aim to characterize the
relationship between network parameters and these metrics. We ran 100
trials on each of 16 test cases which encapsulate a variety of
parameter cases: two cases for the cost coefficients, a \textit{tight}
distribution $a_i \in \mathcal{U}\left[ 0.8, 1.2\right]$ and a
\textit{wide} distribution $a_i \in \mathcal{U}\left[ 0.2,
5\right]$. For topologies, we randomly generated connected graphs
with network size $n \in \{10, 20, 30, 40, 50\}$, a \textit{linearly}
scaled number of edges $\vert\mathcal{E}\vert = 3n$, and a
\textit{quadratically} scaled number of edges $\vert\mathcal{E}\vert =
0.16n^2$ for $n \in \{30, 40, 50\}$. The linearly scaled
connectivity case corresponds to keeping the average degree of a node
constant for increasing network sizes, while the quadratically scaled
case roughly preserves the proportion of connected edges to total
possible edges, which is a quadratic function of $n$ and equal to
$n(n-1)/2$ for an undirected network. The results are depicted in
Table~\ref{L-design-table}, where the quadratically scaled cases are indicated by boldface. This gives the mean $\Sigma$ and standard
deviation $\sigma$ of the distributions for \emph{performance}
$\varepsilon_{L^\star}$ and \emph{performance gap}
$\varepsilon_{L^\star} - \varepsilon_A$.
\begin{table}[]
\centering
\caption{Laplacian Design. Quadratically-scaled number-of-edge cases are indicated by boldface.}
\label{L-design-table}
\begin{tabular}{|c||c|c|c|c|}
	\hline
	\begin{tabular}[c]{@{}c@{}} $a_i \in \mathcal{U} \left[ 0.8, 1.2\right]$ \\ $b_i \in \mathcal{U} \left[ 0, 1\right]$\end{tabular} & $\Sigma ( \varepsilon_{L^\star})$ & $\sigma (\varepsilon_{L^\star})$ & $\Sigma ( \varepsilon_{L^\star} - \varepsilon_A)$ & $\sigma ( \varepsilon_{L^\star} - \varepsilon_A)$ \\ \hline
	\begin{tabular}[c]{@{}c@{}}$n = 10$ \\ $\vert \mathcal{E} \vert = 30$ \end{tabular}  & 0.6343 & 0.0599 & 0.2767 & 0.0186 \\ \hline
	\begin{tabular}[c]{@{}c@{}}$n = 20$ \\ $\vert \mathcal{E} \vert = 60$ \end{tabular}  & 0.8655 & 0.0383 & 0.2879 & 0.0217 \\ \hline
	\begin{tabular}[c]{@{}c@{}}$n = 30$ \\ $\vert \mathcal{E} \vert = 90$ \end{tabular}  & 0.9100 & 0.0250 & 0.2666 & 0.0233 \\ \hline
	\begin{tabular}[c]{@{}c@{}}$n = 40$ \\ $\vert \mathcal{E} \vert = 120$ \end{tabular}  & 0.9303 & 0.0201 & 0.2501 & 0.0264 \\ \hline
	\begin{tabular}[c]{@{}c@{}}$n = 50$ \\ $\vert \mathcal{E} \vert = 150$ \end{tabular}  & 0.9422 & 0.0175 & 0.2375 & 0.0264 \\ \hline
	\begin{tabular}[c]{@{}c@{}}$n = 30$ \\ $\mathbf{\vert \mathcal{E} \vert = 144}$ \end{tabular}  & 0.7266 & 0.0324 & 0.2973 & 0.0070 \\ \hline
	\begin{tabular}[c]{@{}c@{}}$n = 40$ \\ $\mathbf{\vert \mathcal{E} \vert = 256}$ \end{tabular}  & 0.6528 & 0.0366 & 0.2829 & 0.0091 \\ \hline
	\begin{tabular}[c]{@{}c@{}}$n = 50$ \\ $\mathbf{\vert \mathcal{E} \vert = 400}$ \end{tabular}  & 0.5840 & 0.0281 & 0.2641 & 0.0101 \\ \hline \hline
	\begin{tabular}[c]{@{}c@{}} $a_i \in \mathcal{U} \left[ 0.2, 5\right]$ \\ $b_i \in \mathcal{U} \left[ 0, 1\right]$\end{tabular} & $\Sigma ( \varepsilon_{L^\star})$ & $\sigma (\varepsilon_{L^\star})$ & $\Sigma ( \varepsilon_{L^\star} - \varepsilon_A)$ & $\sigma ( \varepsilon_{L^\star} - \varepsilon_A)$ \\ \hline
	\begin{tabular}[c]{@{}c@{}}$n = 10$ \\ $\vert \mathcal{E} \vert = 30$ \end{tabular}  & 0.6885 & 0.0831 & 0.3288 & 0.0769 \\ \hline
	\begin{tabular}[c]{@{}c@{}}$n = 20$ \\ $\vert \mathcal{E} \vert = 60$ \end{tabular}  & 0.8965 & 0.0410 & 0.3241 & 0.0437 \\ \hline
	\begin{tabular}[c]{@{}c@{}}$n = 30$ \\ $\vert \mathcal{E} \vert = 90$ \end{tabular}  & 0.9389 & 0.0254 & 0.2878 & 0.0395 \\ \hline
	\begin{tabular}[c]{@{}c@{}}$n = 40$ \\ $\vert \mathcal{E} \vert = 120$ \end{tabular} & 0.9539 & 0.0189 & 0.2830 & 0.0355 \\ \hline
	\begin{tabular}[c]{@{}c@{}}$n = 50$ \\ $\vert \mathcal{E} \vert = 150$ \end{tabular} & 0.9628 & 0.0168 & 0.2590 & 0.0335 \\ \hline
	\begin{tabular}[c]{@{}c@{}}$n = 30$ \\ $\mathbf{\vert \mathcal{E} \vert = 144}$ \end{tabular} & 0.7997 & 0.0520 & 0.3587 & 0.0524 \\ \hline
	\begin{tabular}[c]{@{}c@{}}$n = 40$ \\ $\mathbf{\vert \mathcal{E} \vert = 256}$ \end{tabular} & 0.7339 & 0.0550 & 0.3688 & 0.0569 \\ \hline
	\begin{tabular}[c]{@{}c@{}}$n = 50$ \\ $\mathbf{\vert \mathcal{E} \vert = 400}$ \end{tabular} & 0.6741 & 0.0487 & 0.3543 & 0.0425 \\ \hline
\end{tabular}
\end{table}

From these results, first note that the \emph{tightly} distributed
coefficients $a_i$ result in improved $\varepsilon_{L^\star}$ across
the board compared to the \emph{widely} distributed coefficients. We
attribute this to the approximation $LHL \approx
\left(\dfrac{\sqrt{H}L + L\sqrt{H}}{2}\right)^2$ being more accurate
for roughly homogeneous $H = \diag{a_i}$. Next, it is clear that in
the cases with \emph{linearly} scaled edges, $\varepsilon_{L^\star}$
worsens as network size increases.  This is intuitive: the
\textit{proportion} of connected edges in the graph decreases as
network size increases in these cases. This also manifests itself in
the performance gap $\varepsilon_{L^\star} - \varepsilon_A$ shrinking,
indicating the \emph{best-case} solution $\varepsilon_A$ (for which a
valid $L$ does not necessarily exist) degrades even quicker as a
function of network size than our solution $\varepsilon_{L^\star}$. On
the other hand, $\varepsilon_{L^\star}$ substantially improves as
network size increases in the \emph{quadratically} scaled cases, with
a roughly constant performance gap $\varepsilon_{L^\star} -
\varepsilon_A$. Considering this relationship between the linear and
quadratic scalings on $\vert \mathcal{E}\vert$ and the metrics
$\varepsilon_{L^\star}$ and $\varepsilon_A$, we get the impression
that both proportion of connectedness and average node degree play a
role in both the effectiveness of our weight-designed solution
$L^\star$ and the best-case solution.  For this reason, we postulate
that $\varepsilon_{L^\star}$ remains roughly constant in large-scale
applications if the number of edges is scaled subquadratically as a
function of network size; equivalently, the convergence properties of
$\distnewton$ algorithm remain relatively unchanged when using our
proposed weight design and growing the number of communications per
agent sublinearly as a function of $n$.

\subsection{Discrete-Time Distributed Approx-Newton}

Consider solving $\Pc 6$ with $\danad$ for a network of $n = 100$ generators and $\vert \mathcal{E}
\vert = 250$ communication links. The local computations required of each generator
are simple vector operations whose dimension scales linearly with the
network size, which can be implemented on a microprocessor. The graph
topology is plotted in Figure~\ref{fig:graph}. The problem parameters are given by
\begin{equation*}
\begin{aligned}
&f_i(x_i) = \dfrac{1}{2}a_i x_i^2 + b_i x_i + c_i \sin{(x_i + \theta_i)}, \quad &&a_i\in \mathcal{U} [2,4], \quad b_i\in \mathcal{U} [-1,1], \\ 
&c_i\in \mathcal{U} [0,1], \quad \theta_i\in \mathcal{U} [0,2\pi], \quad &&d = 200, \quad x^0 = (d/n)\ones_n.
\end{aligned}
\end{equation*}
Note that $0 < a_i - c_i \leq \dfrac{\partial^2 f_i}{\partial x_i^2}
\leq a_i + c_i$ satisfies Assumption~\ref{ass:cost}. We compare to
the $\dgd$ and weight design policies for resource allocation described in~\cite{LX-SB:06}, along with 
an ``unweighted'' version of~\cite{LX-SB:06} in the sense that
$L$ is taken to be the degree matrix minus the adjacency matrix of the
graph, followed by the post-scaling described in
Section~\ref{sec:topology-design} to guarantee convergence. The
results are given in Figure~\ref{fig:qcomp}, which show linear
convergence to the optimal value as the number of iterations
increases, with fewer iterations needed for larger $q$. We note a
substantially improved convergence over the $\dgd$ methods, even for the
$q=0$ case which utilizes an equal number of agent-to-agent
communications as $\dgd$. This can be attributed in-part to the superior weight design of our method, which is cognizant of second-order information.

In addition, in Figure~\ref{fig:qcomp} we plot convergence of DGD, weighted by the one-sided design scheme in~\cite{LX-SB:06}, compared to our two-sided design with $q=0$, for cases in which only a universal bound on $\delta_i$, $\Delta_i$ is known (namely, using $\underline{\delta} \leq \delta_i, \Delta_i \leq \overline{\Delta}, \forall i$, as in Remark~\ref{rem:glob-hess-bound}). We note an improved convergence in each case for the locally known bounds versus the universal bound, while the locally weighted DGD method outperforms our $q=0$ two-sided globally weighted method by a slight margin.

\begin{figure}[h]
\centering
\includegraphics[scale = 0.5]{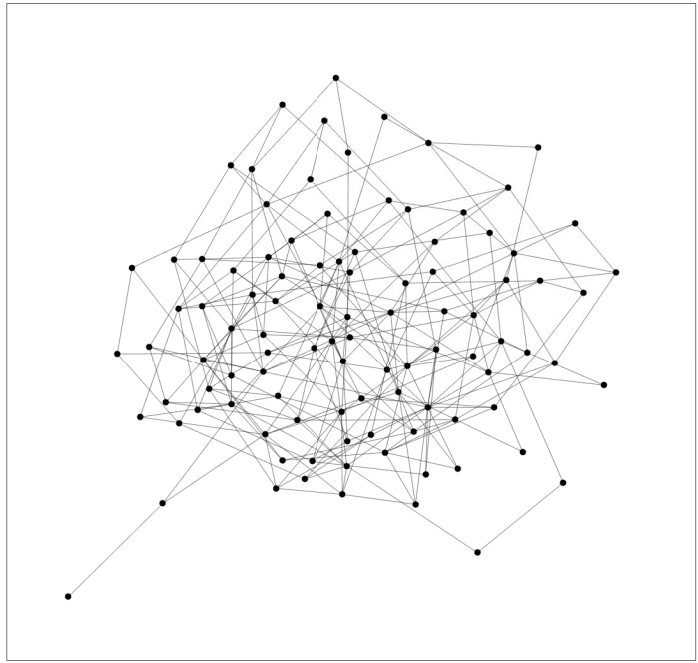}
\caption{Communication topology used for discrete-time numerical study; $n=100,\vert\E\vert = 250$.}
\label{fig:graph}
\end{figure}
\begin{figure}[h]
\centering
\includegraphics[scale = 0.6]{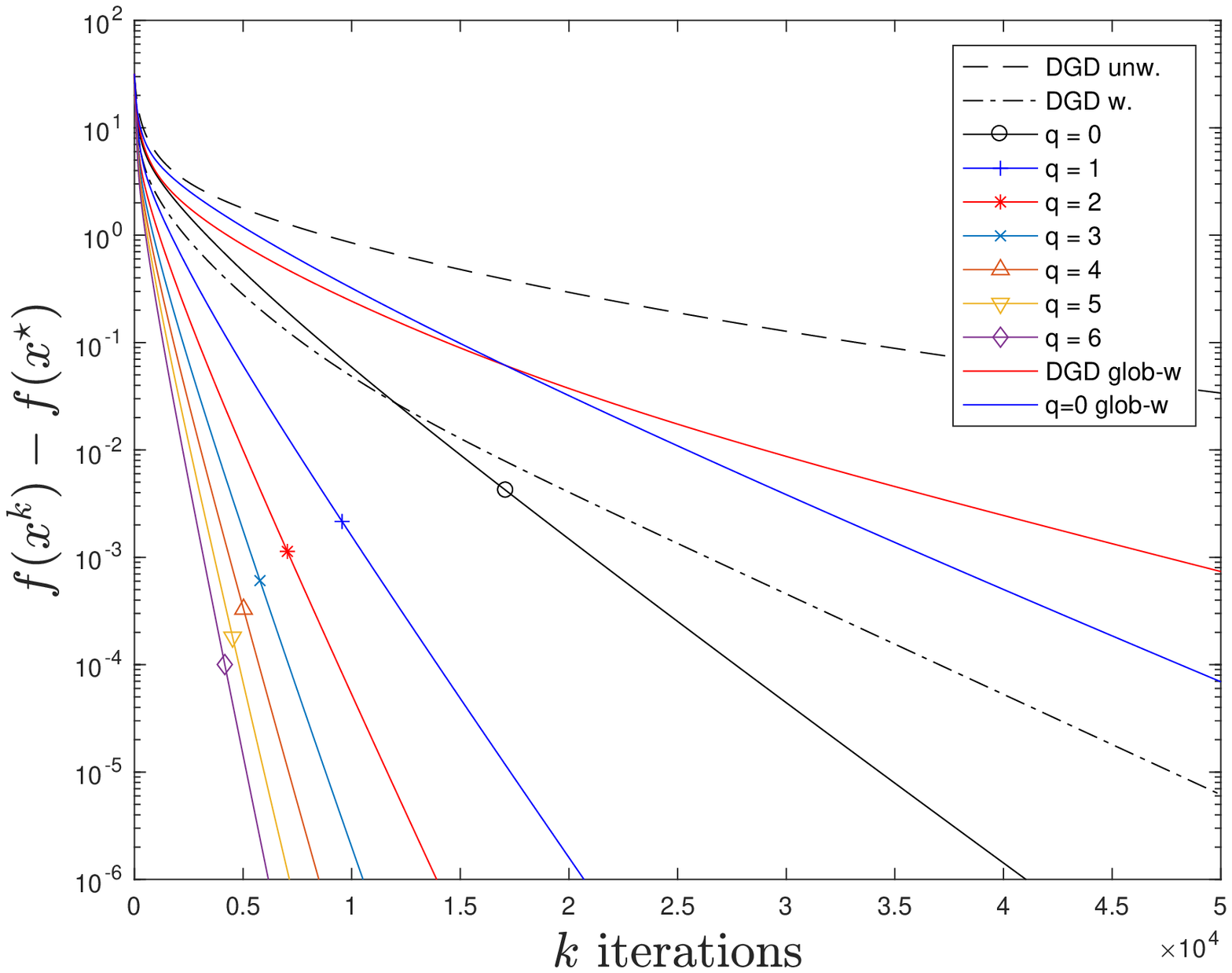}
\caption{Comparison of weighted and unweighted $\dgd$ versus
	$\danad$ with various $q$ for solving $\Pc 6$; $n=100,\vert\E\vert = 250$.}
\label{fig:qcomp}
\end{figure}
\subsection{Continuous-Time Distributed Approx-Newton}
We now study $\danac$ for solving $\Pc 1$ for
a simple $3$ node network with two edges $\E = \{\{1,2\},
\{2,3\}\}$ for the sake of visualizing trajectories. The problem parameters are given by
\begin{equation*}
\begin{aligned}
&f_1(x_1) = \dfrac{1}{4}x_1^2 + \dfrac{1}{2}x_1, \quad f_2(x_2) = \dfrac{3}{4}x_2^2 + \dfrac{1}{2}x_2, \quad f_3(x_3) = 2x_3^2 + \dfrac{1}{2}x_3, \\
&\underline{x} = \begin{bmatrix}0.2 & 2.5 & 1.5\end{bmatrix}^\top,
\quad
\overline{x} = \begin{bmatrix}1 & 6 & 4\end{bmatrix}^\top, \quad d = 6, \\
&x^0 = \begin{bmatrix}5 & -1 & 2\end{bmatrix}^\top, \quad z(0) = \zeros_3, \quad
\underline{\lambda}(0) = \begin{bmatrix} 1.5 & .5 & 0\end{bmatrix},
\quad \overline{\lambda}(0) = \begin{bmatrix}0 & 2 & 1\end{bmatrix}
\end{aligned}
\end{equation*}
Note that $x^0$ is infeasible with respect to
$\underline{x},\overline{x}$; all that we require is it satisfies
Assumption~\ref{ass:initial} (feasible with respect to $d$). We plot the
trajectories of the $3$-dimensional state projected onto the plane
orthogonal to $\ones_3$ under various $q$. Figure~\ref{fig:newton-traj} shows this, with a zoomed look at the optimizer in Figure~\ref{fig:newton-traj-zoomed}.

\begin{figure}[h]
\centering
\includegraphics[scale = 0.6]{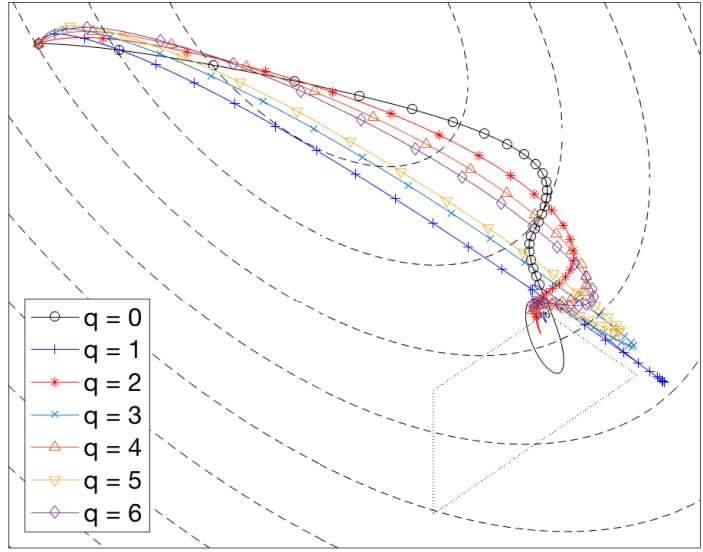}
\caption{Three node case: projection of $x^0 + Lz(t)\in\real^3$ onto the
	$2$-dimensional plane $\setdef{x}{\sum_i x_i = d}$. Markers
	plotted for $t = 0,0.2,0.4,\dots,5$ seconds. Dashed line
	ellipses indicate intersection of ellipsoid level sets with
	the plane; dotted lines indicate intersection of box
	constraints with the plane.}
\label{fig:newton-traj}
\end{figure}

\begin{figure}[h]
\centering
\includegraphics[scale = 0.6]{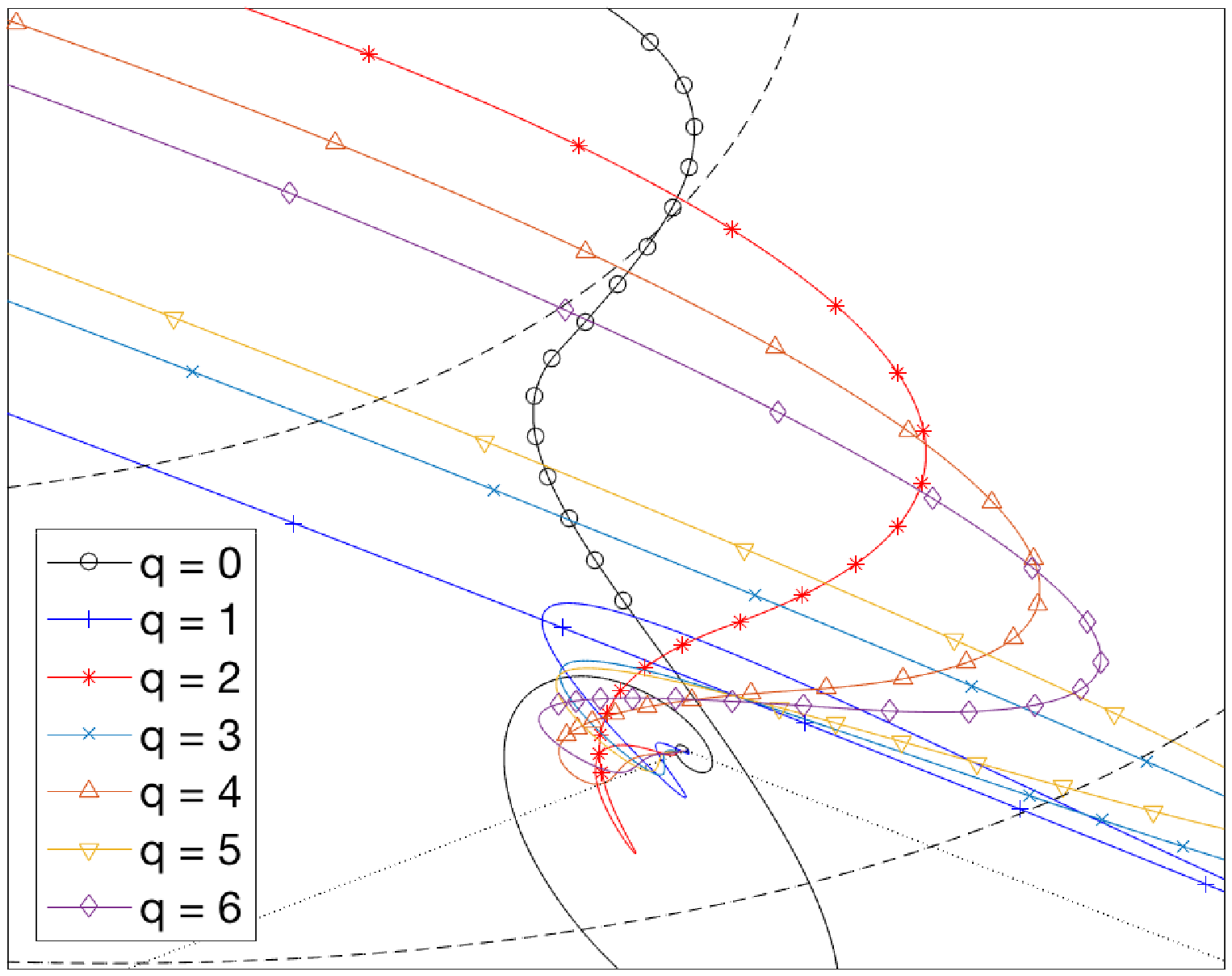}
\caption{Three node case: trajectories zoomed closer to the optimizer. Markers
	plotted in $0.2s$ increments up to $t = 5s$.}
\label{fig:newton-traj-zoomed}
\end{figure}

It is clear that choosing $q$ even versus $q$ odd has a qualitative
effect on the shape of the trajectories, as noted in
Section~\ref{ssec:interpretation}. Looking at
Figure~\ref{fig:newton-traj}, it seems the trajectories are intially
pulled toward the unconstrained optimizer (center of the level sets)
with some bias due to $\lambda(0)\neq \zeros_6$. As $\lambda$ is
given time to evolve, these trajectories are pulled back toward
satisfying the box constraints indicated by the dotted quadrilateral,
i.e. the intersection of the box constraints and the plane defined by
$\setdef{x}{\sum_i x_i = d}$.

For a quantitative comparison, we consider $n = 40$ generators with $\vert\E\vert=156$ communication links whose graph is given by Figure~\ref{fig:graph40} and
the following parameters.
\begin{equation*}
\begin{aligned}
&f_i(x_i) = \dfrac{1}{2}a_i x_i^2 + b_i x_i, \ a_i\in\mathcal{U}[0.5,3], \ b_i\in\mathcal{U}[-2,2],\\
&\underline{x}_i \in\mathcal{U}[1.5,3], \quad
\overline{x}_i \in\mathcal{U}[3,4.5], \quad i\in\until{100}, \\
&d = 120, \ x^0 = 3*\ones_{40}, \ z(0) = \zeros_{40}, \ \lambda(0) = \zeros_{80}.
\end{aligned}
\end{equation*}
\begin{figure}[h]
\centering
\includegraphics[scale = 0.6]{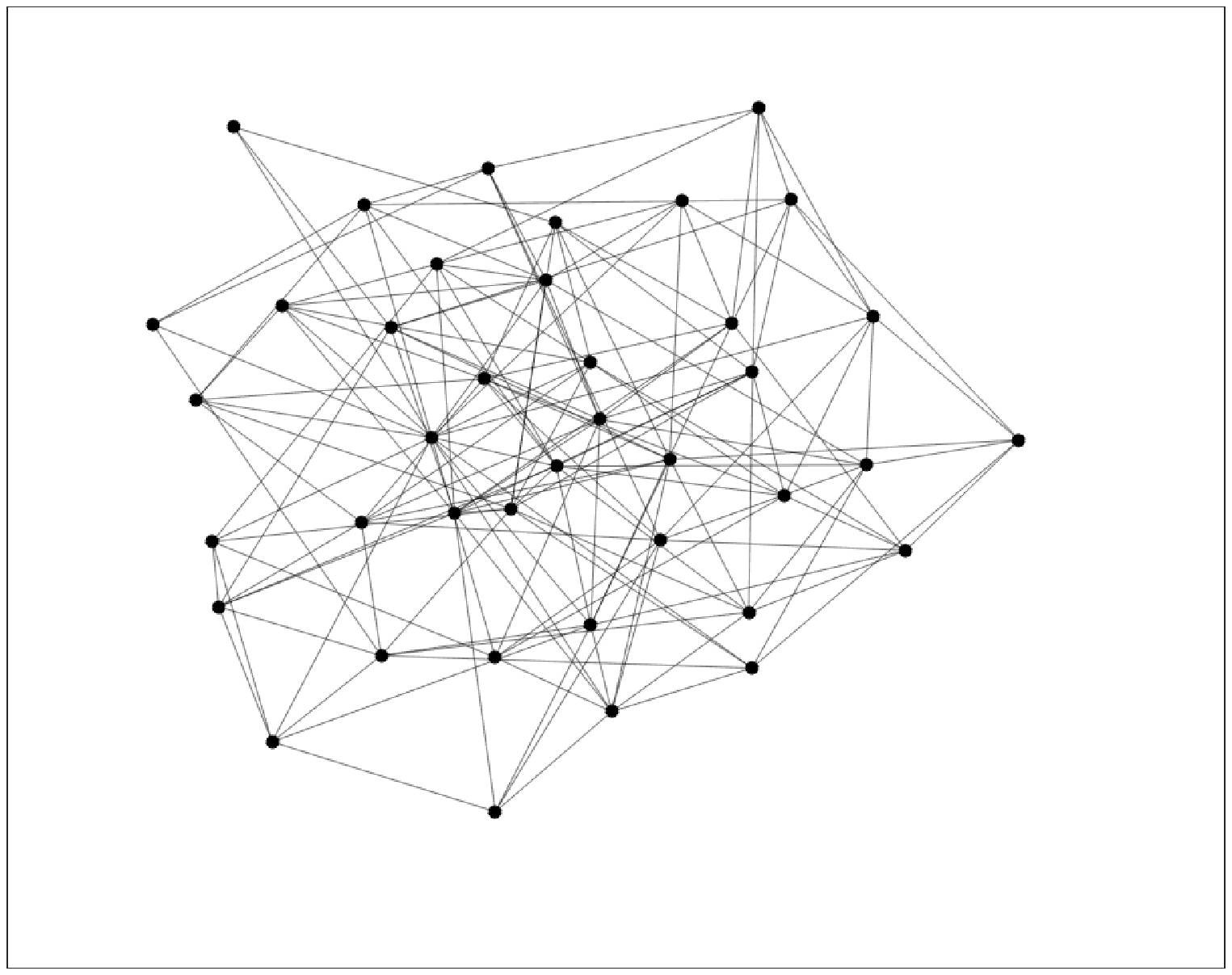}
\caption{Communication graph for continuous-time numerical study: 40 nodes and 156 edges.}
\label{fig:graph40}
\end{figure}
\begin{figure}[h]
\centering
\includegraphics[scale = 0.6]{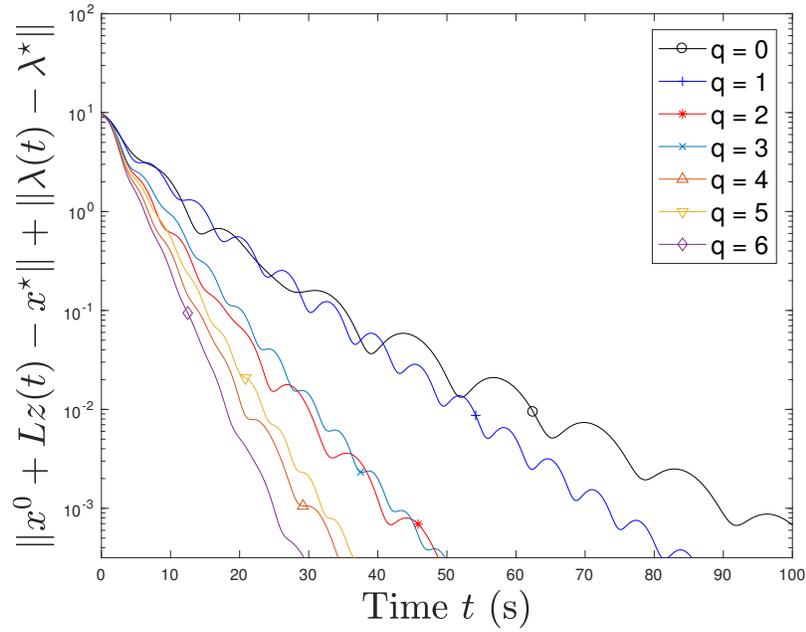}
\caption{Error in the primal and dual state variables versus time for various $q$; $n=40,\vert\E\vert=156$.}
\label{fig:errstate_cont}
\end{figure}
\begin{figure}[h]
\centering
\includegraphics[scale = 0.6]{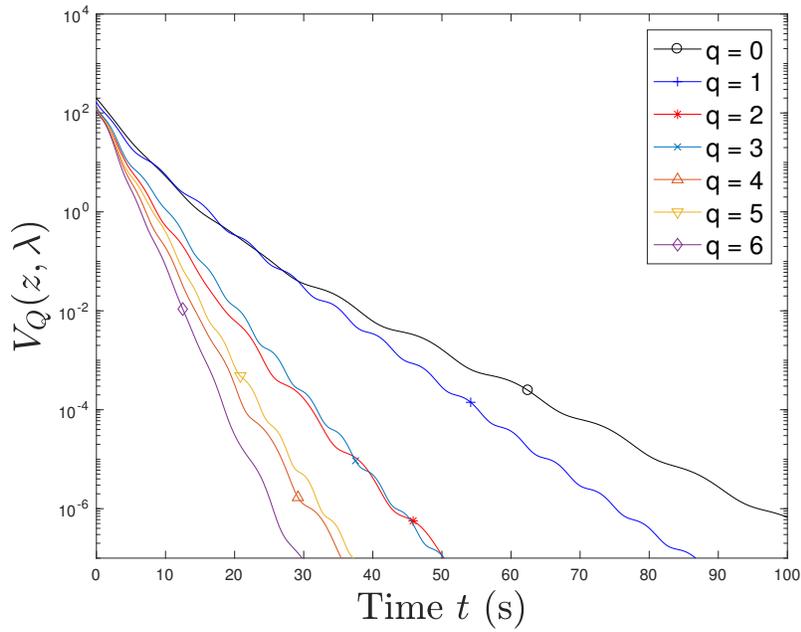}
\caption{Value of the Lyapunov function $V_Q$ versus time for various $q$; $n=40,\vert\E\vert=156$.}
\label{fig:Lyap_cont}
\end{figure}
\begin{figure}[h]
\centering
\includegraphics[scale = 0.6]{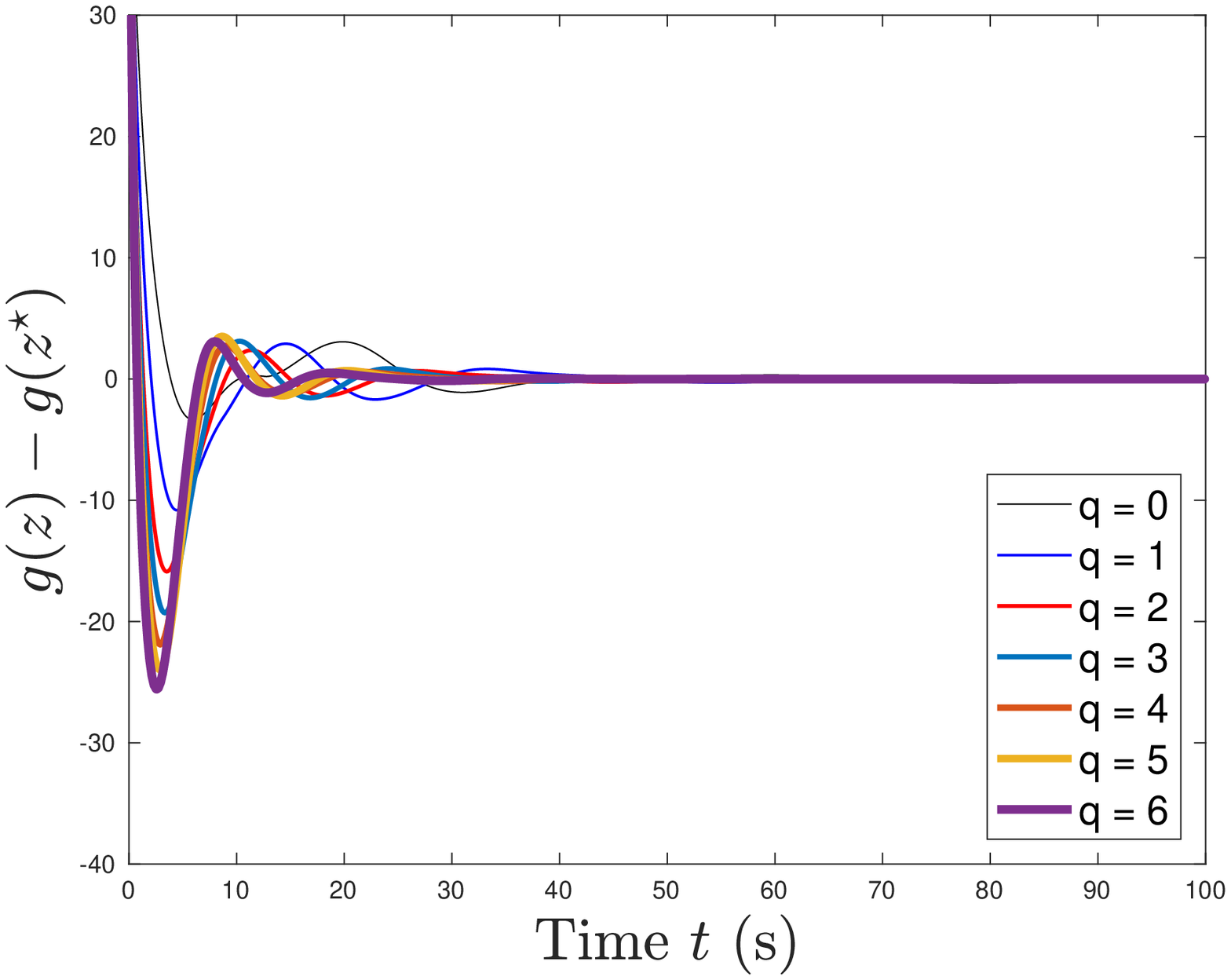}
\caption{Value of the objective function versus time for various $q$; $n=40,\vert\E\vert=156$.}
\label{fig:errfn_cont}
\end{figure}

Note from Figure~\ref{fig:errstate_cont} that convergence with respect
to $\Vert x^0 + Lz(t) - x^\star\Vert + \Vert\lambda(t) -
\lambda^\star\Vert$ is not monotonic for some $q$. This is resolved in
Figure~\ref{fig:Lyap_cont} by examining $V_Q$ as defined
by~\eqref{eq:vq}. We also note the phenomenon of faster convergence
for even $q$ over odd $q+1$; the reason for this is related to the
modes of $I_n - LHL$ and was discussed in
Section~\ref{ssec:interpretation}. However, increasing $q$ on a whole
lends itself to superior convergence compared to smaller $q$. As for
the metric $g(z) - g(z^\star)$ in Figure~\ref{fig:errfn_cont}, note
that these values become significantly negative before eventually
stabilizing around zero. The reason for this is simple: in order for
the $\mathscr{Z}_q$ dynamics~\eqref{eq:approx-newt-dynamics} in
$\lambda$ to ``activate," the primal variable must become infeasible
with respect to the box constraints. In this sense, the stabilization
to zero of the plots in Figure~\ref{fig:errfn_cont} represents the
trajectories converging to feasible points of $\Pc 2$.

\subsection{Robust DANA Implementation}\label{ssec:sims-robust-dana}
Lastly, we provide a simulation justification for relaxing
Assumption~\ref{ass:initial} via the method described in
Remark~\ref{rem:robust}. Figure~\ref{fig:robust-state-err} plots the
error in the primal and dual states over time of the modified
``robust" method, which tends to approach zero for all observed values
of $q$, and Figure~\ref{fig:robust-sum-err} demonstrates that the
violation of the equality constraint stablizes to zero very
quickly. Noisy state perturbations are injected at $t=25,50,75$, and
we observe a rapid re-approach to the plane satisfying the equality
constraint. However, even though the algorithm presents a faster
convergence than gradient methods, here do not observe as clear of a
relationship between performance and increased $q$ as in previous
settings. The investigation of the properties of this algorithm is
left as future work.

\begin{figure}[h]
\centering
\includegraphics[scale = 0.45]{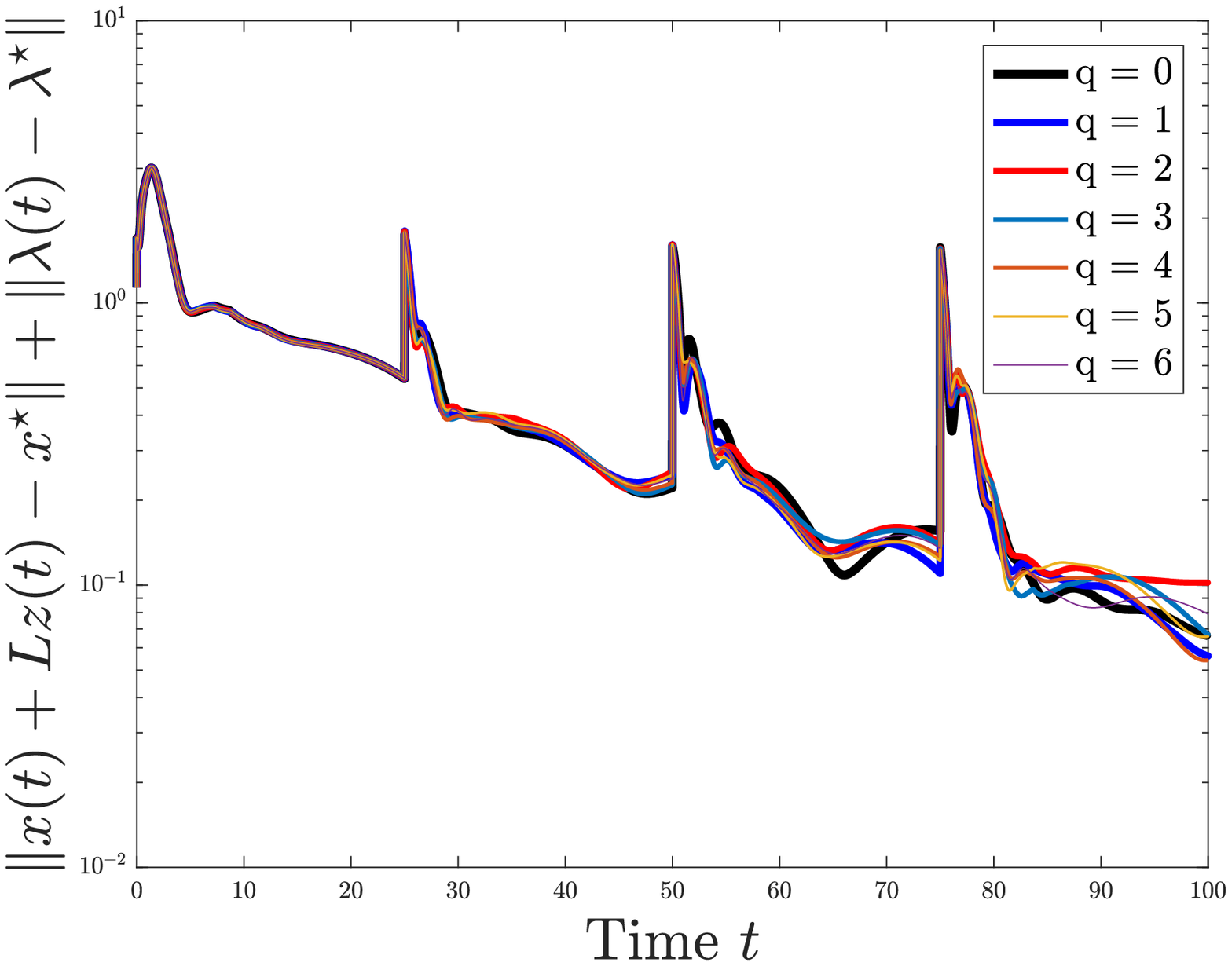}
\caption{ Error in the primal and dual states for a robust implementation of DANA; $n=20,\vert\E\vert=40$. Initialization does not satisfy Assumption~\ref{ass:initial}, and perturbations are injected at $t=25,50,75$.}
\label{fig:robust-state-err}
\end{figure}
\begin{figure}[h]
\centering
\includegraphics[scale = 0.45]{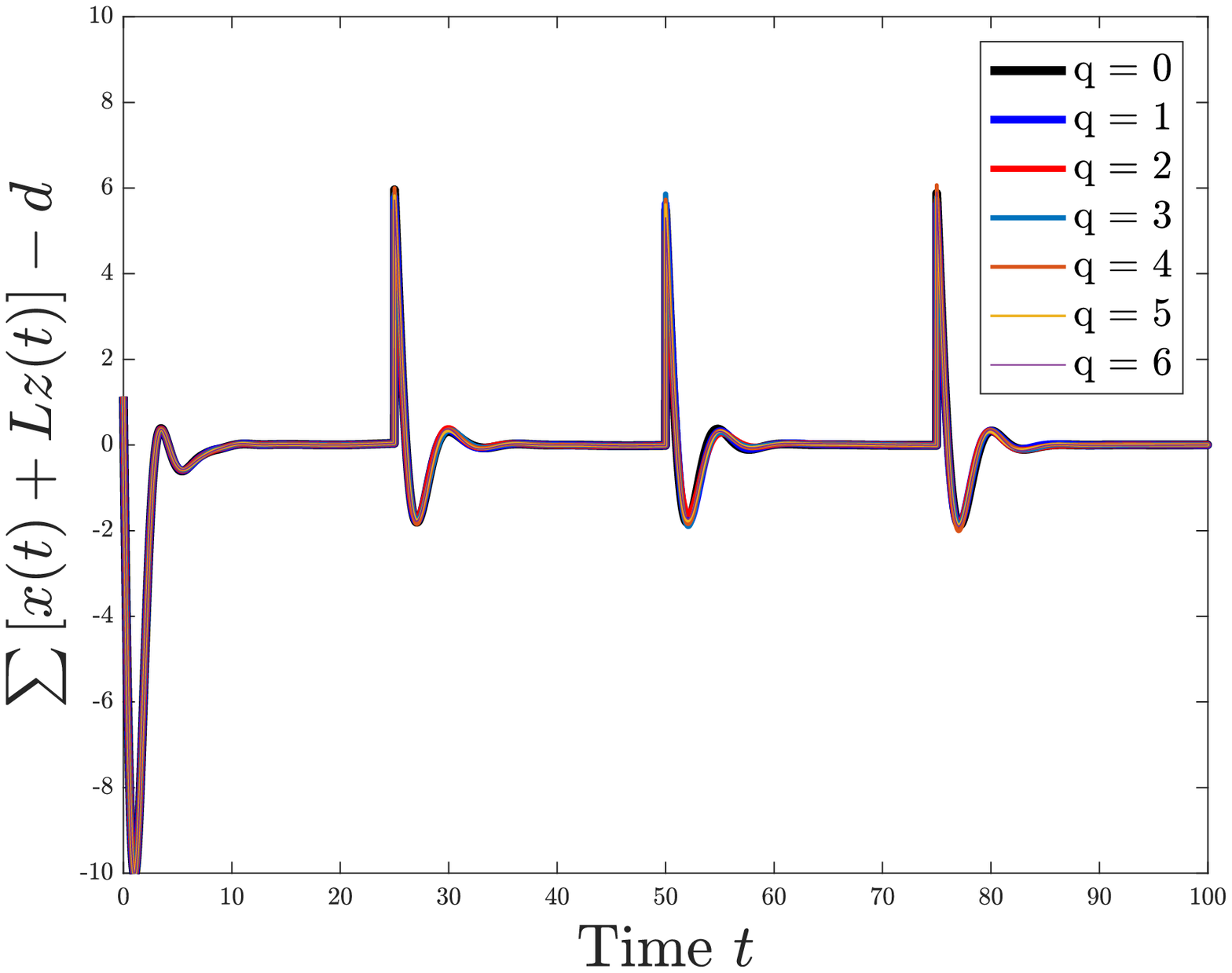}
\caption{ Violation of the resource constraint over time for robust DANA; $n=20,\vert\E\vert=40$. Perturbations are injected at $t=25,50,75$.}
\label{fig:robust-sum-err}
\end{figure}

\section*{Acknowledgements}

The material in this chapter, in full, is a reprint of \textit{Distributed Approximate Newton Algorithms and Weight Design for Constrained Optimization}, T.~Anderson, C.Y.~Chang and S.~Mart\'{i}nez, Automatica, 109, article 108538, November 2019. A preliminary version of the work appeared in the proceedings of the Conference on Control Technology and Applications (CCTA), Mauna Lani, HI, 2017, pp. 632-637, as \textit{Weight Design of Distributed Approximate Newton Algorithms for Constrained Optimization}, T.~Anderson, C.Y.~Chang and S.~Mart\'{i}nez. The dissertation author was the primary investigator and author of these papers.
\chapter{Distributed Stochastic Nested Optimization via Cubic Regularization}
\label{chap:DiSCRN}
This chapter considers a nested stochastic distributed optimization
problem. In it, approximate solutions to realizations of the
inner-problem are leveraged to obtain a Distributed Stochastic Cubic
Regularized Newton (DiSCRN) update to the decision variable of the
outer problem. We provide an example involving electric vehicle
users with various preferences which demonstrates that this model is
appropriate and sufficiently complex for a variety of data-driven
multi-agent settings, in contrast to non-nested models. The main two
contributions of the chapter are: (i) development of local stopping
criterion for solving the inner optimization problem which
guarantees sufficient accuracy for the outer-problem update, and
(ii) development of the novel DiSCRN algorithm for solving the
outer-problem and a theoretical justification of its
efficacy. Simulations demonstrate that this approach is more stable
and converges faster than standard gradient and Newton outer-problem
updates in a highly nonconvex scenario.

\section{Bibliographical Comments}

One of the most 
widely used 
stochastic optimization method is 
% and wide practical adoption of
stochastic gradient-based (first-order) methods,
see~\cite{WAG:84,LB:10,LB-FEC-JN:18} as broad references. These
methods are powerful because they necessitate only a small sampling of
the data set %or underlying distribution
to compute an update direction at each iterate. However, these
first-order algorithms suffer from slow convergence around
saddle-points~\cite{SD-CJ-JL-MJ-BP-AS:17}, which are
disproportionately more present in higher-dimensional nonconvex
problems~\cite{YD-RP-CG-KC-SG-YB:14}. By contrast, higher-order
Newton-based methods tend to perform more strongly across applications
in terms of number of calls to an oracle or total iterations,
see~\cite{FY-AN-US:16,XW-SM-DG-WL:17} for examples in stochastic
non-strongly convex and nonconvex settings, respectively,
and~\cite{%EW-AO-AJ:13P1,EW-AO-AJ:13P2,
	AM-QL-AR:17,TA-CYC-SM:18-auto,RT-HBA-AJ:19} for various multi-agent
examples.

An issue with many of the aforementioned algorithms is they are
vulnerable to slow convergence or instability in the presence of
saddle-points and/or an ill-conditioned Hessian matrix. %\margin{Is
%this fixed by cubic regularization? i said so more strongly Tor: yes}
A
growing body of works thus focuses on using a cubic-regularization
term in the second-order Taylor approximation of the objective
function. % An
% iterative update can then be obtained from either a local or global
% minimizer to this cubic submodel.
Nesterov and Polyak laid significant groundwork for this method
in~\cite{YN-BTP:06}, and substantial follow-ups are contained
in~\cite{CC-NG-PT:09P1,CC-NG-PT:10P2}, which study adaptive batch
sizes and the effect of inexactness in the cubic submodel on
convergence. Excitement about this topic has grown substantially in
the last few years, with~\cite{YC-JD:19} showing how the global
optimizer of the nonconvex cubic submodel can be obtained under
certain initializations of gradient descent,
and~\cite{NT-MS-CJ-JR-MJ:18} being one of the first thorough analyses
of the algorithm in the traditional stochastic optimization
setting. In~\cite{XC-BJ-TL-SZ:18}, the authors consider the stochastic
setting from an adaptive batch-size perspective and~\cite{CU-AJ:20}
is, to our knowledge, the only existing work in a distributed
application, with an alternative approach that allows for a
communication complexity analysis. %\margin{I highlighted the paper by
%  Jadbabaie even more. Hope this is OK for us. I've looked at the
%  paper, and yes, obviously has elements in common with ours. But I
%  think the the saddle point dynamics they use is a bit different so
%  that they can do this complexity analysis. Tor: sounds good}
Both~\cite{XC-BJ-TL-SZ:18} and~\cite{CU-AJ:20} assume convexity,
and~\cite{CU-AJ:20} is nonstochastic. As far as we know, no current
work has unified \emph{distributed}, \emph{stochastic}, and
\emph{nonconvex} elements, particularly in a nested optimization
scenario.

\section*{Statement of Contributions} We begin the chapter by formulating
a nested distributed stochastic optimization problem, where
approximate solutions to realizations of the inner-problem are needed
to obtain iterative updates to the outer problem, and we motivate this
model with an example based on electric vehicle charging
preferences. The contributions of this chapter are then twofold. First,
we develop a stopping criterion for a Laplacian-gradient subsolver of
the inner-problem. The stopping criterion can be validated locally by
each agent in the network, and the relationship to solution accuracy
aids the synthesis with the outer-problem update. Second, to that end,
we formulate a distributed optimization model of the stochastic
outer problem and develop a cubic regularization of its second-order
approximation. This formulation lends itself to obtaining a
Distributed Stochastic Cubic-Regularized Newton (DiSCRN) algorithm, and
we provide theoretical justification of its convergence.

\section{Problem Formulation}\label{sec:prob-form}

This section details the two problem formulations which are of
interest, where the first problem $\Pc 1$ takes the form of a
stochastic approximation whose cost is a parameterization of the cost
of the second problem $\Pc 2$. Problem $\Pc 2$ is a separable
resource allocation problem in which $n$ agents $i\in \N$ must
collectively obtain a solution that satisfies a linear equality
constraint while minimizing the sum of their local costs. (This
problem commonly appears in real-time optimal dispatch for electric
grids with flexible loads and distributed generators, see
e.g.~\cite{CAISO-BPM:18}.) Thus, $\Pc 1$ can be treated as a nested
optimization, with an objective $F$ that takes stochastic arguments,
and is not necessarily available in closed form if $\Pc 2$ cannot be
solved directly and/or the distribution $\D$ being unknown. These
problems are stated as
\begin{equation*} %\label{eq:P1}
\Pc 1: \
\underset{x\in\real^d}{\text{min}} \
F(x) = \Ex_{\chi\sim\D} \left[F_\chi (x)\right].
\end{equation*}
\vspace{-.25cm}
\begin{equation*} %\label{eq:P2}
\begin{aligned}
\Pc 2: \
\underset{p\in\real^n}{\text{min}} \
f(x,p) &= \sum_{i=1}^n f_i(x,p_i), \\
\text{subject to} \ \sum_{i=1}^n p_i &= \Pref + \hat{\chi} = \Pref + \sum_{i=1}^n \hat{\chi}_i.
\end{aligned}
\end{equation*}
In $\Pc 1$, each $f_i:\real^d \times \real \rightarrow \real$, and $F_\chi(x) \equiv f(x,p^\star)$, where $p^\star$ is the
solution to $\Pc 2$ for particular realizations $\hat{\chi}_i,$ where $\chi_i\sim\D_i$, i.e. $\chi\sim\D = \D_1 \times \dots \times \D_n$. The elements
$p_i\in\real$ of $p\in\real^n$ and terms $\hat{\chi}_i$ are each associated
with and locally known by agents $i\in\N$, and $\Pref\in\real$ is a
given constant known by a subset of agents (we discuss its
interpretation shortly with an example). First, for $F_\chi$ to be
well defined, it helps if solutions $p^\star$ to $\Pc 2$ are unique for
fixed $x$ and $\hat{\chi}$, which we now justify with convexity
assumptions for
$f_i$. 
\begin{assump}\longthmtitle{Function Properties: Inner-Problem Argument}\label{assump:fun-inner}
	The local cost functions $f_i$ are twice differentiable and
	$\omega_i$-strongly convex in $p_i$ for any fixed $x$. Further, the
	second derivatives are lower and upper bounded:
	\begin{equation*} 0 < \omega_i\leq \nabla^2_{p_i}f_i(x,p_i) \leq
		\theta_i, \qquad \forall x\in\real^d, p_i\in\real, \text{ and } i\in\N.
	\end{equation*}
	This implies $\forall x\in\real^d, p_i, \hat{p}_i\in\real
	\text{ and } i\in\N$:
	%	This also implies Lipschitz conditions on the first
	% derivatives:
	\begin{equation*}\omega_i \|p_i - \hat{p}_i \| \leq \|
	\nabla_{p_i} f_i(x,p_i) - \nabla_{p_i} f_i(x,\hat{p}_i) \|
	\leq \theta_i \| p_i -
	\hat{p}_i\|. %\\ \forall x\in\real^d, p_i, \hat{p}_i\in\real
	%\text{ and } i\in\N.
	\end{equation*}
	We also use the shorthands $\omega \triangleq \min_i{\omega_i}$ and $\theta \triangleq \max_i{\theta_i}$.
\end{assump}
This assumption will be required of our analysis in
Section~\ref{ssec:inner-loop}.
We now state some additional assumptions.
\begin{assump}
	\longthmtitle{Function Properties: Lipschitz Outer-Problem
		Argument}\label{assump:fun-outer-lips} The functions $f_i$ have
	$l_i$-Lipschitz gradients and $\rho_i$-Lipschitz Hessians:
	\begin{equation*}
		\begin{aligned} &\| \nabla f_i(x,p_i) - \nabla f_i(y,p_i) \| \leq l_i \| x - y\|, &&\forall x, y\in\real^d, \forall p_i\in\real, \\ &\| \nabla^2 f_i(x,p_i) - \nabla^2 f_i(y,p_i) \| \leq \rho_i \| x - y\|, &&\forall x, y\in\real^d, \forall p_i\in\real.
		\end{aligned}
	\end{equation*}
	We also use the shorthands $l\triangleq \max_i l_i$ and $\rho\triangleq \max_i \rho_i$.
\end{assump}
\begin{assump}\longthmtitle{Function Properties: 
		Bounded Variance Outer-Problem Argument}\label{assump:fun-outer-var}
	The function $F_\chi $ possesses the following bounded variance
	properties: % qualities on its gradient and Hessian:
	\begin{equation*}
	\begin{aligned} &\Ex\left[\|\nabla F_\chi (x) - \nabla F (x)\|^2 \right] \leq \sigma^2_1, 
	\quad &&\Ex\left[\|\nabla^2 F_\chi (x) - \nabla^2 F (x)\|^2 \right] \leq \sigma^2_2, \\
	&\|\nabla F_\chi (x) - \nabla F (x)\|^2 \leq M_1 \text{ almost surely}, \quad &&\|\nabla^2 F_\chi (x) - \nabla^2 F (x)\|^2 \leq M_2 \text{ almost surely}.
	\end{aligned}
	\end{equation*}
\end{assump}
\begin{assump}\longthmtitle{Function 
		Properties: Lipschitz Interconnection of Variables}\label{assump:inter-lipsch}
	The gradient and Hessian of the function $f$ with respect to $x$ are
	Lipschitz in $p$; that is, there exists constants $\psi_g, \psi_H > 0$
	such that
	\begin{equation*}
		\begin{aligned}
		&\| \nabla_{x} f(x,p) - \nabla_{x} f(x,\hat{p}) \| \leq \psi_g \| p - \hat{p}\|, \\
		&\| \nabla^2_{xx} f(x,p) - \nabla^2_{xx} f(x,\hat{p}) \| \leq \psi_H \| p - \hat{p}\|, \\ 
		&\qquad \forall x\in\real^d, p, \hat{p}\in\real^n.
		\end{aligned}
	\end{equation*}	
\end{assump}
Assumption~\ref{assump:fun-inner} is relatively common in the convex
optimization literature, and it lends itself to obtaining approximate
solutions to $\Pc 2$ very quickly with stopping criterion
guarantees. Assumption~\ref{assump:fun-outer-lips} is unanimously
leveraged in literature on Cubic-Regularized Newton methods, as the
constant $\rho$ pertains directly to the cubic submodel, while
Assumption~\ref{assump:fun-outer-var} is a common assumption in the
stochastic optimization literature~\cite{NT-MS-CJ-JR-MJ:18}. We note
that, although Assumptions~\ref{assump:fun-outer-lips}
and~\ref{assump:fun-outer-var} do not give a direct relationship with
the local functions $f_i(x,p_i)$, they do imply an implicit
relationship between $x, p, \text{ and } \D$ in the sense that
solutions $p^\star$ to $\Pc 2$ (and therefore the distributions $\D_i$)
must be ``well-behaved" in some sense. % with respect to each $f_i$'s
% dependence on $x$. 
This relationship, along with a broader interpretation of the model
$\Pc 1$ and $\Pc 2$, is illustrated more concretely in the following
real-world power distribution example.

% The constant $\Pref$ is a given baseline generation value determined
% by an Independent System Operator~\cite{CAISO-BPM:18}, and $\chi$ is
% the realization of a random variable from distribution $\D$, which
% is given by the sum of $n$ independent realizations $\chi_i \sim
% \D_i$. The elements of the optimization variable $p$ of $\P 2$
% signifies a consumption or generation quantity for each of the
% $i\in\N$ generators. As for the cost of $\P 1$, the function $F$ is
% equal to the optimal cost of $\P 2$ for a given choice of $x$ and
% realization $\hat{\chi}\in\D$. That is, if we let $p^\star = \argmin
% \P 2$ in a pointwise sense over $x$ and $\hat{\chi}$, then
% $F_{\hat{\chi}}(x) = f(x,p^\star)$.

%The formulations of $\P 1$ and $\P 2$ are not obviously applicable to the generation dispatch environment, so here we construct an example based on electric vehicle (EV) charging preference and photovoltaic (PV) generation uncertainty to illustrate the model.
\begin{example}\longthmtitle{EV Drivers with PV Generators}\label{example:model}
	Consider two EV drivers who each have an EV charging station and a
	PV generator. 
	The goal of this small grid system is to consume net zero power from
	the perspective of the tie line to the bulk grid, thus $\Pref =
	0$. The distributions $\D_1,\D_2$ represent the power output
	distributions of the PVs, and we consider two scenarios for these in
	this example: (1) a ``sunny day" scenario, where the realizations
	$\chi_1,\chi_2\sim\D_1,\D_2$ of PVs 1 and 2 are deterministic, and
	(2) a ``cloudy day" scenario, where intermittent cloud cover induces
	some uncertainty in the moment-to-moment PV generation.
	
	Let $A\in\{\text{sunny},\text{cloudy}\}$ indicate the weather forecast. The model is then fully described as
	\begin{equation*}
	\D_i = \begin{cases}
	\delta_{1.5}, & A = \text{sunny}, \\
	\U[0,1.5], & A = \text{cloudy}
	\end{cases} \quad \text{for both i=1,2},
	\end{equation*}
	\begin{equation*}
	%	\begin{aligned}
	f_1(x,p_1) = (2x + p_1 - 1)^2, \;%\\
	f_2(x,p_2) = (x + p_2 - 2)^2.
	%	\end{aligned}
	\end{equation*}
	
	For $x=0$, these quadratic functions\footnote{See~\cite{SB-HF:13} for an example where quadratic costs to EV users are induced by resistive energy losses in the battery model and~\cite{AW-BW-GS:12} for a broad reference on modeling generator dispatch.} have local minima at
	$p^\star = (p_1^\star,p_2^\star ) = (1,2)$, which is
	interpreted as drivers 1 and 2 preferring to charge at rates
	of 1 unit and 2 units, respectively, if there are no external
	incentives. On a sunny day, both PVs 
	deterministically produce $\hat{\chi}_1,\hat{\chi}_2 = 1.5$, which
	effectively balances the unconstrained $p^\star$ and both
	drivers can charge at their preference to maintain $\sum_i p_i
	= \Pref + \sum_i \hat{\chi}_i$.
	
	However, on cloudy days the generation of the PVs is no longer deterministic. Thus, the variable $x$
	comes in to play, which can represent a government credit that
	the drivers value differently. The role of $x$ is to shift the cost
	functions such that the unconstrained minima are near lower
	charging values in consideration of the lower expected
	generation from PVs 1 and 2. The optimal $x^\star$ to $\Pc 1$
	is the value which gives the lowest expected cost of an
	instance of $\Pc 2$ given $\hat{\chi}_1,\hat{\chi}_2$ realizations from the
	$A = $ cloudy distributions $\D_1, \D_2$. A more complete model of $\Pc 2$ could include power flow constraints; in this work, we relax these for simplicity.
	
\end{example}

\section{Distributed Formulation and Algorithm}

In this section, we develop the inner-loop
algorithm used to solve $\Pc 2$. We then synthesize inexact solutions to $\Pc 2$ with the DiSCRN algorithm for $\Pc 1$.

\subsection{Inner Loop Gradient Solver}\label{ssec:inner-loop}

For this section, consider $x$ to be fixed and known by all
agents. Further, let $\hat{\chi}_i$ be fixed (presumably from a realization
of $\D_i$) and known only to agent $i$. We adopt the following
assumption on the initial condition $p^0$.

\begin{assump}\longthmtitle{Feasibility of Inner-Problem Initial Condition}\label{assump:init-cond-p2}
	The agents are endowed with an initial condition which is feasible
	with respect to the constraint of $\Pc 2$; that is, they each possess
	elements $p_i^0$ of a $p^0$ satisfying $\ones_n^\top p^0 = \Pref + \hat{\chi}.$
	%	\begin{equation*}
	%	\ones_n^\top p^0 = \Pref + \chi.
	%	\end{equation*}
\end{assump}

The assumption is easily satisfied in practice by communicating
$\Pref$ to one agent $i$ and setting $p_i^0 = \Pref + \hat{\chi}_i$, with
all other agents $j$ using $p_j^0 = \hat{\chi}_j$. % , or by allocating terms
% which sum to $\Pref$ to a subset of agents for a more decentralized
% approach. 
An alternative to this assumption consists of reformulating
$\Pc 2$ with distributed constraints and using a dynamic consensus
algorithm as in~\cite{AC-EM-SHL-JC:18-tac}, which would still retain
exponential convergence. We impose
Assumption~\ref{assump:init-cond-p2} for simplicity. Finally, we
assume connectedness of the communication graph:
\begin{assump}\longthmtitle{Graph Properties}\label{assump:graph-conn}
	The communication graph $\G$ is connected and undirected; that is, a
	path exists between any pair of nodes and, equivalently, its
	Laplacian matrix $L = L^\top \succeq 0$ has rank $n-1$ with
	eigenvalues $0 = \lambda_1 < \lambda_2 \leq \dots \leq \lambda_n$.
\end{assump}

The discretized Laplacian-flow dynamics are given by:
\begin{equation}\label{eq:disc-lap-flow}
p^+ = p - \eta L\nabla_p f(x,p).
\end{equation}
Note that these dynamics are distributed, as the sparsity of $L$
implies each agent need only know $\nabla_{p_i} f_i(x,p_i)$ and
$\nabla_{p_j} f(x,p_j)$ for $j\in\N_i$ to compute $p_i^+$. We now
justify convergence of~\eqref{eq:disc-lap-flow} to the solution
$p^\star$ of $\Pc 2$:
\begin{prop}\longthmtitle{Convergence of Discretized Laplacian Flow}\label{prop:grad-cvg}
	Let $p^\star \in \real^n$ be the unique minimizer of $\Pc 2$. Given
	Assumption~\ref{assump:init-cond-p2} on the feasibility of the
	initial condition, Assumption~\ref{assump:graph-conn} on
	connectivity of the communication graph, and
	Assumption~\ref{assump:fun-inner} on the Lipschitz gradient
	condition of the function gradients, then, under the
	dynamics~\eqref{eq:disc-lap-flow} with $0 < \eta < \frac{2}{\theta
		\lambda_n}$, $p$ converges asymptotically to $p^\star$.
\end{prop}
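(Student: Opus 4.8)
The plan is to take the objective gap
\[
V(p) := f(x,p) - f(x,p^\star)
\]
as a Lyapunov function on the feasible affine subspace and close the argument with the discrete-time LaSalle invariance principle.

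First I would record two structural facts. Since $L\ones_n = \zeros_n$, the update~\eqref{eq:disc-lap-flow} preserves the aggregate: $\ones_n^\top p^+ = \ones_n^\top p$, so by Assumption~\ref{assump:init-cond-p2} every iterate satisfies $\ones_n^\top p^k = \Pref + \hat{\chi}$, i.e.\ the trajectory stays feasible for $\Pc 2$. Moreover, $f(x,\cdot)$ is strongly convex by Assumption~\ref{assump:fun-inner} and the feasible set is a closed affine subspace, so $\Pc 2$ has a unique minimizer $p^\star$; its KKT conditions read $\nabla_p f(x,p^\star) = \nu\ones_n$ for some $\nu\in\real$, hence $L\nabla_p f(x,p^\star) = \zeros_n$ and $p^\star$ is an equilibrium of~\eqref{eq:disc-lap-flow}. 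Conversely, if $p$ is feasible and $L\nabla_p f(x,p) = \zeros_n$, then by connectivity (Assumption~\ref{assump:graph-conn}) $\nabla_p f(x,p)\in\kernel L = \spn\{\ones_n\}$, so $p$ satisfies the KKT conditions and $p = p^\star$; that is, $\{\,p : p \text{ feasible},\ L\nabla_p f(x,p) = \zeros_n\,\} = \{p^\star\}$.

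Next I would establish the one-step decrease of $V$. Write $g := \nabla_p f(x,p)$. Expanding $f(x,\cdot)$ to second order about $p$ along the segment to $p^+$ and using the Hessian upper bound $\nabla^2_{pp} f(x,\cdot)\preceq\theta I_n$ from Assumption~\ref{assump:fun-inner}, together with $p^+ - p = -\eta L g$, yields
\[
V(p^+) - V(p) \;\le\; -\eta\, g^\top L g + \tfrac{\theta\eta^2}{2}\,\|Lg\|^2 \;=\; -\eta\, g^\top L\Big(I_n - \tfrac{\theta\eta}{2}L\Big) g .
\]
Decompose $g = \tfrac{1}{n}(\ones_n^\top g)\ones_n + g_\perp$ with $g_\perp\perp\ones_n$; since $\ones_n\in\kernel L$ the previous display equals $-\eta\, g_\perp^\top L\big(I_n - \tfrac{\theta\eta}{2}L\big) g_\perp$. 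Restricted to $\ones_n^\perp$, $L$ has eigenvalues $\lambda_2\le\cdots\le\lambda_n$, so $L(I_n - \tfrac{\theta\eta}{2}L)$ has eigenvalues $\lambda_i(1-\tfrac{\theta\eta}{2}\lambda_i)$, $i=2,\dots,n$; the assumption $\eta < \tfrac{2}{\theta\lambda_n}$ places every $\lambda_i$ in $\big(0,\tfrac{2}{\theta\eta}\big)$, so $\kappa := \min_{2\le i\le n}\lambda_i(1-\tfrac{\theta\eta}{2}\lambda_i) > 0$ and
\[
V(p^+) - V(p) \;\le\; -\eta\,\kappa\,\|g_\perp\|^2 \;\le\; 0 ,
\]
with $V(p^+) = V(p)$ only if $g_\perp = \zeros_n$, i.e.\ $L\nabla_p f(x,p) = \zeros_n$.

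Finally I would invoke LaSalle. Strong convexity makes the sublevel sets of $f(x,\cdot)$ compact, so the iterates remain in the compact, forward-invariant set $\{\,p : f(x,p)\le f(x,p^0)\,\}\cap\{\,p : \ones_n^\top p = \Pref+\hat{\chi}\,\}$, and the update map of~\eqref{eq:disc-lap-flow} is continuous. The discrete-time LaSalle invariance principle then shows the trajectory converges to the largest invariant subset of $\{\,p : V(p^+) = V(p)\,\}$ within this set; by the computation above this set is contained in $\{\,p \text{ feasible} : L\nabla_p f(x,p) = \zeros_n\,\} = \{p^\star\}$, which is itself invariant, so $p^k\to p^\star$. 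I expect the only subtle point to be the spectral bookkeeping on $\ones_n^\perp$ — correctly quotienting out the harmless zero mode of $L$ and checking that the bound $\eta < 2/(\theta\lambda_n)$ delivers a strictly negative drift for every $p\neq p^\star$ — whereas the compactness and invariance prerequisites for LaSalle are routine consequences of strong convexity.
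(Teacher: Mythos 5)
Your proposal is correct and follows essentially the same route as the paper: a quadratic (descent-lemma) expansion with the Hessian bound $\theta$ yields exactly the inequality $f(p^+)-f(p)\le -\eta\, g^\top L g + \tfrac{\theta\eta^2}{2}\|Lg\|^2$, after which the paper likewise argues strict decrease by a careful spectral treatment of $L$ on $\ones_n^\perp$ under $\eta<2/(\theta\lambda_n)$ (details deferred to its extended version). Your added bookkeeping — feasibility invariance, the KKT characterization of the unique equilibrium, and the discrete-time LaSalle wrap-up — simply fills in those deferred details consistently.
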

\begin{proof}
	Using a standard quadratic expansion around the current iterate $p$
	(see e.g. $\S 9.3$ of~\cite{SB-LV:04}) and Lipschitz bounds yields $f(p^+) - f(p) \leq
	\theta\eta^2 / 2 \| L \nabla_p f(x,p)\|^2 \\ - \eta \nabla_p
	f(x,p)^\top L \nabla_p f(x,p)$. Careful treatment of the eigenspace of $L$ and some algebraic manipulation shows that $f(p^+) - f(p)$ is strictly negative for $\eta$ as in the statement. A more detailed proof can be found in~\cite{TA-SM:20-extended}.
\end{proof}

We now provide an additional result on exponential convergence of the state
error with a further-constrained step size as compared to the
statement in Proposition~\ref{prop:grad-cvg}.
\begin{prop}\longthmtitle{Exponential Convergence with Bounded Error}\label{prop:exp-cvg}
	Let Assumptions~\ref{assump:init-cond-p2},~\ref{assump:graph-conn},
	and~\ref{assump:fun-inner} hold as before. For $0 < \eta <
	2\omega\lambda_2/\theta^2\lambda_n^2$, the quantity $\|p - p^\star
	\|$ converges exponentially to zero under the
	dynamics~\eqref{eq:disc-lap-flow}. For $\eta = \omega\lambda_2 / \theta^2 \lambda_n^2$, the rate is $\| p^+ - p^\star \| \leq
	\sqrt{1-\omega^2 \lambda_2^2 / \lambda_n^2 \theta^2}\|p-p^\star\|$,
	and $\|p^K-p^\star\| \leq \Delta$ for \\ $K \geq \log(\Delta/\|p^0 -
	p^\star\|)/\log(\sqrt{1-\omega^2 \lambda_2^2 / \lambda_n^2
		\theta^2})$. %Further, for a given $p^0$ satisfying
	%Assumption~\ref{assump:init-cond-p2} and $\eta =
	%\omega\lambda_2 / \theta^2 \lambda_n^2$, the number
	%of iterations required to converge to a $\tilde{p}$
	%satisfying $\|\tilde{p} - p^\star \| \leq \Delta$ is
	%given by at most $K = \log (something)\|p^0 -
	%p^\star\|$.
\end{prop}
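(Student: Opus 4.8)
The plan is to turn the proposition into a one-step contraction estimate for the error coordinate $e \triangleq p - p^\star$. Two structural observations come first. Since $L\ones_n = \zeros_n$, the scalar $\ones_n^\top p$ is invariant along~\eqref{eq:disc-lap-flow}, so Assumption~\ref{assump:init-cond-p2} forces $\ones_n^\top p = \ones_n^\top p^\star$ at every iterate; hence $e$ always lies in $\ones_n^\perp = \spn\{v_2,\dots,v_n\}$. Moreover, the KKT conditions of $\Pc 2$ give $\nabla_p f(x,p^\star) = \mu\ones_n$ for some scalar $\mu$, so $L\nabla_p f(x,p^\star) = \zeros_n$ and $p^\star$ is a fixed point of the dynamics. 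Subtracting this fixed-point identity from~\eqref{eq:disc-lap-flow} and applying the mean value theorem componentwise — legitimate because $\nabla^2_{pp}f$ is diagonal by separability — I would write $e^+ = (I_n - \eta L\bar H)\,e$, where $\bar H \triangleq \int_0^1 \nabla^2_{pp}f(x,p^\star + s\,e)\,ds$ is diagonal and, by Assumption~\ref{assump:fun-inner}, satisfies $\omega I_n \preceq \bar H \preceq \theta I_n$.

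Expanding the Euclidean norm gives $\|e^+\|^2 = \|e\|^2 - 2\eta\, e^\top L\bar H e + \eta^2\|L\bar H e\|^2$, and the proof reduces to estimating the last two terms. The quadratic term is immediate: $\|L\bar H e\| \le \|L\|_2\,\|\bar H\|_2\,\|e\| \le \lambda_n\theta\|e\|$. The cross term is the delicate one, and this is exactly where the ``careful treatment of the eigenspace of $L$'' enters: the estimate I want is $e^\top L\bar H e \ge \omega\lambda_2\|e\|^2$ for all $e \perp \ones_n$. My intended route is a congruence argument — setting $u \triangleq \bar H^{1/2}e$ turns the recursion into $\bar H^{1/2}e^+ = (I_n - \eta\widetilde L)\,u$ with $\widetilde L \triangleq \bar H^{1/2}L\bar H^{1/2} \succeq 0$; one has $u \perp \ker\widetilde L = \spn\{\bar H^{-1/2}\ones_n\}$, and by Ostrowski's inertia theorem the smallest nonzero eigenvalue of $\widetilde L$ equals $c\lambda_2$ for some $c \in [\omega,\theta]$, hence is at least $\omega\lambda_2$, while its largest eigenvalue is at most $\theta\lambda_n$. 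The main obstacle is precisely the passage from this weighted picture back to the unweighted cross term on $\ones_n^\perp$: this is where the conditioning $\theta/\omega$ of the diagonal Hessian couples to the graph spectral ratio $\lambda_2/\lambda_n$, and obtaining the clean constant $\omega\lambda_2$ rather than a $\theta/\omega$-inflated one is what makes this step require care.

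Granting the cross-term bound, the rest is bookkeeping. Combining the two estimates yields $\|e^+\|^2 \le \big(1 - 2\eta\omega\lambda_2 + \eta^2\theta^2\lambda_n^2\big)\|e\|^2$; the coefficient is strictly below $1$ exactly when $0 < \eta < 2\omega\lambda_2/(\theta^2\lambda_n^2)$, which gives exponential decay of $\|p - p^\star\|$, and it is minimized at $\eta = \omega\lambda_2/(\theta^2\lambda_n^2)$, where it equals $1 - \omega^2\lambda_2^2/(\lambda_n^2\theta^2)$; taking square roots reproduces the stated per-step rate $\|p^+ - p^\star\| \le \sqrt{1 - \omega^2\lambda_2^2/\lambda_n^2\theta^2}\,\|p - p^\star\|$. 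Iterating the contraction $K$ times gives $\|p^K - p^\star\| \le \big(1 - \omega^2\lambda_2^2/\lambda_n^2\theta^2\big)^{K/2}\|p^0 - p^\star\|$, and requiring the right-hand side to be at most $\Delta$ — then dividing by $\log\sqrt{1 - \omega^2\lambda_2^2/\lambda_n^2\theta^2} < 0$, which reverses the inequality — yields the asserted bound on $K$.
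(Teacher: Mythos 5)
Your skeleton is the same as the paper's: the paper's (sketched) proof also takes $V(p)=\|p-p^\star\|^2$, substitutes~\eqref{eq:disc-lap-flow}, and asserts precisely the inequality $V(p^+)\le\bigl(1-2\eta\omega\lambda_2+\eta^2\theta^2\lambda_n^2\bigr)V(p)$, deferring the details to an extended version. Your preliminary observations are correct (invariance of $\ones_n^\top p$ along the flow, the KKT identity $\nabla_p f(x,p^\star)\in\spn\{\ones_n\}$ so that $p^\star$ is a fixed point, the diagonal averaged Hessian $\bar H$ with $\omega I_n\preceq\bar H\preceq\theta I_n$ from separability), the quadratic-term bound $\|L\bar He\|\le\lambda_n\theta\|e\|$ is fine, and the concluding bookkeeping (step-size window, optimal $\eta$, the bound on $K$) is routine once the cross-term estimate is in hand.

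The genuine gap is exactly the step you flag as the ``main obstacle'': the bound $e^\top L\bar He\ge\omega\lambda_2\|e\|^2$ for $e\perp\ones_n$ is never established, and your congruence/Ostrowski route cannot deliver it, because what Ostrowski controls is the spectrum of the symmetric matrix $\widetilde L=\bar H^{1/2}L\bar H^{1/2}$ — equivalently the one-step contraction $\|\bar H^{1/2}e^+\|\le\max\{|1-\eta\mu_2(\widetilde L)|,|1-\eta\mu_n(\widetilde L)|\}\,\|\bar H^{1/2}e\|$ in the $\bar H$-weighted norm — whereas the cross term is the \emph{asymmetric} form $e^\top L\bar He$, whose symmetric part $\tfrac12(L\bar H+\bar HL)$ need not be positive semidefinite on $\ones_n^\perp$. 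Concretely, take the path graph on three nodes (so $\lambda_2=1$, $\lambda_n=3$), quadratic costs with $f_1''\equiv\theta$ and $f_2''\equiv f_3''\equiv\omega$ (so $\bar H=\diag{\theta,\omega,\omega}$ exactly, consistent with Assumption~\ref{assump:fun-inner}), and $e=(1,2,-3)^\top\perp\ones_3$: a direct computation gives $e^\top L\bar He=27\omega-\theta$, which is negative once $\theta>27\omega$, and then $\|e^+\|^2=\|e\|^2-2\eta\,e^\top L\bar He+\eta^2\|L\bar He\|^2>\|e\|^2$ for every $\eta>0$. So the one-step Euclidean contraction with the clean constant $\omega\lambda_2$ on which your final computation rests does not follow from the stated assumptions; the $\sqrt{\theta/\omega}$ inflation you were hoping to avoid is not an artifact of the method but intrinsic (passing from the weighted contraction back to $\|e\|$ necessarily costs a conditioning factor, which still yields geometric decay with a constant prefactor when the costs are quadratic so that $\bar H$ is step-independent, but not the stated per-step rate, and for non-quadratic costs the weight changes at every iterate and the argument needs additional work). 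To close the proof you would have to either carry out the analysis in the $\bar H$-weighted norm and accept the modified constants, or impose a condition coupling $\theta/\omega$ to the graph spectrum; as written, the central inequality is asserted rather than proved, and it is false in general.
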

\begin{proof}
	Consider $V(p) =
	\|p-p^\star\|^2$. Substituting~\eqref{eq:disc-lap-flow} and applying
	bounds via eigenvalues of $L$, using
	Assumption~\ref{assump:fun-inner}, and $\nu$-strong function
	convexity, we get $V(p^+)\le (\eta^2\lambda_n^2\theta^2 - 2\eta
	\lambda_2 \omega + 1)V(p)$, with $0 < \eta <
	2\lambda_2\omega/\lambda_n^2\theta^2$. The choice of $\eta =
	\omega\lambda_2 / \lambda_n^2 \theta^2$ implies the exponential
	convergence as in the statement. See~\cite{TA-SM:20-extended}
	for more details.
\end{proof}

We note that the results of
Propositions~\ref{prop:grad-cvg} and~\ref{prop:exp-cvg} simply build on a
Laplacian-projected version of vanilla gradient descent. However, it
lays some basic groundwork and supplements our main results in the
next subsection.

With this, we are ready to transition to the discussion on obtaining a DiSCRN update to $\Pc 1$.

\subsection{Outer-Loop Cubic-Newton Update}

% Let $\G = (\N,\E)$ describe the communication graph of the agents, as in the previous section. 
We endow each agent with a local copy $x_i$ of the  variable
$x$, and we let $\x\in\real^{nd}$ be the stacked vector of these local
copies. % % The reader may be alarmed that
% Even though, we allow agents to possess different estimates of $x$
% % given that the previous section focused on a uniformly known $x$, but
% % this is mostly a formality:
% we will later show that these copies reach agreement between update
% steps. %In this case, each $f_i$ will be evaluated for that particular agent's current copy of $x_i$, and we adopt the abuse of notation $f(\x,p) = \sum_i f_i(x_i,p)$, which extends also to $F$.
Thus, a distributed reformulation of $\Pc 1$ is
\begin{equation*}
\begin{aligned}
\overline{\Pc 1}: \
\underset{\x\in\real^{nd}}{\text{min}} \quad
& \bar{F}(\x) = \Ex_{\chi\sim\D} \left[\bar{F}_{\chi} (\x)\right],  \\
\text{subject to} \quad & (L\otimes I_d)\x = \zeros_{nd},
\end{aligned}
\end{equation*}
where $\bar{F}_{\chi}:\real^{nd}\rightarrow\real$ is analagous to
$F_{\chi}:\real^{d}\rightarrow\real$ in the sense that each agent evaluates $f_i(x_i,p^\star_i)$ with its local copy of
$x_i$. % The reader might object that this potentially changes
% $p^\star$, but
Note that the constraint $(L\otimes I_d)\x = \zeros_{nd}$ imposes $x_i
= x_j, \forall i,j$ (Assumption~\ref{assump:graph-conn}), so
$\bar{F}_{\chi}$ and $F_{\chi}$ are equivalent in the agreement
subspace (and $\overline{\Pc 1}$ is equivalent to $\Pc 1$). 
Since our problem is nested and stochastic, there is a lack of access to
a closed form expression for $\bar{F}$ and 
$\bar{F}_{\chi}$. Thus, we introduce an
\emph{empirical-risk}, \emph{approximate} objective function.
To this end, let $F^S(\x) = 1/S \sum_{s=1}^S F^\Delta_{\chi^s}(\x)$ 
be the approximation of $\bar{F}$ for $S$ samples of
$\chi^s\sim\D$, where $F^\Delta_{\chi^s}\equiv \sum
f_i(x_i,\tilde{p}^s_i)$ and $\|\tilde{p}^s - p^\star\| \leq \Delta$ for
realization $\chi^s$. In this sense, $F^\Delta_{\chi^s}$ implicitly
depends on $\tilde{p}^s$, and the $\Delta$ superscript is a slight abuse
of notation. For now, the reader can consider $\Delta$ to be a sufficiently small design parameter describing the inexactness of the obtained solutions to $\Pc 2$; we build on this later. Ultimately, we intend to use batches of $F^S$ rather than the exact
$\overline{F}$ to implement DiSCRN. Consider then the cubic regularized submodel of $F^S$ at some $\x^k$:
\begin{equation}\label{eq:cub-sub-primal}
m_S^k (\x) = F^S(\x^k) +
(\x-\x^k)^\top g^k + \frac{1}{2}
(\x-\x^k)^\top H^k (\x-\x^k) + \sum_{i=1}^n \frac{\rho_i}{6}\| x_i - x_i^k\|^3,
\end{equation}
where $g^k = \nabla F^S (\x^k), H^k = \nabla^2 F^S (\x^k)$. Note that there is a slight difference between~\eqref{eq:cub-sub-primal} and
the more standard cubic submodel~\eqref{eq:to-model} in that the
regularization terms are \emph{directly separable}; this is crucial
for a distributed implementation, and our forthcoming analysis justifies that convergence can still be established.
We are interested in finding $\x^+$ which minimizes~\eqref{eq:cub-sub-primal} in the agreement subspace:
\begin{equation}\label{eq:cub-sub-constr}
\Pc 3: \ \underset{\x\in\real^{nd}}{\min} \quad m_S^k(\x), \quad \text{subject to } (L \otimes I_d)\x = \zeros_{nd}.
\end{equation}
Therefore, we prescribe the Decentralized Gradient Descent dynamics from~\cite{JZ-WY:18}:
\begin{equation}\label{eq:saddle-submodel}
\x^{+,t+1}
= 
W\x^{+,t} -\alpha_t \nabla_{\x}m_{S}^k(\x^{+,t}),
\end{equation}
where $W = I_{nd} - (1/\lambda_n L\otimes I_d)$ and $\alpha_t \sim 1/t$. Per Proposition 3 and Theorem 2 of~\cite{JZ-WY:18}, $\x^{+,t}$ under the dynamics~\eqref{eq:cub-sub-primal} converges asymptotically to a stationary point of $\Pc 3$ with $O(1/k)$ convergence in the agreement subspace, i.e. $\| x_i-\overline{x} \|$ approaches zero at a rate $O(1/k)$, where $\overline{x} = \mean{(x_i)}$.

We remark that one could formulate the Lagrangian of $\Pc 3$ and use a saddle-point method to obtain a useful update $\x^+$. This is more parallel to the work of~\cite{YC-JD:19}, which achieves the global solution via gradient descent in the centralized setting. However, even the existence of a Lagrangian saddle-point is in question when the duality gap is nonzero, so further study is required on that approach.

Our aim is to obtain an $\epsilon$-second-order stationary point of $\Pc 1$,
as in Definition~\ref{def:so-stat-point}. The above discussion serves to set up the following condition on $\x^{k+1}$:
\begin{cond}\longthmtitle{Subsolver Output}\label{cond:subsolver}
	Let
	$\x^{k+1}$ be the output of a subsolver for $\Pc 3$. Then,
	\begin{enumerate}[(i)]
		\item $\x^{k+1}$ satisfies $(L\otimes I_d)\x^{k+1} = \zeros_{nd}$.\label{item:cond-i}
		\item For an arbitrarily small constant $c>0$ and some $\epsilon > 0$, $\x^{k+1}$ satisfies $m_S^k (\x^{k+1}) - m_S^k(\x^k) < -c\epsilon\|\x^{k+1} - \x^k \| - c\sqrt{\rho\epsilon} \|\x^{k+1} - \x^k \|^2$. \label{item:cond-ii}
		%		\begin{equation*}
		%		\begin{aligned}
		%			m_S^k (\x^{k+1}) - m_S^k(\x^k) &< c\epsilon\|\x^{k+1} - \x^k \| \\ & \qquad + c\sqrt{\rho\epsilon} \|\x^{k+1} - \x^k \|^2.
		%		\end{aligned}
		%		\end{equation*}
	\end{enumerate}
\end{cond}
Part~(\ref{item:cond-i}) is implied in a linear convergence sense by the result of~\cite{JZ-WY:18} for the subsolver~\eqref{eq:saddle-submodel}. %, though the conditions of this Lemma are
% actually stronger than we require.  
The~(\ref{item:cond-ii}) condition is straightforwardly implied by any subsolver that is guaranteed to strictly decrease $m_S^k$, e.g.~\eqref{eq:saddle-submodel}, because $c$ can be taken arbitrarily small. However, it can be seen in the statement of Theorem~\ref{thm:discrn} that small $c$ implies a direct tradeoff with $\Delta$ (becomes small) and/or $S$ (becomes large).

We now give a brief outline of the entire algorithm. \newpage
{\begin{center}\underline{DiSCRN Algorithm}\end{center}}
\begin{enumerate}
	\item Initialize $\x^0$ s.t. $(L\otimes I_d)\x^0 = \zeros_{nd}$
	\item Realize $\chi^s$ and initialize $p^0$ per Assumption~\ref{assump:init-cond-p2}\label{item:realize-init}
	\item Implement~\eqref{eq:disc-lap-flow} until $\vert p_i^+ -
	p_i\vert \leq \Delta \eta \lambda_2 \omega /\sqrt{n},
	\forall i$\label{item:stop-crit}
	\item Repeat from step~\ref{item:realize-init} $S$ times, storing $\tilde{p}^s\gets p^+$ at each $s$
	\item Compute locally required elements of $g^k,H^k$
	\item Compute an $\x^{k+1}$ satisfying
	Condition~\ref{cond:subsolver},
	e.g. via~\eqref{eq:saddle-submodel}; repeat from
	step~\ref{item:realize-init}\label{item:outer-loop}
\end{enumerate}
The DiSCRN Algorithm describes a fully distributed algorithm, as each
step can be performed with only local %and neighboring agent
information. Ostensibly, $\x^0$ could be initialized arbitrarily, but
the first outer-loop would be a ``garbage" update until agreement is
obtained in step~\ref{item:outer-loop}. Note that Step~\ref{item:stop-crit} relates to a distributed stopping criterion for the subsolver of $\Pc 2$; this condition produces a solution $p^+$ in finite iterations which is sufficiently close to $p^\star$ for the sake of our analysis. This is detailed more in Theorem~\ref{thm:discrn} and its proof.

\begin{cond}\longthmtitle{Assumptions and Conditions for Theorem~\ref{thm:discrn}}\label{cond:thm1}
	Let $F$ satisfy Assumption~\ref{assump:fun-outer-lips}, on Lipschitz
	gradients and Hessians, and Assumption~\ref{assump:fun-outer-var},
	on variance conditions, and let $f$ satisfy
	Assumption~\ref{assump:inter-lipsch}, on Lipschitz interconnection
	of $x$ and $p$, and Assumption~\ref{assump:fun-inner}, on the Lipschitz condition of the
	function gradients with respect to $p$. Further, let Assumption~\ref{assump:init-cond-p2}, on the feasibility of the
	initial condition for $\Pc 2$, and Assumption~\ref{assump:graph-conn} on
	connectivity of the communication graph, each hold. Let $\x^{k+1}$ be the output of a subsolver for $\Pc 3$ that
	satisfies Condition~\ref{cond:subsolver} with $c$, and let $\tilde{p}^s\gets p^+$, where $p^+$ is the returned value under the dynamics~\eqref{eq:disc-lap-flow} satisfying $\vert p_i^+ -
	p_i\vert \leq \Delta \eta \lambda_2 \omega /\sqrt{n},
	\forall i$.
\end{cond}

\begin{thm}\longthmtitle{Convergence of DiSCRN}\label{thm:discrn}
	Let the circumstances of Condition~\ref{cond:thm1} apply here. For \\ $S \geq
	\max\{\frac{M_1}{\bar{c}\epsilon},\frac{\sigma_1^2}{\bar{c}^2\epsilon^2},\frac{M_2}{\bar{c}\sqrt{\rho\epsilon}},\frac{\sigma_2^2}{\bar{c}^2\rho\epsilon}\}O(\log{((\epsilon^{1.5}\zeta\bar{c})^{-1})})$
	with $\bar{c}\epsilon + \psi_g \Delta \leq c \epsilon$ and
	$\bar{c}\sqrt{\rho\epsilon} + \psi_H \Delta \leq
	c\sqrt{\rho\epsilon}$, then for all $\zeta > 0$ each $x_i$ asymptotically approaches a common $\epsilon$-second-order stationary point $\tilde{x}$ of $F$ with
	probability $\geq 1 - \zeta$ under the DiSCRN algorithm dynamics.
\end{thm}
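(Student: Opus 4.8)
The plan is to establish, in order, four ingredients: (a) the inner-loop stopping criterion in Step~\ref{item:stop-crit} of the DiSCRN algorithm produces inexact solutions $\tilde{p}^s$ with $\|\tilde{p}^s - p^\star\| \le \Delta$; (b) the computed surrogate gradient $g^k = \nabla F^S(\x^k)$ and Hessian $H^k = \nabla^2 F^S(\x^k)$ are, with high probability, within $c\epsilon$ and $c\sqrt{\rho\epsilon}$ of the true gradient and Hessian of the population objective $\bar{F}$; (c) a cubic-regularized dichotomy, stating that at each outer iteration either $\bar F$ decreases by $\Omega(\epsilon^{1.5}/\sqrt{\rho})$ or $\x^{k+1}$ is an $\epsilon$-second-order stationary point; and (d) the first alternative in (c) can occur only finitely often, so a union bound over those iterations closes the argument. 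For (a), the stopping test $\vert p_i^+ - p_i\vert \le \Delta\eta\lambda_2\omega/\sqrt{n}$ for all $i$ gives $\|p^+ - p\| \le \Delta\eta\lambda_2\omega$; since $p^+ - p = -\eta L\nabla_p f(x,p)$ and $L\nabla_p f(x,p^\star) = 0$ (KKT stationarity of $\Pc 2$ forces $\nabla_p f(x,p^\star)\in\spn\{\ones_n\} = \nll(L)$), while $p - p^\star\perp\ones_n$ (feasibility is preserved by~\eqref{eq:disc-lap-flow} under Assumption~\ref{assump:init-cond-p2}), $\omega$-strong convexity (Assumption~\ref{assump:fun-inner}) and the spectral gap of $L$ yield $\|L\nabla_p f(x,p)\| \ge \lambda_2\omega\|p-p^\star\|$, hence $\|p-p^\star\| \le \|p^+-p\|/(\eta\lambda_2\omega) \le \Delta$; the contracting step size of Proposition~\ref{prop:exp-cvg} gives the same bound for $p^+$.

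For (b), write $\bar{F}^S$ for the empirical average over the same samples $\chi^1,\dots,\chi^S$ but using the \emph{exact} inner solutions. Assumption~\ref{assump:inter-lipsch} gives, termwise over samples and agents and using (a), $\|g^k - \nabla\bar{F}^S(\x^k)\| \le \psi_g\Delta$ and $\|H^k - \nabla^2\bar{F}^S(\x^k)\| \le \psi_H\Delta$. A vector/matrix Bernstein concentration bound under Assumption~\ref{assump:fun-outer-var} shows that, for $S \ge \max\{M_1/(\bar c\epsilon),\sigma_1^2/(\bar c^2\epsilon^2)\}\cdot O(\log(\cdot))$ and $S \ge \max\{M_2/(\bar c\sqrt{\rho\epsilon}),\sigma_2^2/(\bar c^2\rho\epsilon)\}\cdot O(\log(\cdot))$, one has $\|\nabla\bar{F}^S(\x^k) - \nabla\bar{F}(\x^k)\| \le \bar c\epsilon$ and $\|\nabla^2\bar{F}^S(\x^k) - \nabla^2\bar{F}(\x^k)\| \le \bar c\sqrt{\rho\epsilon}$ with probability at least $1 - \zeta/K_{\max}$, where $K_{\max}$ is fixed in (d) and the logarithmic factor absorbs $\log(d\,K_{\max}/\zeta)$. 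Combined with the balancing hypotheses $\bar c\epsilon + \psi_g\Delta \le c\epsilon$ and $\bar c\sqrt{\rho\epsilon} + \psi_H\Delta \le c\sqrt{\rho\epsilon}$, on this event $\|g^k - \nabla\bar{F}(\x^k)\| \le c\epsilon$ and $\|H^k - \nabla^2\bar{F}(\x^k)\| \le c\sqrt{\rho\epsilon}$. Separately, since each $f_i$ has a $\rho_i$-Lipschitz Hessian in $x$, the second-order Taylor remainder bound applied agentwise and averaged over samples gives $m_S^k(\x) \ge F^S(\x)$ for all $\x$, with equality at $\x=\x^k$ --- this is exactly why the \emph{separable} cubic term in~\eqref{eq:cub-sub-primal} suffices and does not spoil the overestimator property.

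For (c), restrict to the agreement subspace enforced by Condition~\ref{cond:subsolver}\ref{item:cond-i}, on which $\bar{F}$ equals (a lift of) $F$. Using the overestimator property and Condition~\ref{cond:subsolver}\ref{item:cond-ii}, $F^S(\x^{k+1}) - F^S(\x^k) \le m_S^k(\x^{k+1}) - m_S^k(\x^k) < -c\epsilon\|\x^{k+1}-\x^k\| - c\sqrt{\rho\epsilon}\|\x^{k+1}-\x^k\|^2$; translating back to $\bar{F}$ via the error bounds of (b) and the cubic expansion of $\bar{F}$ (Assumption~\ref{assump:fun-outer-lips}), the Nesterov--Polyak/\cite{NT-MS-CJ-JR-MJ:18} analysis yields the dichotomy: if $\|\x^{k+1}-\x^k\| \ge \sqrt{\epsilon/\rho}$ then $\bar{F}(\x^{k+1}) \le \bar{F}(\x^k) - \Omega(\epsilon^{1.5}/\sqrt{\rho})$, whereas if $\|\x^{k+1}-\x^k\| < \sqrt{\epsilon/\rho}$ then first-order optimality of the (approximately minimized) submodel forces $\|\nabla\bar{F}(\x^{k+1})\| \le \epsilon$ and $\lambda_{\min}(\nabla^2\bar{F}(\x^{k+1})) \ge -\sqrt{\rho\epsilon}$, i.e.\ an $\epsilon$-second-order stationary point (Definition~\ref{def:so-stat-point}). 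For (d), $\bar{F}$ is bounded below, so the first alternative occurs at most $K_{\max} = O((\bar{F}(\x^0) - \inf\bar{F})\sqrt{\rho}/(\bar c\,\epsilon^{1.5}))$ times; a union bound over the failure events of (b) across these $K_{\max}$ iterations forces the $O(\log((\epsilon^{1.5}\zeta\bar c)^{-1}))$ factor in $S$ via $\log(K_{\max}/\zeta)$, and shows that with probability at least $1-\zeta$ some outer iterate has common agent value $\tilde{x}\in\real^d$ that is an $\epsilon$-second-order stationary point of $F$. Finally, since the subsolver~\eqref{eq:saddle-submodel} drives $\|x_i - \mean(x_i)\| \to 0$ at rate $O(1/k)$ (Proposition~3 and Theorem~2 of~\cite{JZ-WY:18}), each local copy $x_i$ asymptotically approaches this common point $\tilde{x}$.

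The main obstacle is step (c): carefully propagating the three distinct error sources --- the inexact inner solution (sizes $\psi_g\Delta$ and $\psi_H\Delta$ via Assumption~\ref{assump:inter-lipsch}), the finite-sample error (sizes $\bar c\epsilon$ and $\bar c\sqrt{\rho\epsilon}$), and the inexact minimization of the cubic submodel (the constant $c$ in Condition~\ref{cond:subsolver}) --- through the classical cubic-regularized Newton descent lemmas, so that both branches of the dichotomy (large step $\Rightarrow$ large decrease; small step $\Rightarrow$ approximate second-order stationarity) survive intact. The balancing conditions $\bar c\epsilon + \psi_g\Delta \le c\epsilon$ and $\bar c\sqrt{\rho\epsilon} + \psi_H\Delta \le c\sqrt{\rho\epsilon}$ are designed precisely to fold the first two error sources into the slots that the classical proof reserves for an $O(\epsilon)$ gradient perturbation and an $O(\sqrt{\rho\epsilon})$ Hessian perturbation; verifying each inequality in the resulting chain, confirming that the separable regularizer does not degrade the overestimator constant, and bookkeeping the union bound so that the stated $S$ suffices, is the bulk of the work.
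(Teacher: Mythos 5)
Your ingredients (a) and (b) and the error-propagation in (c) coincide with the paper's own argument: the paper derives $\|\tilde p^s-p^\star\|\le\Delta$ exactly as you do (strong convexity plus the spectral gap of $L$ applied to the stopping test), then introduces the exact-solution surrogates $g^k_\star,H^k_\star$, invokes Lemma~4 of~\cite{NT-MS-CJ-JR-MJ:18} for the sample-size bound, folds the inexactness terms $\psi_g\Delta,\psi_H\Delta$ in via Assumption~\ref{assump:inter-lipsch} and the balancing hypotheses, and finally pushes everything through the separable cubic upper bound on $\bar F$ and Condition~\ref{cond:subsolver}(\ref{item:cond-ii}) to conclude $\bar F(\x^{k+1})-\bar F(\x^k)<0$. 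Where you diverge is after that point: the paper stops at strict monotone descent of $\bar F$ (leaving the stationarity and consensus claims to the cited subsolver results), whereas you carry out the full classical machinery --- the large-step/small-step dichotomy, the $O\big((\bar F(\x^0)-\inf\bar F)\sqrt{\rho}\,\epsilon^{-1.5}\big)$ counting bound, and a union bound over iterations that explains the logarithmic factor in $S$. Your route is more complete and makes the probabilistic bookkeeping explicit, which the paper's proof does not; the trade-off is that your small-step branch needs more than Condition~\ref{cond:subsolver} delivers. Condition~\ref{cond:subsolver}(\ref{item:cond-ii}) only guarantees strict decrease of $m_S^k$, not approximate first- or second-order optimality of $\x^{k+1}$ for the constrained submodel~\eqref{eq:cub-sub-constr}, so the implication ``$\|\x^{k+1}-\x^k\|<\sqrt{\epsilon/\rho}$ forces $\|\nabla\bar F(\x^{k+1})\|\le\epsilon$ and $\lambda_{\min}(\nabla^2\bar F(\x^{k+1}))\ge-\sqrt{\rho\epsilon}$'' requires you to invoke explicitly the stationarity guarantee of the subsolver~\eqref{eq:saddle-submodel} (convergence to a stationary point of $\Pc 3$ in the agreement subspace) rather than Condition~\ref{cond:subsolver} alone; with that link made explicit, your argument goes through and in fact supplies detail the paper leaves implicit.
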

\begin{proof}
	First, we aim to obtain the bound $\| \tilde{p}^s - p^\star\| \leq \Delta$ for each instance $s$ of $\Pc 2$. The Lipschitz condition of Assumption~\ref{assump:fun-inner} implies
	\begin{equation*}
	{\small
		\begin{aligned}
		\omega \| p - p^\star\| &\leq \| \nabla_p f(x,p) - \nabla_p f(x,p^\star)\|, \\
		\lambda_2 \omega \| p - p^\star\| &\leq \| L(\nabla_p f(x,p) - \nabla_p f(x,p^\star))\| \\
		&= \| L \nabla_p f(x,p) \| = 1/\eta \|p^+ - p\| \leq \Delta \lambda_2 \omega.
		\end{aligned}
	}
	\end{equation*}
	Finally,  $1/\sqrt{n}$  comes from breaking $p^+ - p$
	into components and since, for $v\in\real^n$, if 
	$\vert v_i \vert \leq c/\sqrt{n}$ implies $ \| v \| \leq
	c$.
	
	Turning to $\Pc 1$, let $g_\star^k = \frac{1}{S} \sum_{s=1}^S \sum_i \nabla_{x_i}
	f_i(x_i^k, p_i^\star)$ and $H_\star^k = \frac{1}{S}
	\sum_{s=1}^S \sum_i \nabla^2_{x_i x_i} f_i(x_i^k, p_i^\star)$.
	%   i.e. the average gradient and Hessian at iteration $k$ had $\Pc
	%   2$ been solved exactly to $p^\star$. 
	\\ Lemma~4 of~\cite{NT-MS-CJ-JR-MJ:18} justifies that for
	arbitrary $\bar{c} > 0$, choosing \\ $S \geq
	\max\{\frac{M_1}{\bar{c}\epsilon},\frac{\sigma_1^2}{\bar{c}^2\epsilon^2},\frac{M_2}{\bar{c}\sqrt{\rho\epsilon}},\frac{\sigma_2^2}{\bar{c}^2\rho\epsilon}\}O(\log{((\epsilon^{1.5}\zeta\bar{c})^{-1})})$
	implies that $\| g_\star^k - \nabla \bar{F}(\x^k) \| \leq
	\bar{c}\epsilon$ and $\|(H_\star^k - \nabla_{\x\x}^2
	\bar{F}(\x^k))v\| \leq
	\bar{c}\epsilon\sqrt{\rho\epsilon}\|v\|, \forall v$ with
	probability $1-\zeta$.
	
	Let $\phi^k_g = g^k - g_\star^k, \phi^k_H = H^k - H_\star^k$, where $g^k$ and $H^k$ use the inexact estimates $\tilde{p}^s$ satisfying $\| \tilde{p}^s - p^\star\| \leq \Delta$. Substitutions and applying Assumption~\ref{assump:inter-lipsch} gives:
	\begin{equation*}
	\begin{aligned}
	&\| g^k - \nabla_{\x} \bar{F}(\x^k) \| \leq \| g_\star^k - \nabla_{\x} \bar{F}(\x^k) \| + \| \phi^k_g\| \leq \bar{c}\epsilon + \psi_g \Delta, \\
	&\| (H^k - \nabla^2_{\x\x} \bar{F}(\x^k))v \| \leq \| (H_\star^k - \nabla^2_{\x\x} \bar{F}(\x^k))v \| + \| \phi^k_H\| \leq \bar{c}\sqrt{\rho\epsilon} + \psi_H \Delta, \forall v.
	\end{aligned}
	\end{equation*}
	Next, let $\xi^k := \x^{k+1} - \x^k$ for notational convenience. The separable cubic regularized terms of $m_S^k$ can be used to bound the true function value:
	\begin{equation*}
	\begin{aligned}
	\bar{F}(\x^{k+1}) &\leq \bar{F}(\x^{k}) + \nabla \bar{F}(\x^{k})^\top \xi^k + \xi^{k\top} \nabla^2 \bar{F} (\x^k) \xi^k + \sum_i \rho_i/6 \| x_i^{k+1} - x_i^k \|^3 \Rightarrow \\
	\bar{F}(\x^{k+1}) - \bar{F}(\x^{k}) &\leq m_S^k(\x^{k+1}) - m_S^k(\x^k) + (\nabla \bar{F} (x^k) - g^k )^\top \xi^k + 1/2 \xi^{k\top} (\nabla^2 \bar{F}(x^k) - H^k ) \xi^k \\
	& \leq m_S^k(\x^{k+1}) - m_S^k(\x^k) + (\bar{c} \epsilon  + \psi_g \Delta) \| \xi^k \| + (\bar{c} \sqrt{\rho\epsilon} + \psi_H \Delta )\| \xi^k\|^2 \\
	& \leq m_S^k(\x^{k+1}) - m_S^k(\x^k) + c\epsilon\|\xi^k\| + c\sqrt{\rho\epsilon}\|\xi^k \|^2 < 0,
	\end{aligned}
	\end{equation*}
	where the first inequality is implied by breaking up $\bar{F}_\chi (\x)$ in to its separable local functions and applying Assumption~\ref{assump:fun-outer-lips} and noting that the inequality carries through the expectation operator. Subsequent inequalities are directly obtained via substitutions. The lefthand inequality of the final line stems from the Theorem statement, and the righthand inequality of the final line from~(\ref{item:cond-ii}) of Condition~1.
\end{proof}

\section{Simulation}

We consider a synthetic nonconvex case for our simulation study. The cost functions $f_i$ can be represented as:
\begin{equation*}\label{eq:fns-p}
\begin{aligned}
f_i(x,p_i) &= \frac{1}{2}\alpha_i(x) p_i^2 + \beta_i(x) p_i + \gamma_i.
\end{aligned}
\end{equation*}
Each $\alpha_i:\real\rightarrow\real$ is quartic in $x$ and generated according to~\eqref{eq:alpha}, where each $a_i^2$ is determined such that $\min_x \ \alpha_i(x) = \omega_i > 0$ with $\omega_i\in\U[1,5]$ per Assumption~\ref{assump:fun-inner}. The $\beta_i:\real\rightarrow\real$ are (possibly nonconvex) quadratic, and $\gamma_i = 0$.
\begin{equation*}\label{eq:alpha}
\begin{aligned}
&\alpha_i(x) = a_i^1(x-z_i^1)(x-z_i^2)(x-z_i^3)(x-z_i^4) + a_i^2, \\
&a_i^1\in\U[0.5,1.5], z_i^1\in\U[-2,-1], z_i^2\in\U[-1,0], z_i^3\in\U[0,1], z_i^4\in\U[1,2], \\
&\beta_i(x) = b_i^1(x-z_i^5)(x-z_i^6), \quad b_i^1\in\U[-1,1], z_i^5\in\U[-2,0],z_i^6\in\U[0,2],
\end{aligned}
\end{equation*}

We compare our DiSCRN method with
gradient-based and Newton-based updates of the same batch sizes, where
the gradient-like and Newton-like updates are computed via:
\begin{equation*}
	\begin{aligned}
	m_{g}^k(\x) &= F^S (\x^k) +
	(\x-\x^k)^\top g^k + \sum_i \frac{\eta_g}{2} \|x_i - x_i^k\|^2, \\
	m_{H}^k(\x) &= F^S (\x^k) + (\x-\x^k)^\top g^k +\frac{1}{2} (\x-\x^k)^\top H^k (\x-\x^k)
	+ \sum_i \frac{\eta_H}{2} \|x_i - x_i^k\|^2,
	\end{aligned}
\end{equation*}
We obtain $\x^{k+1}$ empirically for all three methods by
implementing~\eqref{eq:saddle-submodel} until the updates become very
small. We found that both $\eta_g$ and $\eta_H$ must be sufficiently large
to ensure stability, and
$\nabla^2_x F(x) \succ -\eta_H I_d$ to ensure $m_H^k (x)$ bounded. We take $\Delta = 0.1, S = 20, n = 40, \vert
\E \vert = 120, \Pref = 40, \D_i = \U[0,1.5] \ \forall i, \rho = 50, 
\eta_g = 100, \eta_H = 50$.

We note substantially improved performance of DiSCRN over the more traditional gradient-based and Newton-based approaches. In particular, the trajectory finds a minimizer in roughly half and one-third the number of outer-loop iterations required by Newton and gradient, respectively. It is clear that, for $x^{k+1} \approx x^k$, the cubic regularization is less dominant than the squared regularizations, allowing the DiSCRN trajectory to be influenced more by the problem data $g^k, H^k$. As for the parameters $(\rho, \eta_g, \eta_H)$, $\eta_H = 50$ and $\eta_g = 100$ were roughly the lowest possible values without inducing instability. By contrast, reducing $\rho$ to values $\sim 10^{-1}$ was still stable for DiSCRN. We noticed a clear tradeoff between
$S$ and $\Delta$, with small $S\sim 10^0$ requiring $\Delta\sim
10^{-1}$ to converge and large $S\sim 10^3$ converging even for large
$\Delta\sim 10^2$, which is implied by Theorem~\ref{thm:discrn}. Finally, DiSCRN achieves reduced disagreement compared to gradient and Newton; this could be in part due to~\eqref{eq:saddle-submodel} finding a stationary point of $\Pc 3$ faster, allotting more iterations where the consensus terms dominate the update.

\begin{figure}
	\centering 
	\includegraphics[scale=0.75]{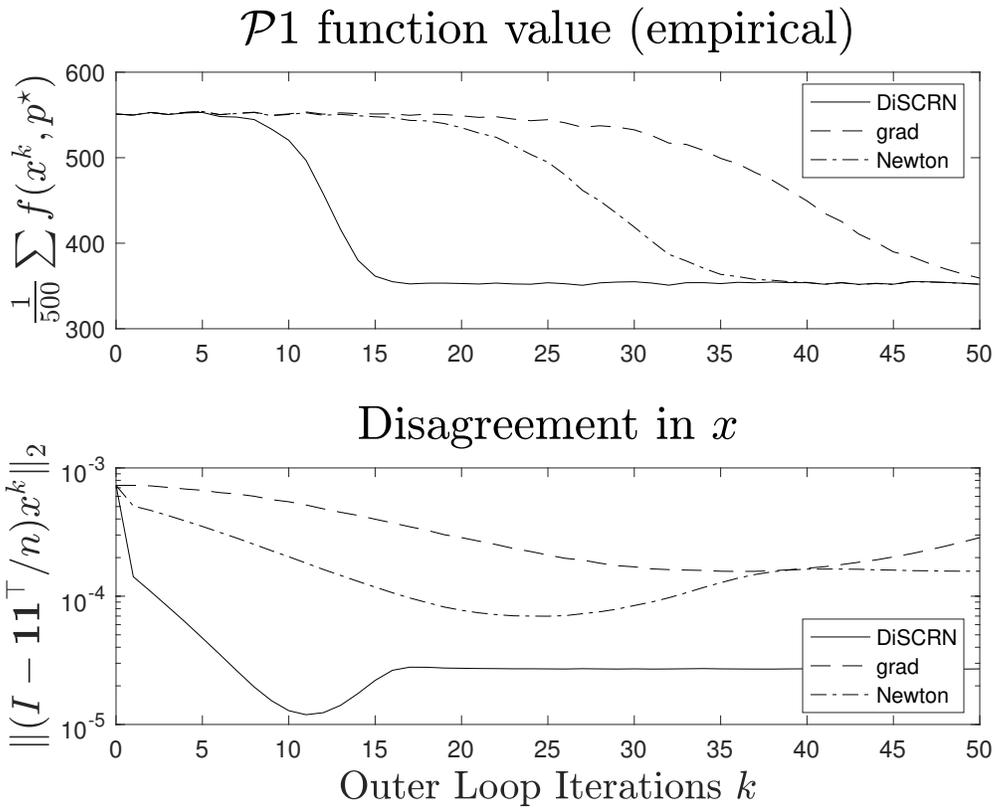}
	\caption{Comparison of CRN method with gradient-based and
		Newton-based approaches. \textbf{Top:} empirical
		approximation of $F(x^k)$, obtained by averaging
		$f(x^k,p^\star)$ over 500 realizations of $\Pc 2$ at each
		$k$. %\textbf{Middle:} mean value of the outer-problem
		%variable $x$. 
		\textbf{Bottom:} agents' disagreement on the
		value of $x$, quantified by $\| (I-\mathbf{1}\mathbf{1}^\top
		/ n) x^k \|_2$.}\label{fig:sims}
\end{figure}

\section*{Acknowledgements}

The material in this chapter, in full, is being revised and prepared for submission to the Systems \& Control Letters. It may appear as \textit{Distributed Stochastic Nested Optimization via Cubic Regularization}, T.~Anderson and S.~Mart{\'i}nez. The dissertation author was the primary investigator and author of this paper.

\chapter{Distributed Resource Allocation with Binary Decisions via Newton-like Neural Network Dynamics}
\label{chap:NNN}

This chapter aims to solve a distributed resource allocation problem with binary local constraints.
The problem is formulated as a binary program with a cost
function defined by the summation of
agent costs plus a global mismatch/penalty
term.
We propose a
modification of the Hopfield Neural Network (HNN) dynamics in order to
solve this problem while incorporating a novel Newton-like weighting
factor. This addition lends itself to fast avoidance of saddle
points, which the gradient-like HNN is susceptible to. Turning to a
multi-agent setting, we reformulate the problem and develop a
distributed implementation of the Newton-like dynamics. We show that
if a local solution to the distributed reformulation is obtained, it
is also a local solution to the centralized problem. A main
contribution of this work is to show that the probability of
converging to a saddle point of an appropriately defined energy
function in both the centralized and distributed settings is zero
under light
assumptions.
Finally, we enlarge our algorithm with an annealing technique which
gradually learns a feasible binary solution. Simulation results
demonstrate that the proposed methods are competitive with
centralized greedy and SDP relaxation approaches in terms of
solution quality, while the main advantage of our approach is a
significant improvement in runtime over the SDP relaxation method
and the distributed quality of implementation.

\section{Bibliographical Comments}

Quadratic programs with nonconvex binary
constraints are known to be NP-hard in general,
see~\cite{PC-AS:95,DL-XS-SG-JG-CL:10}. In this chapter, we consider a problem which is quite applicable to the economic dispatch problem in power networks, see~\cite{WGG-OM-EH-AG-LGN:19,XH-JY-TH-CL:19,MV-JBG-NQ-AC-MS:20} for recent examples in microgrid environments and~\cite{TA-CYC-SM:18-auto} for a distributed Newton-like method in a more abstract setting. However, none of these examples address devices with binary constraint sets. The binary problem is, however, desirable to approach in a distributed context~\cite{ZY-AB-HZ-NZ-QX-CK:16,PY-YH-LF:16}. Greedy
algorithms~\cite{TC-CL-RR-CS:09} have been proposed for binary
programs, such as the well-known Traveling Salesman Problem (TSP), but
it is well documented that these methods can greatly suffer in
performance~\cite{GG-AY-AZ:02} except in cases where the cost function
is submodular~\cite{GLN-LAW-MLF:78,MS-SB-HV:10}. A more modern
approach to solving optimization problems with a binary feasibility
set is to cast them as a semidefinite program (SDP) with a nonlinear
rank constraint, see~\cite{SP-FR-HW:95,LV-SB:96,SB-LV:97} for some
classical references or~\cite{ZQL-WKM-AMCS-YY-SZ:10,PW-CS-AH-PT:17}
for more recent work on the topic.
By relaxing the rank constraint, a convex problem is obtained whose
solution can be shown to be equal to the optimal dual value of the
original problem, see e.g.~\cite{PP-SL:03}. However, it is necessary
in these approaches to either impose a single centralized coordinator
to compute the solution and broadcast it to the actuators or agents,
or schedule computations, which suffers from scalability issues,
privacy concerns, and does not enjoy the simpler and more robust
implementation of a distributed architecture in a large network.

Neuro-dynamic programming is a different paradigm for addressing
nonconvex problems with computational tractability,
see~\cite{DPB-JNT:96} for a broad reference. A neural-network based
method for binary programs was first developed by Hopfield
in~\cite{JH-DT:85}, which was originally proposed in order to address
TSPs. We refer to this method from here on as a Hopfield Neural
Network (HNN). This method provided a completely different avenue for
approaching binary optimizations, and followup works are found
in~\cite{KS:96,JM:96,BKP-BKP:92,SB-ZA:02}. These works formalize and
expand the framework in which the HNN method is applicable. However,
these algorithms essentially implement a gradient-descent on an
applicable nonconvex energy function, which is susceptible to being
slowed down by convergence to saddle-points. There are avenues for
Newton-like algorithms in nonconvex environments to address this
issue, which incorporate some treatment of the negative Hessian
eigenvalues in order to maintain a monotonic descent of the cost
function, see e.g.~\cite{PG-WM-MW:81,YD-RP-CG-KC-SG-YB:14}. A recently
developed method employs a Positive-definite Truncated inverse
(PT-inverse) operation on the Hessian of a nonconvex energy or cost
function in order to define a nonconvex Newton-descent
direction~\cite{SP-AM-AR:19}, although the technique does not
presently address binary settings. Perhaps more importantly, all
variants of existing HNN methods and the aforementioned works for
nonconvex Newton-like algorithms are framed for centralized
environments in which each agent knows global information about the
state of all other agents, which is not
scalable.

\section*{Statement of Contributions}
The contributions of this chapter are threefold. We start by considering
a binary programming problem formulated as a summation of local costs
plus a squared global term. By leveraging a specific choice for the
cost functions, we adapt the setting to an HNN framework. Then, we
propose a novel modification of the dynamics with a PT-inverse of the
Hessian of an appropriate energy function to define centralized
$\binpromc$ ($\binpac$). We prove a rigorous convergence result 
to a local minimizer, thus excluding saddle-points, with probability
one, given some mild assumptions on the algorithm parameters and
initial condition.
Thirdly, we reformulate the problem so that it is solvable via a
distributed algorithm by means of an auxiliary variable. We show that
local solutions of the distributed reformulation are equivalent to
local solutions of the centralized one, and we define a corresponding
energy function and distributed algorithm for which we show
convergence to a local minimizer with probability
one. 
Simulations validate that our method is
superior to SDP relaxation approaches in terms of runtime and
scalability and outperforms greedy methods in terms of
scalability.

\section{Problem Statement and Dual Problem}

Here, we formally state the nonconvex optimization problem
we wish to solve and formulate its dual for the sake of deriving a
lower bound to the optimal cost.

We aim to find an adequate solution to a resource
  allocation problem where the optimization variables take the form
of binary decisions over a population of $n$ agents. We
  note that the problem we consider is  applicable to generator
  dispatch and active device response in an economic dispatch power
  systems setting~\cite{CAISO-BPM:18}, but the remainder of the chapter
  will frame it primarily as resource allocation. Let each
agent $i\in\until{n}$ be endowed with a
  decision variable $x_i$ and a cost $c_i\in\real$, a value which
  indicates the incremental cost of operating in the $x_i=1$ state
  versus the $x_i=0$ state. We do not impose a sign restriction on
  $c_i$, but this may be a common choice in the power systems setting
  where $x_i=1$ represents an ``on" device state and $x_i=0$
  represents ``off." Additionally, each agent is endowed with a
  parameter $p_i$ which represents some incremental consumption or
  generation quantity when operating in the $x_i=1$ state versus
  $x_i=0$ and also a passive cost $d_i$.

We are afforded some design choice in the cost function
models for $x_i\notin\{0,1\}$, and for each $i \in
  \until{n}$, so we design abstracted cost functions
$f_i:[0,1]\rightarrow \real$ that satisfy $f_i(0) = d_i$
and $f_i(1) = c_i + d_i, \forall i$. This
  design choice is intrinsic to a cost model for any separable binary
  decision optimization context. In particular, the value of
  $f_i(x_i)$ for any $x_i\notin\{0,1\}$ is only relevant to the
  algorithm design, but need not have a physical interpretation or
  pertain to the optimization model since these points are
  infeasible. With this in mind, we enlarge the cost
  model by adopting the following:
\begin{assump}\longthmtitle{Quadratic Cost Functions}\label{ass:costs}
	The local cost functions $f_i$ take the form
	\begin{equation*}
	f_i(x_i) = \dfrac{a_i}{2} (x_i - b_i)^2 - \dfrac{a_ib_i^2}{2} + d_i,
	\end{equation*} with $a_i, b_i, d_i \in \real$.
\end{assump}
Note that, for any value $c_i = f_i(1)-f_i(0)$, there exists a
family of coefficients $a_i, b_i$ such that $(a_i/2) (1 -
b_i)^2 - (a_i/2) b_i^2 = c_i$. Further, the constant terms ensure $f_i(0) = d_i$ and $f_i(1) = c_i + d_i$. The design of
$a_i, b_i$ will be discussed in Section~\ref{sec:all-hop}.

The problem we aim to solve can now be formulated as:
\begin{equation*} %\label{eq:prob1}
  \Pc 1: \
  \underset{x\in\{0,1\}^n}{\text{min}} \
  f(x) = \sum_i^n f_i(x_i) + \dfrac{\gamma}{2}\left(p^\top x - \subscr{P}{r}\right)^2.  %\label{eq:cost1}
  % & \text{subject to}
  % & & x \in \{0,1\}^n. \label{eq:const1-1}
\end{equation*}
Here, $\subscr{P}{r}\in\real$ is a given reference value to be matched
by the total output $p^\top x$ of the devices, with
$p\in\real^n$ having entries $p_i$. This matching is enforced by means
of a penalty term with coefficient $\gamma >0$ in $\Pc 1$.
In the power systems setting, $\subscr{P}{r}$ can
  represent a real-power quantity to be approximately matched by the
  collective device-response. The coefficient $\gamma$ and the signal
  $\subscr{P}{r}$ are determined by an Independent System Operator
(ISO) and communicated to a Distributed Energy Resource Provider
(DERP) that solves $\Pc 1$ to obtain a real-time dispatch solution,
see~\cite{CAISO-BPM:18} for additional information.

  The primal $\Pc 1$ has an associated dual $\mathcal{D}1$ which takes
  the form of a semidefinite program (SDP) whose optimal value lower
  bounds the cost of $\Pc 1$. This SDP is
\begin{subequations} \label{eq:dual-prob}
	\begin{align}
	\mathcal{D}1: \ & \underset{\mu\in\real^n,\Delta\in\real}{\text{max}}
	& & \Delta,  \label{eq:dual-cost}\\
	& \text{subject to} & & \begin{bmatrix} \dfrac{1}{2}Q(\mu) &
	\xi(\mu) \\ \xi(\mu)^\top & \zeta - \Delta
	\end{bmatrix} \succeq 0. \label{eq:gen-fiedler-const}
	\end{align}
\end{subequations}
In $\mathcal{D}1$, $Q : \real^n \rightarrow \real^{n\times n}$ and
$\xi: \real^n \rightarrow \real^n$ are real-affine functions of $\mu$ and $\zeta$ is a constant. These definitions are $Q(\mu) = \left(\diag{a/2 + \mu} + \dfrac{\gamma}{2}pp^\top\right),
\xi(\mu) = ((a_i b_i)_i + \mu + \gamma \subscr{P}{r}p),$ and
$\zeta = \sum_{i=1}^n \dfrac{a_i b_i^2}{2} + \dfrac{\gamma}{2}\subscr{P}{r}^2$. See~\cite{SB-LV:04} for more detail on the derivation of $\mathcal{D}1$. 

\section{Centralized Newton-like Neural Network}\label{sec:all-hop}

In this section, we develop the Centralized $\binpromc$, or 
$\binpac$, which is well suited for solving $\P1$ in a centralized
 setting.

To draw analogy with the classic Hopfield Neural Network approach we
will briefly introduce an auxiliary variable $u_i$ whose relation to
$x_i$ is given by the logistic function $g$ for each $i$:
%\margin{punctuate equations (commas after the expression for $x_i$,
%  $u_i$ etc. Apply that punctuation everywhere.)}
\begin{align}
\label{eq:act-fn}
%\begin{aligned}
x_i &= g(u_i) = \dfrac{1}{1+e^{-u_i/T}}, \quad &&u_i\in\real, \\
u_i &= g^{-1}(x_i) = -T \ \log\left(\dfrac{1}{x_i} - 1\right), \quad
&&x_i\in\left( 0, 1\right), \nonumber
\end{align}
%\end{equation}
with temperature parameter $T > 0$. 

Let $x\in (0,1)^n, u\in\real^n$ be vectors with entries given by $x_i,
u_i$. To establish our algorithm, it is appropriate to first define an
energy function related to $\Pc 1$. Consider
\begin{equation}\label{eq:energyc}
E(x) = f(x) + \dfrac{1}{\tau}\sum_i \int_{0}^{x_i} g^{-1} (\nu) d\nu,
\end{equation}
where $\tau > 0$ is a time-constant and for $z \in [0,1]$,
	\begin{equation*}
	\int_{0}^{z} g^{-1}(\nu)d\nu = \begin{cases}
	T\left(\log(1-z) - z \log (\frac{1}{z} - 1)\right), & z\in(0,1), \\
	0, & z\in\{0,1\}. %\\
	%  \text{undefined}, & \text{o.w.}
	\end{cases}
      \end{equation*}
    The classic HNN implements dynamics of the form $\dot{u} = -\nabla_x
    E(x)$, where the equivalent dynamics in $x$
    can be computed as $\dot{x} = - \nabla_x E(x)dx/du$. 
     These dynamics can  be thought of to
      model the interactions between neurons in a neural network or
      the interconnection of amplifiers in an electronic circuit,
      where in both cases the physical system tends toward low energy
      states, see~\cite{JH-DT:85,KS:96}. 
      In an optimization setting, low energy 
      states draw analogy to low cost solutions. We 
      now describe our modification to the classical HNN dynamics.

Recall that the domain of $x$ is $(0,1)^n$ and our elementwise notation for $\log$ and division. We have the expressions $\nabla_x E(x) = -Wx -v - (T/\tau) \log\left(1/x_i-1\right)_i$ and $dx/du = (x-(x_i^2)_i)/T$,
where $W = -\diag{a} - \gamma pp^\top\in \real^{n\times n}$ and $v = (a_i b_i)_i + \gamma P_{r} p\in \real^n$ are defined via $f$. From this point forward, we work mostly in terms of $x$ for the sake
of consistency.  Consider modifying the classic HNN dynamics with a
PT-inverse $(\vert H(x) \vert_m)^{-1} \succ 0$ as in~\cite{SP-AM-AR:19}, where $H(x) = \nabla_{xx} E(x)$. The $\binpac$ dynamics are then given~by:
\begin{equation}\label{eq:binpac}
\begin{aligned}
\dot{x} &= -(\vert H(x)\vert_m)^{-1} \diag{\frac{dx}{du}} \nabla_x{E(x)} \\
&=(\vert H(x)\vert_m)^{-1} \diag{\frac{(x_i-x_i^2)_i}{T}} \left(Wx + v +
\dfrac{T}{\tau}\log\left(1/x_i-1\right)_i\right).
\end{aligned}
\end{equation}

These dynamics lend
to the avoidance of saddle points of $E$ via inclusion of the PT-inverse weighting $(\vert H(x)\vert_m)^{-1}$, in contrast to the more first-order flavor of the classic HNN dynamics. To see this, consider
the eigendecomposition $H(\tilde{x}) = Q^\top \Lambda Q$ at some
$\tilde{x}$ near a saddle point, i.e. $\nabla_x E(\tilde{x})\approx
0$. If many entries of $\Lambda$ are small in magnitude and remain
small in the proximity of $\tilde{x}$, then the gradient is
changing slowly along the ``slow" manifolds associated with the eigenspace of
the small eigenvalues. This is precisely what the PT-inverse is
designed to combat: the weighting of the
dynamics is increased along these manifolds by a factor that is
inversely proportional to the magnitude of the
eigenvalues. Additionally, negative eigenvalues of the Hessian
are flipped in sign, which causes attractive
manifolds around saddle points to become repellent.

It is desirable for $E$ to be concave on most of its domain so the
trajectories are pushed towards the feasible points of $\Pc 1$;
namely, the corners of the unit hypercube. To examine this, the Hessian of $E$ can be
computed as $H(x) = \frac{d^2 f}{dx^2} + \frac{1}{\tau}\diag{\frac{dg^{-1}(x)}{dx}} = -W + \dfrac{T}{\tau}\diag{\frac{1}{(x_i-x_i^2)_i}}.$
Notice that the second term is positive definite on $x\in(0,1)^n$ and
promotes the convexity of $E$, particularly for elements $x_i$ close to $0$ or $1$. For a fixed $T,\tau$, choosing $a_i <
-\gamma\Vert p\Vert^2 - 4T/\tau, \forall i$ guarantees $E(x) \prec 0$
at $x=(0.5) \ones_n$. 
Generally speaking, choosing $a_i$ to be negative and
large in magnitude lends itself to concavity of $E$ over a larger subset of its domain and to trajectories converging closer
to the set $\{0,1\}^n$. However, this comes at the expense of not exploring a rich subset
of the domain. At the end of this section, we develop a Deterministic
Annealing (DA) approach inspired by~\cite{KR:98} for the online
adjustment of $T,\tau$ to obtain an effective compromise between
exploration of the state space and convergence to a feasible point of
$\Pc 1$.

We now characterize the equilibria of~\eqref{eq:binpac} for
$x\in[0,1]^n$. 
It would appear that $x$ with some components $x_i\in\{0,1\}$ are
candidate equilibria due to the $x_i-x_i^2$ factor vanishing. However,
the dynamics are not well defined here due to the $\log$ term. Additionally, note that $\lim_{x_i\rightarrow \delta} e_i^\top H(x) e_i = \infty, \ \delta \in \{0,1\}, \forall i,$
where $e_i$ is the $\supscr{i}{th}$ canonical basis vector. Due to the
$\frac{T}{\tau(x_i - x_i^2)}$ term dominating $W$ in the expression for $H$ when $x_i$ values
are close to $\{0,1\}$, it follows that an eigenvalue of $(\vert
H(x)\vert_m)^{-1}$ approaches zero as $x_i\rightarrow 0$ or $1$ with
corresponding eigenvector approaching $v_i = e_i$:
\vspace{-0.5cm}\begin{equation*}
  \lim_{x_i\rightarrow \delta} = v_i^\top (\vert H(x)\vert_m )^{-1}
  v_i = \frac{T}{\tau}(x_i - x_i^2) = 0, \quad 
  \delta \in \{0,1\}, \forall i.
\end{equation*}
Using this fact, and ignoring $T,\tau > 0$, we can compute the 
undetermined limits in the components of $\dot{x}$ as $x_i \rightarrow
\delta \in \{0,1\}$ by repeated applications of L'Hospital's rule:
\begin{equation}\label{eq:lhospital-eq}
%\begin{aligned}
  \lim_{x_i\rightarrow \delta} \log\left(\dfrac{1}{x_i}-1\right)(x_i - x_i^2)^2 
  %\\
  %&= \lim_{x_i\rightarrow \delta} \frac{\log\left(\dfrac{1}{x_i}-1\right)}{1/(x_i - x_i^2)^2} \\
  %&= \lim_{x_i\rightarrow \delta} \frac{(x_i-1)^2 x_i^2}{2-4x_i} \\
  %&= \lim_{x_i\rightarrow \delta} \frac{-x_i(2x_i-1)(x_i-1)}{2} \\
  = \begin{cases}
  0, & \delta = 0^+, \\
  0, & \delta = 1^-. \end{cases}
% &= \lim_{x_i\rightarrow \delta} \dfrac{2x - 1}{\tau(x_i-1)^3 x_i^3} \\
% &= \lim_{x_i\rightarrow \delta} \dfrac{2}{3\tau
% x_i^2(x_i-1)^2(2x_i-1)^2}.
%\end{aligned}
\end{equation}
%Evaluating with $\tau > 0$ gives
%\begin{equation}\label{eq:xdot-infty}
%  \lim_{x_i\rightarrow \delta} \dfrac{\log\left(\dfrac{1}{x_i}-1\right)}{\tau/(x_i - x_i^2)} = \begin{cases}
%    +\infty, & \delta = 0^+, \\
%    -\infty, & \delta = 1^-.%, \\
%\end{cases}
%\end{equation}
%Referring briefly to~\eqref{eq:classic-hop} for simplicity (instead of~\eqref{eq:binpac}),~\eqref{eq:xdot-infty} should not come as a surprise since 
%the stabilizing
%term $-u/\tau$ will dominate the other terms of $\dot{u}$ for $u\rightarrow\pm\infty$ (equivalently, $x\rightarrow\{0,1\}$). Therefore, the $\log$ factor dominates $x-x^2$ for elements
%$x_i$ nearly feasible and $x\in\{0,1\}^n$ are not candidate equilibria. This is acceptable; we only intend to converge to points \emph{close} to $\{0,1\}^n$.
Thus, components $x_i\in\{0,1\}$ constitute candidate equilibria. We
will, however, return to the first line of~\eqref{eq:lhospital-eq} in the proof of
Lemma~\ref{lem:fwd-inv-c} to show that they are unstable.  As for
components of $x$ in the interior of the hypercube, the expression
$\dot{x} =0$ can not be solved for in closed form. However, we provide
the following Lemma which shows that the set of equilibria is finite.
\begin{lem}\longthmtitle{Finite Equilibria}~\label{lem:finite-eq}
  Let $\X$ be the set of equilibria of~\eqref{eq:binpac}
  satisfying $\dot{x} = 0$ on $x\in[0,1]^n$. The set $\X$ is finite.
\end{lem}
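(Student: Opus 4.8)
The plan is to clear the invertible weighting from the $\binpac$ dynamics \eqref{eq:binpac}, reduce the count of equilibria to a count of critical points of the energy $E$ restricted to each face of the hypercube, and then exploit the diagonal-plus-rank-one structure $W=-\diag{a}-\gamma pp^\top$ to show each such critical set is finite.

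\emph{Reduction to faces.} Since $(\vert H(x)\vert_m)^{-1}\succ 0$ is invertible and $\diag{(x_i-x_i^2)/T}$ is invertible on $(0,1)^n$, the dynamics satisfy $\dot x=0$ on the open hypercube exactly when $\nabla_x E(x)=0$. On the boundary, the L'Hospital computation already carried out in \eqref{eq:lhospital-eq} shows that a point with a subset of coordinates frozen in $\{0,1\}$ is an equilibrium precisely when its remaining (free) coordinates form a critical point of the restriction of $E$ to the corresponding face. Thus $\X$ is the disjoint union, over the $3^n$ faces $F$ of $[0,1]^n$, of $\mathrm{Crit}(E|_{\relint F})$, and it suffices to prove each of these finitely many sets is finite.

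\emph{Finiteness on a fixed face.} Fix a face $F$ with free index set $I\subseteq\until{n}$ and set the scalar $s:=p^\top x$. Because $(Wx+v)_i=-a_ix_i-\gamma p_i s+v_i$, the stationarity equations on $\relint F$ read $x_i=g(\tau(-a_ix_i-\gamma p_i s+v_i))$ for $i\in I$, with $g$ the logistic \eqref{eq:act-fn}; each equation depends on $x$ only through $x_i$ and $s$. Viewing this with $s=p^\top x$ as a smooth system of $\vert I\vert+1$ equations in the unknowns $(x_I,s)$, a short application of the matrix determinant lemma shows its Jacobian determinant factors as $\prod_{i\in I}(1+\tau a_i g_i')\cdot\big(1+\tau\gamma\sum_{i\in I}p_i^2 g_i'/(1+\tau a_i g_i')\big)$, with $g_i'=g'(\tau(-a_ix_i-\gamma p_i s+v_i))$. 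Under the standing (mild) non-degeneracy of the data this determinant is nonzero at every equilibrium on $F$, so by the inverse function theorem the equilibria on $F$ are isolated; moreover the logarithmic barrier in $E$ forces $\|\nabla(E|_F)(x)\|\to\infty$ as $x\to\partial F$, so $\mathrm{Crit}(E|_{\relint F})$ stays a fixed distance from $\partial F$ and is a discrete subset of a compact set, hence finite. Summing over the finitely many faces gives $\vert\X\vert<\infty$.

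\emph{The main obstacle.} The work is entirely in the Jacobian non-degeneracy, i.e.\ ruling out the two ways the determinant vanishes. The first factor $\prod_{i\in I}(1+\tau a_i g_i')$ can vanish only in the aggressive regime $a_i<-4T/\tau$ (otherwise it is positive, as $g_i'\le 1/(4T)$), and is disposed of by a limiting argument as $x_i\to\{0,1\}$, where the matching eigenvalue of $(\vert H\vert_m)^{-1}$ vanishes anyway. The second factor equals $1$ minus the slope of the reduced consistency map $s\mapsto p^\top x(s)$; it is automatically positive whenever the quantities $1+\tau a_i g_i'$ share a sign (e.g.\ it is $\ge1$ when every $a_i\ge0$), and in general one must show that the locus of parameters for which it vanishes identically on an $s$-interval — the only mechanism that could produce a continuum of equilibria — is a proper real-analytic subset of parameter space, excluded by the genericity hypothesis. (Equivalently, one can argue hypothesis-free along the scalar $s$: each coordinate equation has at most three roots in $(0,1)$ by a unimodality argument on $g'$, real-analytic in $s$ on simple-root branches, so for each of the finitely many branch selections the consistency equation is a one-variable real-analytic equation on a bounded interval and thus has finitely many solutions unless identically satisfied — again the degenerate case to be excluded.)
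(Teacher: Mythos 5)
Your overall architecture (split $[0,1]^n$ into faces, reduce equilibria on each face to critical points of the restricted energy, then count) matches the paper's, and the rank-one reduction through $s=p^\top x$ is a genuinely nice refinement of the paper's more elementary componentwise reading of~\eqref{eq:grad-zero}. But the proof as written has a real gap at exactly the step that carries all the weight: finiteness on a fixed face. Lemma~\ref{lem:finite-eq} is unconditional --- it must hold for \emph{every} admissible choice of $a,b,p,\gamma$ and every $T,\tau>0$ (it is invoked in Theorem~\ref{thm:cent-cvg} for each realization of $T,\tau$, and the first part of that theorem does not get to discard a measure-zero set of data). Your argument, however, delivers isolation of equilibria only ``under the standing (mild) non-degeneracy of the data'' / ``the genericity hypothesis,'' neither of which appears in the statement, and your fallback hypothesis-free route ends with ``unless identically satisfied --- again the degenerate case to be excluded.'' That degenerate case --- the consistency map $s\mapsto p^\top x(s)-s$ vanishing on an interval of $s$, i.e.\ a continuum of equilibria --- is precisely what the lemma asserts cannot happen, so deferring it to an unstated hypothesis leaves the main content unproved. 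To close it you would need an actual exclusion argument (e.g.\ exploiting that along any root branch the identity $p^\top x(s)\equiv s$ forces a functional identity between the logistic barrier $(T/\tau)\log(1/x_i-1)$ and affine functions that real-analyticity and the asymptotics at $x_i\in\{0,1\}$ rule out), not an appeal to genericity.

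A secondary, concrete error: your disposal of the first Jacobian factor $\prod_{i\in I}(1+\tau a_i g_i')$ ``by a limiting argument as $x_i\to\{0,1\}$'' does not work. At an equilibrium $g_i'=x_i(1-x_i)/T$, so $1+\tau a_i g_i'=0$ means $T/(\tau x_i(1-x_i))=-a_i$, which in the regime $a_i<-4T/\tau$ occurs at points strictly \emph{inside} $(0,1)$ (two of them per coordinate), not on the boundary; the eigenvalue of $(\vert H\vert_m)^{-1}$ you invoke vanishes only in the boundary limit and is irrelevant there. Such a degenerate interior equilibrium is not contradicted by anything you wrote --- it simply means the inverse function theorem gives you no isolation, so this case too would have to be handled (again, most naturally by an analyticity argument on the reduced scalar system rather than by nondegeneracy of the Jacobian). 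By contrast, the paper's proof stays elementary: it reads the stationarity system~\eqref{eq:grad-zero} componentwise, notes each scalar equation (logistic term plus affine term) is nonconstant with a derivative changing sign finitely often and hence has finitely many roots, and then enumerates the finitely many boundary configurations obtained by freezing subsets of coordinates at $\{0,1\}$ --- no genericity is introduced at any point.
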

\begin{proof}
	First consider only $\X\cap (0,1)^n$. Note that $(\vert H(x)\vert_m
	)^{-1} \succ 0$ (by construction) and \\ $\diag{(x_i-x_i^2)_i/T}\succ 0$ on
	$x\in(0,1)^n$, so we focus on
	\begin{equation}\label{eq:grad-zero}
	Wx + \frac{T}{\tau}\log\left(1/x_i-1\right)_i + v = \zeros_n.
	\end{equation}
	Examining the above expression elementwise, it is nonconstant,
	continuous, and its derivative changes sign only a finite
	number of times. Therefore, the total number of zeros on
	$(0,1)^n$ must be finite.
	
	Now consider the $\supscr{i}{th}$ element
	of~\eqref{eq:grad-zero} for $x_j\rightarrow 0$ or $1$ for all
	$j$ in an arbitrary permutation of
	$\until{n}\setminus{\{i\}}$. Since the number of these
	permutations is finite, and each permutation still gives rise
	to a finite number of solutions to~\eqref{eq:grad-zero} in the
	$\supscr{i}{th}$ component, it follows that $\X$ is finite.
\end{proof}

To demonstrate the qualitative behavior of equilibria in a simple
case, consider a one-dimensional example with $a > - \gamma p^2 -
4T/\tau$ and recall that, for $x\in(0,1)$, the sign of $-\nabla_x
E(x)$ is the same as $\dot{x}$. In Figure~\ref{fig:xdot1d}, we observe
that $-\nabla_x E(x)$ monotonically decreases in $x$, and a globally
stable equilibrium exists in the interior $x\in(0,1)$ near
$x=0.5$. On the other hand, $a < - \gamma p^2 - 4T/\tau$ gives way
to $3$ isolated equilibria in the interior (one locally unstable near
$x=0.5$ and two locally stable near $x\in\{0,1\}$). This behavior
extends in some sense to the higher-dimensional case. Therefore, for a
scheme in which $T$ and $\tau$ are held fixed, we prescribe $a < -
\gamma \| p\|^2 - 4T/\tau$. We provide a Deterministic Annealing
(DA) approach inspired by~\cite{KR:98} for the online adjustment of $T,\tau$ in the following subsection which
compromises with this strict design of $a$.

\begin{figure}[h]
  \centering
	\includegraphics[scale = 0.5]{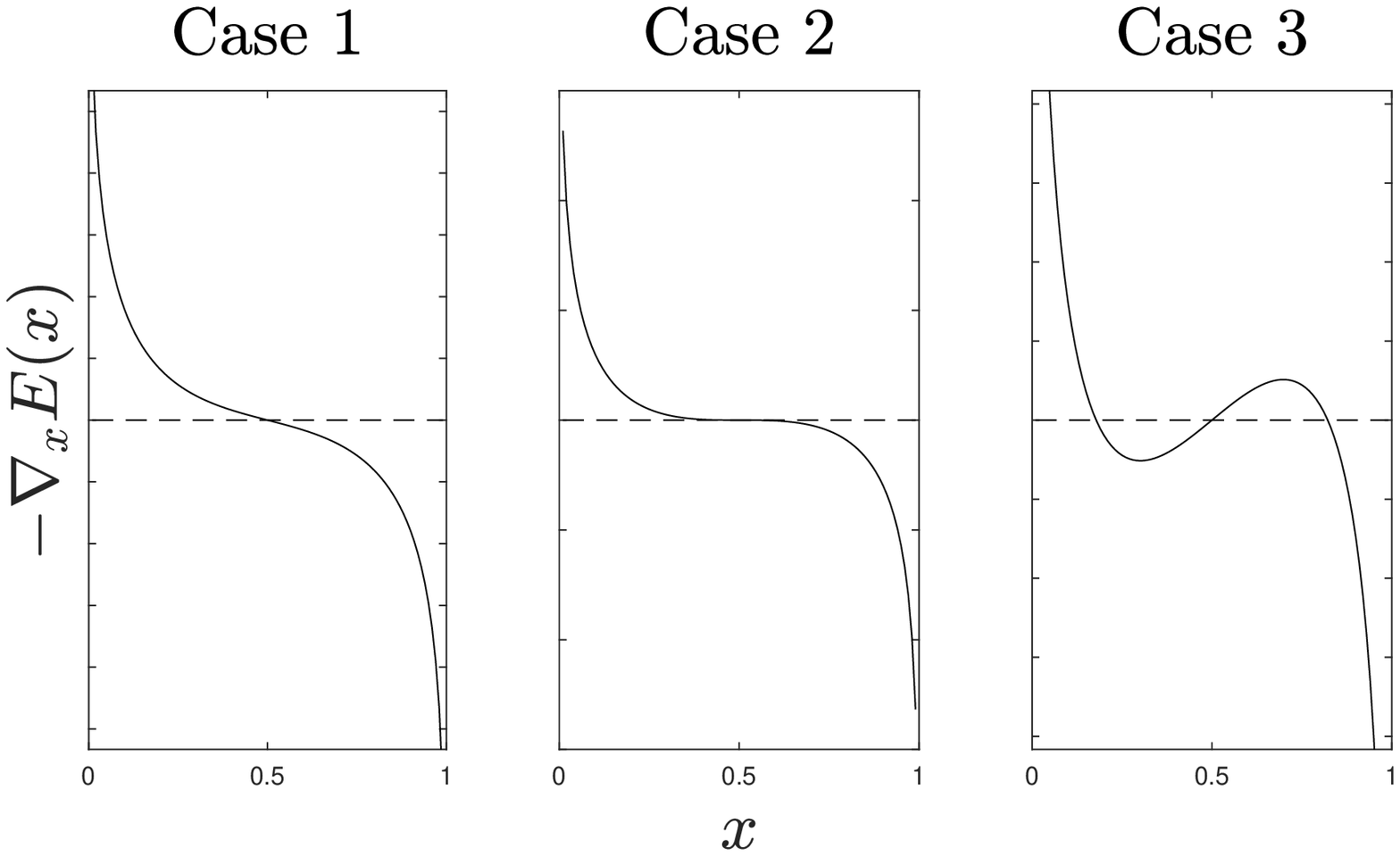}
	\includegraphics[scale = 0.5]{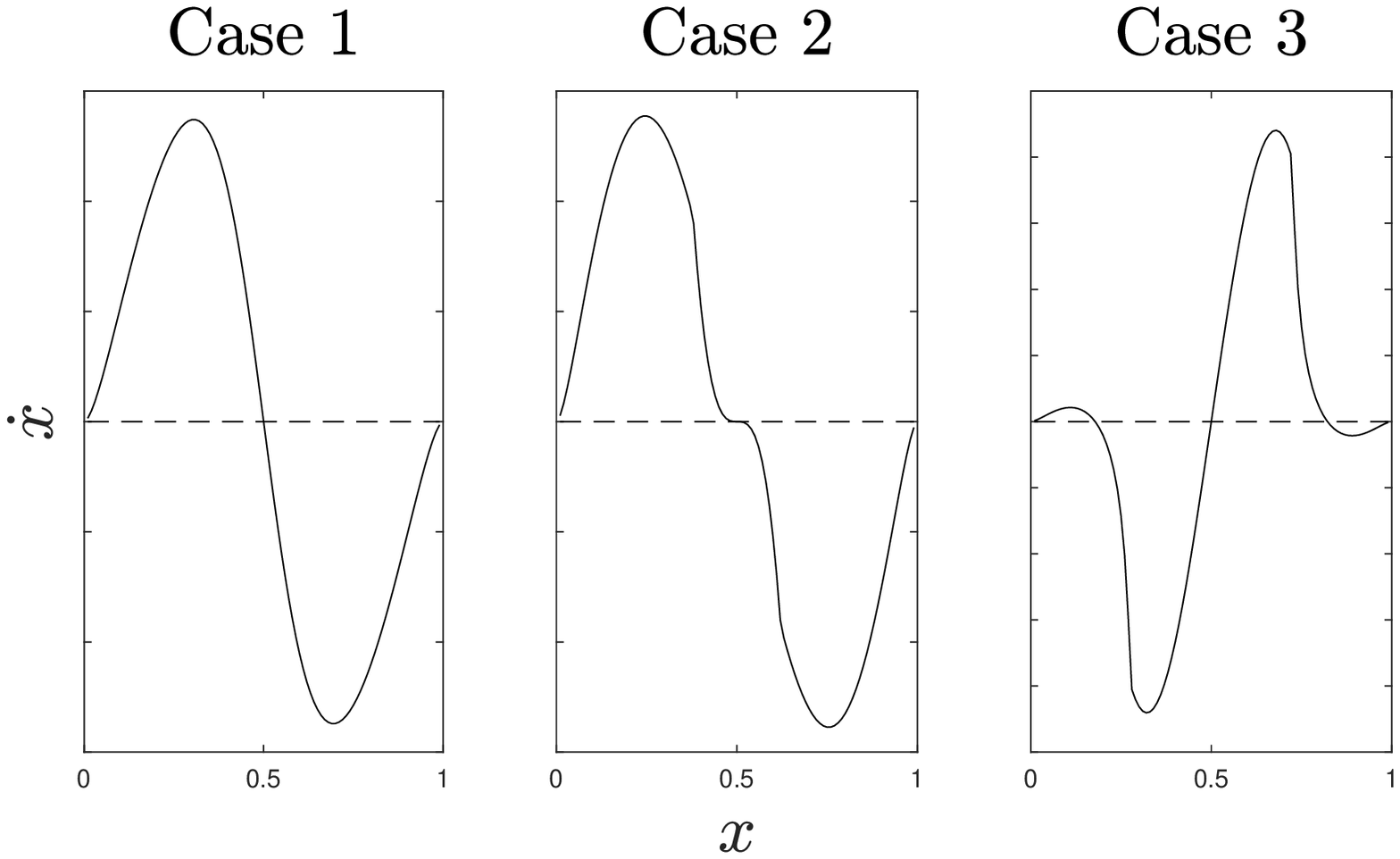}
	\caption{Illustration of $-\nabla_x E(x)$ (top) and $\dot{x}$
          (bottom) for three instances of $a$. Case 1: $a > -\gamma
          \Vert p \Vert ^2 - 4T/\tau$, Case 2: $a = -\gamma \Vert p
          \Vert ^2 - 4T/\tau$, Case 3: $a < -\gamma \Vert p \Vert ^2 -
          4T/\tau$.}
	\label{fig:xdot1d}
\end{figure}

Finally, we establish a Lemma about the domain of the trajectories
of~\eqref{eq:binpac}.
\begin{lem}\longthmtitle{Forward Invariance of the Open Hypercube}\label{lem:fwd-inv-c}
  The open hypercube $(0,1)^n$ is a forward-invariant set under the
  $\binpac$ dynamics~\eqref{eq:binpac}.
\end{lem}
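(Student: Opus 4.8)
\textit{Proof plan.} The plan is to show that a maximal solution of~\eqref{eq:binpac} started inside $(0,1)^n$ never reaches $\partial(0,1)^n$, by combining global boundedness of the vector field with a sharp estimate of its normal component near the boundary. First I would record that the vector field $X(x)$ on the right of~\eqref{eq:binpac} is continuous on $(0,1)^n$ and globally bounded there: the PT-inverse $(\vert H(x)\vert_m)^{-1}$ exists by construction with all eigenvalues in $(0,1/m]$, so $\|(\vert H(x)\vert_m)^{-1}\|_2\le 1/m$, while the factor $w(x):=\diag{\tfrac{(x_i-x_i^2)_i}{T}}\big(Wx+v+\tfrac{T}{\tau}\log(1/x_i-1)_i\big)$ extends continuously to $[0,1]^n$ (each entry $(x_i-x_i^2)\log(1/x_i-1)\to 0$ at $x_i\in\{0,1\}$, cf.~\eqref{eq:lhospital-eq}), hence $\|X(x)\|=\|(\vert H(x)\vert_m)^{-1}w(x)\|\le M^\ast/m$ with $M^\ast:=\sup_{[0,1]^n}\|w\|<\infty$. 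Consequently the maximal solution from $x(0)\in(0,1)^n$, defined on $[0,T^\ast)$, can only fail to be global by having $\dist(x(t),\partial(0,1)^n)\to 0$ as $t\to T^\ast<\infty$; the rest of the proof rules this out.

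The heart of the argument is a bound on the $i$-th row of $(\vert H(x)\vert_m)^{-1}$ near the face $\{x_i\in\{0,1\}\}$. Using $H(x)=-W+\tfrac{T}{\tau}\diag{\tfrac{1}{(x_i-x_i^2)_i}}$, set $H_{ii}(x):=-W_{ii}+\tfrac{T}{\tau x_i(1-x_i)}$; then $H(x)e_i=H_{ii}(x)e_i+r_i$ with $r_i=-\sum_{j\ne i}W_{ji}e_j$, so $\|r_i\|\le\|W\|_F$ is a fixed constant. In the spectral decomposition $H(x)=\sum_k\lambda_kq_kq_k^\top$ this reads $\sum_k(\lambda_k-H_{ii})^2(q_k^\top e_i)^2=\|r_i\|^2$. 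I would then split $\|(\vert H(x)\vert_m)^{-1}e_i\|^2=\sum_k(\vert\lambda_k\vert_m)^{-2}(q_k^\top e_i)^2$ into modes with $\lambda_k>H_{ii}/2$ (where $(\vert\lambda_k\vert_m)^{-1}<2/H_{ii}$) and modes with $\lambda_k\le H_{ii}/2$ (where $\vert\lambda_k-H_{ii}\vert\ge H_{ii}/2$ forces $\sum(q_k^\top e_i)^2\le 4\|r_i\|^2/H_{ii}^2$), which gives, as soon as $H_{ii}(x)\ge 2m$,
\begin{equation*}
\|(\vert H(x)\vert_m)^{-1}e_i\|\le\frac{C'}{H_{ii}(x)},\qquad C'=2\sqrt{1+\|W\|_F^2/m^2},
\end{equation*}
uniformly in the remaining coordinates. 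Since $\dot x_i=e_i^\top X(x)=\big((\vert H(x)\vert_m)^{-1}e_i\big)^\top w(x)$ and $\|w(x)\|\le M^\ast$, I conclude $|\dot x_i|\le C'M^\ast/H_{ii}(x)$; and because $H_{ii}(x)\ge\tfrac{T}{2\tau x_i(1-x_i)}\ge 2m$ once $x_i(1-x_i)\le\eta_0(1-\eta_0)$ for a sufficiently small $\eta_0=\eta_0(T,\tau,m,\|W\|)\in(0,\tfrac12]$, there is $C>0$ with $|\dot x_i|\le C\,x_i$ whenever $0<x_i\le\eta_0$ and $|\dot x_i|\le C\,(1-x_i)$ whenever $1-\eta_0\le x_i<1$, for every $i$. (This is precisely where the first line of~\eqref{eq:lhospital-eq} is used: it makes the diagonal term $(\vert H\vert_m)^{-1}_{ii}\,\tfrac{x_i-x_i^2}{T}\cdot\tfrac{T}{\tau}\log(1/x_i-1)=O\!\big((x_i-x_i^2)^2\log(1/x_i-1)\big)$ negligible, so the boundary candidate equilibria of~\eqref{eq:binpac} are not attracting from the interior.)

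To finish I would run a comparison (Gr\"onwall) step. Fixing $i$ and letting $h_i(t):=\min\{x_i(t),\eta_0\}$, on $\{t:x_i(t)<\eta_0\}$ the upper Dini derivative satisfies $D^+h_i=\dot x_i\ge -Cx_i=-Ch_i$, and on $\{t:x_i(t)\ge\eta_0\}$ it satisfies $D^+h_i=0\ge -Ch_i$; hence $h_i(t)\ge h_i(0)e^{-Ct}$, so $x_i(t)\ge\min\{x_i(0),\eta_0\}e^{-Ct}>0$ for all $t\in[0,T^\ast)$, and symmetrically $1-x_i(t)\ge\min\{1-x_i(0),\eta_0\}e^{-Ct}>0$. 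Thus on any finite $[0,T]\cap[0,T^\ast)$ the trajectory stays in the compact set $\prod_i[\epsilon_T,1-\epsilon_T]\subset(0,1)^n$ with $\epsilon_T:=\min_i\min\{x_i(0),1-x_i(0),\eta_0\}e^{-CT}$, on which $X$ is continuous and therefore bounded; hence the solution cannot approach $\partial(0,1)^n$ in finite time, so $T^\ast=\infty$ and $x(t)\in(0,1)^n$ for all $t\ge0$, establishing forward invariance. The step I expect to be the main obstacle is the uniform row-estimate on $(\vert H(x)\vert_m)^{-1}$ near the boundary: one must make rigorous the heuristic from the surrounding text that the diverging eigenvalue of $H(x)$ ``peels off'' in the direction $e_i$ and that the PT-truncation does not spoil the $O(1/H_{ii})$ decay of that row, and one must check that the constant stays bounded as the \emph{other} coordinates themselves tend to $0$ or $1$. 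Once that estimate is in hand, the boundedness of $X$ and the Gr\"onwall comparison are routine.
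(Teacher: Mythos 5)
Your proof is correct, and it takes a genuinely different route from the paper's. The paper argues componentwise by sign dominance: near a face $x_i\in\{0,1\}$ the relevant diagonal entry of $(\vert H(x)\vert_m)^{-1}$ behaves like $(\tau/T)(x_i-x_i^2)$ and the $\log$ term dominates $Wx+v$, so $\dot{x}_i$ inherits the sign of $\log(1/x_i-1)(x_i-x_i^2)^2$, i.e.\ it points into the cube; the boundary is ``componentwise anti-stable'' and hence never approached. You never use the sign of the drift at all: your spectral perturbation estimate shows that the $i$-th row of the PT-inverse decays like $1/H_{ii}(x)$ uniformly in the remaining coordinates (since the off-diagonal column $r_i$ comes from the constant matrix $-W$), which yields $\vert\dot{x}_i\vert\le C\,x_i$ near $x_i=0$ and $\vert\dot{x}_i\vert\le C(1-x_i)$ near $x_i=1$, and then a Gr\"onwall comparison excludes reaching the boundary in finite time. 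What each buys: your argument is the more rigorous one — it makes precise exactly the step the paper leaves informal, namely that the full-matrix PT-inverse (not just its diagonal) cannot transmit enough of the bounded ``interior'' forcing into the $i$-th component to overcome the $1/H_{ii}$ collapse of that row, uniformly even when several coordinates are simultaneously near the boundary — and it gives a quantitative exponential lower bound on the distance to $\partial(0,1)^n$ over finite horizons. The paper's sign-based route, if made rigorous, would give the stronger qualitative conclusion that the faces are repellent (which the surrounding text wants for the instability of the boundary ``candidate equilibria''), whereas your magnitude-only bound allows asymptotic approach as $t\to\infty$; but forward invariance is all the lemma claims and all that Theorem~\ref{thm:cent-cvg} uses, so your proof fully suffices.
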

\begin{proof}
	Consider again the terms of $\dot{x}$ elementwise. There are two cases to
	consider for evaluating $x_i$: $x_i = \varepsilon$ and $x_i =
	1-\varepsilon$ for some $0< \varepsilon \ll 1$ sufficiently small
	such that the terms of $(\vert H(x)\vert_m)^{-1}$ are still
	dominated by $(1/x_i-x_i^2)$ and the $Wx + v$ are still dominated by
	the $\log$ term. Then, consider the expression
	\begin{equation}\label{eq:log-express}
	\log\left( 1/x_i - 1\right)(x_i-x_i^2)^2.
	\end{equation}
	For $x_i = \varepsilon \approx 0$,~\eqref{eq:log-express} evaluates
	to a small positive value, and for $x_i = 1-\varepsilon \approx
	1$,~\eqref{eq:log-express} evaluates to a small negative value. We
	have argued that these are the dominating terms regardless of values
	of the remaining components of $x$, and so we conclude that
	$x_i\in\{0,1\}$ are componentwise anti-stable and that elements of
	$x$ will never approach $0$ or $1$. Thus, the open hypercube is
	forward invariant.
\end{proof}

Knowing that $\Pc 1$ is generally NP-hard, it is unlikely
that a non-brute-force algorithm exists that can converge to a global
minimizer. For this reason, we aim to establish
asymptotic stability to a local minimizer of $E$. We first establish some assumptions.

\begin{assump}\longthmtitle{Random Initial Condition}\label{ass:init}
  The initial condition $x(0)$ is chosen randomly according to a
  distribution $\Pp$ that is nonzero on sets that have nonzero volume
  in $[0,1]^n$.
\end{assump}
An appropriately unbiased initial condition for our algorithm is
$x(0)\approx(0.5)\ones_n$, %\margin{similar comment about $(0.5)^n$. Pls
%  replace all of those. Tor: fixed} 
which is adequately far from the local minima located near corners of
the unit cube. So, we suggest choosing a uniformly random $x(0) \in \B
((0.5)\ones_n ,\epsilon)$, where $0 < \epsilon \ll
1$.

\begin{assump}\longthmtitle{Choice of $T,\tau$}\label{ass:t-tau}
  The constants $T, \tau > 0$ are each chosen randomly according to a
  distribution $\bar{\Pp}$ that is nonzero on sets that have nonzero
  volume on $\real_+$. 
\end{assump}
Similarly to $x(0)$, we suggest choosing these constants uniformly
randomly in a ball around some nominal $T_0, \tau_0$,
i.e. $T\in\B(T_0,\epsilon), \tau\in\B(\tau_0,\epsilon), 0 < \epsilon
\ll 1$. The $T_0,\tau_0$ themselves are design parameters stemming from the neural network model, and we provide some intuition for selecting these in the simulation Section.

Now we state the main convergence result of $\binpac$ in Theorem~\ref{thm:cent-cvg}, which states that for a random choice of $T,\tau$, an
	initial condition chosen randomly from $(0,1)^n$ converges
	asymptotically to a local minimizer of $E$ with probability one.
\begin{thm}\longthmtitle{Convergence of $\binpac$}\label{thm:cent-cvg}
Given an initial condition $x(0)\in(0,1)^n$, the
    trajectory $x(t)$ under $\binpac$ converges asymptotically to a
    critical point $x^\star$ of $E$. In addition, under
  Assumption~\ref{ass:init}, on the random choice of initial
  conditions, and Assumption~\ref{ass:t-tau}, on the random choice of
  $T,\tau$, the probability that $x(0)$ is in the set
  $\underset{\hat{x}}{\cup} \W^s (\hat{x})$, where $\hat{x}$ is a
  saddle-point or local maximum of $E$, is zero.
\end{thm}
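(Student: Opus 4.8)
The plan is to establish the two assertions of the theorem in turn. The first --- unconditional convergence of every trajectory started in $(0,1)^n$ to an equilibrium of the $\binpac$ flow~\eqref{eq:binpac}, which (since $(\vert H(x)\vert_m)^{-1}$ and $\diag{\tfrac{dx}{du}}$ are invertible on the open hypercube) is precisely a critical point of $E$ --- I would obtain by a LaSalle argument with $E$ itself as the certificate. The second --- that the union of the stable sets of the saddle points and local maxima of $E$ is Lebesgue-null, so that a random initial condition avoids it almost surely --- I would obtain by linearizing at such points and invoking the Stable Manifold Theorem together with a genericity argument in $(T,\tau)$.

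For the first assertion, differentiate $E$ along~\eqref{eq:binpac} to get $\dot E = -\nabla_x E^\top \operatorname{Sym}\!\big((\vert H\vert_m)^{-1}\diag{\tfrac{dx}{du}}\big)\nabla_x E$. One must verify $\dot E \le 0$ with equality only at equilibria; this uses $(\vert H\vert_m)^{-1}\succ 0$ (Definition~\ref{def:ptinv}) and $\diag{\tfrac{dx}{du}}\succ 0$ on $(0,1)^n$, but it is not automatic because the preconditioner is not diagonal, so I would lean on the structure of $H$: for the coefficient design of Section~\ref{sec:all-hop}, the diagonal barrier term $\tfrac{T}{\tau}\diag{1/(x_i - x_i^2)}$ keeps $H$, hence $(\vert H\vert_m)^{-1}$, close to a positive diagonal matrix along the trajectory. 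Next, each interior trajectory remains in a compact subset of $(0,1)^n$: boundedness is free since $[0,1]^n$ is compact, and the trajectory cannot limit onto a face $x_i\in\{0,1\}$ because those faces are anti-stable, as computed inside the proof of Lemma~\ref{lem:fwd-inv-c}. LaSalle then confines the $\omega$-limit set to the equilibrium set, which is finite by Lemma~\ref{lem:finite-eq}; since an $\omega$-limit set is connected and the equilibrium set is finite, the trajectory converges to a single equilibrium $x^\star$, where $\nabla_x E(x^\star) = 0$.

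For the second assertion, fix a saddle point or local maximum $\hat x$ of $E$, so $H(\hat x) = \nabla_{xx}E(\hat x)$ has at least one negative eigenvalue. Since $\nabla_x E(\hat x) = 0$, the Jacobian of~\eqref{eq:binpac} at $\hat x$ is $J = -(\vert H(\hat x)\vert_m)^{-1}\diag{\tfrac{dx}{du}}H(\hat x)$ (all quantities at $\hat x$). The crucial structural fact, from Definition~\ref{def:ptinv}, is that the PT-inverse flips the signs of the negative eigenvalues of $H$ and shares its eigenbasis, so $(\vert H\vert_m)^{-1}H$ is symmetric with the inertia of $H$; combined with positivity of the diagonal factor, I would show $J$ has an eigenvalue with strictly positive real part --- cleanly, by a Chetaev-type argument on a cone around the negative eigenspace of $H(\hat x)$, where $(\vert H\vert_m)^{-1}$ acts as a positive scaling so the relevant quadratic form is positive. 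Then $\hat x$ has a nontrivial unstable manifold. Using Assumption~\ref{ass:t-tau}, for almost every $(T,\tau)$ all (finitely many) equilibria are hyperbolic, so the Stable Manifold Theorem gives that each stable set $\W^s(\hat x)$ is an immersed submanifold of dimension $< n$, hence Lebesgue-null. A finite union of null sets is null, and by Assumption~\ref{ass:init} $\Pp$ assigns zero mass to Lebesgue-null sets; hence $\Pp\big(\bigcup_{\hat x}\W^s(\hat x)\big) = 0$.

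The step I expect to be the main obstacle is the linear-algebra core of the second assertion: showing $J = -(\vert H\vert_m)^{-1}\diag{\tfrac{dx}{du}}H$ inherits an unstable eigenvalue from the indefiniteness of $H$. The difficulty is that $J$ is a product of a positive-definite matrix, a positive-definite \emph{diagonal} matrix, and a symmetric indefinite matrix, and is not symmetrizable in an obvious way, so the standard product-inertia theorem does not apply directly; the Chetaev route above sidesteps this by needing positivity of a quadratic form only on a cone around the negative eigenvectors of $H$, where the PT-inverse degenerates to a scalar. A secondary delicate point, flagged above, is making $\dot E \le 0$ fully rigorous with the non-diagonal preconditioner, and exploiting the randomness of $T,\tau$ to ensure the equilibria are non-degenerate, so that both the single-point convergence and the stable-manifold dimension count go through.
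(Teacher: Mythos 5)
Your skeleton coincides with the paper's (LaSalle with $E$ together with Lemmas~\ref{lem:fwd-inv-c} and~\ref{lem:finite-eq} for convergence to a single critical point; Stable Manifold Theorem plus genericity in $(T,\tau)$ for the probability-zero claim), but the probability-zero half has a genuine gap: you take ``for almost every $(T,\tau)$ all (finitely many) equilibria are hyperbolic'' as an immediate consequence of Assumption~\ref{ass:t-tau}, and this is precisely the nontrivial content of the paper's proof, not an assumption. Randomizing $(T,\tau)$ gives nothing by itself, because the critical points themselves move with $(T,\tau)$; the paper establishes the claim by showing that $h(x,T,\tau)=\det D\varphi_{T,\tau}(x)$, and (to exclude purely imaginary eigenvalues and the ensuing center manifolds) the real parts of the roots of the characteristic polynomial of $D\varphi_{T,\tau}(x)$, are nonconstant real-analytic in $(T,\tau)$, so their zero sets have measure zero, then taking a union over the finitely many critical points of Lemma~\ref{lem:finite-eq}, and also treating the set $\hat{\X}$ where the PT-inverse truncation makes the field nondifferentiable. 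Relatedly, your opening step ``$\hat x$ a saddle or local maximum, so $H(\hat x)$ has a negative eigenvalue'' fails for degenerate saddles; ruling out exactly those degenerate and non-hyperbolic cases is what this genericity argument is for, so without it your dimension count has no foundation.

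The step you single out as the main obstacle is both misdirected and unnecessary. A Chetaev-type cone argument yields Lyapunov instability of $\hat x$, not an eigenvalue of $J=-(\vert H(\hat x)\vert_m)^{-1}\diag{dx/du}H(\hat x)$ with positive real part, and Lyapunov instability alone does not make $\W^s(\hat x)$ lower-dimensional, so as written this step cannot feed the Stable Manifold Theorem. Once generic hyperbolicity is available, the implication you need follows from your first part for free (and this is essentially how the paper uses it): if all eigenvalues of $J$ had negative real part, $\hat x$ would be locally asymptotically stable, and since $E$ is nonincreasing along trajectories every nearby initial condition would satisfy $E(x_0)\ge E(\hat x)$, making $\hat x$ a local minimum; hence a hyperbolic saddle or maximum necessarily has an unstable eigenvalue and $\dim \W^s(\hat x)\le n-1$. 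Finally, your proposed justification of $dE/dt\le 0$ --- that the barrier keeps $H(x)$ ``close to a positive diagonal matrix along the trajectory'' --- cannot work: under the paper's own design $a_i<-\gamma\|p\|^2-4T/\tau$, $H(x)$ is negative definite near $x=(0.5)\ones_n$ and always carries the rank-one coupling $\gamma pp^\top$, so it is nowhere near positive diagonal on much of the trajectory. The paper's computation instead expresses $dE/dt=-\dot x^\top\diag{T/(x_i-x_i^2)_i}\vert H(x)\vert_m\dot x$ and concludes negativity from the positive definiteness of the two factors; if you want to press the non-symmetric-preconditioner concern, it has to be resolved at that level, not by near-diagonality of $H$.
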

\begin{proof}
	Let $\X$ be the set of all critical points of $E$. We first
	establish that $E$ decreases along the trajectories of $\binpac$ and
	that $x(t)$ converges asymptotically to $\X$. Differentiating $E$ in
	time, we obtain:
	\begin{equation}\label{eq:dE-dt}
	\begin{aligned}
	\dfrac{dE}{dt} &=
	\dot{x}^\top \nabla_x E(x) = \dot{x}^\top\left(-Wx -v + g^{-1}(x)/\tau\right) \\
	&= -\dot{x}^\top \diag{\frac{T}{(x_i-x_i^2)_i}}\vert H(x)\vert_m \dot{x} <0, \quad \text{for} \ \dot{x} \neq 0, \
	x\in(0,1)^n.
	\end{aligned}
	\end{equation}
	Recall that $x(t)\in(0,1)^n$ for all $t\geq 0$ due to
	Lemma~\ref{lem:fwd-inv-c}. From~\eqref{eq:binpac} and the discussion
	that followed on equilibria, $\dot{x} = 0$ implies $\nabla_x E(x) =
	0$ due to $(\vert H(x) \vert_m)^{-1} \succ 0$ and $\diag{(x_i-x_i^2)_i/T}
	\succ 0$ on $x\in(0,1)^n$. The domain of $E$ is the compact set
	$[0,1]^n$ (per the definition of the integral terms), and $E$ is
	continuous and bounded from below on this domain, so at least one
	critical point exists. Combining this basic fact
	with~\eqref{eq:dE-dt} shows that the $\binpac$ dynamics
	monotonically decrease $E$ until reaching a critical point. More
	formally, applying the LaSalle Invariance Principle\cite{HKK:02}
	tells us that the trajectories converge to the largest invariant set
	contained in the set $dE/dt = 0$. This set is $\X$, which is finite
	per Lemma~\ref{lem:finite-eq}. In this case, the LaSalle Invariance Principle
	additionally establishes that we converge to a single
	$x^\star\in\X$.

	The proof of the second statement of the theorem relies on an
	application of the Stable Manifold Theorem (see~\cite{JG-PH:83}) as well as Lemma~\ref{lem:finite-eq}. Let $\dot{x} = \varphi_{T,\tau}(x)$
	for a particular $T,\tau$. 
	We aim to show that 
	$\Pp[\cup_{\hat{x}} \setdef{\W_s (\hat{x})}{\hat{x} \text{ is a saddle
			or local maximum}} ] = 0$
	under Assumptions~\ref{ass:init}-\ref{ass:t-tau}. It is
	sufficient to show that, for each critical point $x^\star$
	such that $\varphi_{T,\tau}(x^\star) = 0$, and almost all
	$T,\tau$, $D\varphi_{T,\tau} (x^\star)$ is full rank and
	its eigenvalues have non-zero real
	parts.  
	The reason for this argument is the following: let $x^\star$ be a
	critical point with $D\varphi_{T,\tau} (x^\star)$ full rank and
	eigenvalues with non-zero real parts.  If the eigenvalues do
	\emph{not} all have positive real parts, then some have negative
	real parts, which indicates that $x^\star$ is a saddle or local
	maximum of $E$. These negative real-part eigenvalues induce an
	unstable manifold of dimension $n_u \geq 1$. As such, the globally
	stable set $\W_s (x^\star)$ is a manifold with dimension $n - n_u
	< n$, and $\Pp\left[x(0)\in\W_s(x^\star) \right]=0$ per
	Assumption~\ref{ass:init}.
	
	To argue this case, define $h:(0,1)^n\times \real\times\real
	\rightarrow \real$ as
	\begin{equation*}
	h(x,T,\tau) = \det{D\varphi_{T,\tau}(x)}.
	\end{equation*}
	We now leverage Assumption~\ref{ass:t-tau} and~\cite{BM:15} to claim
	first that $\bar{\Pp}\left[h(x^\star,T,\tau) = 0\right] = 0$
	for each $x^\star\in\X$, i.e. $D\varphi_{T,\tau} (x^\star)$ is full
	rank for each $x^\star$ with probability one w.r.t.~$\bar{\Pp}$.  We
	first address the points $x$ for which the function $h$ is
	discontinuous. Define $\hat{\X}$ as the set of $x$ for which the
	truncation of the eigenvalues of $H(x)$ becomes active, i.e. the
	discontinuous points of $h$. Although we do not write it as such,
	note that $H$ is implicitly a function of $T,\tau$ and that the
	eigenvalues of $H$ can be expressed as nonconstant real-analytic
	functions of $T,\tau$. Considering this fact and an arbitrary $x$,
	the set of $T,\tau$ which give $x\in\hat{\X}$ has measure zero with
	respect to $\real^2$\cite{BM:15}. Thus, for particular $T,\tau$, $h$
	is $C^{\infty}$ almost everywhere. Applying once more the argument
	in~\cite{BM:15} and Assumption~\ref{ass:t-tau} with the fact that
	$h$ is a nonconstant real analytic function of $T,\tau$ we have that
	\begin{align*}
	\bar{\Pp}\,[\mathcal{T}(\hat{x}) \triangleq  \{&(T,\tau) \,| \,
	h(\hat{x}, T,\tau) = 0 \}]  = 0, \quad \forall\, \hat{x}\notin \hat{\X}.
	\end{align*} 
	Now consider the set of critical points as an explicit function of
	$T,\tau$ and write this set as $\X(T,\tau)$. 
	Recalling Lemma~\ref{lem:finite-eq}, the set of $\hat{x}$ that we
	are interested in reduces to a finite set of critical points
	$x^\star\in \X(T,\tau)$. Thus, we can conclude that
	$\bar{\Pp}(\cup_{x^\star \in \X(T,\tau)}\,\mathcal{T}(x^\star))
	\le \sum_{x^\star \in \X(T,\tau)} \bar{\Pp}(\mathcal{T}(x^\star))=
	0$.
	
	There is an additional case which must be considered, which is that
	$h(x^\star ,T,\tau)\neq 0$, but some eigenvalues of
	$D\varphi_{T,\tau}(x^\star)$ are purely imaginary and induce stable
	center manifolds, which could accommodate the case of a globally
	stable set which is an $n$-dimensional manifold (i.e. the ``degenerate saddle" case). We consider the
	function $h$ mostly out of convenience, but the argument can be
	extended to a function $\mathbf{h}: (0,1)^{n}\times \real \times
	\real \rightarrow \complex^n$ which is a map to the roots of the
	characteristic equation of $D\varphi_{T,\tau}(x)$. We are concerned
	that each element of $\mathbf{h}(x,T,\tau )$ should have a nonzero
	real part almost everywhere. To extend the previous case to this,
	consider the identification $\complex \equiv \real^2$ and compose
	$\mathbf{h}$ with the nonconstant real analytic function $\zeta(w,z)
	= w$, for which the zero set is $w \equiv 0$, corresponding to the
	imaginary axis in our identification. From this, we obtain a
	nonconstant real-analytic as before whose zero set is the imaginary
	axis. Applying the argument in~\cite{BM:15} in a
	similar way as above, $\mathbf{h}(x,T, \tau )$ has nonzero real
	parts for almost all $(T,\tau )$ for each
	$x$. Therefore, the probability of a particular saddle point or local maximum
	$x^\star$ having a nonempty stable center manifold is zero for
	arbitrary $x(0)$ satisfying Assumption~\ref{ass:init} and $T,\tau$ satisfying
	Assumption~\ref{ass:t-tau}.
\end{proof}

We now define a Deterministic Annealing (DA) variant inspired
by~\cite{KR:98} to augment the $\binpac$ dynamics and provide a method
for gradually learning a justifiably good feasible point of $\Pc 1$. In~\cite{KR:98}, the author justifies the deterministic online tuning of a temperature parameter in the context of data clustering and shows that this avoids poor local optima by more thoroughly exploring the state space. Similarly, we aim to learn a sufficiently good solution trajectory by allowing the dynamics to explore the interior of the unit hypercube in the early stages of the algorithm, and then to force the trajectory outward to a feasible binary solution by gradually adjusting $T$ or $\tau$ online. %\margin{explain the
%  motivation of doing this: KR observed better behavior this way,
%  didn't he? Tor: added some context here}

Consider either reducing the temperature $T$ or increasing the time
constant $\tau$ during the execution of $\binpac$. % (these adjustments
%are inverses of one another and have an equivalent effect on the
%dynamics). 
This reduces the terms in $E$ which promote
convexity, particularly near the boundaries of the unit hypercube. As
$T,\tau$ are adjusted, for $a\prec -\gamma\Vert p\Vert^2$, the domain
of $E$ becomes gradually more concave away from the corners of the
unit hypercube. Thus, starting with $T_0/\tau_0$ sufficiently large, the
early stages of the algorithm promote exploration of the interior of
the state space. As $T/\tau$ is reduced at a rate dictated by $\beta$, the equilibria of $E$ are
pushed closer to (and eventually converge to) the feasible points of
$\Pc 1$. The update policy we propose is described formally in
Algorithm~\ref{alg:da}, and we further explore its performance in simulation. %\margin{maybe you can say we will explore the
%  performance of this in simulation?}

\begin{algorithm}
	\caption{Determinisitc Annealing}\label{alg:da}
	\begin{algorithmic}[1]
		\Procedure{Det-Anneal}{$\beta > 1,T_0,\tau_0,t_d$}
		\State Initialize $x(0)$
		\State $T\gets T_0, \tau\gets \tau_0$
		\While{\texttt{true}}
		\State Implement $\binpac$ for $t_d$ seconds\label{line:da-condition}
		\State $\tau \gets \beta \tau \text{\quad or\quad} T \gets (1/\beta) T$
		\EndWhile
		\EndProcedure
	\end{algorithmic}
\end{algorithm}

Note that Algorithm~\ref{alg:da} leads to a hybrid
  dynamic system with discrete jumps in an enlarged state $\phi =
  (x,T,\tau)$, which can cast some doubt on basic existence and
  uniqueness of solutions. We refer the reader to Propositions 2.10
  and 2.11 of~\cite{RG-RS-AT:09} to justify existence and uniqueness
  of solutions in the case of $t_d > 0$ fixed.

\begin{cor}\longthmtitle{Convergence to Feasible Points}
  Under Assumptions~\ref{ass:init}-\ref{ass:t-tau} and $a \prec -\gamma\| p\|^2$, the $\binpac$
  dynamics augmented with Algorithm~\ref{alg:da} converge
  asymptotically to feasible points of $\Pc 1$.
\end{cor}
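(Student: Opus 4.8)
The plan is to run Theorem~\ref{thm:cent-cvg} on each constant-parameter phase of Algorithm~\ref{alg:da}, and to exploit the facts that (i) under $a\prec-\gamma\|p\|^2$ the cost $f$ is strictly concave, so its minimizers over $[0,1]^n$ and, for small $r:=T/\tau$, the local minimizers of the energy $E_{T,\tau}$, all cluster near the vertex set $\{0,1\}^n$, and (ii) the annealing drives $r$ to zero. Since $\{0,1\}^n$ is exactly the feasibility set of $\Pc 1$, convergence of the (hybrid) trajectory to that set is what we must establish.

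First I would record the consequences of $a\prec-\gamma\|p\|^2$. Because $\nabla^2_{xx}f(x)=-W=\diag{a}+\gamma pp^\top\preceq(\max_i a_i+\gamma\|p\|^2)I\prec0$, the cost $f$ is strictly concave, so every vertex of $[0,1]^n$ is a strict local minimizer of $f$ over $[0,1]^n$ and the global minimum is attained at a vertex. Moreover $H_{T,\tau}(x)=-W+r\,\diag{(1/(x_i-x_i^2))_i}$ has $i$-th diagonal entry $a_i+\gamma p_i^2+r/(x_i(1-x_i))$. By Lemma~\ref{lem:fwd-inv-c} every critical point of $\binpac$ lies in $(0,1)^n$, so at a local minimizer $x^\star$ of $E_{T,\tau}$ we have $H_{T,\tau}(x^\star)\succeq0$, hence $x_i^\star(1-x_i^\star)\le r/w_{\min}$ for all $i$, where $w_{\min}:=\min_i(-a_i-\gamma p_i^2)>0$. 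Thus every local minimizer of $E_{T,\tau}$ lies in $\bigcup_{v\in\{0,1\}^n}\B(v,c\sqrt{r})$ for a fixed constant $c$, and for small $r$ these are perturbations of the vertices, one near each $v$ (in agreement with the ``Case~3'' picture of Figure~\ref{fig:xdot1d}, which applies once $r<\delta_0/4$ with $-\delta_0:=\max_i a_i+\gamma\|p\|^2<0$).

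Next I would track the annealing. Each round of Algorithm~\ref{alg:da} replaces $r$ by $r/\beta$, $\beta>1$, so after $k$ rounds $r_k=r_0\beta^{-k}\to0$; existence and uniqueness of the resulting hybrid trajectory with $t_d>0$ fixed follow from Propositions~2.10--2.11 of~\cite{RG-RS-AT:09}. On the $k$-th interval the flow is $\binpac$ with parameters $(T_k,\tau_k)$, along which $E_{T_k,\tau_k}$ strictly decreases ($dE/dt<0$ off equilibria by the computation in Theorem~\ref{thm:cent-cvg}) and which, run indefinitely, converges with probability one to a local minimizer $x_k^\star$ with $\max_i(x_k^\star)_i(1-(x_k^\star)_i)\le r_k/w_{\min}$. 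Fix $\varepsilon\in(0,1/4)$ and pick $k_0$ with $r_k<\min\{w_{\min}\varepsilon,\delta_0\varepsilon\}$ for $k\ge k_0$. For such $k$, on the open box $R_\varepsilon:=\{x:x_i(1-x_i)>\varepsilon\ \forall i\}$ one has $H_{T_k,\tau_k}(x)\preceq(r_k/\varepsilon-\delta_0)I\prec0$, so $E_{T_k,\tau_k}$ is strictly concave with no equilibrium on the convex set $R_\varepsilon$, and the descent trajectory cannot remain there; the complement in $(0,1)^n$ consists of points having at least one coordinate within $\varepsilon$-distance of $\{0,1\}$. An analogous principal-submatrix argument — freezing the coordinates already near $\{0,1\}$ and using that every principal submatrix of $-W$ is still $\prec0$, packaged as a set-invariance/Lyapunov argument for $\max_i x_i(1-x_i)$ — forces every coordinate into an $O(\sqrt\varepsilon)$-neighborhood of $\{0,1\}$, i.e.\ the trajectory is eventually confined to $\bigcup_v\B(v,c'\sqrt\varepsilon)$, and each such corner neighborhood is forward invariant once $r$ is small. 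Letting $\varepsilon\downarrow0$ along the schedule yields $x(t)\to\{0,1\}^n$; applying the single-equilibrium conclusion of Theorem~\ref{thm:cent-cvg} once the trajectory is trapped near one vertex gives convergence to a single $v\in\{0,1\}^n$, a feasible point of $\Pc1$.

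The main obstacle I anticipate is exactly this last gluing step: converting the per-phase ``concave-region-expelling descent'' behavior into convergence of the switched trajectory to a single vertex. One must rule out the trajectory wandering among the $2^n$ corner neighborhoods forever as $(T_k,\tau_k)$ changes; the structural facts that make it work are that, for $r$ small, these neighborhoods are pairwise disjoint and forward invariant, the energy decreases within each phase, and the inter-phase jumps only push the local minimizers of $E_{T,\tau}$ further toward the vertices, so the trajectory is eventually trapped near one vertex. Making the ``freeze coordinates / principal submatrix'' reduction fully rigorous — since the frozen coordinates are not literally constant under the flow — is the part needing the most care, and I would carry it out with an explicit Lyapunov-type estimate for $\max_i x_i(1-x_i)$ (or $\sum_i x_i(1-x_i)$) showing it is ultimately bounded by a quantity that shrinks with $r_k$.
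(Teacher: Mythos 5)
Your route is the same as the paper's: the corollary is justified there by exactly the Hessian inspection you perform --- under $a\prec-\gamma\|p\|^2$ the term $-W=\diag{a}+\gamma pp^\top\prec 0$ makes $E$ concave away from the boundary, the $T/\tau$ term makes it convex (and creates the isolated local minima) near the corners, and Algorithm~\ref{alg:da} drives $T/\tau\to 0$ so that these minima are pushed asymptotically onto $\{0,1\}^n$, with Theorem~\ref{thm:cent-cvg} supplying the per-phase convergence. Your quantitative sharpening --- that any interior local minimizer must satisfy $x_i^\star(1-x_i^\star)\le (T/\tau)/w_{\min}$ because the diagonal entries of a positive semidefinite Hessian are nonnegative --- improves on the paper's purely qualitative statement, and your observation that the real difficulty is gluing the finite-duration ($t_d$) constant-parameter phases into a statement about the hybrid trajectory is well taken: the paper does not address this at all; its proof is the brief inspection argument, so the machinery you promise in your last two paragraphs goes beyond what the paper actually establishes.

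Two cautions about that extra machinery as sketched. First, ``strictly concave with no equilibrium on $R_\varepsilon$'' is not correct: a strictly concave $E$ can have a critical point (a local maximum) inside $R_\varepsilon$; what you can say is that such a point is an equilibrium reached with probability zero (Theorem~\ref{thm:cent-cvg}) and that $E$ strictly decreases elsewhere --- and even then, ``the trajectory cannot remain there'' within a single phase of fixed length $t_d$ requires a uniform decrease or dwell-time estimate, not just the asymptotics of the frozen flow. Second, forward invariance of the corner neighborhoods $\B(v,c'\sqrt{\varepsilon})$ does not follow from membership in the ball alone; you would need a sublevel-set argument (entering with $E$ below its value on the rim of the basin) or the Lyapunov-type bound on $\max_i x_i(1-x_i)$ that you only sketch, and the ``freeze coordinates / principal submatrix'' reduction is likewise not yet rigorous. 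None of this is needed to match the paper, which settles for the limiting-minima argument; but for the stronger single-vertex conclusion you aim at, these are the places where the proof is still open.
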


The result of the Corollary is quickly verified by inspecting the
terms of $H(x)$. %Since $-W + 4 T_0/\tau \prec 0$
%by construction of the coefficients $a_i$ of $f$, $E$ is strictly
%concave at $x=(0.5)^n$. 
The function $E$ %\margin{pls don't start sentences with a
%  math symbol Tor: fixed}
is smooth, strictly concave near $x=(0.5)\ones_n$ for small $T/\tau$ due to the design of $a_i$, and becomes strictly convex as the elements of $x$ approach
$0$ or $1$, corresponding
to isolated local minima of $E$, due to the $T/\tau$ term dominating $H(x)$. As the quantity
$T/\tau$ is reduced under Algorithm~\ref{alg:da}, these local minima
are shifted asymptotically closer to corners of the unit hypercube, i.e.
feasible points of $\Pc 1$.

\section{Distributed Hopfield Neural Network}\label{sec:dist-hop}

With the framework of the previous section we formulate a problem $\P
2$ which is closely related to $\Pc 1$, but for which the global
penalty term can be encoded by means of an auxiliary decision
variable. This formulation leads to the Distributed $\binpromd$, or $\binpad$, which we rigorously analyze for its
convergence properties.

It is clear from the PT-inverse operation and $W$ being nonsparse that
$\binpac$ is indeed centralized. In this section, we design a
distributed algorithm in which each agent $i$ must only know
$p_j, j\in\N_i$ and the value of an auxiliary variable $y_j, j\in\N_i
\cup\N_i^2$, i.e. it must have communication with its two-hop neighbor
set. If two-hop communications are not directly available, the
algorithm can be implemented with two communication rounds per
algorithm step. We provide comments on a one-hop algorithm in
Remark~\ref{rem:one-hop}.

\begin{assump}\longthmtitle{Graph Properties and Connectivity}\label{ass:graph-conn}
  The graph $\mathcal{G} = (\N,\mathcal{E})$ is undirected and
  connected; that is, a path exists between any two pair of nodes and,
  equivalently, its associated Laplacian matrix $L = L^\top$ has rank
  $n-1$.
\end{assump}
Now consider the $n$ linear equations $(p_i x_i)_i + Ly =
(\subscr{P}{r}/n)\ones_n.$ Notice that, by multiplying from the left
by $\ones_n^\top$ and applying $\ones_n^\top L = \zeros_n^\top$, we
recover $p^\top x = \subscr{P}{r}$. Thus, by augmenting the state with
an additional variable
$y\in\real^n$, %\margin{ where does $y$ live? Tor: fixed}
we can impose a distributed penalty term. We now formally state the
distributed reformulation of~$\Pc 1$:
\begin{equation*} 
  \Pc 2: \
  \underset{x\in\{0,1\}^n,y\in\real^n}{\text{min}} \
  \tilde{f}(x,y) = \sum_i^n f_i(x_i) + \dfrac{\gamma}{2}\sigma^\top \sigma,  %\label{eq:cost1}
%& \text{subject to}
%& & x \in \{0,1\}^n. \label{eq:const1-1}
\end{equation*}
where the costs $f_i$ again satisfy $f_i(1) - f_i(0) = c_i$ and we
have defined $\sigma = (p_i x_i)_i + Ly - (\subscr{P}{r}/n) \ones_n$ for
notational simplicity. Before proceeding, we provide some context on
the relationship between $\Pc 1$ and $\Pc 2$.

\begin{lem}\longthmtitle{Equivalence of P1 and P2}\label{lem:equiv}
  Let Assumption~\ref{ass:graph-conn}, on graph connectivity, hold, and
  let $(x^\star,y^\star)$ be a solution to $\Pc 2$. Then, $x^\star$ is
  a solution to $\Pc 1$ and $f(x^\star) = \tilde{f}(x^\star,y^\star)$.
\end{lem}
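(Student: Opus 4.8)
The plan is to eliminate the auxiliary variable $y$ by solving the inner minimization $\min_{y\in\real^n}\tilde f(x,y)$ in closed form for each fixed $x$, so that $\Pc 2$ reduces, up to recovering $y$ from $x$, to exactly $\Pc 1$. The structural fact driving everything is that, by Assumption~\ref{ass:graph-conn}, the Laplacian $L=L^\top\succeq 0$ of a connected graph satisfies $\nulo(L)=\spn\{\ones_n\}$, hence $\operatorname{range}(L)=\ones_n^\perp$.

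First I would fix $x\in\{0,1\}^n$ and write $\sigma = w + Ly$ with $w := (p_ix_i)_i - (\subscr{P}{r}/n)\ones_n$. As $y$ ranges over $\real^n$, the vector $Ly$ ranges over all of $\ones_n^\perp$, so $\|\sigma\|_2^2$ is minimized exactly when $\sigma$ equals the orthogonal projection of $w$ onto $\spn\{\ones_n\}$, namely $\sigma^\star = \tfrac{\ones_n^\top w}{n}\,\ones_n = \tfrac{p^\top x-\subscr{P}{r}}{n}\,\ones_n$; such a minimizing $y$ exists since $\sigma^\star - w\in\ones_n^\perp=\operatorname{range}(L)$. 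Substituting back and using $\ones_n^\top w = p^\top x-\subscr{P}{r}$ yields $\min_{y}\tilde f(x,y) = \sum_i f_i(x_i) + \tfrac{\gamma}{2n}(\subscr{P}{r}-p^\top x)^2$, which reproduces the scalar penalty of $\Pc 1$ (the constant $1/n$ merely reflecting $\|\ones_n\|^2=n$ and absorbed into the penalty weight declared for $\Pc 2$); hence $\min_y\tilde f(x,y)=f(x)$ for every $x\in\{0,1\}^n$.

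With this in hand, the conclusion is a routine two-stage optimization argument. Let $(x^\star,y^\star)$ solve $\Pc 2$. Optimality forces $y^\star\in\argmin_y\tilde f(x^\star,y)$ — otherwise one could strictly decrease $\tilde f$ — so $\tilde f(x^\star,y^\star)=\min_y\tilde f(x^\star,y)=f(x^\star)$, which is the second claim. For the first claim, if some $\hat x\in\{0,1\}^n$ had $f(\hat x)<f(x^\star)$, then pairing $\hat x$ with its inner minimizer $\hat y$ would give $\tilde f(\hat x,\hat y)=f(\hat x)<f(x^\star)=\tilde f(x^\star,y^\star)$, contradicting optimality of $(x^\star,y^\star)$ in $\Pc 2$; hence $x^\star$ solves $\Pc 1$.

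The only step carrying real content — and the one I expect to be the main obstacle — is the first: identifying $\operatorname{range}(L)$ and carrying out the projection so that the $n$-dimensional penalty $\sigma^\top\sigma$ collapses, after optimizing $y$, onto a multiple of $(\subscr{P}{r}-p^\top x)^2$. Connectivity is indispensable here: for a disconnected graph $\operatorname{range}(L)$ is strictly smaller than $\ones_n^\perp$, the residual of the projection picks up additional nonnegative terms that have no counterpart in $f$, and the equivalence fails. Everything downstream of the inner minimization is then immediate.
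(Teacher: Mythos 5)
Your proof is correct and follows essentially the same route as the paper: both exploit $\operatorname{range}(L)=\spn\{\ones_n\}^\perp$ under connectivity to split the penalty into its $\ones_n$-component (the scalar mismatch term) and an orthogonal residual that a suitable $y$ drives to zero, then conclude by the two-stage minimization argument. If anything, you are more careful than the paper, which silently drops the $1/n$ factor from $\sigma^\top(\ones_n\ones_n^\top/n)\sigma$ that you correctly flag as a rescaling of the penalty weight, and you make explicit the final reduction step that the paper leaves implicit.
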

\begin{proof}
	The equivalence stems from the global term and the flexibility in
	the unconstrained $y$ variable. Notice
	\begin{equation*}
	\begin{aligned}
	\dfrac{\gamma}{2} \sigma^\top \sigma & = \dfrac{\gamma}{2}
	\sigma^\top (I_n -
	\ones_n\ones_n^\top/n)\sigma + \dfrac{\gamma}{2} \sigma (\ones_n\ones_n^\top/n)\sigma \\
	&= \dfrac{\gamma}{2} \sigma^\top (I_n -
	\ones_n\ones_n^\top/n)\sigma + \dfrac{\gamma}{2}(p^\top x -
	\subscr{P}{r})^2.
	\end{aligned}
	\end{equation*}
	We have recovered the original global term of $\Pc 1$ in the
	bottom line, so now we deal with the remaining term. The
	matrix $I_n - \ones_n \ones_n^\top/n \succeq 0$ has $\image
	I_n - \ones_n \ones_n^\top/n = \spn \{\ones_n\}^\perp = \image
	L$, given that $L$ is connected. Thus, because $y$ is
	unconstrained and does not enter the cost anywhere else, we
	can compute the set of possible minimizers of $\tilde{f}$
	in closed form with respect to any $x$ as
	\begin{equation*}
	\begin{aligned}
	y^\star &\in \setdef{-L^\dagger \big((p_i x_i)_i - (\subscr{P}{r}/n)\ones_n\big) + \theta \ones_n}{\theta\in\real} \\
	&= \setdef{-L^\dagger(p_i x_i)_i+\theta\ones_n}{\theta\in\real}.
	\end{aligned}
	\end{equation*}
	Moreover, substituting a $y^\star$ gives $\sigma
	\in \spn\{\ones_n\}$, and it follows that the problem $\Pc 2$
	reduces precisely to $\Pc 1$.
\end{proof}

To define $\binpad$, we augment the centralized $\binpac$
with gradient-descent dynamics in $y$ on a newly obtained energy function $\widetilde{E}$ of $\Pc 2$. Define $\widetilde{E}$ as
\begin{equation}\label{eq:energyd}
\widetilde{E}(x,y) = \tilde{f}(x,y) + \dfrac{1}{\tau} \sum_i \int_0^{x_i} g^{-1}(\nu) d\nu.
\end{equation}

In Section~\ref{sec:all-hop}, we obtained a matrix $W$ which was
nonsparse. Define $\widetilde{W}, \tilde{v}$ for $\widetilde{E}$ via $\tilde{f}$ as $\widetilde{W} = -\diag{a + \gamma (p_i^2)_i},
\tilde{v} = (a_i b_i)_i + \gamma\diag{p}\left((\subscr{P}{r}/n)\ones_n
- Ly\right).$
Compute the Hessian of $\widetilde{E}$ with respect to only $x$ as $\widetilde{H}(x) = \nabla_{xx}\widetilde{E}(x,y) = -\widetilde{W} + (T/\tau)\diag{1/x-(x_i^2)_i}$.
Since $\widetilde{H}(x)$ is diagonal, the $\supscr{ii}{th}$ element of
the PT-inverse of $\widetilde{H}(x)$ can be computed locally by each
agent $i$ as:
\begin{equation*}
  (\vert \widetilde{H}(x)\vert_m)^{-1}_{ii} = \begin{cases}
    \vert \widetilde{H}(x)_{ii} \vert^{-1}, %= \left(\vert a_i +
                                            %\gamma p_i^2 +
                                            %\dfrac{T}{\tau(x_i -
                                            %x_i^2)}\right)^{-1}
                                            %\vert,
    &  \vert\widetilde{H}(x)_{ii}\vert \geq m, \\
    1/m, & \text{o.w.}
\end{cases}
\end{equation*}
where $\widetilde{H}(x)_{ii} = a_i + \gamma p_i^2 + T/\tau (x_i -
x_i^2)^{-1}$. %Note that we still prescribe $a_i \ll 0$ for the same
% concavity-promoting justification as in the previous
% section.
The $\binpad$ dynamics, which are PT-Newton descent in
$x$ and gradient descent in $y$ on $\widetilde{E}$, are then stated
as:
\begin{equation}\label{eq:binpad}
	\begin{aligned}
          \dot{x} &=(\vert \widetilde{H}(x)\vert_m)^{-1} \diag{\frac{(x_i-x_i^2)_i}{T}} \left(\widetilde{W}x + \dfrac{T}{\tau}\log\left(1/x_i-1\right)_i + \tilde{v}\right), \\
          \dot{y} &= -\alpha\gamma
          L\left(%-\dfrac{\subscr{P}{r}}{n}\ones_n 
          	(p_i x_i)_i
            + Ly\right),
	\end{aligned}
\end{equation}
where $\alpha = \diag{\alpha_i}$ is a diagonal matrix of arbitrary positive gains $\alpha_i > 0$.
Due to the new matrices $\widetilde{W},\tilde{v}$ and the sparsity of
$L$, $\dot{x}$ can be computed with one-hop information and $\dot{y}$
with two-hop information (note the $L^2$ term);
thus,~\eqref{eq:binpad} defines a distributed algorithm. Additionally,
recalling the discussion on parameter design, the problem data $a$ and
$b$ can now be locally designed.

Before proceeding, we establish a property of the domain of $y$ and some distributed extensions of Lemmas~\ref{lem:finite-eq} and~\ref{lem:fwd-inv-c}.
\begin{lem}\longthmtitle{Domain of Auxiliary Variable}\label{lem:y-domain}
	Given an initial condition $y(0)$ with $\ones_n^\top y(0) = \kappa$, the trjaectory $y(t)$ is contained in the set
	\begin{equation}\label{eq:ydom}
	\Y = \setdef{\omega + (\kappa/n)\ones_n}{\ones_n^\top \omega = 0}.
	\end{equation}
\end{lem}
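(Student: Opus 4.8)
The plan is to show that the affine hyperplane $\setdef{y\in\real^n}{\ones_n^\top y = \kappa}$ is forward invariant under the $y$-block of the $\binpad$ dynamics, and then to observe that this hyperplane is precisely the set $\Y$ of \eqref{eq:ydom}. The identification is immediate: if $y = \omega + (\kappa/n)\ones_n$ with $\ones_n^\top\omega = 0$, then $\ones_n^\top y = \ones_n^\top\omega + \kappa = \kappa$; conversely, given $y$ with $\ones_n^\top y = \kappa$, the choice $\omega := y - (\kappa/n)\ones_n$ satisfies $\ones_n^\top\omega = \kappa-\kappa=0$, so $y\in\Y$. Hence it suffices to prove that $\ones_n^\top y(t) = \kappa$ for all $t\ge 0$.

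For the invariance, I would differentiate the scalar $\ones_n^\top y(t)$ along the trajectories of \eqref{eq:binpad} and substitute the $\dot y$ equation. The key structural observation is that the $y$-update direction is $L$ applied to the vector $(p_i x_i)_i + Ly$ (up to the positive gains), so $\dot y$ lies in $\image L$; by Assumption~\ref{ass:graph-conn} the graph is connected and $L=L^\top$, hence $\ker L = \spn\{\ones_n\}$ and $\image L = (\ker L)^\perp = \spn\{\ones_n\}^\perp$. Equivalently, one uses $\ones_n^\top L = \zeros_n^\top$ directly. In either phrasing, $\ones_n^\top \dot y = 0$, so $\tfrac{d}{dt}\big(\ones_n^\top y(t)\big)\equiv 0$, and integrating from $t=0$ gives $\ones_n^\top y(t) = \ones_n^\top y(0) = \kappa$ for all $t\ge 0$, which is the claim.

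There is no substantial obstacle here: this is the standard ``conservation on the consensus line'' argument, entirely analogous to the invariance reasoning for $z$ in Remark~\ref{rem:init-traj-sol} of Chapter~\ref{chap:DANA}. The only point that merits a moment's care is checking that left multiplication by $\ones_n^\top$ genuinely annihilates $\dot y$ despite the diagonal gain matrix $\alpha$ in \eqref{eq:binpad}; this is exactly where the Laplacian identity $\ones_n^\top L = \zeros_n^\top$ is invoked, with $L$ acting as the outermost Laplacian factor on the bracketed term. Existence of the trajectory is not required for the statement — the lemma only constrains the direction of motion — so I would not dwell on it beyond noting that it is inherited from the forward invariance of $(0,1)^n$ established in Lemma~\ref{lem:fwd-inv-c} together with the linearity of the $\dot y$ equation.
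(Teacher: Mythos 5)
Your argument is essentially the paper's own proof: the paper disposes of the lemma in one line by multiplying $\dot y$ in~\eqref{eq:binpad} from the left by $\ones_n^\top$ and invoking the null space of $L$, and your identification of the level set $\setdef{y}{\ones_n^\top y=\kappa}$ with $\Y$ is the routine bookkeeping the paper leaves implicit. One caution on the very point you flagged: in~\eqref{eq:binpad} the diagonal gain $\alpha$ is the \emph{leftmost} factor, not $L$, so $\ones_n^\top\dot y = -\gamma\,\ones_n^\top\alpha L\bigl((p_ix_i)_i+Ly\bigr)$, and the row vector $\ones_n^\top\alpha=(\alpha_1,\dots,\alpha_n)$ is a left null vector of $L$ only when the gains are homogeneous, i.e.\ $\alpha=\bar\alpha I_n$. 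Your justification ``with $L$ acting as the outermost Laplacian factor'' therefore misreads the structure of the dynamics: for uniform gains the conservation of $\ones_n^\top y$ is immediate, but for genuinely heterogeneous $\alpha_i$ it does not follow from $\ones_n^\top L=\zeros_n^\top$ alone (the paper's one-line proof silently makes the same identification, so this is a wrinkle in the source as much as in your write-up). Aside from this, the reduction to showing $\tfrac{d}{dt}\ones_n^\top y(t)=0$ and the equivalence of the hyperplane with $\Y$ are exactly what the paper intends, and your observation that trajectory existence is not needed for the statement is consistent with its treatment.
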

\begin{proof}
	The proof is trivially seen by multiplying $\dot{y}$ in~\eqref{eq:binpad} from the left by $\ones_n$ and applying the null space of $L$.
\end{proof}

\begin{lem}\longthmtitle{Closed Form Auxiliary Solution}\label{lem:y-soln}
	For an arbitrary fixed $x\in[0,1]^n$, the unique minimizer $y^\star$ contained in $\Y$ of both $\tilde{f}$ and $\widetilde{E}$ is given by
	\begin{equation}\label{eq:ystar}
	y^\star = -L^\dagger \left(%-\dfrac{\subscr{P}{r}}{n}\ones_n
	p_i \tilde{x}_i\right)_i + \frac{\kappa}{n}\ones_n.
	\end{equation}
	This is also the unique equilibrium of~\eqref{eq:binpad} in $\Y$.
\end{lem}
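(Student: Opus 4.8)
The plan is to reduce both assertions to an elementary least-squares computation in $y$. Fix $x\in[0,1]^n$ (and write $\tilde{x}=x$ for the fixed value in the statement). The only $y$-dependence of $\tilde{f}(x,y)$ enters through $\frac{\gamma}{2}\sigma^\top\sigma=\frac{\gamma}{2}\|Ly-b\|^2$ with $b:=(\subscr{P}{r}/n)\ones_n-(p_i x_i)_i$, and by \eqref{eq:energyd} the function $\widetilde{E}(x,y)$ differs from $\tilde{f}(x,y)$ only by the term $\frac{1}{\tau}\sum_i\int_0^{x_i}g^{-1}(\nu)\,d\nu$, which for fixed $x$ is a constant. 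Hence minimizing $\tilde{f}$ or $\widetilde{E}$ over $y$ is the same convex quadratic problem $\min_y\|Ly-b\|^2$ (recall $\gamma>0$).

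By Assumption~\ref{ass:graph-conn} the connected Laplacian satisfies $\nulo(L)=\spn\{\ones_n\}$ and $\image L=\spn\{\ones_n\}^\perp$, so the set of minimizers of $\|Ly-b\|^2$ over $\real^n$ is $\{L^\dagger b+\theta\ones_n:\theta\in\real\}$. Since $L^\dagger$ annihilates $\nulo(L)=\spn\{\ones_n\}$, we get $L^\dagger b=-L^\dagger(p_i x_i)_i$, and likewise $\ones_n^\top L^\dagger(p_i x_i)_i=0$ because $L^\dagger(p_i x_i)_i\in\image L\perp\ones_n$. Intersecting the minimizer set with $\Y$, i.e.\ imposing $\ones_n^\top y=\kappa$, forces $\theta n=\kappa$, leaving the single point $y^\star=-L^\dagger(p_i x_i)_i+(\kappa/n)\ones_n$, which is \eqref{eq:ystar}. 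Uniqueness on $\Y$ is automatic since the minimizer line has direction $\ones_n$, transversal to the hyperplane $\Y$ (normal $\ones_n$); equivalently, $y\mapsto\|Ly-b\|^2$ restricted to $\Y$ is strictly convex because $L$ is positive definite on $\spn\{\ones_n\}^\perp$.

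For the equilibrium claim, observe $\nabla_y\widetilde{E}(x,y)=\gamma L\sigma=\gamma L\big((p_i x_i)_i+Ly\big)$ (using $L\ones_n=0$), so the $y$-dynamics in \eqref{eq:binpad} read $\dot y=-\alpha\,\nabla_y\widetilde{E}(x,y)$ with $\alpha\succ0$ invertible; hence equilibria coincide with critical points of the convex map $y\mapsto\widetilde{E}(x,y)$, which are exactly its minimizers — the set computed above. Restricting to $\Y$ again yields $y^\star$ as the unique equilibrium. (Alternatively one checks directly that $Ly=-LL^\dagger(p_i x_i)_i=(p^\top x/n)\ones_n-(p_i x_i)_i$ for any $y$ in that set, so $(p_i x_i)_i+Ly\in\spn\{\ones_n\}=\nulo(L)$ and $\dot y=0$.) The whole argument is linear-algebraic; the only points needing care — and what I expect to be the ``obstacle,'' such as it is — are the pseudoinverse identities $L^\dagger\ones_n=0$ and $\ones_n^\top L^\dagger z=0$, the use of connectivity to pin down $\nulo(L)$, and the transversality of $\Y$ to the solution line, which is precisely what makes both the minimizer and the equilibrium \emph{unique} within $\Y$.
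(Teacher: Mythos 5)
Your proof is correct and follows essentially the same route as the paper: minimize the convex quadratic in $y$ by setting $\nabla_y\tilde f(x,y)=\nabla_y\widetilde{E}(x,y)=\zeros_n$ (the integral term being constant in $y$), note the resulting solution set is a line in the direction $\ones_n$ due to $\nulo(L)=\spn\{\ones_n\}$, select the unique representative in $\Y$ via Lemma~\ref{lem:y-domain}, and obtain the equilibrium claim from $\dot y=-\alpha\nabla_y\widetilde{E}(x,y)$. You simply spell out the pseudoinverse identities and the transversality argument that the paper leaves implicit.
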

\begin{proof}
	The first term is computed by setting $\nabla_y\tilde{f}(x,y^\star) = \zeros_n$ (resp. $\nabla_y\widetilde{E}(x,y^\star) = \zeros_n)$ and solving for $y^\star$. There is a hyperplane of possible solutions due to the rank deficiency of $L$, but we are looking for the unique solution in $\Y$. The second term therefore follows from~\eqref{eq:ydom}.
	The fact that this point is also the unique equilibrium in $\Y$ follows from the fact that $\dot{y} = -\alpha\nabla_y \widetilde{E}(x,y^\star)$.
\end{proof}

\begin{lem}\longthmtitle{Finite Equilibria (Distributed)}~\label{lem:finite-eq-dist}
	Let $\widetilde{\X}\times\widetilde{\Y}$ be the set of equilibria of~\eqref{eq:binpad}
	satisfying $(\dot{x},\dot{y}) = 0$ on $(x,y)\in[0,1]^n \times \Y$. The set $\widetilde{\X}\times\widetilde{\Y}$ is finite.
\end{lem}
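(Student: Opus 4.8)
The plan is to eliminate the auxiliary variable using Lemma~\ref{lem:y-soln}, substitute it into the $\dot x$ equation, and recognize the resulting stationarity condition as the centralized one already shown to have finitely many solutions in Lemma~\ref{lem:finite-eq}.

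First I would note that any equilibrium $(x,y)\in[0,1]^n\times\Y$ of \eqref{eq:binpad} has $\dot y = 0$ while remaining in $\Y$, so by Lemma~\ref{lem:y-soln} necessarily $y = y^\star(x) = -L^\dagger(p_ix_i)_i + (\kappa/n)\ones_n$, a single-valued affine function of $x$. Hence the equilibrium set equals $\{(x,y^\star(x)) : x\in\widetilde\X\}$, where $\widetilde\X\subseteq[0,1]^n$ is the set of $x$ for which $\dot x = 0$ once $y=y^\star(x)$ is inserted, and $\widetilde\Y = y^\star(\widetilde\X)$; thus $\widetilde\X\times\widetilde\Y$ is finite as soon as $\widetilde\X$ is, and it suffices to prove the latter.

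Next I would carry out the substitution. On the open hypercube the factors $(\vert\widetilde H(x)\vert_m)^{-1}\succ0$ and $\diag{(x_i-x_i^2)_i/T}\succ0$ are invertible, so $\dot x = 0$ there is equivalent to $\widetilde W x + \tfrac{T}{\tau}\log(1/x_i-1)_i + \tilde v = \zeros_n$ with $y=y^\star(x)$. Using $L\ones_n=\zeros_n$ and that $LL^\dagger$ is the orthogonal projector onto $\image L = \spn\{\ones_n\}^\perp$ (Assumption~\ref{ass:graph-conn}), one obtains $Ly^\star(x) = -(I_n-\ones_n\ones_n^\top/n)(p_ix_i)_i = -(p_ix_i)_i + \tfrac{p^\top x}{n}\ones_n$; substituting this into $\tilde v$ and combining with $\widetilde W x = -((a_i+\gamma p_i^2)x_i)_i$, the $\gamma p_i^2 x_i$ terms cancel and the $i$-th equation collapses to
\begin{equation*}
-a_i x_i + a_i b_i - \tfrac{\gamma p_i}{n}\,(p^\top x - \subscr{P}{r}) + \tfrac{T}{\tau}\log(1/x_i-1) = 0, \qquad i\in\until{n}.
\end{equation*}
This is exactly the centralized stationarity equation \eqref{eq:grad-zero} with the penalty coefficient $\gamma$ replaced by $\gamma/n>0$.

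Finally I would invoke Lemma~\ref{lem:finite-eq}, which shows the equilibrium set of the $\binpac$ dynamics on $[0,1]^n$ is finite for every admissible parameter choice, in particular for penalty coefficient $\gamma/n$: its interior argument (finitely many zeros of the nonconstant real-analytic map above, with no accumulation at the faces $x_i\in\{0,1\}$ since the $\log$ term blows up there) together with its boundary argument (summing over the finitely many ways to fix a subset of components at $\{0,1\}$ and invoking the L'Hospital limit \eqref{eq:lhospital-eq}) transfers verbatim, because inserting $y^\star(x)$ only turns $\tilde v$ into an affine function of $x$ and leaves the dominant behavior near the faces unchanged. Therefore $\widetilde\X$ is finite, and so is $\widetilde\X\times\widetilde\Y$. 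The main obstacle I anticipate is the bookkeeping of the substitution step — computing $Ly^\star(x)$ correctly and confirming the reduced equation is precisely the centralized one with $\gamma\mapsto\gamma/n$ — together with checking that the face/boundary analysis of Lemma~\ref{lem:finite-eq} really does carry over without modification.
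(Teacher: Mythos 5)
Your proposal is correct and follows essentially the same route as the paper's proof: eliminate $y$ via the unique minimizer $y^\star(x)$ from Lemma~\ref{lem:y-soln}, observe it is affine in $x$, and transfer the finiteness argument of Lemma~\ref{lem:finite-eq}. Your explicit substitution showing the reduced stationarity system is exactly the centralized equation \eqref{eq:grad-zero} with $\gamma$ replaced by $\gamma/n$ is a nice sharpening that the paper leaves implicit, but the underlying argument is the same.
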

\begin{proof}
	The proof follows closely to the proof of Lemma~\ref{lem:finite-eq} with the variation that $\tilde{v}$ in the expression for $\dot{x}$ is now a function of $y$. Given the result of Lemma~\ref{lem:y-soln}, we may directly substitute the unique $y^\star$~\eqref{eq:ystar} for any $x$. Because $y^\star$ is simply a linear expression in $x$, the same argument as in Lemma~\ref{lem:finite-eq} that $\widetilde{\X}$ is finite follows. 
\end{proof}

We now extend the results of Theorem~\ref{thm:cent-cvg} to the
distributed case of solving $\Pc 2$ via $\binpad$. 
We have the following theorem on the trajectories of
$(x(t),y(t))$
under~\eqref{eq:binpad}, which can be interpretted as establishing convergence to a local minimizer with probability one.

\begin{thm}\longthmtitle{Convergence of $\binpad$}\label{thm:dist-cvg}
  Given an initial condition $(x(0),y(0))\in(0,1)^n\times \real^n$,
 the trajectory $(x(t),y(t))$ under $\binpad$
  converges asymptotically to a critical point $(x^\star,y^\star)$ of
  $\widetilde{E}$. In addition, under Assumption~\ref{ass:init}, on
  the random choice of initial condition $x(0)$, and
  Assumption~\ref{ass:t-tau}, on the random choice of $T,\tau$,
  the probability that $(x(0),y(0))$ is in the set
    $\underset{\hat{x},\hat{y}}{\cup} \W^s (\hat{x},\hat{y})$, where
    $(\hat{x},\hat{y})$ is a saddle-point or local maximum of
    $\widetilde{E}$, is zero. Lastly, all local minima
  $(x^\star,y^\star)$ of $\widetilde{E}$ are globally optimal in $y$:
  $\widetilde{E}(x^\star,y)\geq \widetilde{E}(x^\star,y^\star),
  \forall y\in\real^n$.
\end{thm}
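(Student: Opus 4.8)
The plan is to transplant the three-part structure of the proof of Theorem~\ref{thm:cent-cvg} to the augmented state $(x,y)$, working throughout on the forward-invariant set $(0,1)^n\times\Y$ — forward invariance of the open cube following from Lemma~\ref{lem:fwd-inv-c} extended to the $\binpad$ dynamics (the boundary-dominating terms in $\dot x$ are unchanged, and $\tilde v$ stays bounded near $\partial(0,1)^n$ because $y(t)$ is confined to $\Y$ by Lemma~\ref{lem:y-domain} and, as noted below, to a compact sublevel set), and invariance of $\Y$ being exactly Lemma~\ref{lem:y-domain}.

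For the first claim I would use $\widetilde E$ as a Lyapunov function. From $\nabla_y\widetilde E=\gamma L\sigma$ and $L\ones_n=\zeros_n$ one gets $\dot y=-\alpha\nabla_y\widetilde E$, and from~\eqref{eq:binpad} one has $\nabla_x\widetilde E=-\diag{T/(x_i-x_i^2)_i}\,\vert\widetilde H(x)\vert_m\,\dot x$, so that
\[
\frac{d\widetilde E}{dt}=-\dot x^\top\diag{T/(x_i-x_i^2)_i}\,\vert\widetilde H(x)\vert_m\,\dot x-\nabla_y\widetilde E^\top\alpha\,\nabla_y\widetilde E\le 0,
\]
with equality iff $\dot x=0$ and $\nabla_y\widetilde E=0$, i.e. iff $(x,y)$ is an equilibrium of~\eqref{eq:binpad}, equivalently a critical point of $\widetilde E$ (using $\vert\widetilde H\vert_m^{-1}\succ0$ and $\diag{(x_i-x_i^2)_i/T}\succ0$ on $(0,1)^n$). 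Since $x$ lives in the compact cube $[0,1]^n$ and $y\mapsto\widetilde E(x,y)=(\gamma/2)\|(p_ix_i)_i+Ly-(\subscr{P}{r}/n)\ones_n\|^2+(\text{terms in }x)$ is coercive on the affine slice $\Y$ (because $L$ has rank $n-1$, so $\|L\omega\|\ge\lambda_2\|\omega\|$ for $\omega\perp\ones_n$), the sublevel sets of $\widetilde E$ on $[0,1]^n\times\Y$ are compact; hence trajectories are bounded and LaSalle's invariance principle~\cite{HKK:02} applies. The largest invariant set in $\{d\widetilde E/dt=0\}$ is the equilibrium set $\widetilde\X\times\widetilde\Y$, which is finite by Lemma~\ref{lem:finite-eq-dist}, so the trajectory converges to a single critical point $(x^\star,y^\star)$.

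For the second claim I would repeat the Stable Manifold Theorem argument of Theorem~\ref{thm:cent-cvg}, now for the flow $\varphi_{T,\tau}$ restricted to the $(2n-1)$-dimensional invariant manifold $(0,1)^n\times\Y$; restricting to $\Y$ is what makes the equilibria isolated and hyperbolic (in $\real^n$ the $y$-equilibria form lines, but Lemma~\ref{lem:y-soln} pins down the unique one on $\Y$). Since $\widetilde H(x)$ is diagonal, the PT-truncation is active only on the real-analytic, non-constant-in-$(T,\tau)$ locus $\{\,|a_i+\gamma p_i^2+(T/\tau)(x_i-x_i^2)^{-1}|=m\,\}$, of measure zero, so $\varphi_{T,\tau}$ is $C^\infty$ a.e.; on the smooth region, the same nonconstant-real-analytic-function reasoning of~\cite{BM:15} shows that for almost every $(T,\tau)$ the restricted Jacobian $D\varphi_{T,\tau}(x^\star,y^\star)$ is full rank with no purely imaginary eigenvalues, at each of the finitely many equilibria (Lemma~\ref{lem:finite-eq-dist}). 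Any equilibrium that is a saddle or local maximum of $\widetilde E$ then carries an unstable manifold of dimension $\ge1$, hence a stable set of dimension $<2n-1$, of measure zero in $(0,1)^n\times\Y$; Assumption~\ref{ass:init} and Assumption~\ref{ass:t-tau} then give $\Pp$-probability zero exactly as in the centralized case, including the degenerate "center manifold" subcase handled via the map to the roots of the characteristic polynomial composed with the projection onto the real part. The third claim is immediate: at any critical point $\nabla_y\widetilde E(x^\star,y^\star)=\gamma L\sigma(x^\star,y^\star)=\zeros_n$, and $y\mapsto\widetilde E(x^\star,y)$ is convex with Hessian $\gamma L^2\succeq0$, so $y^\star$ globally minimizes $\widetilde E(x^\star,\cdot)$ over $\real^n$ (this is also Lemma~\ref{lem:y-soln}).

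I expect the main obstacle to be the second part: carrying out the measure-zero/real-analyticity bookkeeping on the restricted manifold $(0,1)^n\times\Y$, in particular keeping track of the conserved quantity $\ones_n^\top y$ so that equilibria are genuinely isolated, and verifying that the non-degeneracy conditions on the \emph{full} $(x,y)$-Jacobian — not merely its diagonal $x$-block — hold generically in $(T,\tau)$, since the coupling through $\tilde v(y)$ and the $y$-block $\gamma L^2$ (which has a zero eigenvalue along $\ones_n$, living outside the tangent space of $\Y$) must be shown not to spoil hyperbolicity.
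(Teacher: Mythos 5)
Your proposal is correct and follows essentially the same route as the paper's proof: the Lyapunov/LaSalle argument for $\widetilde{E}$ restricted to $[0,1]^n\times\Y$ (with forward invariance, the domain of $y$, and finiteness of equilibria supplied by the same lemmas), the Stable Manifold Theorem combined with the real-analyticity/measure-zero argument in $(T,\tau)$ for saddles and local maxima, and convexity of $\widetilde{E}$ in $y$ together with $\nabla_y\widetilde{E}(x^\star,y^\star)=\zeros_n$ for the final claim. The one notable difference concerns the obstacle you flag at the end: rather than verifying genericity of the full $(x,y)$-Jacobian on the restricted manifold, the paper exploits convexity of $\widetilde{E}$ in $y$ to conclude that any unstable or degenerate directions at a saddle or local maximum must lie in the $x$-coordinates, so the stable-manifold argument of the centralized theorem need only be applied to the $x$-block, sidestepping the coupling issue you identify.
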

\begin{proof}
	The first part of the proof to establish convergence to a critical point follows
	from a similar argument to the proof of
	Theorem~\ref{thm:cent-cvg}. Differentiating $\widetilde{E}$ with
	respect to time gives:
	\begin{equation}\label{eq:dEtildedt}
	\begin{aligned}
	\dfrac{d\widetilde{E}}{dt} &= \begin{bmatrix}
	\dot{x} \\ \dot{y}
	\end{bmatrix}^\top \begin{bmatrix}
	\nabla_x \widetilde{E}(x,y) \\ \nabla_y \widetilde{E}(x,y)
	\end{bmatrix} = \begin{bmatrix}
	\dot{x} \\ \dot{y}
	\end{bmatrix}^\top 
	\begin{bmatrix}
	-\widetilde{W}x -\tilde{v} + g^{-1}(x)/\tau \\ -\alpha^{-1}\dot{y}
	\end{bmatrix} \\
	&= -\dot{x}^\top\diag{T/(x_i-x_i^2)_i}
	\vert \tilde{H}(x)\vert_m\dot{x} - \alpha^{-1}\dot{y}^\top \dot{y} < 0, \\
	& \qquad \qquad \qquad \dot{x} \neq 0 \ \text{or} \ \dot{y} \neq
	0, \quad (x,y)\in (0,1)^n \times \mathcal{Y}.
	\end{aligned}
	\end{equation}
	Thus, $\widetilde{E}$ monotonically decreases along the
	trajectories of $\binpad$. Given~\eqref{eq:dEtildedt}, we call again on the forward invariance property of the open hypercube for the distributed case via Lemma~\ref{lem:fwd-inv-d}, stated below, which verifies that $(x,y)\in (0,1)^n \times \mathcal{Y}$ at all times.
	
	Due to the deficiency induced by $L$, $\widetilde{E}$ is not radially unbounded in $y$ over all of $\real^n$, so we must be careful before applying the LaSalle Invariance Principle. Instead, define $\widetilde{E}$ only on $[0,1]^n\times \Y$ in consideration of Lemma~\ref{lem:y-domain}. Radial unboundedness in $\widetilde{E}$ is then obtained given any $y(0)$, and it follows that
	the trajectories converge to
	largest invariant set contained in $d\widetilde{E}/dt
	= 0$ per the LaSalle Invariance Principle~\cite{HKK:02}. This is the finite set of critical points of $\widetilde{E}$ per Lemma~\ref{lem:finite-eq-dist}, and so it additionally follows that we converge to a single critical point $(x^\star,y^\star)$.  
	
	Because $\widetilde{E}$ is convex in $y$, it follows that
	for any fixed $x$ there exist only local minima of
	$\widetilde{E}$ with respect to $y$. In consideration of this,
	we need only apply the Stable Manifold Theorem~\cite{JG-PH:83} to $x$. The
	argument for this develops similarly to the proof of
	Theorem~\ref{thm:cent-cvg}, and we conclude that the
	trajectories of $\binpad$ converge to a local minimizer
	$(x^\star,y^\star)$ of $\widetilde{E}$ with probability
	one. 
	
	The final part of the Theorem statement
	that $\widetilde{E}(x^\star,y) \geq \widetilde{E}(x^\star,y^\star),
	\forall y\in\real^n$ can also be seen from the convexity of $\widetilde{E}$
	in $y$ and applying the first-order condition of
	convexity:
	\begin{equation*}
	\widetilde{E}(x^\star,y) \geq \widetilde{E}(x^\star,y^\star) + (y-y^\star)^\top \nabla_y \widetilde{E}(x^\star,y^\star)
	\end{equation*}
	along with $\nabla_y\widetilde{E}(x^\star,y^\star) =
	\zeros_n$.
\end{proof}

\begin{lem}\longthmtitle{Forward Invariance of the Open Hypercube (Distributed)}\label{lem:fwd-inv-d}
	The set $(0,1)^n \times \Y$ is a forward-invariant set under the
	$\binpad$ dynamics~\eqref{eq:binpad}.
\end{lem}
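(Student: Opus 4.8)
The plan is to treat the two coordinates of the state separately. For the $y$-coordinate, forward invariance of $\Y$ is already furnished by Lemma~\ref{lem:y-domain}: the quantity $\ones_n^\top y$ is conserved along~\eqref{eq:binpad}, so $y(0)\in\Y$ forces $y(t)\in\Y$ for all $t$, and I would simply invoke that lemma. The substance lies in showing the $x$-coordinate never leaves $(0,1)^n$, and here I would reuse the argument of the centralized Lemma~\ref{lem:fwd-inv-c}, phrased as a first-exit-time contradiction: assume the maximal solution starting at $(x(0),y(0))\in(0,1)^n\times\Y$ has a first time $t^\star$ at which some component $x_i(t^\star)\in\{0,1\}$, and derive a contradiction by showing that near that face $\dot x_i$ points strictly inward.

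For the inward-pointing estimate I would examine $\dot x$ in~\eqref{eq:binpad} componentwise. Since $\widetilde H(x)$ is diagonal with $\widetilde H(x)_{ii}=a_i+\gamma p_i^2+(T/\tau)(x_i-x_i^2)^{-1}$, the diagonal PT-inverse entry behaves to leading order like $(\vert\widetilde H(x)\vert_m)^{-1}_{ii}\sim(\tau/T)(x_i-x_i^2)$ as $x_i\to\{0,1\}$, while inside the bracket $\widetilde W x+(T/\tau)\log(1/x_i-1)_i+\tilde v$ the logarithmic term dominates the bounded terms $(\widetilde W x)_i=-(a_i+\gamma p_i^2)x_i$ and $\tilde v_i$. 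Multiplying these with the positive factor $(x_i-x_i^2)/T$, the $i$-th component of $\dot x$ is, to leading order, a positive multiple of $\log(1/x_i-1)(x_i-x_i^2)^2$, which by the first line of~\eqref{eq:lhospital-eq} tends to $0$ at $x_i\in\{0,1\}$ and, crucially, is strictly positive for $x_i$ slightly above $0$ and strictly negative for $x_i$ slightly below $1$. Hence near $t^\star$ the coordinate $x_i$ is strictly increasing if $x_i(t^\star)=0$ and strictly decreasing if $x_i(t^\star)=1$, contradicting that $x_i$ reaches the boundary at $t^\star$ from inside. This shows $x(t)\in(0,1)^n$ on the whole interval of existence, which in turn precludes finite escape, so the solution is global and $(0,1)^n\times\Y$ is forward invariant.

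The step I expect to be the main obstacle — and the only genuine departure from Lemma~\ref{lem:fwd-inv-c} — is justifying that $\tilde v$ really stays finite near $t^\star$, since now $\tilde v=(a_i b_i)_i+\gamma\diag{p}\big((\subscr{P}{r}/n)\ones_n-Ly\big)$ depends on the auxiliary state $y$, which ranges over the \emph{unbounded} set $\Y$. I would close this by noting that up to the putative first exit time one has $x(t)\in[0,1]^n$ a priori (it is the closure of the set the trajectory occupies before the exit), so on the compact interval $[0,t^\star]$ the forcing $-\alpha\gamma L(p_i x_i)_i$ in the linear $y$-equation $\dot y=-\alpha\gamma L(p_i x_i)_i-\alpha\gamma L^2 y$ is bounded; a linear ODE with locally bounded forcing has no finite escape time, so $y(\cdot)$, and therefore $\tilde v(\cdot)$, is bounded on $[0,t^\star]$, which is exactly what the leading-order analysis of $\dot x_i$ above requires. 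One may additionally observe that the homogeneous matrix $-\alpha\gamma L^2$ has nonpositive spectrum, as $\alpha L^2$ is similar to $\alpha^{1/2}L^2\alpha^{1/2}\succeq 0$, reinforcing the absence of blow-up. Assembling the $y$-invariance from Lemma~\ref{lem:y-domain} with this $x$-invariance argument yields the claim.
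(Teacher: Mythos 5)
Your proof is correct, and its overall skeleton (take $\Y$-invariance from Lemma~\ref{lem:y-domain}, then rerun the boundary sign analysis of Lemma~\ref{lem:fwd-inv-c} componentwise on $\dot x$) matches the paper's; the genuine difference is in how you secure boundedness of $y$, and hence of $\tilde v$, which is the one place where the distributed dynamics~\eqref{eq:binpad} really differ from the centralized case. The paper gets it from the energy function: $\nabla_{yy}\widetilde{E}=\gamma L^2\succeq 0$ is strictly positive definite along directions parallel to $\Y$, so $\widetilde{E}(x,\cdot)$ is radially unbounded on $\Y$, and monotone decrease of $\widetilde{E}$ via~\eqref{eq:dEtildedt} then confines $y(t)$ to a sublevel set for all time. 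You instead exploit that $\dot y=-\alpha\gamma L(p_ix_i)_i-\alpha\gamma L^2 y$ is a linear ODE whose forcing is bounded up to the putative first exit time (where $x\in[0,1]^n$ a priori), so a Gronwall/no-finite-escape bound gives $y$, and therefore $\tilde v$, bounded on $[0,t^\star]$, which is all the inward-pointing estimate from~\eqref{eq:lhospital-eq} needs. Your route is more elementary and, arguably, cleaner logically: the paper's appeal to~\eqref{eq:dEtildedt} is slightly entangled with the very invariance being proved (the decrease inequality is stated on $(0,1)^n\times\Y$), whereas your exit-time bound presupposes nothing about $\widetilde{E}$. What the paper's argument buys in exchange is stronger output: it yields a uniform-in-time bound on $y(t)$, which is exactly the precompactness ingredient later used in the proof of Theorem~\ref{thm:dist-cvg} to invoke LaSalle; with your version of the lemma, that global boundedness of $y$ would still have to be established separately (e.g., by the same convexity-of-$\widetilde{E}$-in-$y$ argument) before the convergence theorem goes through.
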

\begin{proof}
	The forward invariance of $\Y$ is already established per its definition and Lemma~\ref{lem:y-domain}, but we must establish that the trajectories $y(t)$ remain bounded in order to apply the argument in Lemma~\ref{lem:fwd-inv-c} to the proof of Theorem~\ref{thm:dist-cvg}. Compute the Hessian of $\widetilde{E}$ with respect to $y$ as:
	\begin{equation*}
	\nabla_{yy}\widetilde{E} = \gamma L^2 \succeq 0.
	\end{equation*}
	Due to the connectedness of $L$, the eigenspace associated with the $n-1$ strictly positive eigenvalues of $\gamma L^2$ is parallel to $\Y$. Therefore, $\widetilde{E}$ is strictly convex in $y$ on this subspace, and it follows that $\widetilde{E}$ is bounded from below on $\Y$. Due to $d\widetilde{E}/dt \leq 0$~\eqref{eq:dEtildedt} and the continuity of $\widetilde{E}$ in $y$, it follows that $y(t)$ is bounded for all $t$. Given this, the argument from Lemma~\ref{lem:fwd-inv-c} applies to the trajectories $x(t)$, and the set $(0,1)^n \times \Y$ is forward invariant under $\binpad$~\eqref{eq:binpad}.
\end{proof}

\begin{rem}\longthmtitle{One-Hop Distributed
    Algorithm}\label{rem:one-hop} \rm The proposed distributed algorithm requires
  two-hop neighbor information, which may be intractable in some
  settings. The source of the two-hop term stems from the quadratic
  $\gamma$ penalty term. However, it is possible to define a one-hop
  distributed algorithm via a Lagrangian-relaxation route.
	
  Consider posing $\Pc 2$ with the $\gamma$ term instead as a linear
  constraint: $\sqrt{\gamma /2} ((p_i x_i)_i + Ly) = \sqrt{\gamma /2}
  (\subscr{P}{r}/n)\ones_n$. Applying Lagrangian
    relaxation to this problem introduces a Lagrange multiplier on
  the linear terms, and from there it would be appropriate to define a
  saddle-point-like algorithm along the lines
  of~\cite{AC-EM-SHL-JC:18-tac} in which gradient-ascent in the dual
  variable is performed. This changes the nature of the penalty from
  squared to linear, so the underlying optimization model is different
  in that sense, but it follows that this approach could be
  implemented with one-hop information.
  
  We note that, in some distributed contexts, penalty terms or constraints can be imposed via $\sqrt{L}$ which then appears as $L$ in the associated squared terms of the dynamics (in place of $L^2$). However, the linear $L$ also appears in our algorithm, and substituting $\sqrt{L}$ would not inherit the sparsity of the communication graph. Therefore we leave the design of a fully one-hop mixed first-order/second-order algorithm as an open problem. 
\end{rem}

\section{Simulations}\label{sec:sims}

Our simulation study is split in to two parts; the first focuses on 
numerical comparisons related to runtime and solution quality, and
the second is a 2D visualization of the trajectories of the
Distributed Annealing (DA) variants for both the centralized and
distributed NNN methods.

\subsection{Runtime and Solution Quality Comparison}\label{ssec:runtime-soln}

In this section, we compare to a greedy method stated as
Algorithm~\ref{alg:greedy} and a semidefinite programming (SDP)
relaxation method stated as Algorithm~\ref{alg:sdp}. In short, the
greedy method initializes the state as $x = \zeros_n$ and iteratively
sets the element $x_i$ to one which decreases the cost function the
most. This is repeated until no element remains for which the updated
state has lower cost than the current state. For the SDP method, a
convex SDP is obtained as the relaxation of $\Pc 1$, see
e.g.~\cite{LV-SB:96}. We use the shorthand $\texttt{SDPrlx}(\bullet)$
to indicate this in the statement of Algorithm~\ref{alg:sdp}. This SDP is solved
using CVX software in MATLAB~\cite{website:cvx} and a lowest-cost
partition is computed to construct a feasible solution. For the
sake of convenience in stating both algorithms, we have defined $f':
2^n\rightarrow\real$ to be the set function equivalent of $f$,
i.e. the cost of $\Pc 1$. That is, $f'(\SSS) = f(x)$, where $i\in\SSS$
indicates $x_i = 1$ and $i\notin\SSS$ indicates $x_i = 0$. Finally, we additionally compare to a brute force method which we have manually programmed as an exhaustive search over the entire (finite) feasibility set.

\begin{algorithm}
	\caption{Greedy Method}\label{alg:greedy}
	\begin{algorithmic}[1]
		\Procedure{Greedy}{$f'$}
		\State $\SSS \gets \emptyset$
		\State $\texttt{done} \gets \texttt{false}$
		\While{$\texttt{done} = \texttt{false}$}
		\State $i^\star \gets \underset{i\notin\SSS}{\argmin} \ f'(\SSS \cup \{i\})$
		\If{$f'(\SSS\cup \{i^\star\}) < f'(\SSS)$}
		\State $\SSS \gets \SSS \cup \{i^\star\}$
		\Else
		\State $\texttt{done} \gets \texttt{true}$
		\EndIf
		\EndWhile
		\State $x_i \gets \begin{cases}
		0, & i\notin \SSS, \\
		1, & i\in \SSS.
		\end{cases}$
		\State \textbf{return} $x$
		\EndProcedure
	\end{algorithmic}
\end{algorithm}

\begin{algorithm}
	\caption{SDP Relaxation Method}\label{alg:sdp}
	\begin{algorithmic}[1]
		\Procedure{SDP}{$f'$}
		\State $\P_{\text{SDP}} \gets \texttt{SDPrlx}(\Pc 1)$
		\State $x^\star \gets \underset{x}{\argmin} \P_{\text{SDP}}$
		\State $\SSS \gets \emptyset$
		\State $\texttt{done} \gets \texttt{false}$
		\While{$\texttt{done} = \texttt{false}$}
		\State $i^\star \gets \underset{i\notin\SSS}{\argmax} \ x_i$
		\If{$f'(\SSS\cup \{i^\star\}) < f'(\SSS)$}
		\State $\SSS \gets \SSS \cup \{i^\star\}$
		\Else
		\State $\texttt{done} \gets \texttt{true}$
		\EndIf
		\EndWhile
		\State $x_i \gets \begin{cases}
		0, & i\notin \SSS, \\
		1, & i\in \SSS.
		\end{cases}$
		\State \textbf{return} $x$
		\EndProcedure
	\end{algorithmic}
\end{algorithm}

%\begin{algorithm}
%	\caption{SDP Relaxation Method}\label{alg:SDP}
%	\begin{algorithmic}[1]
%		\Procedure{SDP}{$f$}
%		\State $\SSS \gets \emptyset, \qquad x_{\text{old}}\gets \zeros_n, \qquad x_{\text{new}}\gets \zeros_n$
%		\State $\texttt{done} \gets \texttt{false}$
%		\For{i}
%		
%		\EndFor
%		\State $x_i \gets \begin{cases}
%		0, & i\notin \SSS, \\
%		1, & i\in \SSS.
%		\end{cases}$
%		\State \textbf{return} $x$
%		\EndProcedure
%	\end{algorithmic}
%\end{algorithm}

In Figure~\ref{fig:runtime} %\margin{add some markers on each line so
%  that this can be printed in black and white} 
we plot the runtime in
MATLAB on a 3.5GHz Intel Xeon E3-1245 processor over increasing problem size
$n$ for each of six methods: a brute force search, the aforementioned
greedy and SDP methods, the HNN first proposed in~\cite{JH-DT:85}
(i.e. the gradient-like version of $\binpac$), and the $\binpac$ and
$\binpad$ methods we developed in Sections~\ref{sec:all-hop}
and~\ref{sec:dist-hop}. The first obvious observation to make is that
the runtime of brute force method increases at a steep exponential
rate with increasing $n$ and exceeds 120 seconds at $n=22$, making it
intractable for even medium sized problems. Next, we note that there
are some spikes associated with the HNN method around $n=25$
to $n=40$. These are reproducible, and we suspect that this is due to
the emergence of saddle-points and increasing likelihood of
encountering these along the trajectory as $n$ increases. This is a well-documented problem observed in literature, see e.g.~\cite{YD-RP-CG-KC-SG-YB:14}, and we also confirm it empirically in this setting by observing that share of iterations for which the Hessian is indefinite (as opposed to positive definite) tends to grow as $n$ increases. We also note
that $\binpac$ scales relatively poorly, which can be attributed to a
matrix eigendecomposition being performed at each discretized
iteration of the continuous-time algorithm. For $\binpad$, the matrix
being eigendecomposed is diagonal, which makes it a trivial operation
and allows $\binpad$ to scale well. We note that the SDP method scales the worst amongst the non brute-force methods. Unsurprisingly, the
greedy method remains the fastest at large scale, although recall that
the motivation of developing our method is for it to be distributed
and that a greedy approach can not be distributed due to the global penalty term.

As for algorithm performance as it pertains to the cost of the
obtained solution, we fix $n=50$ and additionally include DA variants of both $\binpac$ and $\binpad$. We also omit the brute force method due
to intractability. For the sake of comparison, we compute a
performance metric $Q$ and provide it for each method in
Table~\ref{table:perf}.
The metric $Q$ is computed
as follows: for each trial, sort the methods by solution cost. Assign
a value of 6 for the best method, 5 for the second-best, and so on,
down to the seventh-best (worst) receiving zero. Add up these scores
for all 100 trials, and then normalize by a factor of 600 (the maximum
possible score) to obtain $Q$. Note that $Q$ does not account for runtime in any way.

It should be unsurprising that the
tried-and-true centralized greedy and SDP methods perform the
best. However, we note that they were beaten by our methods in a
significant number of trials, which can be seen by noting that a $Q$ score for
two methods which perform best or second-best in all trials would sum
to $1100/600 = 1.83$, while $Q(\text{greedy})+Q(\text{SDP})=1.75$, or a
cumulative pre-scaled score of $1050$, indicating that our methods
outperformed these methods in net 50 ``placement spots" over the 100
trials. In general, we find that the DA version of the NNN algorithms
obtains better solutions than the non-DA version, confirming the
benefit of this approach. We also find that 
$\binpad$ generally outperforms $\binpac$.
It's possible that an initially ``selfish" trajectory in $x$ is beneficial, which would
neglect the global penalty until $y$ adequately converges, although this is speculative. Lastly,
we note that the HNN method never performs better than worst, which we
attribute to the steepest-descent nature of gradient algorithms which
do not use curveature information of the energy function. It might be
possible that the stopping criterion forces HNN to terminate near
saddle-points, although we do not suspect this since we observe the
Hessian is positive-definite in the majority of termination
instances.

As for parameter selection, we find that choosing $m \ll 1$ is generally
best, since $m\geq 1$ would always produce a PT-inverse Hessian with
eigenvalues contained in $(0,1]$. This effectively scales down
$\dot{x}$ in the eigenspace associated with Hessian eigenvalue
magnitudes greater than $1$, but does not correspondingly scale up
$\dot{x}$ in the complementary eigenspace associated with small
eigenvalues. Additionally, choosing $T/\tau$ greater than
$1$ in the fixed case %\margin{choosing what greater than one? Tor: fixed} 
tended to be
effective. This may be related to selecting $a_i < -\gamma\Vert
p\Vert^2 - 4T/\tau$ to guarantee anti-stability from $(0.5)\ones_n$,
and would explain why a high $T_0/\tau_0$ that decreases in the
DA learning variant performs so well. In general, for the DA learning variant, we recommend choosing $T_0,\tau_0$ so that $T_0/\tau_0 \gg 1$ and also $\beta>1$ sufficiently large so that $T/\tau \ll 1$ by algorithm termination, which gives rise to a robust exploration/exploitation tradeoff. Finally, all $\alpha\approx 1$
seem to behave roughly the same, with only $\alpha \ll 1$ and
$\alpha \gg 1$ behaving poorly (the former leading to slow convergence
in $y$ and ``selfish" behavior in $x$, and the latter being
destabilizing in the discretization of $\dot{y}$).

%In Table~\ref{table:data}, we give direct comparisons between the three methods for the cases of $n\in\{8,18,50,200\}$ in terms of performance and typical runtime. We note that the Hopfield ($\binpac$) method outperforms the greedy approach in terms of solution cost in over half of the trials, indicated by the first row of the table, and is competitive with the SDP relaxation method in terms of solution cost, indicated by the third row of the table. The $\binpac$ approach really separates itself from the SDP approach in its scalability; the SDP approach could not even be studied in large problem size instances, whereas $\binpac$ scales effectively and even outperforms the greedy method in terms of average runtime for large problem instances.
%It is also worth restating that, unlike the greedy and SDP approaches, the Hopfield method is distributed, which was the primary motivation of this work.  %The table indicates We also plot in Figure~\ref{fig:plot1}

\begin{table}[]
	\centering
	\caption{Comparison of performance metric $Q$ for $100$ randomized trials with $n=50$.} 
	\label{table:perf}
	\scalebox{1}{
		\begin{tabular}{|c|c|c|c|c|}
			\hline
			Method & $Q$ \\ \hline \hline
			$\binpac$ & 0.2161 \\ \hline
			$\binpac$-DA & 0.2891 \\ \hline
			$\binpad$ & 0.5443 \\ \hline
			$\binpad$-DA & 0.7005 \\ \hline
			HNN & 0 \\ \hline
			Greedy & 0.8411 \\ \hline
			SDP & 0.9089 \\ \hline
	\end{tabular}}
\end{table}

\begin{table}[]
	\centering
	\caption{Problem data and parameter choices (where relevant) for performance comparison. Problem data $p_i, c_i$ is generated randomly from given distributions for each of $100$ trials.}
	\label{table:data}
	\scalebox{1}{
		\begin{tabular}{|c|c|c|c|c|}
			\hline
			Data or parameter & Value \\ \hline \hline
			$n$ & $50$ \\ \hline
			$p_i$ & $\U[1,50]$ \\ \hline
			$c_i$ & $p_i^e$, $e\sim\U[2,3]$ \\ \hline
			$\subscr{P}{r}$ & $1500$ \\ \hline
			$\gamma$ & $1$ \\ \hline
			$T_0$ & $1$ \\ \hline
			$\tau_0$ & $0.1$ \\ \hline
			$m$ & $0.1$ \\ \hline
			$\alpha$ & $1$ \\ \hline
			Learning steps & $10$ \\ \hline
			$\beta$ & $1.4$ \\ \hline
			$n$ & $50$ \\ \hline
	\end{tabular}}
\end{table}

\begin{figure}[h]
	\centering
	\includegraphics[scale = 0.6]{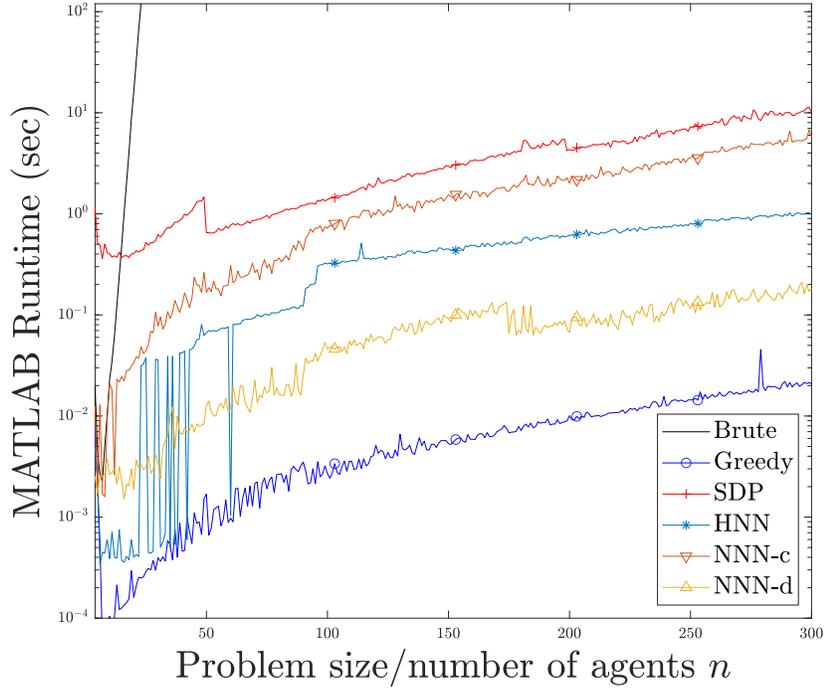}
	\caption{Runtime of each method for increasing problem sizes.}
	\label{fig:runtime}
\end{figure}

\subsection{Learning Steps and 2-D Trajectories}\label{ssec:traj-da}

Next, for the sake of understanding how the learning rate $T/\tau$ affects the trajectories of the solutions, we have provided Figure~\ref{fig:traj} which plots the 2-D trajectories of $\binpac$ and $\binpad$ with $T/\tau$ being gradually reduced over 15 learning steps. The contours of the energy function for the final step are also plotted. The problem data and choice for $a$ is:
\begin{equation*}
\begin{aligned}
c &= (2,1)^\top, \quad p = (3,1)^\top, \quad \subscr{P}{r} = 2.8, \quad \gamma = 4, \quad 
a = -(10,10)^\top. 
\end{aligned}
\end{equation*}
Note that, in each case, the trajectory approaches the optimal solution $x^\star = (1,0)^\top$. However, it is worth noting that a steep saddle point occurs around $x=(0.75,0.6)^\top$. Intuitively, this corresponds to a high risk of the trajectory veering away from the optimal solution had the DA not been implemented. With the opportunity to gradually learn the curveature of the energy function, as shown by stabilization to successive equilibria marked by $\times$, each algorithm is given the opportunity to richly explore the state space before stabilizing to the optimal solution $(1,0)^\top$. Further studying the learning-rate $T/\tau$ and a more complete analysis of Algorithm~\ref{alg:da} and the parameter $\beta$ are subjects of future work.

\begin{figure}[h]
	\centering
%	\begin{subfigure}
%	\includegraphics[\textwidth]{}
	\subfloat[]{{\includegraphics[scale=.75]{} }}\label{fig:centnewt} \\
	\subfloat[]{{\includegraphics[scale=.75]{} }}\label{fig:distnewt}
	%\caption{Centralized trajectory in 2D with $15$ learning steps. Equilibrium points between learning steps indicated by $\times$, contours of $E$ in final step indicated by dashed line.}
	%\label{fig:centnewt}
%	\end{subfigure}
%	\begin{subfigure}
%	\includegraphics[\textwidth]{}
	\caption{Centralized $\binpac$ (a) and distributed $\binpad$ (b) trajectories in 2D with $15$ learning steps. Stable equilibrium points between learning steps indicated by $\times$, contours of $E$ and $\widetilde{E}$ in final step indicated by dashed lines.}
	\label{fig:traj}
%	\end{subfigure}
\end{figure}

\section*{Acknowledgements}

The material in this chapter, in full, is provisionally accepted in Automatica. It is expected to appear as \textit{Distributed Resource Allocation with Binary Decisions via Newton-like Neural Network Dynamics}, T.~Anderson and S.~Mart\'{i}nez. The dissertation author was the primary investigator and author of this paper.

\include{chap_top}
\chapter{Frequency Regulation with Heterogeneous Energy Resources: A Realization using Distributed Control}
\label{chap:active_load}

This chapter presents one of the first real-life demonstrations of coordinated and distributed resource control for secondary frequency response in a 
power distribution grid. A series of tests involved up to 69 heterogeneous active distributed energy resources consisting of air handling units, unidirectional and bidirectional electric vehicle charging stations, a battery energy storage system, and 107 passive distributed energy resources consisting of building loads and solar photovoltaic systems. The distributed control setup consists of a set of Raspberry Pi end-points exchanging messages via an ethernet switch. Actuation commands for the distributed energy resources are obtained by solving a power allocation problem at every regulation instant using distributed ratio-consensus, primal-dual, and Newton-like algorithms. The problem formulation minimizes the sum of distributed energy resource costs while tracking the aggregate setpoint provided by the system operator. We demonstrate accurate and fast real-time distributed computation of the optimization solution and effective tracking of the regulation signal over 40-minute time horizons. An economic benefit analysis confirms eligibility to participate in an ancillary services market and demonstrates up to \$49k of potential annual revenue for the selected population of distributed energy resources.

The results of this chapter are the outcome of a project under the ARPA-e Network Optimized Distributed Energy Systems (NODES) program\footnote{\url{https://arpa-e.energy.gov/arpa-e-programs/nodes}},which postulates DER aggregations as virtual power plants that enable variable renewable penetrations of at least 50\%.
The vision of the NODES program was to employ state-of-the-art tools from control systems, computer science, and distributed systems to optimally respond to dynamic changes in the grid by leveraging DERs while maintaining customer quality of service. The NODES program required testing with at least 100 DERs at power. Here, we demonstrate the challenges and opportunities of testing on a heterogeneous fleet of DERs for eventual operationalization of optimal distributed control at frequency regulation time scales.

\section{Bibliographical Comments} To the best of our knowledge,  real-world testing of frequency regulation by DERs has been limited. A Vehicle-to-Grid (V2G) electric vehicle (EV)~\cite{WK-VU-KH-KK-SL-SB-DB-NP:08} and two Battery Energy Storage Systems (BESS)~\cite{MS-DS-AS-RT-RL-PCK:13} provided frequency regulation.
76 bitumen tanks were integrated with a simplified power system model to provide frequency regulation via a decentralized control algorithm in~\cite{MC-JW-SJG-CEL-NG-WWH-NJ:16}. In buildings, a decentralized control algorithm controlled lighting loads in a test room~\cite{JL-WZ-YL:17},  centralized frequency control was applied to an air handling unit (AHU)~\cite{YL-PB-SM-TM:15,EV-ECK-JM-GA-DSC:18}, an inverter and four household appliances~\cite{BL-SP-SA-MVS:18}, and four heaters in different rooms~\cite{LF-TTG-FAQ-AB-IL-CNJ:18}. A laboratory home with an EV and an AHU, and a number of simulated homes were considered for demand response in~\cite{KB-XJ-DV-WJ-DC-BS-JW-HS-ML:16} through an aggregator at a 10~s level. Technologies for widespread, but centrally controlled, cycling of air conditioners directly by utilities~cf.~\cite{SDGE} and aggregators are common place for peak shifting, but occur over time scales of minutes to hours. Industrial solutions enabling heterogeneous DERs to track power signals also exist, but they are either centralized, cf.~\cite{SC-PA:16} or require all-to-all communication~\cite{AT-SZ-SR:17}.

Our literature review exposes the following limitations: (i) centralized control or need for all-to-all communication~\cite{WK-VU-KH-KK-SL-SB-DB-NP:08,MS-DS-AS-RT-RL-PCK:13,YL-PB-SM-TM:15,EV-ECK-JM-GA-DSC:18,BL-SP-SA-MVS:18,LF-TTG-FAQ-AB-IL-CNJ:18,KB-XJ-DV-WJ-DC-BS-JW-HS-ML:16,SDGE,SC-PA:16,AT-SZ-SR:17}, which does not scale to millions of DERs; (ii) small numbers of DERs~\cite{WK-VU-KH-KK-SL-SB-DB-NP:08,MS-DS-AS-RT-RL-PCK:13,YL-PB-SM-TM:15,EV-ECK-JM-GA-DSC:18,BL-SP-SA-MVS:18,LF-TTG-FAQ-AB-IL-CNJ:18,KB-XJ-DV-WJ-DC-BS-JW-HS-ML:16}; (iii) lack of diversity in DERs~\cite{WK-VU-KH-KK-SL-SB-DB-NP:08,MS-DS-AS-RT-RL-PCK:13,MC-JW-SJG-CEL-NG-WWH-NJ:16,JL-WZ-YL:17,YL-PB-SM-TM:15,EV-ECK-JM-GA-DSC:18,LF-TTG-FAQ-AB-IL-CNJ:18}, with associated differences in tracking time scales and accuracy. No trial has been reported that demonstrated generalizability to a real scenario with (i) scalable distributed control and a (ii) large number of (iii) heterogeneous DERs.

\section*{Statement of Contributions}
To advance the field of real-world testing of DERs for frequency control, we conduct a series of tests using a group of up to 69 active and 107 passive heterogeneous DERs on the University of California, San Diego (UCSD) microgrid~\cite{BW-JD-DW-JK-NB-WT-CR:13}. To the best of the authors' knowledge, this is the first work to consider such a large, diverse portfolio of real physical DERs for secondary frequency response. As such, the major contributions of this work are:
\begin{itemize}
    \item A detailed account of the testbed, including the DER actuation and sampling interfaces, the distributed optimization setup, and communication framework.
    \item A description of techniques to work around technical barriers, provision of lessons learned, and suggestions for future improvement. 
    \item Evaluation of the performance of both the cyber and physical layers, including an evaluation of eligibility requirements for and the economic benefit of participating in the ancillary services market.
\end{itemize}

\textit{Chapter Overview.} 
Frequency regulation is simulated on the UCSD microgrid using real controllable DERs (Section \ref{ders}) to follow the PJM RegD signal~\cite{PJM-signal:19} interpolated from 0.5Hz to 1Hz (Sections \ref{regulation-signal}). The DER setpoint tracking is formulated as a power allocation problem at every regulation instant (Section \ref{optimization-statements}), and uses three types of provably convergent distributed algorithms from~\cite{ADDG-CNH-NHV:12,AC-JC:16-allerton,AC-BG-JC:17-sicon,TA-CYC-SM:18-auto} to solve the optimization problem; see the Appendix.
Setpoints are computed distributively on multiple Raspberry Pi's communicating via ethernet switches (Section \ref{computing-setup}). The setpoints are implemented on up to 176 DERs at power using dedicated command interfaces via TCP/IP communication (Section \ref{actuation-interface}), the DER power outputs monitored (Section \ref{power-measurements}), and their tracking performance evaluated (Section \ref{error-metrics}). Results for the various test scenarios (Section \ref{test}) show that the test system tracks the signal with reasonable error despite delays in response and inaccurate tracking behavior of some groups of DERs, and qualifies for  participation in the PJM ancillary services market (Section~\ref{results}).

\section{Problem Setting}
This chapter validates real-world DER controllability for participation in secondary frequency regulation through demonstration tests implemented on a real distribution grid. The tests showcase the ability of aggregated DERs to function as a single market entity that responds to frequency regulation requests from the independent system operators (ISO) by optimally coordinating DERs. The goal is to monitor and actuate a set of real controllable DERs to collectively track a typical automatic generation control (AGC) signal issued by the ISO. 

Three different distributed coordination schemes optimize the normalized contribution of each DER to the cumulative active power signal. Unlike simulated models, the use of real power hardware exposes implementation challenges associated with measurement noise, sampling errors, data communication problems, and DER response. To that end, precise load tracking is pursued at timescales that differ by DER type consistent with individual DER responsiveness and communication latencies, yet meet frequency regulation requirements in aggregation. 

The 69 kV substation and 12 kV radial distribution system owned by UCSD to operate the 5~km$^2$ campus was the chosen demonstration testbed. It has diverse energy resources with real-time monitoring and control capabilities, allowing for active load tracking. This includes over 3~MW of solar photovoltaic (PV) systems, 2.5~MW/5~MWh of BESS, building heating ventilation and air conditioning (HVAC) systems in 14 million square feet of occupied space, and over 200 unidirectional V2G (V1G) and V2G EV chargers. The demonstration tests used a representative population of up to 176 such heterogeneous DERs to investigate tracking behavior of specific DER types as well as their cooperative tracking abilities.  While the available DER capacity at UCSD far exceeds the minimum requirements for an ancillary service provider set by most ISOs (typically $\sim$ 1~MW), logistical considerations and controller capabilities dictated the choice of a DER population size with less aggregate power capacity (up to 184~kW) for this demonstration. Since this magnitude of power is insufficient to measurably impact the actual grid frequency, we chose to simulate frequency regulation by following a frequency regulation signal.

\section{Test Elements}\label{sec:elements}

Here, we elaborate on the different elements of the validation tests. 
These include the optimization formulation employed to compute  DER setpoints (Section~\ref{optimization-statements}), the reference AGC signal (Section~\ref{regulation-signal}) and types of DERs used to track it (Section~\ref{ders}), the computing platform (Section \ref{computing-setup}), the actuation  (Section~\ref{actuation-interface}) and monitoring interfaces (Section~\ref{power-measurements}),  the performance metrics used to assess the cyber and physical layers, and eligibility for market participation (Section \ref{error-metrics}).

\subsection{Optimization Formulation}\label{optimization-statements}
The optimization model for AGC signal tracking using DERs can be mathematically stated as a separable resource allocation problem subject to box constraints as follows: 

\begin{equation}\label{eq:opt}
\begin{aligned}
\underset{p\in\real^n}{\text{min}} \
&f(p) = \sum_{i=1}^n f_i(p_i), \\
\text{s.t.} \ &\sum_{i=1}^n p_i = \Pref, \\
&p_i\in [\pu_i, \po_i], \quad \forall i\in\N = \{1,\dots,n\}.
\end{aligned}
\end{equation}

The agents $i\in\N$ each have local ownership of a decision variable $p_i\in\real$, representing an active power generation or consumption quantity (setpoint), a local convex cost function $f_i$, and local box constraints $[\pu,\po ]$, representing active power capacity limits. $\Pref$ is a given active power reference value determined by the ISO and transmitted to a subset of the agents as problem data, see e.g.~\cite{CAISO:18}. $\Pref$ is a signal that changes over time, so a new instance of~\eqref{eq:opt} is solved in 1~s intervals corresponding to these changes. 

For the validation tests, we used two types of cost functions: constant and quadratic. Constant functions were used for the Ratio-Consensus (RC) solver, which turns the optimization into a feasibility problem. Quadratic functions were used for the primal-dual based (PD) and Distributed Approximate Newton Algorithm (DANA) methods, see the Appendix.
The quadratic functions were artificially chosen to produce satisfactorily diverse and representative solutions for each DER population. We split the total time period of the signal, $\Pref$ into three equal segments, and implemented RC, PD, and DANA in that order. Box constraints $[\pu_i,\po_i ]$ were typically centered at zero for simplicity, see Section~\ref{ders}.

\subsection{Regulation Signal %and DERs
}\label{regulation-signal}
The 40~min RegD signal published by PJM~\cite{PJM-signal:19} served as the reference AGC signal for the validation tests, and was used to obtain the value for $\Pref$ in~\eqref{eq:opt}. The normalized RegD signal, contained in $[-1,1]$, was interpolated from 0.5~Hz to 1~Hz. The signal was then treated by subtracting the normalized contributions of building loads and PV systems, cf. Section~\ref{ders}. Finally, the normalized signal was scaled by a factor proportional to the total DER capacity $\sum_i (\po_i - \pu_i)$ before sending to the optimization solvers. More precisely,
\begin{equation}\label{eq:norm-sig}
    \Pref = \beta \frac{\sum_i (\po_i - \pu_i)}{\| P_\text{RegD} + P_\text{PV} - P_\text{b} \|_{\infty} }\left(P_\text{RegD} + P_\text{PV} - P_\text{b}\right),
\end{equation}
where $P_\text{RegD}$ refers to the normalized RegD signal data, $P_\text{PV}$ and $P_\text{b}$ respectively refer to the normalized PV generation and building load data obtained from the UCSD ION server as described in Section \ref{power-measurements}, and $0 < \beta < 1$ is an arbitrary scaling constant. For most test scenarios, $\beta = 0.75$ to prevent extreme set points that would require all DERs to operate at either $\po_i$ or $\pu_i$ simultaneously, which may be infeasible in some time steps due to slower signal update times, see Table~\ref{table:devices}. Each $P$ in~\eqref{eq:norm-sig} is a vector with 2401 elements corresponding to each 1~s time step's instance of~\eqref{eq:opt} over the 40~min time horizon. The acquired target regulation signal is characterized by steep positive and negative ramps that range from -14~kW to +16~kW over 1~s intervals and an average absolute ramp-rate of 1.7 kW/s. 

\subsection{DERs}\label{ders}
The reference AGC signal was to be collectively tracked using DERs consisting of HVAC AHUs, BESS, V1G and V2G EVs, PV systems, and whole-building loads. Since PV systems and (non-AHU) building loads were not controllable, they participated in the test as passive DERs. Consequently, the active DERs were commanded to track a modified target signal derived by subtracting the net active power output of passive DERs from the reference AGC signal and applying appropriate scaling (cf. Section \ref{regulation-signal}). Table~\ref{table:devices} lists the typical net power capacity $\po_i - \pu_i$ of the different active DER types.

\begin{table}[htb]
\centering
\caption{DER counts and characteristics for each test.}\label{table:devices}
\begin{tabular}{|c|c|c|c|c|}
\hline
\textbf{DER Type}  & \textbf{AHU} & \textbf{V1G EV} & \textbf{V2G EV} & \textbf{BESS} \\ \hline
\makecell{\textbf{\# DERs for} \\ \textbf{Test 0}}  & 7 & 4  & 5 & 1 \\ \hline
\makecell{\textbf{\# DERs for} \\ \textbf{Test 1}} & 34 & 29 & 5 & 1 \\ \hline
\makecell{\textbf{\# DERs for} \\ \textbf{Test 2}} & 34 & 17 & 6 & 1 \\ \hline
\makecell{\textbf{Signal update}\\ \textbf{ times}} & 1 min & \makecell{5 min  (Test 0 \& 1), \\ 1 min  (Test 2)} & 1 sec & 20 sec \\ \hline
\makecell{\textbf{Typical power} \\ \textbf{rating per DER type}} & 2~kW & \makecell{3.3~kW  (Test 0 \& 1), \\ 4.9~kW  (Test 2)} & 5~kW & 3~kW \\ \hline
\end{tabular}
\end{table}

The contribution of each active DER to the target signal was defined with respect to a baseline power, around which $[\pu_i,\po_i ]$ was centered, to enable tracking of both positive and negative ramps in the target signal. For DERs like V2G EVs and BESS, which were capable of power adjustments in both directions, the baseline was 0~kW. The baseline for V1G EVs was defined to be halfway between their allowed minimum and maximum charging rates, where the former was restricted by the SAE J1772 charging standard to 1.6~kW. Similarly, the baseline for AHUs was defined to be half of their power draw when on. Further, since AHUs were limited to binary on-off operational states, the continuous and arbitrarily precise AHU setpoints obtained by solving \eqref{eq:opt} were rounded to the closest discrete setpoint obtained from a combination of on-off states before actuation.

AHU control was restricted, by UCSD Facilities Management, to specifying only DER setpoints and duration of actuation; since building automation controllers could not be modified, model-based designs were impossible. This was to avoid malfunctioning or disruptions to real physical infrastructure in the networked building management system that also controls lighting, security, and fire protection systems. 

\begin{figure*}
\centering 
\includegraphics[scale=0.33]{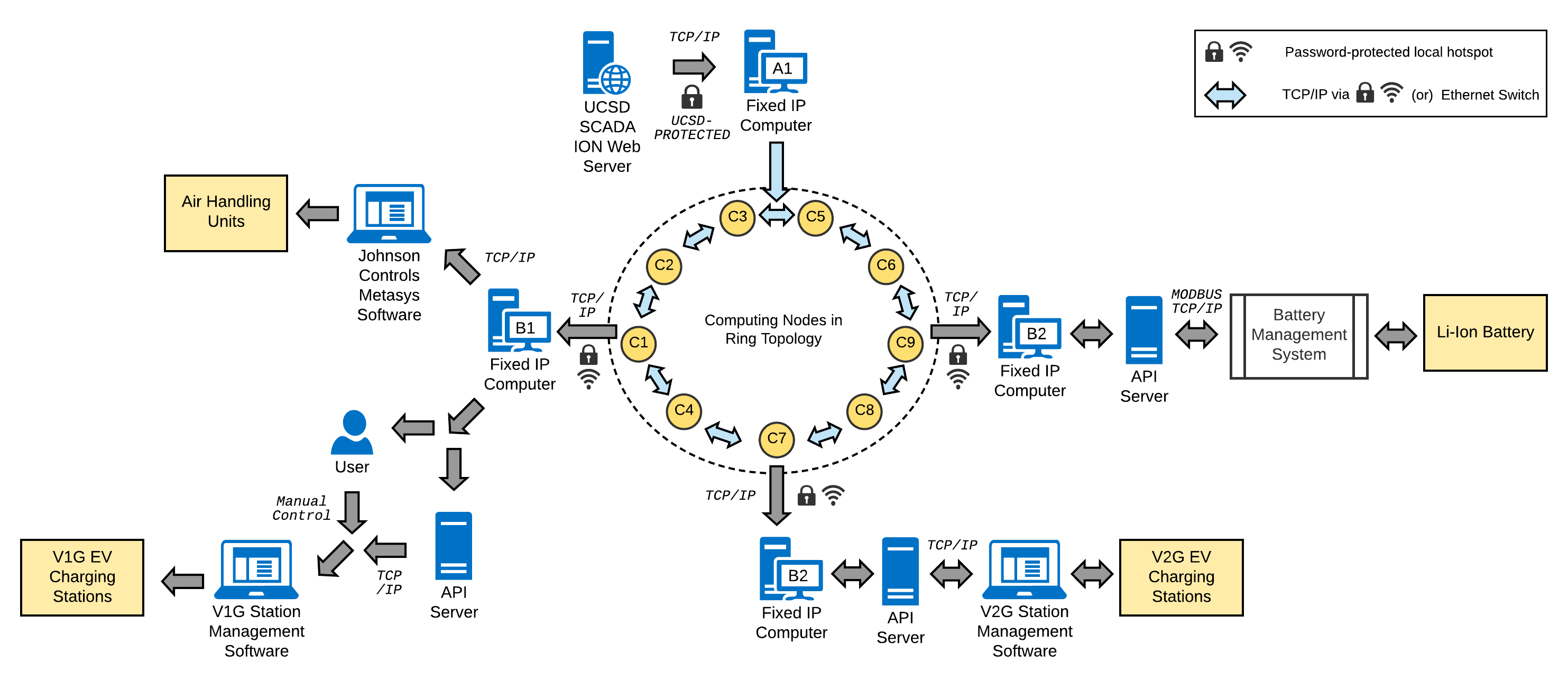}
\caption{Communication architecture for computation and actuation of control policies.}\label{fig:network-diagram}
\end{figure*}

\subsection{Computing Setup}\label{computing-setup}
The DER active power setpoints were computed using a set of 9 Linux-based nodes, named C1-C9, that communicate with each other over an undirected ring topology, cf. Fig.~\ref{fig:network-diagram}. 
As one of the sparsest
network topologies, where message passing occurs only between a small number of neighbors, the ring topology presents a challenging scenario for distributed control.
Since there were more active DERs than computing nodes, the 9 nodes were mapped subjectively to the 69 active DERs such that nodes C1-C2 computed the actuation setpoints for the AHUs, C3 for V1G EVs, C4-C8 for V2G EVs and C9 for the BESS. 

Each computing node generated actuation commands as CSV files containing the power setpoints for their respective group of DERs at a uniform update rate of 1~Hz. Preliminary testing revealed different response times across DER types, with AHUs and V1G EVs exhibiting slower response than other active DER types. DERs with response times greater than 1~s were subject to a stair-step control signal with a signal update time consistent with DER responsiveness and constant setpoints during intermediate time steps. Table~\ref{table:devices} lists the signal update times for the different DER types. 

\subsection{Actuation Interfaces and Communication Framework}\label{actuation-interface}
The actuation commands were issued using fixed IP computers through dedicated interfaces that varied by DER type as depicted in Fig.~\ref{fig:network-diagram}. The setpoints for AHUs were issued through a custom Visual Basic program that interfaced with the Johnson Control Metasys building automation software. The power rate of the BESS was set via API-based communication with a dedicated computer that controlled the battery inverter. The V1G and V2G EVs charging rates were adjusted through proprietary smart EV charging platforms of the charging station operators. EVs using ChargePoint\textsuperscript{\tiny\textregistered} V1G stations were manually controlled via the load shedding feature of ChargePoint’s station management software. The actuation of EVs using PowerFlex\textsuperscript{\tiny\textregistered} V1G chargers and Nuvve\textsuperscript{\tiny\textregistered} V2G chargers was automated and commands were issued via API-based communication.

\subsection{Power Measurements}\label{power-measurements}
The active power of all DERs was metered at a 1~Hz frequency. The power outputs of PV systems and building loads were obtained prior to the test from their respective ION meters by logging data from the UCSD ION Supervisory Control and Data Acquisition (SCADA) system. A moving average filter with a 20~s time horizon was used to remove noise from the measured data for these passive DERs. V2G EVs and BESS power data were acquired using the same interfaces that were used for their actuation, which logged data from dedicated power meters.

Since neither AHUs nor the ChargePoint V1G EVs had dedicated meters, they were monitored via their respective building ION meters by subtracting a baseline building load from the building meter power output. Assuming constant baseline building load, any change in the meter outputs can be attributed to the actuation of AHUs and V1G EVs. This assumption is justifiable considering the tests were conducted at 0400 PT to 0600 PT on a weekend, when building occupancy was likely zero and building load remained largely unchanged. Noise in the ION meter outputs observed as frequent 15~-~30~kW spikes in the measured data for AHUs (Fig.~\ref{fig:trial0trial1}) and ChargePoint V1G EVs was treated by removing outliers and passing the resulting signal through a 4~s horizon moving average filter. Here, outliers refer to points that change in excess of 50\% of the mean of the   40~min signal in a 1~s interval.

\subsection{Performance Metrics}\label{error-metrics}
The performance of the distributed implementation (cyber-layer) was measured by the normalized mean-squared-error (MSE) between the distributed and true (i.e. exact) centralized optimization solutions. The true solutions were computed for each instance of~\eqref{eq:opt} using a centralized CVX solver in MATLAB~\cite{website:cvx}. The MSE was normalized by dividing by the mean of the squares of the true solutions. 

The tracking performance of the DERs was evaluated through (i) the root-mean-squared-error (RMSE) in tracking 
\begin{equation}\label{eq:rmse-calc}
    \text{RMSE} = \sqrt{\frac{\sum_{t=1}^T (P_t^{\text{prov}}-P_t^{\text{tar}})^2}{\sum_{t=1}^T (P_t^{\text{tar}})^2}},
\end{equation}
where $P_t^{\text{prov}}$ is the total power that was provided (measured), and $P_t^{\text{tar}}$ is the target (commanded) regulation power at time step $t\in\{1,\dots,T=2401\}$; and (ii) the tracking delay, computed as the time shift of the measured signal which yields the lowest RMSE between the commanded and measured signals. 

The PJM Performance Score~$S$ following~\cite[Section 4.5.6]{PJM:20} was computed as a test for eligibility to participate in the ancillary services market, and is given by the mean of a Correlation Score~$S_c$, Delay Score~$S_d$, and Precision Score~$S_p$:
\begin{align*}
            S_c &= \frac{1}{T-1}\sum_{t=1}^T \frac{(P_t^{\text{prov}} - \mu^{\text{prov}})(P_t^{\text{tar}} - \mu^{\text{tar}})}{\sigma^{\text{prov}}\sigma^{\text{tar}}}, \\
        S_d &= \bigg\lvert \frac{\delta - 5 \text{ min}}{5 \text{ min}} \bigg\rvert, \quad
        S_p = 1 - \frac{1}{T} \sum_{t=1}^T \bigg\lvert \frac{P_t^{\text{prov}} - P_t^{\text{tar}}}{\mu^{\text{tar}}} \bigg\rvert, \\
        S &= 1/3(S_c + S_d + S_p),
        \end{align*}
where $P_t^{\text{prov}}$ and $P_t^{\text{tar}}$ are as in~\eqref{eq:rmse-calc},  $\mu^{\text{prov}}, \mu^{\text{tar}}$ and $\sigma^{\text{prov}}, \sigma^{\text{tar}}$ denote their respective means and standard deviations, and $\delta$ is the corresponding maximum delay in DER response for when $S_c$ was maximized. A performance score of at least 0.75 is required for participating in the PJM ancillary services market.

\section{Test Scenarios and Results}\label{test-description}
In this section, we describe the test scenarios carried out on the UCSD microgrid and present their outcome, elaborating on the challenges we faced and the differences across the tests.

\subsection{Test Scenarios}\label{test}
\subsubsection{Commonalities}
A series of three tests were conducted on December 12, 2018 (Test~0), April 14, 2019 (Test~1) and December 17, 2019 (Test~2). All three tests involved a 40~min preparatory run followed by a 40~min final test. Table~\ref{table:devices} lists the number and type of DERs used in each test. All tests were carried out during non-operational hours (between 0400 PT and 0540 PT)
to maximize fleet EV availability and to avoid potential disruptions to building occupants. Day-time PV output data from February 24, 2019 was used as a proxy for an actual daytime PV signal. 

\subsubsection{Test~0}\label{test-0}
Test~0 was a preliminary calibration that used only a representative sample of 17 DERs. The purpose of Test~0 was to examine the response times and tracking behavior of every DER type and detect issues related to communication and actuation.  

\subsubsection{Test~1}\label{test-1}

Test~1 was identical to Test~0, but it used a larger population of 69 active DERs and 107 passive DERs.  

a) \textit{DERs.} The V1G and V2G population for Test~1 was composed of UCSD fleet EVs plugged in at ChargePoint and Nuvve charging stations, respectively. Since the ChargePoint V1G EVs were operated via manual input of DER setpoints (an interface to their API had not been developed yet), to avoid overloading the (human) operators, they were grouped into three groups and actuated in a staggered fashion such that each of the three groups maintained a signal update time of 5~min but were commanded 1~min apart from each other.  

b) \textit{Computing Setup.} For both Tests~0 and 1, 9 laptops running a Robotic Operating System (ROS) communicated via local Wi-Fi hotspot to implement the distributed coordination algorithms and compute the DER setpoints. Given that the available power capacity of fast-responding DERs such as V2G and BESS was smaller than slow-responding DERs, the steep ramping demands of the target signal were met by upscaling the power of the fast responding DERs in solving for the contribution of individual DERs. Another option would have been to reduce the number of slow responding DERs, but the funding agency stipulated prioritizing the number and types of heterogeneous DERs over accuracy in signal tracking. A real DER aggregator would instead require a more balanced capacity of slow and fast DERs to ensure feasibility of tracking these ramp features.

\subsubsection{Test~2}\label{test-2}
Test~2 also used the entire population of DERs but substituted the cumbersome V1G population with more capable V1G chargers and used a new distributed computing setup and method of actuation based on lessons learned from Test~1. 

a) \textit{DERs.} The V1G EVs used in Test~1 performed poorly owing to an unreliable actuation-interface that experienced seemingly random stalling and lacked automated control capabilities. Therefore, 17 PowerFlex V1G charging stations at one location replaced the distributed 29 V1G charging stations in Test~1. Since the PowerFlex interface did not permit actuating individual stations, the 17 charging stations participated in the test as a single aggregate DER. The 0930 – 1010 PT timing of the V1G EV part of the test coincided with the start of the workday and a V1G EV population that had only recently plugged in and therefore had ample remaining charging capacity. The EVs were contributed by UCSD employees and visitors randomly plugging in at the PowerFlex charging stations just before the start of the trial. An aggregate signal of 15~kW to 19~kW was distributed equally amongst the 17 EVs.

In addition to the new V1G EVs, the V2G population in Test~2 was replaced with a different set of Nuvve chargers to resolve a tracking/noise issue during discharge-to-grid observed in Test~1 and expanded to include an additional charger, amounting to a total of six V2Gs charging six 5~kW EVs. 

The order of AHU actuation was modified to allow for device settling time and prevent interference. In particular, in Tests~0 and 1, individual AHUs were ordered and actuated using a protocol that was not cognizant of settling times or building groupings, while the protocol was revised in Test~2 to systematically command the entire population of AHUs in a manner which maximized time between consecutive actuations for an individual unit. 

b) \textit{Computing Setup.} Test~2 featured a fully distributed architecture, unlike the ROS-based semi-centralized computing setup in Test~1. The new distributed setup consisted of a network of Raspberry Pi’s that asynchronously communicated with each other via an ethernet switch. In addition, a modified synchronization technique was implemented in the software which improved the fidelity and robustness of message-passing. This upgraded message-passing framework and synchronization technique for both software and hardware resulted in significantly faster communication between nodes.

c) \textit{Two-Stage Actuation.} Test~2 also featured a two-stage approach of actuation that was a result of the DER tracking behavior in Test~1. Some DERs, such as BESS, V1G EVs and V2G EVs, tracked quickly and accurately, whereas others, such as AHUs, tracked poorly. The overall tracking performance in Test~2 was improved by using ``well-behaved" DERs to compensate for AHU tracking errors by incorporating the error signal from actuating AHUs in Stage 1 to the cumulative target signal for BESS, V1G EVs and V2G EVs in Stage 2. Although synchronous actuation of all participating DERs is preferred in practice, the two-stage approach highlights the significance of systematic characterization of DERs in  minimizing~ACE.

\subsection{Test Results} \label{results}
\subsubsection{Distributed Optimization/Cyber-Layer Results}

In Table~\ref{table:error}, we present MSE results of our 1~s real-time Raspberry-pi distributed optimization solutions (the ``cyber-layer'' of the system). 

\begin{table}[htb]
\centering
\caption{Normalized mean-squared-error of distributed solutions obtained from real-time 1-second intervals compared to centralized solver solution for Test~2 (Section \ref{error-metrics})}\label{table:error}
\begin{tabular}{|c|c|c|c|c|}
\hline
\textbf{DER Type}  & \textbf{RC} & \textbf{PD} & \textbf{DANA} & \textbf{all} \\ \hline
AHU & $0$ & $1.4\times 10^{-7}$ & $2.8\times 10^{-9}$ & $4.6\times 10^{-8}$ \\ \hline
V1G EVs & $0$ & $7.0 \times 10^{-8}$ & $1.7\times 10^{-9}$ & $2.3\times 10^{-8}$ \\ \hline
V2G EVs & $0$ & $6.6\times 10^{-5}$ & $5.0\times 10^{-7}$ & $2.1\times 10^{-5}$ \\ \hline
BESS & $0$ & $2.0\times 10^{-6}$ & $9.1\times 10^{-8}$ & $6.5\times 10^{-7}$ \\ \hline
Total & $0$ & $1.8\times 10^{-5}$ & $1.1\times 10^{-7}$ & $4.9\times 10^{-6}$ \\ \hline
\end{tabular}
\end{table}

RC converged to the exact solution in all instances. This is unsurprising, as the RC problem formulation does not account for individual DER costs and thus, is a much simpler problem with a closed-form solution. For PD and DANA, we obtained excellent convergence, with errors on the order of $0.001\%$ in the worst cases. In general, DANA tended to converge faster than PD and obtained more accurate solutions. For our application with 1~s real-time windows, accuracy and convergence differences did not affect the physical layer results in any tangible way, but applications with more stringent accuracy or speed requirements may benefit from using a faster algorithm like DANA. The differences between DER populations can be largely attributed to the faster time scale of the V2G EVs (and to a lesser extent the BESS), see Table~\ref{table:devices}. Since the V2G EVs were responsible for the high-frequency component of $\Pref$, the solver was required to converge to new solutions at every time step, which induced more error compared to the slow V1G EVs and AHUs with relatively static solutions.

\subsubsection{Physical-Layer Test Results}\label{ssec:phys-results}
We now present the results of the tracking performance pertaining to the physical-layer of the experiment. We provide only some selective plots for Test~0 and Test~1 in Fig.~\ref{fig:trial0trial1}, and a complete set of plots for each Test~2 DER population in Fig.~\ref{fig:trial2}. Error and tracking delay data defined in Section \ref{error-metrics} is given in Table~\ref{table:phys-error} for Test~1 and Test~2. Data for Test~0 is omitted due to its preliminary nature. The optimal shift described in Section \ref{error-metrics} is applied to each time series and hence some areas in plots may appear like the provided signal anticipated the target.

\begin{figure}[hbt!]
\centering 
\includegraphics[width=0.9\linewidth]{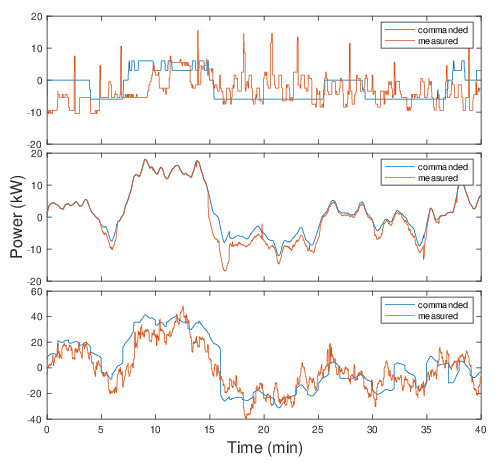}
\caption{\textbf{Top:} AHU response in Test~0. \textbf{Middle:} V2G response in Test~1. \textbf{Bottom:} Total response in Test~1.}\label{fig:trial0trial1}
\vspace*{-2ex}
\end{figure}

\begin{figure}[hbt!]
\centering 
\includegraphics[width=0.8\linewidth]{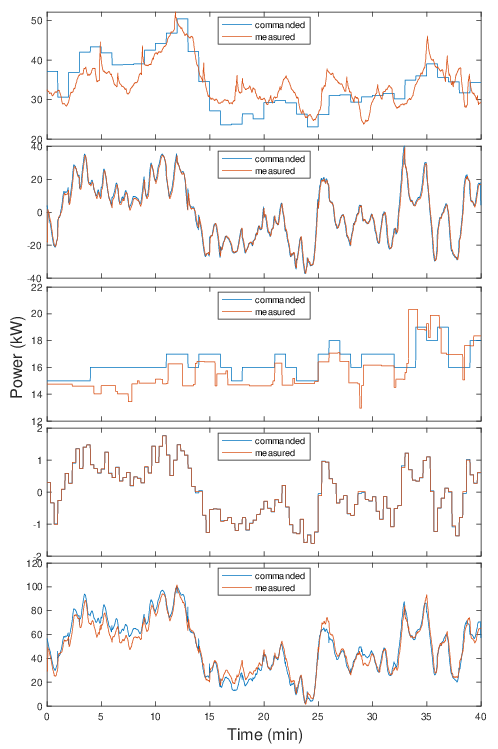}
\caption{From \textbf{top to bottom}, AHU, V2G EVs, V1G EVs, BESS, and total responses in Test~2.}\label{fig:trial2}
\vspace*{-2ex}
\end{figure}

Signal tracking accuracy in Test~0 was generally poor despite the small number of DERs employed, largely due to inexperience in actuating the AHUs and V1Gs. In particular, Fig.~\ref{fig:trial0trial1} reveals some oscillations in the AHU response. It is overall difficult to determine if even large-feature, low-frequency components of the signal were tracked. Further, data gathering for V1Gs and AHUs was done via noisy and unreliable building ION meters, which motivated the need for outlier treatment (Section~\ref{power-measurements}) in Tests~1 and~2, and resulted in the smoother and better tracking signal in the top plot of Fig.~\ref{fig:trial2}. 

Test~1 yielded a %111.34\% 
111\% rMSE
for AHUs. We speculate that the small 4~s delay is not representative of the actual AHU delay due to random correlations dominating the time shift for this large error. This is confirmed by a much better AHU response in Test~2 with rMSE %11.91\%
12\%, where a 105~s delay is more likely to be representative of the true AHU actuation delay.
Given the poor visibility into AHU and V1G controllers explained in Section~\ref{test}, it is challenging to identify the source of the poor tracking behavior. We speculate that DER metering at the building level rather than the DER level was a major source of error for AHU and V1G in Test~1. This was largely resolved in Test~2 by utilizing a different population of V1Gs with dedicated meters and by modifying the actuation scheme for AHUs to be less susceptible to metering errors as described in Section~\ref{test-2}. Additionally, the actuation-interface stalling for V1G EVs, described in Section~\ref{test-1}, was dominant in Test~1, resulting in the poor tracking for V1Gs. Actuating-interface issues were resolved in Test~2 by utilizing an automated control scheme for the V1Gs, which led to significantly lower error.

The BESS emerged as the star performer achieving very accurate tracking across all tests with no delay. The V2G EVs also performed relatively well aside from a signal overshoot issue observed during the discharge cycle in Test~1 seen in Fig.~\ref{fig:trial0trial1}. The issue was resolved in Test~2 by using V2G EV charging stations from a different manufacturer (Princeton Power), as described in Section \ref{test-2}. The V2G charging stations deployed for these tests were pre-commercial or early commercial models that had a few operating issues, such as the overshoot issue during Test~1.

The inability of the AHUs to respond to steep, short ramps (Fig.~\ref{fig:trial2}) could be due to slow start-up sequences programmed into the building automation controllers to increase device longevity or due to transients associated with driving their AC induction electric motors. Tackling this would require dynamic models and parameter identification of signal response and delay. With the new V1G EV population in Test~2, tracking delay reduced from 40~s to 10~s and the tracking accuracy improved significantly. The 1~kW bias seen in Fig.~\ref{fig:trial2} is likely due to rounding errors arising from the inability of PowerFlex charging stations to accept non-integer setpoints.

The superior performance of the BESS and V2Gs motivated the two-stage actuation scheme described in Section~\ref{test-2}, which contributed to reducing the total RMSE from %50.40\%
50\% in Test~1 to %9.69\%
10\% in Test~2 (compare the bottom plots of Figs~\ref{fig:trial0trial1} and~\ref{fig:trial2}). The two-stage approach allows a sufficiently large proportion of accurately tracking DERs to compensate for the errors of the first stage, where tracking is worse. In this way, poorly-tracking DERs, such as AHUs, can still contribute by loosely tracking some large-feature, low-frequency components of the target signal. The low-frequency contribution reduces the required total capacity of the strongly-performing DERs in the second stage leading to more fine-tuned signal tracking in aggregation. Some recommended rules of thumb for two-stage approach are: (i) Total capacity of first-stage DERs is less than or equal to total capacity of second-stage DERs. (ii) DERs in the first stage are capable of tracking with $<$ 50\% rMSE. (iii) DER cost functions are such that the deviation from the baseline is lower cost for first-stage DERs than for second-stage. (iii) allocates a significant portion of the target signal initially to first-stage DERs, freeing up DER capacity in the second-stage for error compensation. % (instead of signal tracking). %\marginpar{sonia: This may require an additional discussion about the relative assumed capacity of devices (are we assuming we have an excessive amount of good DERs for this to work (so the AHUs are playing the role of a "small noise"?, or is the relative capacity of DERs comparable somehow?) should we somewhere say that despite the errors of the AHUs, they are also contributing to the signal tracking? (so the effort is not falling completely under the BESS and the V2G?.)  Tor: resolved with above }

\begin{table}[htb]
\caption{\textbf{Left:} Relative root mean-squared-error of tracking error by DER type. \textbf{Right:} Delay (optimal time-shift) of DER responses in sec.}\label{table:phys-error}
\centering
\begin{tabular}{|c|c|c|}
\hline
\textbf{DER Type}  & \textbf{Test~1} & \textbf{Test~2} \\ \hline
AHU & %1.113
1.11 & %0.119
0.12\\ \hline
V1G EVs & %0.678
0.68 & 0.077 \\ \hline
V2G EVs & % 0.300 
0.30 & % 0.0600
0.060 \\ \hline
BESS & 0.054 & 0.018 \\ \hline
Total & %0.504
0.50 & 0.097 \\ \hline
\end{tabular}
%\end{table}
%\begin{table}
%\centering
\qquad
\begin{tabular}{|c|c|c|}
\hline
\textbf{DER Type}  & \textbf{Test~1} & \textbf{Test~2} \\ \hline
AHU & 4 & 105 \\ \hline
V1G EVs & 40 & 10 \\ \hline
V2G EVs & 5 & 3 \\ \hline
BESS & 0 & 0 \\ \hline
Total & N/A & N/A \\ \hline
\end{tabular}
\end{table}

\subsubsection{Economic Benefit Analysis}\label{economic}

Here, we evaluate the economic benefit of the proposed test system, which is vital for wider scale adoption of DERs as a frequency regulation resource in real electricity markets. To this end, we take an approach similar to~\cite{YL-PB-SM-TM:15} to first demonstrate that the testbed is eligible to participate in the PJM ancillary services market. Following the PJM Manual 12~\cite{PJM:20} (Section~\ref{error-metrics}), we compute a Correlation Score $S_c$ = %0.977
0.98, Delay  Score $S_d$ = %0.650
0.65, and Precision Score $S_p$ = %0.913
0.91 from data for Test~2, and obtain a Performance Score $S = %0.846
0.85\geq 0.75$, which confirms the eligibility to participate in the PJM ancillary service market.

Next, we compute the estimated annual revenue assuming that the resources are available throughout the day. Using PJM's capability clearing price data\footnote{\url{https://dataminer2.pjm.com/feed/reg_prices/definition}} 
%\jan{If you can download historical data that would be preferable over a Covid day} priyank - they only have data for last 30 days
with our total (active) DER capacity of 184~kW and performance score of %0.846
0.85, the revenue for this population of resources (cf.~\cite[Section 4]{PJM:19}) would be \$135 for July 9, 2020. This gives an estimated amount of \$49,210 as the total annual revenue. Note that the 184~kW DER capacity employed in this work represents less than 5\% of the total DER capacity and less than 0.5\% of the total capacity of the UCSD microgrid, cf.~\cite{BW-JD-DW-JK-NB-WT-CR:13}. As such, the revenue would significantly increase if more microgrid resources are utilized for regulation, even with reduced availability.

 \section*{Appendix: Distributed Coordination Algorithms}\label{sec:appendix}

 In this section we describe the algorithms used in our distributed computing platform to solve~\eqref{eq:opt}. 
 % $\bullet$ Algorithm statements % Tor  

 \emph{Ratio-Consensus (RC)}: The ratio-consensus of~\cite{ADDG-CNH-NHV:12} computes equitable contributions from all DERs without DER-specific cost functions (or constant DER costs). The ratio-consensus algorithm for providing $\Pref$ is given by
 \begin{alignat*}{2}
     y_i[k+1] &= \sum_{j\in\N_i} \frac{1}{\vert \N_i\vert}y_j[k], &
     z_i[k+1] &= \sum_{j\in\N_i} \frac{1}{\vert \N_i\vert}z_j[k], \\
     y_i[0] &= \begin{cases} \frac{\Pref}{\vert\I\vert} - \pu_i, & i\in\I, \\
     -\pu_i, & i\notin\I,
     \end{cases} & 
     z_i[0] &= \po_i - \pu_i,
 \end{alignat*}
 where, $k$ is the iteration number, $y_i$ and $z_i$ are two auxiliary variables maintained by each agent, $\N_i$ denotes the neighboring DERs of DER $i$, and $\pu_i$ and $\po_i$ are the minimum and maximum power level for DER $i$ from the problem formulation in Section~\ref{optimization-statements}. $\I$ denotes the subset of DERs which know the value of the reference signal. One can see~that 
 \begin{equation*}
 \begin{aligned}
     p_i^\star &= \pu_i + \underset{k\rightarrow\infty}{\lim} y_i[k]/z_i[k](\po_i - \pu_i) \\
     &= \pu_i + \frac{\Pref - \sum_i \pu_i}{\sum_i \po_i - \pu_i}(\po_i - \pu_i),
 \end{aligned}
 \end{equation*}
 where $p_i^\star$ is then the power assignment for DER $i$.

 \emph{Primal-Dual (PD)}: Both this dynamics and DANA (described next) take into account the cost functions of the DER types when computing the power setpoints, i.e., $f_i$ are nonconstant. These functions are modeled as quadratics, which is a common choice in generator dispatch~\cite{AW-BW-GS:12}. The dynamics is based on the discretization of the primal-dual dynamics~\cite{AC-BG-JC:17-sicon} for the augmented Lagrangian of the equivalent reformulated problem, see~\cite{AC-JC:16-allerton}, and it has a linear rate of convergence to the optimizer. The algorithm is given by
 \begin{equation*}
     \begin{aligned}
         \begin{bmatrix}
             \dot{p}_i \\ \dot{y}_i \\ \dot{\lambda}_i
         \end{bmatrix} =
         \begin{bmatrix}
         -\left(f_i'(p_i)+ \lambda_i + p_i \sum_{j\in\N_i} L_{ij}y_j - \Pref/n \right) \\
         -\left( \sum_{j\in\N_i} L_{ij} (\lambda_j + x_j - \Pref/n) + \sum_{j\in\N_i^2} L_{ij}^2 y_j  \right) \\ 
         p_i + \sum_{j\in\N_i} L_{ij} y_j - \Pref/n
         \end{bmatrix},
     \end{aligned}
 \end{equation*}
 where, $L$ is the Laplacian matrix of the communication graph (see~\cite{FB-JC-SM:09}), $y_i$ is an auxiliary variable, and $\lambda_i$ is the dual variable associated with agent $i$. The update step is followed by a projection of the primal variable $p_i$ onto the box constrained local feasible set. These dynamics converge from any set of initial conditions. Since this algorithm evolves in continuous time, we use an Euler discretization with fixed step-size to implement it in discrete time.

 \emph{Distributed Approximate Newton Algorithm (DANA)}: The Distributed Approximate Newton Algorithm (DANA) of~\cite{TA-CYC-SM:18-auto} has an improved rate of convergence compared to PD. This algorithm solves the equivalent reformulated problem
 \begin{equation}\label{eq:DANA-opt}
 \begin{aligned}
 \underset{z\in\real^n}{\text{min}} \
 &f(p^0 + Lz) = \sum_{i=1}^n f_i(p_i^0 + L_i z), \\
 \text{subject to} \ &\pu - p^0 - Lz \leq \zeros_n, \\
 &p^0 + Lz - \po \leq \zeros_n,
 \end{aligned}
 \end{equation}
 where $p^0$ is a vector of initial power levels of all the DERs with $\sum_i p_i^0 = \Pref$, and $z$ is the new variable of optimization. The continuous time dynamics are given by
 \begin{equation*}
 \begin{aligned}
     \dot{z} &= -A_q\nabla_z \Lagr (z,\lambda), \\
     \dot{\lambda} &= [\nabla_\lambda \Lagr (z,\lambda)]^+_\lambda ,
 \end{aligned}
 \end{equation*}
 where $\Lagr$ is the Lagrangian of~\eqref{eq:DANA-opt} and $A_q$ is a positive definite weighting on the gradient direction which provides distributed second-order information. For brevity, we do not provide the full details of the algorithm here, which can instead be found in~\cite{TA-CYC-SM:18-auto}. The cost functions are again taken to be quadratic with strictly positive leading coefficients.

\section*{Acknowledgements}

The material in this chapter, in full, is under revision for publication in IEEE Transactions on Smart Grid. It may appear as \textit{Frequency Regulation with Heterogeneous Energy Resources: A Realization using Distributed Control}, T.~Anderson, M.~Muralidharan, P.~Srivastava, H.V.~Haghi, J.~Cort\'{e}s, J.~Kleissl, S.~Mart\'{i}nez and B.~Washom. The dissertation author was one of three primary investigators and authors of this paper.

We would like to thank numerous people in the UCSD community and beyond for their generous contributions of time and resources to enable such an ambitious project to come together. We extend thanks to: (i) Aaron Ma and Jia (Jimmy) Qiu for assisting with hardware setup and software development for the distributed computation systems; (ii) Kevin Norris for coordinating the fleet vehicles; (iii) Abdulkarim Alamad for overseeing V1G drivers in Test~2; (iv) Kelsey Johnson for managing the Nuvve contributions; (v) Ted Lee, Patrick Kelly, and Steven Low for managing the PowerFlex contribution; (vi) Marco Arciniega, Martin Greenawalt, James Gunn, Josh Kavanagh, Jennifer Rodgers, Patricia Roman and Lashon Smith from UCSD parking
for reserving EV charging station parking spaces; (vii) Charles Bryant, Harley Crace, John Denhart, Nirav Desai, John Dilliott, Mark Gaus, Martin Greenawalt, Gerald Hernandez, Brandon Hirsch, Mark Jurgens, Josh Kavanagh, Jose Moret, Chuck Morgan, Curt Lutz, Jose Moret, Cynthia Wade, Raymond Wampler and Ed Webb for contributing their EVs in Test~1; (viii) Adrian Armenta, Adrian Gutierrez and Minghua Ong who helped with ChargePoint manual control; (ix) Bob Caldwell (Centaurus Prime), Gregory Collins, Charles Bryant, and Robert Austin for programming and enabling the AHU control; (x) Gary Matthews and John Dilliott for permitting the experimentation on ``live'' buildings and vehicles; and (xi) Antoni Tong and Cristian Cortes-Aguirre for supplying the BESS. Finally, we would like to extend a sincere thanks to the ARPA-e NODES program for its financial support and to its leadership, including Sonja Glavaski, Mario Garcia-Sanz, and Mirjana Marden, for their vision and push for the development of large-scale power-in-the-loop testing environments.

\chapter{Conclusion}
In this thesis, we studied a class of separable resource allocation problems, and we developed three types of Newton-like algorithms to approach three different scenarios of the resource allocation. Each algorithm was theoretically analyzed and rigorously shown to satisfy some convergence criteria, and the efficacy of each was validated in simulation with comparisons to relevant alternatives available in literature. We now summarize chapter-by-chapter the more specific conclusions that can be drawn and suggestions for future work.

In Chapter~\ref{chap:DANA}, motivated by economic dispatch problems and separable resource allocation problems in general, this work proposed a class of novel
$\distnewton$ algorithms. We first posed the topology design proplem and provided an
effective method for designing communication weightings.  The
weight design we propose is more cognizant of the problem geometry,
and it outperforms the current literature on network weight
design even when applied to a gradient-like method. Our contribution on the second-order weight design approach is novel but is limited in scope to the given problem formulation. Distributed second-order methods are quite immature in the present literature, so an emphasis of future work is to generalize this weight design notion to a broader class of problems. Ongoing work also includes generalizing the cost functions for box-constrained settings and discretizing the continuous-time algorithm. In addition, we aim to develop distributed Newton-like methods suited to handle more general constraints and design for robustness under uncertain parameters or lossy
communications. Another point of interest is to further study
methods for solving bilinear problems and apply these to weight design within the Newton framework.

Chapter~\ref{chap:DiSCRN} studied a nested, distributed stochastic optimization problem
and applied a Distributed Stochastic Cubic-Regularized Newton (DiSCRN)
algorithm to solve it. In order to compute the DiSCRN update, a batch
of approximate solutions to realizations of the inner-problem are
obtained, and we developed a locally-checkable stopping criterion to
certify sufficient accuracy of these solutions. The accuracy parameter
is directly leveraged in the analysis of the outer-problem, and
simulations justify both faster and more robust convergence properties
than that of comparable gradient-like and Newton-like
approaches. Future work involves developing and analyzing a saddle-point dynamics approach for solving $\P 3$ (extending the work of~\cite{YC-JD:19}), extending the analysis to accommodate small disagreements in the agent states $x_i$,
and exploring adaptive batch size techniques.

In Chapter~\ref{chap:NNN}, we posed an optimal generator dispatch problem for settings in
which the agents are generators with binary controls. We first showed that the centralized problem is amenable to
solution via a Centralized $\binpromc$ approach and
proved convergence to a local minimizer with probability one
under light assumptions. Next, we developed an
approach to make the dynamics computable in a distributed
setting in which agents exchange messages with their two-hop
neighbors in a communication graph. The methods scale and perform well compared to
standard greedy and SDP-relaxation approaches, and the latter method enjoys the qualities of a distributed algorithm, 
unlike previous approaches. Future research
directions include application of the methods to a broader class of problems which may include additional cost terms or constraints and a deeper analysis of the Deterministic Annealing variant as it pertains to the online adjustment of the learning-rate $T/\tau$.

Chapter~\ref{chap:top} introduced three related problems motivated by studying the algebraic connectivity of a graph by
adding edges to an initial topology or protecting edges under the case
of a disturbance or attack on the network. We developed a novel SDP relaxation to address the NP-hardness of the design and
demonstrated in simulation that it is superior to existing methods
which are greedy and cannot accommodate general constraints. In addition, we studied the
dynamics of the game that may be played between a network coordinator and
strategic attacker. We developed the notion of an
optimal preventive solution for the coordinator and proposed effective heuristics to find such a solution guided by characterizations of the solutions to the attacker's problem. Future work includes characterizing
the performance of our SDP relaxation and developing an algorithm which provably converges to the optimal
preventive strategy.

Finally, in Chapter~\ref{chap:active_load} we presented one of the first real-world demonstrations of secondary frequency response in a distribution grid using up to 176 heterogeneous DERs. The DERs include AHUs, V1G and V2G EVs, a BESS, and passive building loads and PV generators. The computation setup utilizes state-of-the-art distributed algorithms to find the solution of a power allocation problem. We show that the real-time distributed solutions are close to the true centralized solution in an MSE sense. Tests with real, controllable DERs at power closely track the given active-power reference signal in aggregation. These tests highlight the importance of dedicated and noise-free measurement sensors and a well-understood and reliable DER control interface for precise signal tracking. Further, our economic benefit analysis shows a potential annual revenue of \$49K for the chosen DER population. 
As is already recognized by the power systems community and  federal funding agencies such as ARPA-e and NSF, large-scale power-in-the-loop testing is needed for transitioning distributed technologies to real distribution systems. 

We hope that the work of this thesis spurs further study, testing, and ultimately widespread adoption of distributed algorithms by relevant players in industry, particularly in the renewable energy sector. Returning to the philosophical motivation of the Introduction, it is paramount to anticipate and resolve the issues of scale that are emerging as a result of computing systems transitioning from the ``single-cellar" to the ``multi-cellular" model. To this end, rigorous theory must continue to be developed in order to deeply understand distributed intelligence systems and to ensure they continue to improve quality of life and serve humanity.

\appendix

\backmatter
\bibliographystyle{plain} 
\bibliography{alias,SMD-add,SM,JC}

\end{document}